\documentclass[12pt, nofootinbib]{article}

\usepackage{amssymb}
\usepackage{amsmath,bm}
\usepackage{amssymb}
\usepackage{graphicx}
\usepackage{amsfonts}         
\usepackage{fancybox}

\usepackage[usenames,dvipsnames]{xcolor}%http://en.wikibooks.org/wiki/LaTeX/Colors

\definecolor{Mathematica1}{rgb}{0.368417, 0.506779, 0.709798}
\definecolor{Mathematica2}{rgb}{0.880722, 0.611041, 0.142051}
\definecolor{Mathematica3}{rgb}{0.560181, 0.691569, 0.194885}

%%from  Rudro Rana Biswas
\usepackage[pagebackref,  % this puts links to the page numbers where refs appear
%pdftex, 
bookmarks={false}, pdfauthor={Bowen Shi, Jin-Long Huang, John McGreevy}, pdftitle={Yay, physics!}]{hyperref}
\hypersetup{colorlinks=true, 
%linkcolor=BrickRed, 
linkcolor=Mathematica1,
%citecolor=Violet, 
citecolor=Mathematica3,
filecolor=OliveGreen, 
%urlcolor=RoyalBlue, 
urlcolor=Mathematica2,
filebordercolor={.8 .8 1}, 
urlbordercolor={.8 .8 0}}%http://en.wikibooks.org/wiki/LaTeX/Hyperlinks
\usepackage{soul}%strike through text \st{} .  not in math mode.
\setstcolor{Red}
%\usepackage{wrapfig}
%\usepackage{hyperref}

%\usepackage{wrapfig}
%\usepackage{hyperref}

%%%%% some colors

\definecolor{darkgreen}{rgb}{0,0.4,0}
\definecolor{darkred}{rgb}{0.4,0,0}
\definecolor{darkblue}{rgb}{0,0,0.4}
\definecolor{lightblue}{rgb}{.6,.6,0.9}

\definecolor{uglybrown}{rgb}{0.8,  0.7,  0.5}

\definecolor{palatinatepurple}{rgb}{0.41, 0.16, 0.38}
\definecolor{celebrationcolor}{rgb}{0.75,  0.0,  0.9}

\definecolor{shadecolor}{rgb}{0.90,0.90,0.90}
\definecolor{DVcolor}{rgb}{0.95,  0.5,  0.2}
\definecolor{lightbluemuons}{rgb}{0.0,.65,1.0}

\definecolor{chartreuse}{rgb}{0.70, 1.00, 0.00}

%%%%%%%%%%%%%%  TIKZ

\usepackage{tikz}
\usetikzlibrary{shapes}
\usetikzlibrary{trees}
%\usetikzlibrary{snakes}
\usetikzlibrary{matrix,arrows} 				% For commutative diagram
											% http://www.felixl.de/commu.pdf
\usetikzlibrary{positioning}				% For "above of=" commands
\usetikzlibrary{calc,through}				% For coordinates
\usetikzlibrary{decorations.pathreplacing}  % For curly braces
\usepackage[tikz]{bclogo} 					% For cute logo boxes
\usepackage{pgffor}							% For repeating patterns

%\usetikzlibrary{decorations.pathmorphing}	% For Feynman Diagrams
\usetikzlibrary{decorations.markings}
\usetikzlibrary{intersections}				% For interesections

\usetikzlibrary{arrows,decorations.pathmorphing,backgrounds,positioning,fit,petri,automata,shadows,calendar,mindmap, graphs}
\usetikzlibrary{arrows.meta,bending}

% from flip tanedo.
% Based on:
% http://www.texample.net/tikz/examples/feynman-diagram/

\tikzset{
	% >=stealth', %% more traditional arrows, I don't like them
    vector/.style={decorate, decoration={snake}, draw},
    fermion/.style={postaction={decorate},
        decoration={markings,mark=at position .55 with {\arrow{>}}}},
    fermionbar/.style={draw, postaction={decorate},
        decoration={markings,mark=at position .55 with {\arrow{<}}}},
    fermionnoarrow/.style={},
    gluon/.style={decorate,
        decoration={coil,amplitude=4pt, segment length=5pt}},
    scalar/.style={dashed, postaction={decorate},
        decoration={markings,mark=at position .55 with {\arrow{>}}}},
    scalarbar/.style={dashed, postaction={decorate},
        decoration={markings,mark=at position .55 with {\arrow{<}}}},
    scalarnoarrow/.style={dashed,draw},
%
%%% 	Special vectors (when you need to fine-tune wiggles)
%	provector/.style={decorate, decoration={snake,amplitude=2.5pt}, draw},
%	antivector/.style={decorate, decoration={snake,amplitude=-2.5pt}, draw},
%	    electron/.style={draw=black, postaction={decorate},
%        decoration={markings,mark=at position .55 with {\arrow[draw=black]{>}}}},
%	bigvector/.style={decorate, decoration={snake,amplitude=4pt}, draw},
	vectorscalar/.style={loosely dotted,draw=black, postaction={decorate}},
}

\def\centerarc[#1](#2)(#3:#4:#5)% Syntax: [draw options] (center) (initial angle:final angle:radius)
    { \draw[#1] ($(#2)+({#5*cos(#3)},{#5*sin(#3)})$) arc (#3:#4:#5); }

%%%%%%%%%%%%%%%

\usepackage{enumitem}

\usepackage{slashed}

\usepackage[font=small,labelfont=bf]{caption}
\DeclareCaptionFont{tiny}{\tiny}
%\captionsetup{font=tiny}
\captionsetup{font=footnotesize}

\usepackage{epstopdf}
\DeclareGraphicsRule{.tif}{png}{.png}{`convert #1 `dirname #1`/`basenam	e #1 .tif`.png}

\usepackage{wasysym}

% these were deprecated by the centos tex installation, but this is no longer a problem:
\usepackage[yyyymmdd,hhmmss]{datetime}   
\usepackage{framed}  % needed for \begin{shaded}
 \usepackage{mdframed}
 \usepackage{wrapfig}
 \usepackage{yfonts}  

\def\ketbra#1#2{ | #1 \rangle\hskip-2pt\langle #2|}

%%%%%%%%%%%  new as of 2017-02-15

\newmdenv[%
%    backgroundcolor=red!8,
        backgroundcolor=lightgray,
    linecolor=black,
%    outerlinewidth=1pt,
    outerlinewidth=2pt,
%    roundcorner=5mm,
%    skipabove=\baselineskip,
%    skipbelow=\baselineskip,
%  bottomline=false,
%    leftline=false,rightline=false,
%    frametitle=\mbox{},
]{boxedandshaded}

%%%%%%%%%%% new as of 2016-03-16

\def\parfig#1#2{
\parbox{#1\textwidth}
{\includegraphics[width=#1\textwidth]{#2}}
}

%%%%%%%%%%%  new as of 2015-07-08

% from diptarka

%%%%%%%%%%%%

%\def\subsubsubsection#1{{\bf\large #1}}

% for posted version, you can remove pagebreaks by 
% \def\breakS{}

%%%%%%% making revtex unnecessary:
%
%\textheight 22cm\textwidth 15.5cm
%\oddsidemargin 14pt\evensidemargin 0pt\topmargin -40pt
%\topmargin-20pt
%\def\baselinestretch{1.1}
%\addtolength{\parskip}{1ex}
%\jot=.5ex
%
%\setlength\arraycolsep{2pt}
%
%
%\newcommand{\preprint}[1]{\begin{table}[t]  %%
%             \begin{flushright}               %%
%             {#1}                             %%
%             \end{flushright}                 %%
%             \end{table}}                     %%
%\renewcommand{\title}[1]{\vbox{\center\LARGE{#1}}\vspace{5mm}}
%\renewcommand{\author}[1]{\vbox{\center#1}\vspace{5mm}}
%\newcommand{\address}[1]{\vbox{\center\em#1}}
%\newcommand{\email}[1]{\vbox{\center\tt#1}\vspace{5mm}}
%
%\renewcommand{\date}[1]{\vbox{\center#1}}
%
%\newcommand{\pra}{Phys.~Rev.~A}
%\newcommand{\prb}{Phys.~Rev.~B}
%\newcommand{\prc}{Phys.~Rev.~C}
%\newcommand{\prd}{Phys.~Rev.~D}
%\newcommand{\pre}{Phys.~Rev.~E}
%\newcommand{\prl}{Phys.~Rev.~Lett.}

%%%%%%%% things above here need to be adjusted in the template.

%%%%%%%%%%%% EQUATION NUMBERING  (must renew in appendix)
\numberwithin{equation}{section}

\renewcommand{\theequation}{\arabic{section}.\arabic{equation}}

% the Arovas phantom dagger

%\newcommand{\ltimes}{\times \hspace{-6pt}
%\raisebox{.2ex}{\mbox{$\scriptscriptstyle |$}} \;}
\newlength{\extraspace}
\setlength{\extraspace}{2mm}
\newlength{\extraspaces}
\setlength{\extraspaces}{2.5mm}

\addtolength{\abovedisplayskip}{\extraspaces}
\addtolength{\belowdisplayskip}{\extraspaces}
\addtolength{\abovedisplayshortskip}{\extraspace}
\addtolength{\belowdisplayshortskip}{\extraspace}
%\newcommand{\ee}{\end{equation}}
%\newcommand{\figuurnum}{\arabic{fignum}}

% fig.}\ }

%\usepackage{textcomp}
%\usepackage{fontenc}
%\usepackage{bm}
%\usepackage{amsmath}
%\usepackage{yfonts}  % this is the one SS use
%\usepackage{mathbbol}
%\usepackage{arev}
%\usepackage{mbboard}
%\usepackage{devnagri}
%\usepackage{fc}

%\textwidth = 6.5 in
%\textheight = 9 in
%\oddsidemargin = 0.0 in
%\evensidemargin = 0.0 in
%\topmargin = 0.0 in
%\headheight = 0.0 in
%\headsep = 0.0 in
%\parskip = 0.2in
%\parindent = 0.1in

\def\be{\begin{equation}}
\def\ee{\end{equation}}

\newcommand{\bea}{\begin{eqnarray}}
\newcommand{\eea}{\end{eqnarray}}

%\newtheorem{theorem}{Theorem}
%\newtheorem{corollary}[theorem]{Corollary}
%\newtheorem{definition}{Definition}

%
%%MATH MACROS
%Greek letters and their bars

\def\half{{1\over 2}}
\def\Tr{{{\rm Tr}}}
\def\tr{{\rm tr}}

\def\ket#1{\left|#1\right\rangle}

\def\CA{{\cal A}}
\def\CB{{\cal B}}
\def\CC{{\cal C}}
\def\CD{{\cal D}}

\def\CH{{\cal H}}

\def\CM{{\cal M}}
\def\CN{{\cal N}}
%AEL
\def\CP{{\cal P}}

\def\CV{{\cal V}}
\def\CW{{\cal W}}
\def\CX{{\cal X}}%AEL

\def\II{\relax{I\kern-.10em I}}

%
%{\buildrel b \to i \over \longrightarrow}

%\def\IZ{\relax\ifmmode\mathchoice
%{\hbox{\cmss Z\kern-.4em Z}}{\hbox{\cmss Z\kern-.4em Z}}
%{\lower.9pt\hbox{\cmsss Z\kern-.4em Z}}
%{\lower1.2pt\hbox{\cmsss Z\kern-.4em Z}}
%\else{\cmss Z\kern-.4emZ}\fi}
%\def\IZ{\relax{\rm Z\kern-.34em Z}}
\def\IZ{\mathbb{Z}}
\def\IB{\relax{\rm I\kern-.18em B}}

\def\ID{\relax{\rm I\kern-.18em D}}
\def\IE{\relax{\rm I\kern-.18em E}}
\def\IF{\relax{\rm I\kern-.18em F}}
\def\IG{\relax\hbox{$\inbar\kern-.3em{\rm G}$}}
\def\IGa{\relax\hbox{${\rm I}\kern-.18em\Gamma$}}
\def\IH{\relax{\rm I\kern-.18em H}}
\def\II{\relax{\rm I\kern-.18em I}}
\def\IK{\relax{\rm I\kern-.18em K}}
%\def\IP{\relax{\rm I\kern-.18em P}}

%\def\IX{\relax{\rm X\kern-.01em X}}
%this doesn't work

%

\def\inbar{\,\vrule height1.5ex width.4pt depth0pt}

\def\IR{\mathbb{R}}

%i made these.

%\def\simgt{\hskip0.05in\relax{ > 
%{\lower5.0pt\hbox{\kern-1.05em $\sim$}} } \hskip0.05in}
%\def\simlt{\hskip0.05in\relax{ <
%{\lower5.0pt\hbox{\kern-1.05em $\sim$}} } \hskip0.05in}
%\def\simlt{\hskip0.05in\relax{ 
%\raise3.0pt\hbox{ $<$
%{\lower5.0pt\hbox{\kern-1.05em $\sim$}} } \hskip0.05in}
\def\simgt{\hskip0.05in\relax{ 
\raise3.0pt\hbox{ $>$
{\lower5.0pt\hbox{\kern-1.05em $\sim$}} }} \hskip0.05in}

%

 % for now

%

\def\lp10{\ell_p^{10}}
\def\lp11{\ell_p^{11}}
\def\R11{R_{11}}

\def\frac#1#2{{#1 \over #2}}

%\def\ni{\noindent}

%$ \norm{A}$

%identity operator from doyon-fonseca

%$ \norm{A}$

%% from the topological vertex paper

%%                              TABLEAUX.TEX
%%      This  macro file is for producing a ``Young Tableau'' which is
%%      an array of little squares sometimes used in mathematical physics.
%%      For instance, the command $\tableau{6 3 2}$ will produce a tableau
%%      with 6 squares in the top row, 3 in the next, and 2 in the last.
%%                                  OOOOOO
%%      This tableau will look like OOO    but made of squares instead of O's.
%%                                  OO
%%      Any number of rows may be present, each having a nonzero number of
%%      squares.
%%
%%      A tableau is math mode material, so use $ or $$ to enclose it.
%%
%%      The size and line-thickness of the little boxes are controlled by the
%%      dimension parameters --
%%              \tableauside=1.0ex              %(size)
%%              \tableaurule=0.4pt              %(line-thickness)
%%      Change them if you want.
%%
%%                                                      -- Doug Eardley 9/19/8%%
%%
\newdimen\tableauside\tableauside=1.0ex
\newdimen\tableaurule\tableaurule=0.4pt
\newdimen\tableaustep
\def\phantomhrule#1{\hbox{\vbox to0pt{\hrule height\tableaurule width#1\vss}}}
\def\phantomvrule#1{\vbox{\hbox to0pt{\vrule width\tableaurule height#1\hss}}}
\def\sqr{\vbox{%
  \phantomhrule\tableaustep
  \hbox{\phantomvrule\tableaustep\kern\tableaustep\phantomvrule\tableaustep}%
  \hbox{\vbox{\phantomhrule\tableauside}\kern-\tableaurule}}}
\def\squares#1{\hbox{\count0=#1\noindent\loop\sqr
  \advance\count0 by-1 \ifnum\count0>0\repeat}}
\def\tableau#1{\vcenter{\offinterlineskip
  \tableaustep=\tableauside\advance\tableaustep by-\tableaurule
  \kern\normallineskip\hbox
    {\kern\normallineskip\vbox
      {\gettableau#1 0 }%
     \kern\normallineskip\kern\tableaurule}%
  \kern\normallineskip\kern\tableaurule}}
\def\gettableau#1 {\ifnum#1=0\let\next=\null\else
  \squares{#1}\let\next=\gettableau\fi\next}

\tableauside=1.0ex
\tableaurule=0.4pt

%% from shiraz

%\font\smallrm=cmr8
% \def\app#1#2{\global\meqno=1\global\subsecno=0\xdef\secsym{\hbox{#1.}}
% \bigbreak\bigskip\noindent{\bf Appendix.}\message{(#1. #2)}
% \writetoca{Appendix {#1.} {#2}}\par\nobreak\medskip\nobreak}
% %
% %       \eqn\label{a+b=c}       gives displayed equation, numbered
% %                               consecutively within sections.
%%     \eqnn and \eqna define labels in advance (of eqalign?)
% %
% \def\eqnn#1{\xdef #1{(\secsym\the\meqno)}\writedef{#1\leftbracket#1}%
% \global\advance\meqno by1\wrlabeL#1}
% \def\eqna#1{\xdef #1##1{\hbox{$(\secsym\the\meqno##1)$}}
% \writedef{#1\numbersign1\leftbracket#1{\numbersign1}}%
% \global\advance\meqno by1\wrlabeL{#1$\{\}$}}
% \def\eqn#1#2{\xdef #1{(\secsym\the\meqno)}\writedef{#1\leftbracket#1}%
% \global\advance\meqno by1$$#2\eqno#1\eqlabeL#1$$}
%
%\def\IZ{\relax\ifmmode\mathchoice
% {\hbox{\cmss Z\kern-.4em Z}}{\hbox{\cmss Z\kern-.4em Z}}
% {\lower.9pt\hbox{\cmsss Z\kern-.4em Z}}
% {\lower1.2pt\hbox{\cmsss Z\kern-.4em Z}}\else{\cmss Z\kern-.4em Z}\fi}

%\global\newcount\itemno \global\itemno=0
%\def\itemized{\global\itemno=0\bigbreak\bigskip\noindent}
%\def\itemaut#1{\global\advance\itemno by1\noindent\item{\the\itemno.}#1}

%\itemized
%\itemaut{First this.}
%\itemaut{Then that.}

\def\({\left(}
\def\){\right)}

\def\ii{{\bf i}}

%Approximately less than operators:
\def\lsim{\mathrel{\mathstrut\smash{\ooalign{\raise2.5pt\hbox{$<$}\cr\lower2.5pt\hbox{$\sim$}}}}}
\def\gsim{\mathrel{\mathstrut\smash{\ooalign{\raise2.5pt\hbox{$>$}\cr\lower2.5pt\hbox{$\sim$}}}}}

%A leftright arrow which acts like \vec:
\def\overleftrightarrow#1{\vbox{\ialign{##\crcr
     $\leftrightarrow$\crcr\noalign{\kern-0pt\nointerlineskip}
     $\hfil\displaystyle{#1}\hfil$\crcr}}}
     
     \def\overleftarrow#1{\vbox{\ialign{##\crcr
     $\leftarrow$\crcr\noalign{\kern-0pt\nointerlineskip}
     $\hfil\displaystyle{#1}\hfil$\crcr}}}

%%ENGLISH MACROS
\def\eg{{\it e.g.}}

\hyphenation{Di-men-sion-al}

%%% additions new to 239a

%%% further innovations

\newif{\ifeq}           % defines a new condition @eq tested by the conditional \ifeq
\eqtrue                 % if uncommented, declares @eq to be true
%\eqfalse               % if uncommented, declares @eq to be false
                                %
                                % to use this, wrap text with the conditional, eg:
                                %
                                % \ifeq
                                % SHOW THIS IFF \eqtrue HAS BEEN DECLARED
                                % \fi

\newcounter{lecturecounter}
% \addtocounter{lecturecounter}{19}

%\newcounter{lecturecounter}
%%[counter]

%\newcounter{bar}
%\newcommand{\Lec}{%
%        \stepcounter{bar}%
%        \thebar}

%% SKIP TO HERE

%\renewcommand{\thesubsection}{(\alph{subsection})}
%\renewcommand{\thesubsection}{(\roman{subsection})}
%\newpage
%\setcounter{page}{1}
%\def\chapternumber{1}
%\renewcommand{\thesection}{\chapternumber}
%\renewcommand{\thepage}{\chapternumber-\arabic{page}}

%\title{Lecture Notes}
%% Lecture Notes}
%%Chapter \chapternumber: }
%\author{Lecturer: McGreevy
%%\footnote
%%{}
%}
%\date{{\tiny Last updated: \today, \currenttime}
%}
%\begin{document}
%\maketitle
%
%%\vfill
%
%\renewcommand{\contentsname}{}
%%\vskip-.5in
%%\breakS
%
%\tableofcontents
%
%\breakS
%
%\section{}
%\end{document}

%%%%%% making revtex unnecessary:

\textheight 22cm
\textwidth 15.5cm
\oddsidemargin 0pt\evensidemargin 0pt
\topmargin-30pt
%\bottommargin0pt
\def\baselinestretch{1.1}
%\addtolength{\parskip}{1ex}
\jot=.5ex

\setlength\arraycolsep{2pt}

\renewcommand{\title}[1]{\vbox{\center\LARGE{#1}}\vspace{5mm}}
\renewcommand{\author}[1]{\vbox{\center#1}\vspace{5mm}}
\newcommand{\address}[1]{\vbox{\center\em#1}}

\renewcommand{\date}[1]{\vbox{\center#1}}

\parskip = 1ex
%\parindent = 0.1in

%%%%%%%% things above here need to be adjusted in the template.

\usepackage{adjustbox}

\usetikzlibrary{angles}

\usepackage{geometry}
\usepackage{pdflscape}

\usepackage{amsmath} 
\usepackage{mathtools}

\usepackage{amsthm}
\usepackage{thmtools}
\usepackage{thm-restate}

\usepackage{diagbox}
\usepackage{boldline}

\usepackage{bbold}

\makeatother

% define colors and use them~

\def\bcolor{rgb, 255:red, 248; green, 231; blue, 28}
\def\ccolor{rgb, 255:red, 248; green, 231; blue, 28}
\def\dcolor{rgb, 255:red, 245; green, 166; blue, 35}

%\definecolor{intersectionColor}{rgb}{0.560181, 0.691569, 0.194885}
%\definecolor{intersectionColor}{RGB}{153, 205, 153}
\definecolor{intersectionColor}{rgb}{0.560181, 0.691569, 0.194885}

\definecolor{nonsectorizable}{RGB}{230, 97, 0}
\definecolor{sectorizable}{RGB}{93, 58, 155}

%%%%%%%%%%%% Improve the labeling
\theoremstyle{plain}
\newtheorem{theorem}{Theorem}[section] % reset theorem numbering for each chapter

\theoremstyle{definition}
\newtheorem{definition}[theorem]{Definition} % definition numbers are dependent on theorem numbers
\newtheorem{exmp}[theorem]{Example} % same for example numbers

\newtheorem{nonexmp}[theorem]{Non-Example} % same for non-example numbers

\newtheorem{porism}[theorem]{Porism} % same for non-example numbers

%%%%%%%%%%%%%%%%%%%%%%
%\theoremstyle{definition}
%\newtheorem{definition}{Definition}[section]

\newtheorem*{remark}{Remark}

\theoremstyle{plain}
\newtheorem{corollary}{Corollary}[theorem]

\theoremstyle{plain} 
\newtheorem{lemma}[theorem]{Lemma}

\theoremstyle{plain}

\theoremstyle{plain} 
\newtheorem{Proposition}[theorem]{Proposition}

%\theoremstyle{definition}
%\newtheorem{exmp}{Example}[section]

% to draw the axioms

\definecolor{JLHgreen}{RGB}{50,100,50}
\definecolor{JMblue}{RGB}{25,25,125}
\definecolor{BSorange}{RGB}{140,50,0}
\definecolor{l-blue}{RGB}{32, 168, 232}
\definecolor{d-blue}{RGB}{10, 68, 168}
\definecolor{l-purple}{RGB}{201, 121, 237}
\definecolor{d-purple}{RGB}{111, 6, 158}
\definecolor{my-orange}{RGB}{237, 140, 12}
\definecolor{my-red}{RGB}{194, 29, 29}

\newcommand{\calC}{{\mathcal C}}
\newcommand{\calD}{{\mathcal D}}
\newcommand{\calH}{{\mathcal H}}

\newcommand{\ubBn}{\underline{\mathbf{B}}^n}
\newcommand{\uubBn}{\underline{\underline{\mathbf{B}}}^n}

\begin{document}
%\maketitle
%\section{}
%\subsection{}

%\title{Braiding non-degeneracy and low-dimensional topology: Kirby's torus trick in entanglement bootstrap}

%\title{Remote detectability and low-dimensional topology: Kirby's torus trick in entanglement bootstrap}

%\title{Remote detectability from entanglement bootstrap with Kirby's torus trick}

\title{Remote detectability from entanglement bootstrap I: Kirby's torus trick}

\author{Bowen Shi, Jin-Long Huang, John McGreevy}
%\date{}							% Activate to display a given date or no date

\address{Department of Physics, University of California at San Diego, La Jolla, CA 92093, USA}

\begin{abstract}

Remote detectability is often taken as a physical assumption in the study of topologically ordered systems, and it is a central axiom of mathematical frameworks of topological quantum field theories. 
We show under the entanglement bootstrap approach that remote detectability is a necessary property; that is, we derive it as a theorem.
Starting from a single wave function 
on a topologically-trivial region
satisfying the entanglement bootstrap axioms, we can construct states on closed manifolds. The crucial technique is to immerse 
the punctured manifold
into the topologically trivial region and then heal the puncture. This is analogous to Kirby's torus trick.  
We then analyze a special class of such manifolds, which we call \emph{pairing manifolds}. For each pairing manifold, which pairs two classes of excitations, we identify an analog of the topological $S$-matrix. This {\it pairing matrix} is unitary, which implies remote detectability between two classes of excitations. These matrices are in general not associated with the mapping class group of the manifold. As a by-product, we can count excitation types (e.g., \emph{graph excitations} in 3+1d). The pairing phenomenon occurs in many physical contexts, including systems in different dimensions, with or without gapped boundaries. We provide a variety of examples to illustrate its scope.
\end{abstract}

\vfill

\today

\vfill\eject

%\maketitle

\setcounter{tocdepth}{2}    % =2 means don't show subsubsections in toc

\renewcommand{\baselinestretch}{0.75}\normalsize
\tableofcontents
\renewcommand{\baselinestretch}{1.1}\normalsize

\vfill\eject

\vfill\eject

\section{Introduction}\label{sec:introduction}

Topological quantum field theory (TQFT)~\cite{Atiyah1988,Witten1989,walker1991} is a machine that eats arbitrary manifolds and produces invariants. The entanglement bootstrap \cite{shi2020fusion,Shi:2020rne,knots-paper} begins with a single state on a topologically-trivial region, e.g., a ball or a sphere. How can it incorporate the data on nontrivial closed manifolds? In this work, we provide a method to construct quantum states on various manifolds from the reference state on a ball. With this technique at hand, we further give a general proof of the remote detectability for topologically ordered systems.

Remote detectability \cite{Lan:2018vjb,Lan:2018bui} is the statement that, in topologically ordered systems, each nontrivial topological excitation (which can be a particle or a loop or something else) must be detectable by a remote process involving another (possibly different) class of excitations. It is a broad phenomenon that is believed to occur in many physical setups, including topologically ordered systems in arbitrary dimensions ($d\ge 2$) \cite{Lan:2018vjb,Kong2014,Wang2016,Johnson-Freyd:2020usu} and systems with gapped boundaries~\cite{Fuchs2012,Levin2013,Kong2015}. Remote detectability is a vast generalization of the braiding non-degeneracy of anyons; see Ref.~\cite{Kitaev2005} Appendix~E. In these previous approaches, remote detectability was an axiom (or principle). In entanglement bootstrap, we shall derive this property as a theorem. 

The main idea in our derivation of remote detectability is as follows. We start with a reference state $\sigma$ on a topologically trivial region, which can be either a ball $\mathbf{B}^n$ or a sphere $\mathbf{S}^n$.\footnote{We use boldface letters $\mathbf{B}^n$ and $\mathbf{S}^n$ (instead of the more standard topological notation $B^n$ and $S^n$) to indicate that these regions are those on which we define the reference state.} We immerse (i.e., locally embed) a closed manifold $\CM$ of interest into the ball or sphere upon removing a ball from it, that is, we consider $\CW=\CM \setminus B^n$ and an immersion $\CW \looparrowright \mathbf{B}^n$ or $\CW \looparrowright \mathbf{S}^n$. Then with a trick to heal the puncture, we construct a state on the manifold $\CM$. We identify a specific class of manifolds that is important for the study of remote detectability; we refer to them as \emph{pairing manifolds}. Each pairing manifold ($\CM=X\bar{X}=Y\bar{Y}$\footnote{Here $\bar{X}$ is the complement of $X$ in $\CM$, that is $\bar{X}= \CM \setminus X$.}) pairs two excitation types, namely the excitation types characterized by the information convex sets of regions $X$ and $Y$ respectively.
For each pairing manifold, we identify a ``pairing matrix",  which is a finite-dimensional unitary matrix analogous to the topological $S$ matrix in the anyon theory. The remote detectability of the excitations follows from the unitarity of the pairing matrix. (See  Ref.~\cite{Shi:2019ngw} for an entanglement bootstrap study of the topological $S$ matrix in the anyon theory, which contains a trick we adopt.)

Our method can be thought of as a quantum analog of Kirby's torus trick~\cite{kirby1969stable}: it uses immersion and pulls back structures from a topologically trivial region to closed manifolds (e.g., the torus). Then, because the structure on the closed manifold (i.e., the stability in Kirby's case) is better understood, insight into the  topologically trivial region (i.e., $\IR^n$  in Kirby's case) is gained by the consistency. In our method, we pull back quantum states to punctured manifolds and then heal the puncture. 
The structures, i.e., universal data, associated with the quantum state on the topologically trivial region understood by our method include the pairing matrix, best defined on the pairing manifolds, as well as various consistency relations derived by making use of the pairing manifolds. 
Intriguingly, Hastings considered an application of Kirby's torus trick for gapped invertible phases~\cite{Hastings:2013vma}, where the local Hamiltonian is pulled back. In comparison, our approach works for gapped invertible phases as well as intrinsic topological orders, and we only make use of the quantum states. One innovation is the use  of \emph{building blocks}, which carry the instruction for picking the right state on the punctured manifold. Such states allow a ``smooth" healing of the puncture.  

Why are we interested in the entanglement bootstrap approach to topological orders and TQFT? The entanglement bootstrap is an independent theoretical framework: it requires an input reference state that satisfies two axioms on bounded-sized regions (see Eq.~\eqref{eq:3d-axioms}). From there, various rules are derived as information-theoretic consistency relations of the quantum theory. In the process of deriving such consistency relations, mathematical objects are identified that capture the universal data of the gapped phase. These data are encoded in the reference state, which is physically the ground state. Below are some additional perspectives:
\begin{enumerate}

    \item TQFT and its categorical description, while standing as the best-known candidate of the underlying mathematical theory for topological field theory, are not without ambiguities. In mathematics, it is possible to add various adjectives to TQFTs and tensor category theories. Therefore, there has always been the question of which adjective is the right one for a given physical setup. Furthermore, the study of TQFT and topological orders in higher dimensions ($d\ge 3$) (see~\eg~\cite{Lan:2018vjb,Lan:2018bui,Johnson-Freyd:2020usu,Johnson-Freyd:2020twl}) is an ongoing research direction.
    
    \item The entanglement bootstrap is a bridge between the program of classifying gapped quantum phases and the classification of quantum states. This is because any data we identify (such as the pairing matrix), becomes a label of the reference state which satisfies an entanglement area law captured by the axioms. In this way, entanglement bootstrap puts labels on quantum states satisfying the axioms and thus classifies them.
    
    \item One expects to discover new connections between the ground states and universal properties. For instance, in this work, we identify excitations that have not been studied, e.g., the graph excitations in 3d topological order. They are excitations occupying a thin handlebody.
    Handlebodies in 3d are classified by genus, and for each genus, we identify a class of ``graph excitations". The remote detectability for this new class of graph excitation is one of the many examples we give. As a byproduct, we provide a counting of the number of graph excitations for an arbitrary genus. As we shall explain, the counting manifests the fact that the coherence in the fusion space in the particle cluster is needed in detecting the graph excitations.  
    
    Recently, some of the tools developed in entanglement bootstrap have been useful guidelines for finding new topological invariants calculable from a ground state wave function. (See \cite{Kim2021} for a proposed formula for the chiral central charge, in terms of the modular commutator. See \cite{Fan2022} for a proposed formula for the Hall conductance, in the context of which the reference state is symmetric under a $U(1)$ onsite symmetry.) These studies suggest that the ground state wave function on a topologically-trivial region contains data beyond those captured by a tensor category. One may wonder if this line of ideas generalizes into higher dimensions.

    \item Lastly, entanglement bootstrap, rooted in earlier related works, suggests a new philosophical relation between the universal data and the ground state(s). It suggests, in a concrete manner, that many (possibly all) data of the gapped phase of matter (or TQFT) can be extracted from a topologically trivial patch of the ground state wave function. In other words, the universal data is encoded in the seed or its ``DNA" (reference state on a patch larger than the correlation length) of the ``plant" (topologically ordered system). Our approach is to first grow a plant from the seed and then study the  morphology of the plant; therefore, we do not need to start with the whole plant. See Fig.~\ref{fig:plant} for an illustration.
\end{enumerate}

\begin{figure}
    \centering
    \parfig{0.80}{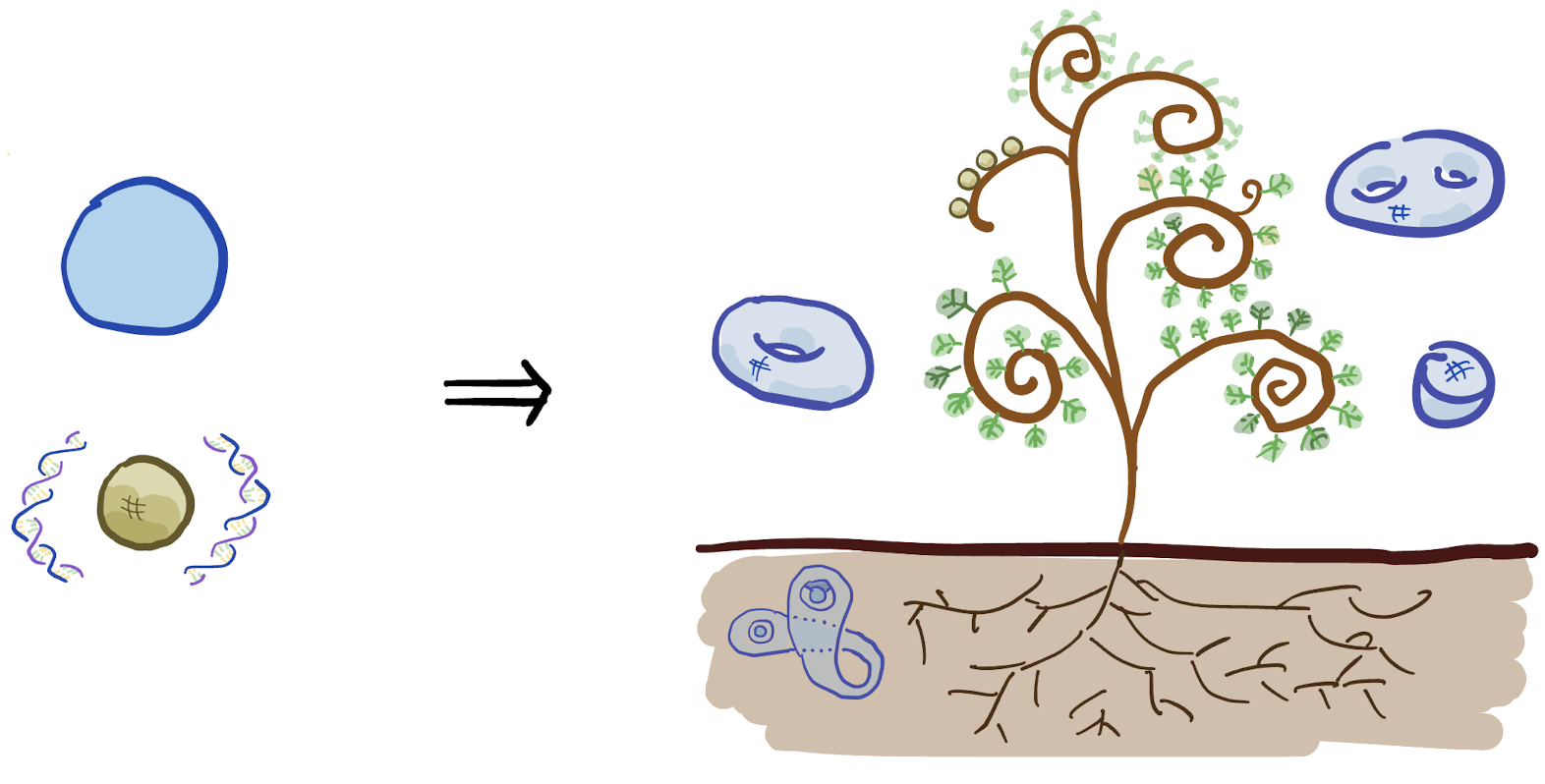}
    \caption{Illustrated is an analog to the growth of a plant from a seed.}
    \label{fig:plant}
\end{figure}

The pairing matrices we identify are analogs of the topological $S$ matrix in the anyon theory.
A surprise is that, while each of these pairing matrices is associated with a closed manifold, the matrix is not associated with its mapping class group (MCG), unless the two classes of excitations characterized by $X$ and $Y$ are identical. This distinguishes them from a previous generalization of the topological $S$ matrix in 3d, associated with the 3-tori \cite{Jiang:2014ksa,Wang2016}. Some of the matrices considered in~\cite{Wang2016,
Wang2019,Shen:2019wop} are examples of pairing matrices. Pairing matrices fall into three kinds, depending on whether $X$ and $Y$ encode a nontrivial fusion space; see Table~\ref{table:unitarity-combined} for the examples we consider.

\subsection{Reader's guide}\label{subsec:reader-guide}
%{Heuristic summary}

This section is a reader's guide.
It includes a figure that summarizes the key concept developed (or reviewed) in each section and the relations between the sections (see  Fig.~\ref{fig:map}), and a table that summarizes the examples of pairing manifolds (see Table~\ref{table:unitarity-combined}).
Below are some more details. 

\subsubsection{Content of the sections}
 
\begin{figure}
    \centering
    \begin{tikzpicture}
    %[%
 %   >=stealth,
 %   node distance=3cm,
 %   auto
 % ]
  
        %box for section 2
       \begin{scope}[scale=1.11]
       \begin{scope}[xshift=0.5cm, yshift=-0.0 cm]
       \scalebox{.85}{ 
        \draw[draw=black] (0,0) rectangle (5+2,-3.5);
        \begin{scope}[xshift=-0.2 cm] % text in the Section 2 box
        \node[right] at (0.5-0.2,-0.5){\bf \S2: Backgrounds \& Tools};
        \node[right] at (0.5,-1.0-0.05) {axioms, information convex sets};
        \node[right] at (0.5,-1.5-0.10) {structures, vacuum, associativity};
        \node[right] at (0.5,-2.0-0.55){immersion};
        \node[] at (4.7,-2.6) {\parfig{0.285}{figs/fig_summary-s2-1-immersion}};
        \end{scope}
        }
        \end{scope}

        % box for section 3     
        \begin{scope}[xshift=-0.5 cm, yshift=-0.0 cm]
        \scalebox{.85}{ 
        \draw[draw=black] (7+2,0) rectangle (12+4,-3.5);
         \begin{scope}[xshift=0.8 cm] % text in the Section 3 box
       \node[right] at (8+ 0.5-0.2,-0.5){\bf \S3: Regions $\leftrightarrow$ Excitations};
        \node[right] at (8+0.5+1,-1.0-0.25)  {\parfig{0.255}{figs/fig_summary-s3-string}};
        \node[right] at (8+0.5,-1.5-0.40) {bulk and gapped boundaries:};
        \node[] at (9.9,-2.9) {\parfig{0.21}{figs/fig_summary-s3-1-bulk-deformation}};
        \node[] at (13.4,-2.9) {\parfig{0.22}{figs/fig_summary-s3-1-bdy-deformation}};
        \end{scope}
         }
        \end{scope}

         % box for section 4
         \begin{scope}[yshift=-0.3 cm]
         \scalebox{.85}{  
         \draw[draw=black] (0,-4) rectangle (16,-7.5);
        \begin{scope}[yshift=-4 cm] % text in the Section 4 box
       \node[right] at (0.5-0.2,-0.5){\bf \S4 Manifolds completed from immersed regions};
       \node[right] at (0.5,-1.0-0.2){completion $(\CM, \CW \looparrowright \mathbf{S}^n \textrm{ or } \underline{\underline{\mathbf{B}}}^n)$};
        \node[right] at (0.5,-1.5-0.4) {building blocks $\{ \CV, \CV^{\Diamond}, \CV^{\triangle}\}$};
        \node[right] at (0.5,-2.0-0.6) {vacuum block completion $(\mathcal{M}, \{ \CV_i\}, \CW \looparrowright \mathbf{S}^n \textrm{ or } \underline{\underline{\mathbf{B}}}^n)$};
        %\node[] at (12.9,-1.2) {\parfig{0.27}{figs/fig_summary-s4-completion2}};
        %\node[] at (12.9,-1.2) {\parfig{0.27}{figs/fig_summary-s4-completion2}};
        \node[] at (13.1,-1.2) {\parfig{0.32}{figs/fig_summary-s4-completion2}};
        \node[] at (13.1,-2.4) {completion trick};
        \node[] at (3.25,-4.7) {\parfig{0.44}{figs/fig_summary-completion-bulk}};
        \node[] at (12.75,-4.7) {\parfig{0.44}{figs/fig_summary-completion-bdy}};
        \end{scope}
        }
         \end{scope}

         % Kirby:
         \begin{scope}[yshift=-1 cm]
            \scalebox{.85}{ 
          \node at (8,-9+1)[fill=red!40!yellow!15!white,draw,decorate,decoration={bumps,mirror},
         minimum height=1cm]
  {``torus trick"};
   }  
         \end{scope}

         % box for section 5 & 6

   \begin{scope}[yshift=-10.5 cm]
         \scalebox{.85}{ 
        \draw[draw=black] (0,0) rectangle (16,-4.0);        
        \begin{scope}
        % text in the Section 5 and 6 box
       \path
       (0.5-0.2,-0.5) node [right]{\bf \S5 and \S6 Pairing manifolds \& examples}
        (0.5,-1.2) node [right]{pairing: $(\mathcal{M}, \{ X, \bar{X}\}, \{ Y, \bar{Y}\}, \CW \looparrowright \mathbf{S}^n \textrm{ or }  \underline{\underline{\mathbf{B}}}^n)$}
         (0.5,-2.0-0.05) node [right]{consequences of pairing:}
       (0.5,-2.5-0.2) node [right]{ $\sum_{a \in \calC_{\partial X}} N_a(X)^2= \sum_{\alpha \in \calC_{\partial Y}} N_{\alpha}(Y)^2= \dim \mathbb{V}(\CM)$}
       (0.5,-3-0.5) node [right]{capacity matching: $\mathcal{Q}(X)= \mathcal{Q}(Y)$, TEE $= \ln \mathcal{Q}$.};
       \node[]  at (12.9,-1.0) {\parfig{0.28}{figs/fig_summary-s5-T2}};
       \node[] at (11.5,-3.0) {\parfig{0.10}{figs/fig_summary-s5-3d-bdy}};
       \node[] at (14.3,-3.0) {\parfig{0.135}{figs/fig_summary-s5-3d-type3}};
        \end{scope} 
         }
    \end{scope}

\draw [<-] (6.7,-6.7) -- (6.7, -7);

\draw [->]  (6.7,-8.2) -- (6.7, -8.8);

        % section 7     
        \begin{scope}[yshift=1.8 cm,yshift=-17.2 cm]
        \scalebox{.85}{ 
         \draw[draw=black] (8-3.5,0) rectangle (8+3.5,-2.7);
         \begin{scope}[yshift=0 cm, xshift=4.5 cm] % text in the Section 7 box
       \path (0.5-0.2,-0.5) node [right]{\bf \S7 Pairing matrices}
        (0.5,-1.0-0.3) node [right]{$S_{(a,i,j)(\alpha, I, J)}= \langle a_X^{(i,j)}| \alpha_Y^{(I,J)}\rangle/\sim$}
        %(0.5,-1.5-0.10) node [right,text width=8cm]{fix the gauge}
        (0.5,-2.0-0.15) node [right]{three types depending on $X$, $Y$};
        \end{scope}
        }
        \end{scope}
        \end{scope}
\end{tikzpicture}
\vskip-1.0in
    \caption{A summary of the content of the main text. The focus is on the concepts developed and the relations between the sections.} 
    \label{fig:map}
\end{figure}
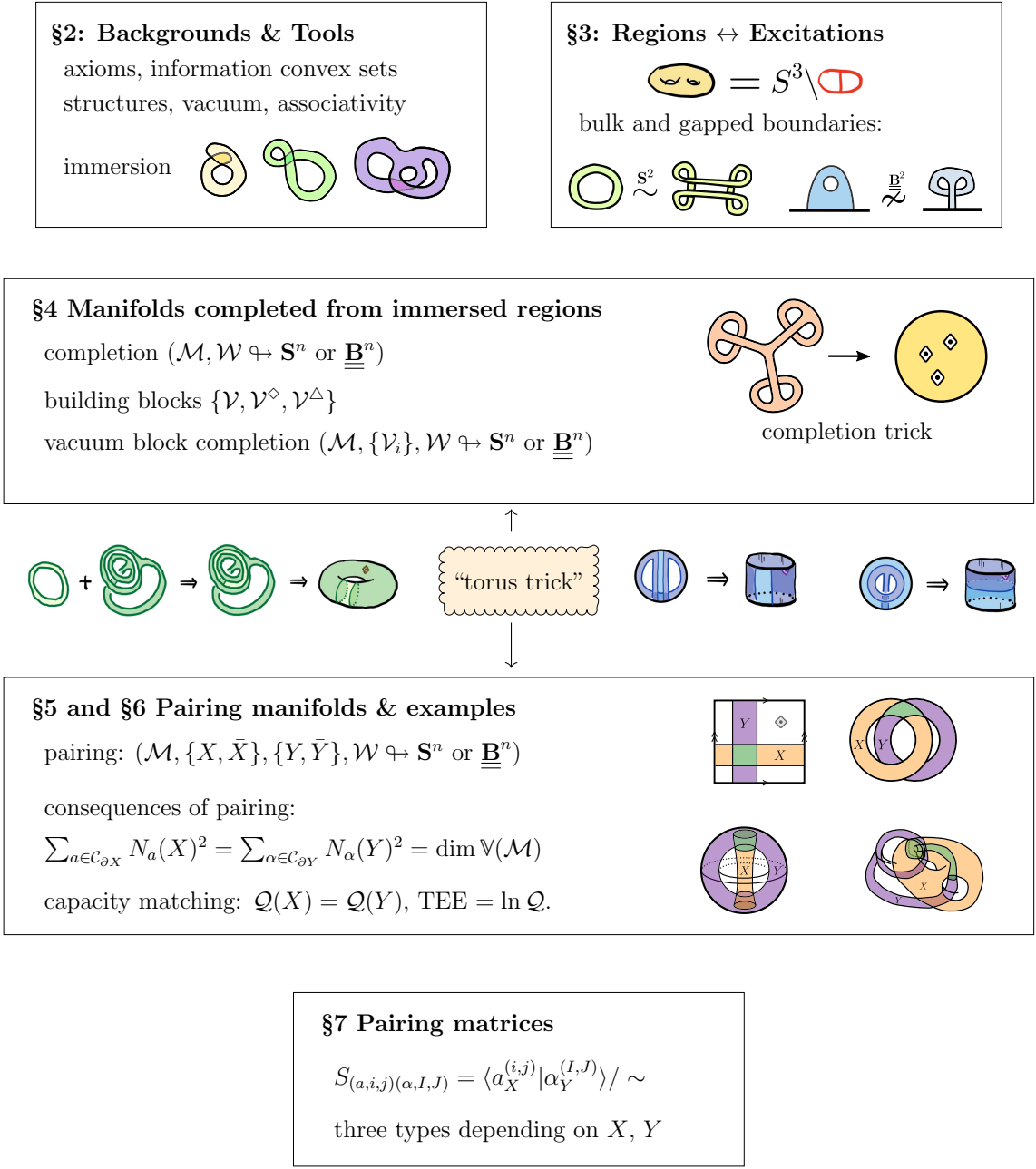

In \S\ref{sec:2-background}, we collect useful tools of entanglement bootstrap, that are developed in previous works. We start by reviewing the axioms in general dimensions. The structure theorems of the information convex sets enable us to talk about superselection sectors and fusion spaces. The merging theorem allows us to glue regions on either an entire entanglement boundary or part of the entanglement boundary. The associativity theorem tells us how the dimensions of the fusion spaces match upon gluing an entire entanglement boundary. We introduce the idea of constrained information convex set, which allows a nice reformulation of the associativity theorem.  We recall the definition of quantum dimensions and the properties of the vacuum. Finally, we recall the concept of immersed regions\footnote{As is explained in Ref.~\cite{knots-paper}, the idea is essentially the same as {\it topological immersion}: a continuous map from a topological manifold where every point in the source has a neighborhood on which the map restricts to an embedding. We only need immersion maps between manifolds of the same dimension, and in this case, the immersion in question is also a submersion.}, which will play an important role in this work.

In \S\ref{section:3}, we explain the duality between excitations and regions. In the bulk, we shall consider excitations (which can be one excitation or a cluster of excitations) that live on a sphere $\mathbf{S}^n$. The region  dual to the excitations is homeomorphic to the subsystem of a sphere that occupies the complement of the excitations. We also discuss the immersed version of such regions. These regions will be used as ``building blocks" for constructing closed manifolds. We further discuss similar dualities in the physical context of gapped boundaries.  We also discuss excitation types that are new to our knowledge. For instance, in the 3d bulk, we identify \emph{graph excitations} which are located in ``graphs".\footnote{An accurate statement is the excitations are supported on thin handlebodies. Handlebodies in 3d are solid genus-$g$ surfaces.}
They are more general than the familiar loop excitations. 
The regions that detect them are also handlebodies. These graph excitations are classified into subclasses labeled by the genus. 

In \S\ref{sec:completion}, we discuss the main conceptual progress of this work: 
that is the idea of
 making closed manifolds $\CM$ from immersed regions $\CW \looparrowright \mathbf{S}^n$, where $\CW= \CM \setminus B^n$. The concepts and techniques to achieve this goal include completion (Definition~\ref{def:completion}), completion trick (Lemma~\ref{lemma:completion-trick-v2}), building blocks (Definition~\ref{def:building-block-bulk}), and vacuum block completion (Definition~\ref{def:vacuum-completion}). Intuitively, if a closed manifold $\CM$ allows a completion, then there exist quantum states on $\CM$, which are assembled from the reduced density matrices of the reference state on small patches. The completion trick is a general trick to achieve that. Vacuum block completion is a special type of completion; it carries an instruction to build the closed manifold from a set of building blocks. Each manifold so constructed has a canonical state with respect to the choice of building blocks. This canonical state is obtained by assembling the ``vacuum" states of these building blocks. Importantly, the canonical state satisfies the entanglement bootstrap axioms on all balls contained in $\CM$. 
We provide many examples to illustrate these concepts and techniques.

In \S\ref{sec:5-pairing-manifold}, we introduce the concept of \emph{pairing manifold} (Definition~\ref{def:pairing-vacuum-type}). It is a special class of manifolds that allows \emph{a pair} of vacuum block completions, as
\begin{equation}
    \CM = X\bar{X} = Y \bar{Y},
\end{equation}
in addition to a couple of extra conditions. The most important condition is that $X$ and $Y$ cut each other into balls and thus hide information from each other completely.
From the definition, we prove a set of properties of pairing manifolds. For instance, the Hilbert space dimension on $\CM$ is determined by the information convex set of either $X$ or $Y$. In short, this is a manifestation that a pairing manifold is a machine that pairs two classes of excitations.\footnote{When either $X$ or $Y$ is not sectorizable, it is useful to think of these as two clusters of excitations.} One class is that detected by the region $X$, and another is that detected by the region $Y$. 

In \S\ref{sec:pairing-examples}, we provide many examples of pairing manifolds in various spatial dimensions and systems with or without gapped boundaries. See Table~\ref{table:unitarity-combined} for the examples in 2, 3, and 4 spatial dimensions. As a by-product of this analysis, we count the total number of graph excitations for any given genus. The result is expressed in terms of the fusion multiplicity of point particles. We show that some manifolds can be a pairing manifold in multiple ways.

In \S\ref{sec:S-matrix-closed-manifolds}, we introduce the pairing matrix ($S$), a generalization of the well-known topological $S$-matrix. This is a different generalization that has been considered in \cite{Jiang:2014ksa}, and the pairing matrices are not necessarily associated with the mapping class group. A pairing matrix generates a unitary transformation between two bases specified by the two cuts, and it is a unitary matrix.
The pairing matrices fall into three types, depending on whether a fusion space needs to be involved in the remote detection process; see Table~\ref{table:unitarity-combined} for some examples. 
A pairing matrix of the first type can be thought of as the braiding matrix between two classes of excitations; these excitations detect each other without making use of fusion spaces, and the number of excitations in each class must be identical. A pairing matrix of the second type provides an example of remote detection involving nontrivial fusion spaces. The physical picture to keep in mind is that a cluster of coherently-created excitations detects the excitations in another class. 
A pairing matrix of the third type generically requires the fusion spaces of both excitation clusters (one detected by $X$ and another detected by $Y$) to participate in the remote detection process. 

\S\ref{sec:discussion} discusses open questions. In Appendix~\ref{section:glossary}, we summarize the notations and provide a glossary. In Appendix~\ref{subsection:relations-with-d}, we prove a set of consistency conditions on quantum dimensions and fusion rules, generalizing those found in \cite{knots-paper}.  This is used in Appendix~\ref{appendix:handlebodies}, which gives an exposition of graph excitations, and exemplifies them using quantum double models.   Appendix~\ref{app:qd-check} illustrates some of the consequences of the fact that $(S^2 \times S^1) \# (S^2 \times S^1) $ is a pairing manifold in two ways, in the family of 3d quantum double models. 

We have placed a ${}^\star$ next to sections and items that refer to gapped boundaries.  A reader uninterested in gapped boundaries may skip these items without losing the logical flow of the paper.  

\subsubsection{Why Kirby's torus trick?}

Finally, we explain the somewhat mysterious statement that our method is an analog of Kirby's torus trick. The first question is \emph{what is the torus trick?} The general idea of using such an immersion to pull back structure from $\IR^n$ to another topological manifold is sometimes called the {\it Kirby's torus trick} \cite{kirby1969stable}.  In Kirby's work, this idea is used to pull back smooth or piecewise-linear structures.  
In \cite{Hastings:2013vma} this idea was used to pull back local Hamiltonians for invertible phases.  Hastings \cite{Hastings:2013vma} and Freedman \cite{Freedman2019,Freedman:2019ucy} also used this idea to pull back Quantum Cellular Automata.

We pull back the information about the universal property of the gapped phases encoded in the quantum (ground) states, and therefore, it can be thought of as a quantum analog of Kirby's torus trick~\cite{kirby1969stable}. We use immersion to pull back structures from a topologically trivial region to closed manifolds, e.g., the torus. 
More specifically, we first construct quantum states on immersed punctured manifolds and then heal the puncture. 
The structures associated with the topologically trivial region understood this way include the pairing matrix, best defined on the pairing manifolds, as well as various consistency relations derived by making use of the pairing manifolds.

The second question is: 
\emph{which sections of this work contain the analog of the torus trick?}
An answer is indicated in Fig.~\ref{fig:map}.  
It says that the trick is most relevant to \S	\ref{sec:completion} and \S\ref{sec:5-pairing-manifold}. The reason is as follows. Section~\ref{sec:completion} solves the problem of how to pull back quantum states from a ball (or a sphere) to a closed manifold. Section~\ref{sec:5-pairing-manifold} identifies a class of closed, connected manifolds that can provide insights into some mathematical structures, e.g., consistency relations and the braiding nondegeneracy. (The discussion of pairing matrices in Section~\ref{sec:S-matrix-closed-manifolds} is a continuation of this analysis.)

That a canonical state on a closed manifold can be constructed with the additional instruction carried by the building blocks is a new feature. This enables us to construct a state on the punctured manifold $\CW$, for which a ``smooth" healing of puncture is possible even for intrinsic topological orders (i.e., non-invertible phases). This provides one way to overcome a difficulty about intrinsic topological orders, emphasized by Hastings~\cite{Hastings:2013vma} (we also note that we are dealing with quantum states rather than Hamiltonians).

\newgeometry{margin=1cm} % modify this if you need even more space
\begin{landscape}
%\begin{adjustbox}{angle=90}
\begin{table}[h!]
    \begin{center}
    \begin{tabular}{  |c  || c|| c |  c || c |  c |    }
     \hline
     context & pairing manifold, $\mathcal{M}$  &      $X$ &  $Y$ & \parbox{.2\textwidth}{excitation detected by $X$} & \parbox{.2\textwidth}{excitation detected by $Y$}  \\   
     \hline\hline
     2d bulk & $T^2$ &      {\color{sectorizable} annulus} & {\color{sectorizable} annulus} &  particle pair & particle pair  \\ 
     \hline
     2d bulk & genus-$g>1$ Riemann surface  &   \color{nonsectorizable}{$g$-hole disk} & \color{nonsectorizable}{$g$-hole disk} & $g+1$-particle cluster & $g+1$-particle cluster \\
     \hline
     \parbox{.1\textwidth}{2d gapped boundary} & \parbox{.25\textwidth}{cylinder with two identical gapped boundaries} & {\color{sectorizable} half-annulus} \parfig{.05}{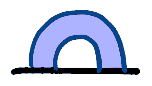}& \color{nonsectorizable}{boundary annulus} \parfig{.05}{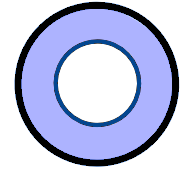}&  \parbox{.18\textwidth}{boundary particle pair}  & \parbox{.2\textwidth}{bulk anyon condensed on boundary}   \\
     \hline
     \parbox{.1\textwidth}{2d gapped interface} & \parbox{.1\textwidth}{striped sphere}
          \parfig{.08}{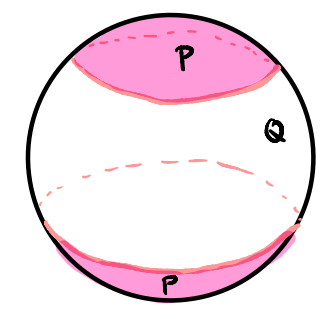}
     & {\color{sectorizable} n-shape} \parfig{.05}{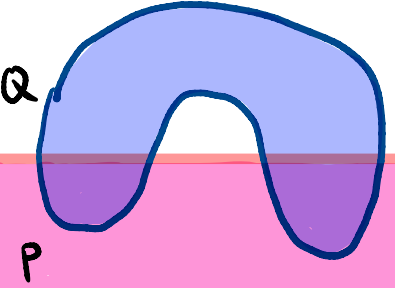}& \color{nonsectorizable}{interface annulus} \parfig{.05}{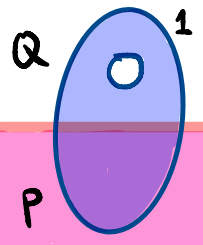}&  \parbox{.18\textwidth}{interface particle pair
     \parfig{.05}{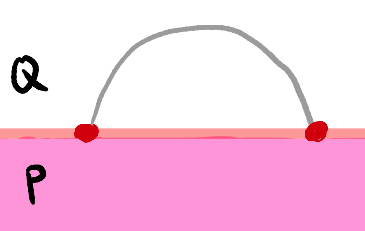}}
     & \parbox{.2\textwidth}{bulk anyon condensed on interface
    \parfig{.05}{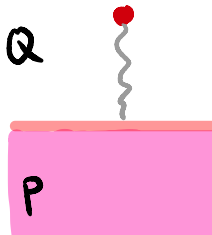}     }
     \\
     \hline
     \parbox{.1\textwidth}{2d gapped interface}  & \parbox{.1\textwidth}{striped torus} 
     \parfig{.08}{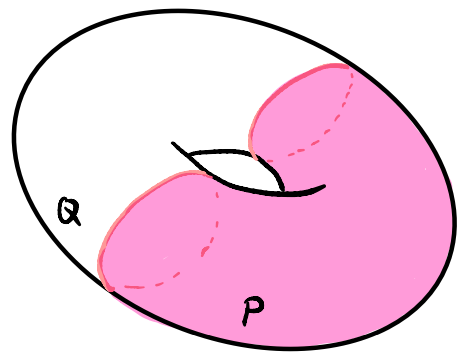}
     & 
     \parbox{.1\textwidth}{\color{nonsectorizable}{x-striped annulus}} \parfig{.05}{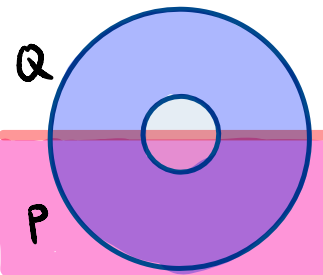}& \color{nonsectorizable}{y-striped annulus} \parfig{.05}{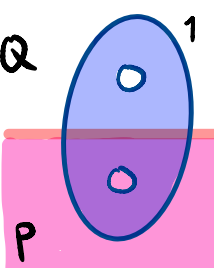}&  \parbox{.18\textwidth}{interface particle pair\parfig{.05}{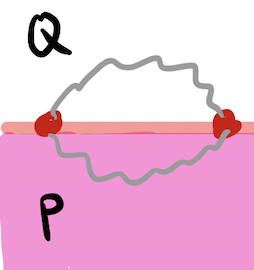}}  & \parbox{.2\textwidth}{$P-Q$ anyon pair  \parfig{.05}{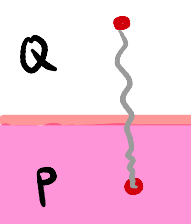}}   \\     \hline     
     3d bulk & $S^2 \times S^1$  & {\color{sectorizable}solid torus} & {\color{sectorizable}sphere shell} & loop & particle pair \\
     \hline
     3d bulk  & $\underbrace{(S^2 \times S^1) \# \cdots \# (S^2 \times S^1)}_{g \text{ times},~g>1}$  & \parbox{.1 \textwidth}{\textcolor{white}{secret}\\{\color{sectorizable}genus-$g$} \newline  {\color{sectorizable}handlebody}\\ } &  $\underbrace{\text{\color{nonsectorizable}{ball minus $g$ balls}}}_{\equiv{\cal B}_g}$ & \parbox{.2\textwidth}{genus-$g$ graph excitation \parfig{.05}{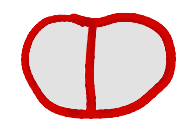} }& \parbox{.2\textwidth}{$g+1$-particle cluster \parfig{.05}{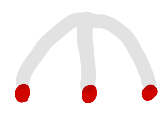}}\\
     \hline
     3d bulk & $(S^2 \times S^1) \# (S^2 \times S^1)$ & {\color{nonsectorizable} ball minus torus} & {\color{nonsectorizable} ball minus torus} & \parbox{.19\textwidth}{loop-particle cluster  \parfig{.07}{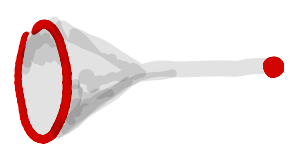}}& \parbox{.19\textwidth}{loop-particle cluster  \parfig{.07}{figs/table-pictures/fig-11-loop-particle-composite.png}} \\
    \hline
     \parbox{.1\textwidth}{3d gapped boundary} 
     & \parbox{.25\textwidth}{sphere shell with \newline two gapped boundaries} 
     & {\color{sectorizable}solid cylinder} \parfig{.05}{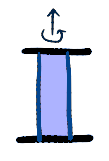}& 
     \parbox{.2\textwidth}{{\color{nonsectorizable} boundary sphere shell}} \parfig{.05}{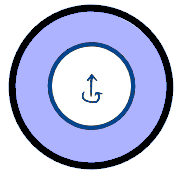}& \parbox{.2\textwidth}{boundary string excitation
     \parfig{.05}{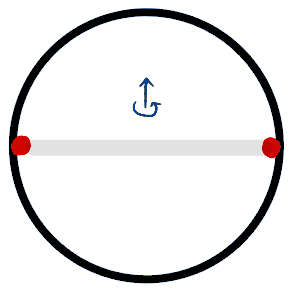}}& \parbox{.2\textwidth}{bulk anyon condensed on boundary \parfig{.05}{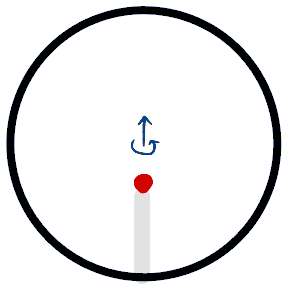}} \\     
     \hline
     \parbox{.1\textwidth}{3d gapped boundary} 
     %% bundt cake photo from here:  Lauren Miyashiro
%%      https://www.delish.com/cooking/recipe-ideas/a19637463/best-bundt-cake-recipe/
     & \parbox{.25\textwidth}{solid torus with \newline gapped boundary} 
     & {\color{sectorizable}bundt cake} 
     \parfig{.065}{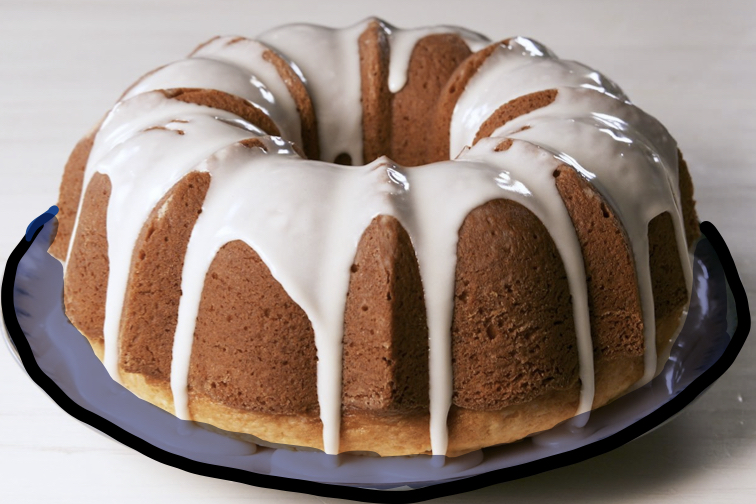} 
      & {\color{sectorizable}half sphere shell} \parfig{.065}{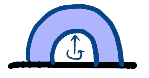} & \parbox{.2\textwidth}{string excitation ending on the boundary \parfig{.07}{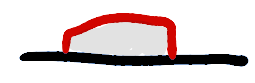}} & \parbox{.2\textwidth}{boundary particle pair \parfig{.07}{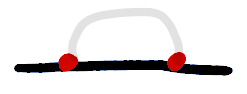}}   \\
     \hline
     \parbox{.1\textwidth}{3d gapped boundary} 
     & \parbox{.25\textwidth}{genus-$g$ handlebody \newline with gapped boundary}
     & \parbox{.1\textwidth}{\color{nonsectorizable}{genus-$g$ \newline bundt cake}}\parfig{.065}{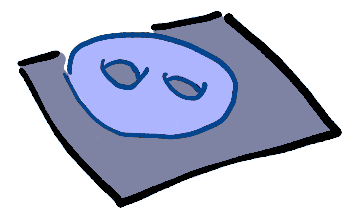} & \color{nonsectorizable}{$\CB_g$ cut in half} \parfig{.065}{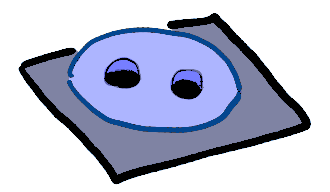} & \parbox{.2\textwidth}{$g$-arch bridge excitation \parfig{.07}{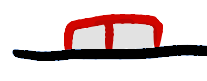}} & 
     \parbox{.2\textwidth}{$g+1 $ boundary particles \parfig{.07}{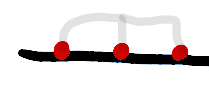}}  \\
     \hline
     \parbox{.07\textwidth}{4d bulk} & $S^3 \times S^1$ & \parbox{.15\textwidth}{\textcolor{white}{secret}\\{\color{sectorizable}3-sphere shell,}\\ {\color{sectorizable}$B^1 \times S^3$}\\ } & {\color{sectorizable}$B^3 \times S^1$} & particle &  2-brane   \\
     \hline
     \parbox{.07\textwidth}{4d bulk} & $S^2 \times S^2$ 
     & {\color{sectorizable}$S^2 \times B^2$} & {\color{sectorizable}$B^2 \times S^2$} &  loop &  loop  \\
               \hline
    \end{tabular}
    \caption{This table summarizes some examples of the pairing manifolds we study. The examples of $X$ or $Y$ that are not sectorizable are 
    {\color{nonsectorizable}this color}, while those that are sectorizable are 
    {\color{sectorizable}this color}. 
    In the cartoons, gapped boundaries are depicted in thick black; entanglement boundaries are in blue here.   In the two rightmost columns, the support of the excitation is in red; in grey is a possible support of the flexible operator that creates the excitations. The icing on the bundt cake is decorative.
    }
    \label{table:unitarity-combined}
    \end{center}
\end{table} 
\end{landscape}
\restoregeometry

\section{Central pillars of entanglement bootstrap}\label{sec:2-background}

In this section, we summarize the essential working tools of the entanglement bootstrap~\cite{shi2020fusion, Shi:2020rne}, as extended to three spatial dimensions in~\cite{knots-paper}. These tools are available in a general space dimension, as the previous works give a clear clue to such generalizations. As the proofs were given in previous works, we shall focus on the statements and physical explanations. In \S\ref{subsection:immersed-regions}, we give an in-depth discussion of the topology of immersed regions. In \S\ref{subsection:constrained-fusion-space}, we introduce the concept of constrained fusion spaces, which will be useful later.

In appendix~\ref{section:glossary}, we review the notations and terminology that we use often. In particular, because we shall also discuss physical systems with gapped boundaries, we need to distinguish between gapped boundaries and \emph{entanglement boundaries}: an entanglement boundary is a component of the boundary of a region that is not part of the gapped boundaries.

\subsection{Axioms and existing tools}\label{subsection:axioms-and-tools}
The entanglement bootstrap axioms can be stated in an arbitrary dimension. In the bulk, we always have two axioms. For concreteness, we start by stating the axioms for the entanglement bootstrap in three dimensions (Eq.~\eqref{eq:3d-axioms}). We assume given a {\it reference state} $\sigma$ supported on a large ball $\mathbf{B}^3$. (Note the distinction between $\mathbf{B}^n$ and an arbitrary ball $B^n$. We shall denote \emph{the} large ball with the reference state as $\mathbf{B}^n$ for $n$-dimensional entanglement bootstrap.) We assume that the reference state $\sigma$ lives on a many-body quantum system whose Hilbert space is the tensor product of local onsite Hilbert spaces $\calH_{\rm total}= \otimes_{i} \calH_i$. This assumption means that we restrict our attention to systems made of bosons.\footnote{Fermionic systems have a $\mathbb{Z}_2$-graded Hilbert space instead.} We further assume that each local Hilbert space $\calH_i$ is finite-dimensional. This is for the purpose of avoiding possible exotic cases that arise only in infinite dimensions.

We assume that the following two axioms hold on ball-shaped subsystems contained in $\mathbf{B}^3$, for the reference state $\sigma$:
\begin{equation}
	\begin{tikzpicture}
		\def\lnwdth{1}		
		%%% AXIOM A0
		\node[] (A) at (0.00,1.5) {\includegraphics[scale=0.95]{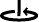}};
		\begin{scope}[xshift=0cm, yshift=0cm, scale=4]
			\draw[line width=\lnwdth pt, fill=\bcolor, fill opacity=0.5] (0,0) circle (0.3cm);
			\draw[fill=white] (0,0) circle (0.15cm);
			\draw[line width=\lnwdth pt,fill=\ccolor, fill opacity=.25] (0,0) circle (0.15cm);
		\end{scope}
		\node[] (A) at (0,0) {$C$};
		\node[] (A) at (0,.85) {$B$};
		\node[] (A) at (0, -2) {${\bf A0}: ~~ (S_{BC}+S_C - S_B)_{\sigma} = 0$};				
		%%% AXIOM A1
		\begin{scope}[xshift=7cm]
			\begin{scope}[scale=4]
				\begin{scope}
					\clip (-.35,-.35) rectangle (.35,0);
					\draw[line width=\lnwdth pt, fill=\bcolor, fill opacity=0.5] (0,0) circle (0.3cm);
				\end{scope}		
				
				\begin{scope}
					\clip (-.35,0) rectangle (.35,.35);
					\draw[line width=\lnwdth pt, fill=\dcolor, fill opacity=.5] (0,0) circle (0.3cm);
				\end{scope}
				\draw[fill=white] (0,0) circle (0.15cm);
				\draw[fill=\ccolor, fill opacity=0.25, line width=\lnwdth pt] (0,0) circle (0.15cm);
				\draw[line width=\lnwdth pt] (0:0.15) -- (0:0.3);
				\draw[line width=\lnwdth pt] (180:0.15) -- (180:0.3);
			\end{scope}
			\node[] (A) at (0,1.5) {\includegraphics[scale=0.95]{rotation}};
			
			\node[] (A) at (0,0) {$C$};
			\node[] (A) at (0,.85) {$B$};
			\node[] (A) at (0,-.85) {$D$};
			\node[] (A) at (0, -2) {${\bf A1}:~~  (S_{BC}+S_{CD} - S_B - S_D)_{\sigma} = 0$};			
		\end{scope}
	\end{tikzpicture}
	\label{eq:3d-axioms}
\end{equation} 
where $(S_X)_{\sigma}= -\Tr(\sigma_X \ln \sigma_X)$ is the von Neumann entropy of the state $\sigma$ reduced to the region $X$. Each partition of the balls shown in Eq.~(\ref{eq:3d-axioms}) is topologically equivalent to the volume of revolution of the indicated 2d region.

We shall denote the entropy combinations appearing in the axioms as 
\begin{equation}
	\Delta(B,C) \equiv S_{BC}+S_C - S_B, ~~~\Delta(B,C,D) \equiv S_{BC} + S_{CD} - S_B - S_D. 
	\label{eq:defs-of-deltas}
\end{equation}
We shall refer to the two axioms as {\bf A0} and {\bf A1}.
As in the 2d case \cite{shi2020fusion}, the strong subadditivity (SSA) \cite{Lieb1973} implies that if the axioms hold for balls of a certain length scale,\footnote{We remark that, the balls are topological balls. They do not need to be round. We only require that the region is topologically equivalent to the one shown in Eq.~\eqref{eq:3d-axioms}. Physically speaking, the thickness of $B$ and $D$ needs to be thicker than the correlation length. We introduced the volume of revolution in figures only for the visualization of the topology, and rotational symmetry is not required.} 
the same conditions must hold for all larger balls. In other words, there is no problem zooming out to a larger length scale. Because of this, we shall assume that we have a fine enough lattice, and we only consider large enough regions consisting of enough (but finite) lattice sites and having sufficient distance separation between each other. This continuum limit allows us to borrow topology concepts such as ball, annulus, and sphere.\footnote{The mathematical theory for going from the lattice to the continuum topology is nontrivial. It is unknown if an existing branch of mathematics can formulate this idea with full rigor. This is a meaningful topic for future studies. (Concrete intuitions can, however, be obtained by choosing a particular coarse-grained lattice. Interested readers are encouraged to look at the exercises presented in Chapter 9 of \cite{bowen-thesis}, the context of which is the 2d bulk.) } \footnote{\label{footnote:RG-axioms}In realistic settings, and in particular for gapped chiral phases, these axioms may not be exact. Known models of gapped chiral phases all have finite (nonzero) correlation length, at least when the local Hilbert space is finite dimensional. Nonetheless, we believe that they are satisfied in a renormalization group sense for a large class of physical systems. That is, the violation of the axioms decay towards zero (at a fast enough speed) as we coarse grain the lattice further. Justification of this conjecture is an interesting future problem.} 

The axioms {\bf A0} and {\bf A1} can be defined in an arbitrary space dimension $n$. For axiom {\bf A0}, $B$ will be the sphere shell, and for axiom {\bf A1}, $BD$ will be the sphere shell, where $B$ and $D$ are  hemisphere shells. We shall study the generalizations of the axioms to systems with gapped boundaries as well, see \S\ref{section:gapped-boundary} for the axioms in that setup.

The axioms {\bf A0} and {\bf A1}, are closely related to two well-known quantities, respectively, the mutual information $I(A:C)\equiv S_A + S_C - S_{AC}$ and the conditional mutual information $I(A:C|B)\equiv S_{AB} + S_{BC} - S_B - S_{ABC}$. SSA refers to the statement that $I(A:C|B)\ge 0$ for any tripartite mixed state. If a tripartite state $\rho_{ABC}$ satisfies $I(A:C|B)_{\rho}=0$, i.e., if it saturates the strong subadditivity, we say it is a quantum Markov state (with respect to this partition). 

One important object that can characterize various nontrivial structures is the information convex set, denoted as $\Sigma(\Omega)$. It depends on the region $\Omega$ and on the reference state $\sigma$. We suppress the dependence of $\sigma$ in the notation $\Sigma(\Omega)$ since the reference state is fixed at the beginning. There are equivalent ways to define $\Sigma(\Omega)$. One intuitive definition is:  \emph{$\Sigma(\Omega)$ is the set of density matrices on $\Omega$ which can be smoothly extended to any larger regions $\Omega'$  ($\Omega'$ regular homotopic to $\Omega$ and $\Omega' \supset \Omega$), where the state on $\Omega'$ is locally indistinguishable with the reference state.}\footnote{A state on an immersed region is locally indistinguishable with the reference state if the reduced density matrix on any (small) embedded ball is identical to that of the reference state.
This is equivalent to two other definitions that appeared in \cite{shi2020fusion} by the isomorphism theorem.} As the name suggests, $\Sigma(\Omega)$ is a convex set of density matrices. The region $\Omega$ can be immersed, as we shall explain.

From the axioms  {\bf A0} and {\bf A1}, properties of information convex sets can be proved as theorems. Results appearing in previous literature \cite{shi2020fusion,Shi:2020rne,knots-paper} include: 
\begin{itemize}
	\item {\bf Merging technique:} It includes the \emph{merging lemma}~\cite{kato2016information} and the \emph{merging theorem}~\cite{shi2020fusion}. When we use the word merging as a physical process, we always refer to the process described by the merging lemma~\cite{kato2016information}. That is, it is possible to construct a unique quantum Markov state  $\tau_{ABCD}$ (with $I(A:D|BC)_{\tau}=0$) from two quantum Markov states $\rho_{ABC}$ (with $I(A:C|B)_{\rho}=0$) and $\lambda_{BCD}$ (with $I(B:D|C)_{\lambda}=0$), as long as $\rho_{BC}=\lambda_{BC}$. The resulting ``merged" state $\tau_{ABCD}$ is identical with $\rho_{ABC}$ and $\lambda_{BCD}$ on marginals $ABC$ and $BCD$.

	In entanglement bootstrap, merging theorem \cite{shi2020fusion,Shi:2020rne} says that whenever two quantum Markov states in two information convex sets can merge,\footnote{There is one extra mild requirement that the regions in question need to be wide enough. This requirement is satisfied in all our applications.} the resulting state must be an element of a third information convex set. 
 
 \item {\bf Immersed region:} The concept of immersed region~\cite{knots-paper,Shi:2019ngw} is a natural generalization of subsystems. An immersed region is locally embedded and has the same dimension as the physical system.  For each immersed region one can consider the information convex set. Since immersed regions will be crucial in this work, we shall introduce them further in \S\ref{subsection:immersed-regions}.
	
\item {\bf The isomorphism theorem:} If two immersed regions are connected by a \emph{path}, then the information convex sets are isomorphic. In other words,
	\begin{equation}
	\Omega_0 \sim \Omega_1 \quad \Rightarrow \quad	\Sigma(\Omega^0) \cong \Sigma(\Omega^1).
	\end{equation} 
Here, a path between $\Omega_0$ and $\Omega_1$ is a finite sequence of immersed regions $\{\Omega^t\}$, $t\in \{ 0, 1/N, 2/N, \cdots, 1\}$ such that adjacent regions in the sequence are related by adding (removing) a small ball in a topologically trivial manner. This relation is an elementary step; see Fig.~\ref{fig:elementary step} for an illustration. In the remaining of the paper, when two regions are connected by a path, we say one region can be ``smoothly" deformed into another. The isomorphism $\cong$ refers to the isomorphism of two information convex sets as convex sets, as well as the preservation of distance measures and the entropy difference. (We do not require the entropy to be preserved, only the entropy difference.)
The isomorphism $\cong$ between the two information convex sets can depend on the path. Nonetheless, it can be shown that only the topological class of the path matters.

\begin{figure}[h]
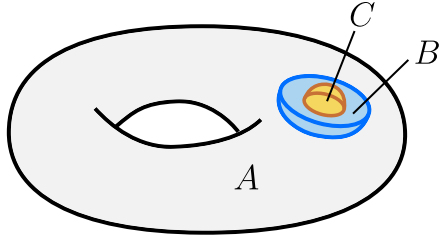

	\centering
	\parfig{.38}{figs/fig_elementary-step-v2}
	\caption{An illustration of an elementary step. Here the ball $C$ and added to $AB$ in a topologically trivial manner. $A$ can be large and has an arbitrary topology. $BC$ is part of a small ball $BCD$ on which the axioms are imposed. $A$ is contained in the complement of $BCD$. $AB\to ABC$ is an elementary step of extension, whereas $ABC \to AB$ is an elementary step of restriction. Axiom {\bf A1}, on $BCD$ (with $D$ not shown such that $BD$ separate $C$ from the ``outside") implies that $I(A:C|B)$ must vanish for any element of information convex set $\Sigma(ABC)$.}
	\label{fig:elementary step}
\end{figure}

\item {\bf Structure theorems:} The geometry of information convex sets must be of a certain form. While the isomorphism theorem says that we only need to consider topological classes of regions, the structure theorems describe powerful statements on the topological dependence: 
	\begin{enumerate} 
		\item \emph{Simplex theorem:} If an immersed region $S$ is \emph{sectorizable}\footnote{A region $X$ is said to be sectorizable if it
contains two disjoint pieces $X'$ and $X''$ such that each can be deformed to $X$ via a sequence of extensions.}, the information convex set is a simplex:
		\begin{equation}
			\Sigma(S)= \left\{ \sum_{I \in \CC_{S}} p_I \rho_S^I \right\vert \left. \sum_I p_I =1, p_I\ge 0  \right\}.
		\end{equation} 
		Here $\CC_S$ is the (finite) set of labels, where each label represents a superselection sector. The set of density matrices $\{  \rho^I_S \}$ are the extreme points, and they are mutually orthogonal.
	     
		\item \emph{Hilbert space theorem:} For any immersed region $\Omega$, the information convex set $\Sigma(\Omega)$ is a convex hull of mutually orthogonal convex subsets $\Sigma_I(\Omega)$, where $I\in \CC_{\partial \Omega}$. (Here $\partial\Omega$ is the thickened entanglement boundary\footnote{A thickened entanglement boundary is always of the form $m \times \mathbb{I}$, where $m$ is a manifold and $\mathbb{I}$ is an interval. In this work, we always consider thickened entanglement boundaries that are thick enough so that the interval $\mathbb{I}$, though being a lattice analog, can be partitioned into smaller intervals.} of $\Omega$, and it is always sectorizable.)  
		Each subset is isomorphic to the state space (i.e., the set of density matrices) of a finite-dimensional Hilbert space $\mathbb{V}_I(\Omega)$. We denote this by $\Sigma_I(\Omega) \cong \mathcal{S}(\mathbb{V}_I(\Omega))$, where $\mathcal{S}(\mathbb{V})$ denotes the state space of $\mathbb{V}$. The dimension of the Hilbert spaces, denoted as $\{N_I(\Omega) \equiv \dim \mathbb{V}_I(\Omega)\}$, are the fusion multiplicities.	We refer to the finite dimensional Hilbert spaces $\mathbb{V}_I(\Omega)$ as fusion spaces.  (The origin of this name is the case where $\Omega$ is the 2-hole disk, and these numbers are associated with the fusion of two anyons into a third.)
	\end{enumerate}
    We comment that the Simplex Theorem can be understood as a special instance of the Hilbert space theorem, whether the fusion multiplicities are either 0 or~1.
     
     \item {\bf Associativity theorem:} If an immersed region $\Omega$ can be cut into halves by a hypersurface that does not touch the entanglement boundary of $\Omega$, then the fusion space dimensions associated with $\Omega$ are completely characterized by the multiplicities of its subsets ($\Omega_L$ and $\Omega_R$) and the way the hypersurface connects them:
     \begin{equation}\label{eq:Hilbert-older}
         \dim \mathbb{V}_{a_L}^{a_R}(\Omega)=\sum_{i\in \calC_{S}} \dim \mathbb{V}_{a_L}^i(\Omega_L) \cdot \dim \mathbb{V}_i^{a_R}(\Omega_R),
     \end{equation}
     where $S$ is the thickened hypersurface, and $a_L \in \mathcal{C}_{A_L}$ and $a_R \in \mathcal{C}_{A_R}$. Here $A_L = \Omega_L \cap \partial\Omega$ and $A_R = \Omega_R \cap \partial\Omega$. 
     It is guaranteed that $S$ is sectorizable and $\CC_{S}$ denotes the set of superselection sectors on $S$. (Alternatively, one can write: $N_{a_L}^{a_R}(\Omega)=\sum_{i\in \calC_{S}} N_{a_L}^i(\Omega_L) N_i^{a_R}(\Omega_R)$.)
     This theorem~\cite{knots-paper} is proved by considering the merging of whole boundary components and making use of the Hilbert space theorem.

    \item {\bf Vacuum and sphere completion:} Let $\Omega \subset \mathbf{B}^n$ be a subsystem of the ball. The vacuum state on $\Omega$ is defined as $\sigma_{\Omega}$, i.e., the reduced density matrix of the reference state.
    It is shown that the vacuum state is an isolated extreme point of $\Sigma(\Omega)$~\cite{knots-paper}. If $S$ is a sectorizable region embedded in $\mathbf{B}^n$, then $\rho^{1}_{S}\equiv \sigma_{S}$ corresponds to a very special label in $\CC_S$, i.e., the \emph{vacuum sector}, denoted as $1\in \CC_S$.
    
    Moreover, a reference state on a sphere $\mathbf{S}^n$ always has a vacuum, that is the unique pure state in $\Sigma(\mathbf{S}^n)$ \cite{shi2020fusion}. A vacuum state on a sphere can be obtained from one on the ball by the ``sphere completion" \cite{knots-paper}. We shall generalize the completion trick in later sections.

    To what extent can the definition of vacuum generalize to immersed regions that are not embedded? This largely remains an open problem, but we report some progress in~\S\ref{section:Vacuum-completion}.

 \begin{figure}[h]
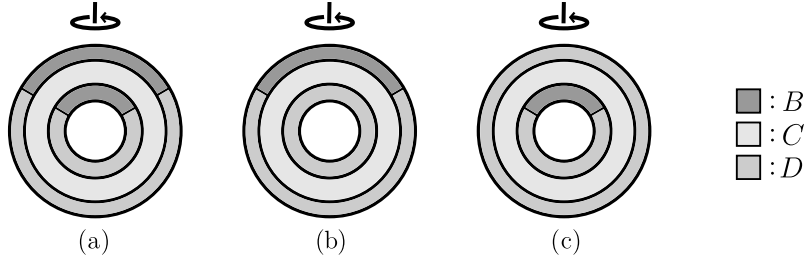

	\centering
	\parfig{.7} {figs/fig_quantum-dim-point-gray}
	\caption{Partitions of sphere shells relevant to the quantum dimension of point excitations.}  
	\label{fig:quantum-dim-point}
\end{figure}    
    
    \item {\bf Quantum dimension:} For a sectorizable region $S$ embedded in $\mathbf{B}^n$, the quantum dimension of a sector $I$ can be defined as
    \begin{equation}\label{eq:quantum-dim-definition-1}
      	 d_I = \exp{ \left( \frac{ S(\rho_S^I)- S(\sigma_S)}{2} \right)  }, \quad \forall I\in \CC_S.
      \end{equation}
      
     We shall need another definition of quantum dimension for point-like excitations, which are characterized by the information convex set of a sphere shell. Let the sphere shell be $S^{n-1}\times \mathbb{I} =BCD$, where $\mathbb{I}=[0,1]$ is an interval. $S^{n-1}=N S$, where $N$ is the northern hemisphere and $S$ is the southern hemisphere. Let $C=S^{n-1}\times [1/3,2/3]$, $B= (N \times \mathbb{I})\setminus C$ and $D= (S \times \mathbb{I})\setminus C$. 
This partition for the case $n=3$ is shown in Fig.~\ref{fig:quantum-dim-point}(a). Then, the quantum dimension for point excitation $a$ is
     \begin{equation}\label{eq:shell-da-immersed-a}
         d_a = \exp{ \left(\frac{\Delta(B,C,D)_{\rho^a}}{4} \right)  }, \quad a \in \CC_{S^{n-1}\times \mathbb{I}}, \quad \textrm{for Fig.~\ref{fig:quantum-dim-point}(a).}
     \end{equation} 
    If instead, $C=S^{n-1}\times [1/3,2/3]$, $B= N \times [2/3,1]$ (or $B= (N \times [0,1/3]$ ) and $D$ is the rest, the quantum dimension is\footnote{The technically nontrivial part of the claim in Eq.~\eqref{eq:shell-da-immersed-bc} is that the partitions in Fig.~\ref{fig:quantum-dim-point}(b) and (c) are related by turning the sphere shell inside-out. The proof is written by one of us; see Appendix~A of~\cite{shi2023-figure-8}. 
    (Note however, in 3d, this is intuitively related to the fact that any sphere shell immersed in a 3-dimensional ball can be turned inside out.)}
     \begin{equation}\label{eq:shell-da-immersed-bc}
         d_a = \exp{ \left(\frac{\Delta(B,C,D)_{\rho^a}}{2} \right)  }, \quad a \in \CC_{S^{n-1}\times \mathbb{I}}, \quad \textrm{for Fig.~\ref{fig:quantum-dim-point}(b) and (c).} 
     \end{equation} 
     If we know $\Delta(B,C,D)_{\rho^a}$ for any one of the three partitions, we have a definition of quantum dimension for point particles. This alternative definition has an advantage. We only need the state $\rho^a$ on the sphere shell, and there is no need to compare it with the vacuum. This makes Eq.~\eqref{eq:shell-da-immersed-a} and Eq.~\eqref{eq:shell-da-immersed-bc} work for immersed sphere shells. Moreover, there are generalizations of them for many other excitation types~\cite{knots-paper}. Moreover, from Eq.~(\ref{eq:shell-da-immersed-a}) and Eq.~(\ref{eq:shell-da-immersed-bc}) it is manifest that $d_a \ge 1$. This is because $\Delta(B,C,D)$ is nonnegative. 
 \end{itemize}

\subsection{Immersed regions}\label{subsection:immersed-regions}

The concept of \emph{immersed region} is a natural generalization of subsystems. A subsystem is an embedded region that has the same dimension as the physical system. An immersed region is a region that is \emph{locally} embedded and has the same dimension as the physical system. By this definition, embedding is a special case of immersion.

\begin{figure}[h]
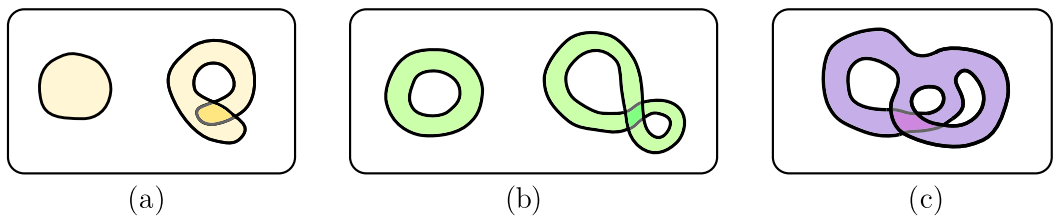

	\centering
	\parfig{.92}{figs/fig_immersion_types-v2} 
	\caption{Illustration of the three kinds of immersed regions.}\label{fig:2d-bulk-region-types}
\end{figure}

Immersed regions come in three kinds (see Fig.~\ref{fig:2d-bulk-region-types}): 
\begin{enumerate}
    \item Regions of the first kind can always be smoothly deformed to embedded regions. An example is a ball in any dimension; see Fig.~\ref{fig:2d-bulk-region-types}(a). 
    \item Regions of the second kind have inequivalent\footnotetext{We say two immersions are inequivalent if no path can connect the pair of regions. In the continuous limit, this is the statement that the two immersions are not regular homotopy equivalent.} immersions into the physical system. One is an embedding, and at least one other is a nontrivial immersion. One such example is the annulus in 2d; the figure eight at right in Fig.~\ref{fig:2d-bulk-region-types}(b) cannot be deformed by regular homotopy to the embedded annulus at left because their boundaries have different winding numbers. 
  
	\item Regions of the third kind cannot be embedded in a ball. One example is a torus with a ball removed; see Fig.~\ref{fig:2d-bulk-region-types}(c). 
\end{enumerate}
 In particular, because of the existence of the third kind, immersed regions are strictly more diverse than embedded regions.

Below, we give a few examples of immersed regions, focusing on the topology that cannot be obtained from regions embedded within a ball. Let  $\mathcal{W} \equiv \mathcal{M}\setminus B^n$. Here $\CM$ is a $n$-dimensional closed manifold and $B^n$ is a solid ball. In other words, $\mathcal{W}$ is a closed manifold with a ball removed. Here are some choices of such $\CW$, which can be immersed in a ball $\CW \looparrowright B^n$, for $n=2,3$:

\begin{enumerate}
    \item In 2d, any closed connected orientable surface (classified by genus-$g$)  with a ball removed is a choice of $\CW$. It is not hard to visualize these immersions~\cite{thurston2014three}.

    \item In 3d, it is shown recently that every closed connected orientable 3-manifold can be immersed in a ball upon removing a ball~\cite{Freedman:2022ksk}.  In particular, the immersions of the following manifolds\footnote{We shall extensively use the standard notation for manifold topology, e.g., $S^n$ refers to $n$-sphere, $T^n$ refers to $n$-dimensional torus; see Appendix~\ref{section:glossary} for other notations  we frequently use in this work.} are known explicitly.
    \begin{enumerate}
    	\item $(S^2\times S^1) \setminus B^3$. 
    	\item $\underbrace{(S^2 \times S^1) \# \cdots \#  (S^2 \times S^1)}_\text{$g$ times}   \setminus B^3 \equiv \# g(S^2\times S^1)\setminus B^3$. 
     Here $\#$ means connected sum.\footnote{A connected sum of two $n$-dimensional manifolds is also a $n$-dimensional manifold. It is formed by first deleting a ball in the interior of each manifold, then gluing together the resulting boundary spheres.} 
 
    \item $T^3 \setminus B^3$, where $T^3$ is the three-dimensional torus. This case is nontrivial to visualize. In the math literature, the existence proof of such immersion appeared first \cite{whitehead1961immersion}, and was generalized to arbitrary dimensions later on \cite{hirsch1961Imbedding, rushing1973topological}. Several explicit constructions followed, see \cite{337288, galvinexplicit} for example. 
    \end{enumerate}
\end{enumerate}
 
We shall describe other examples of immersion in \S\ref{section:3}; some of these examples are related to the physical setup of gapped boundaries. Furthermore, in \S\ref{sec:completion}, we shall find ways to obtain reference states on various closed manifolds with the combination of two powerful techniques: immersion and the completion trick we shall discuss. For later convenience, we shall always consider immersed regions that leave enough space so that we can thicken them while keeping them immersed.

\subsection{Constrained fusion space} \label{subsection:constrained-fusion-space}
 
 We have discussed information convex set $\Sigma(\Omega)$ and the subsets $\Sigma_I(\Omega)$ in which the sectors ($I$) on the thickened entanglement boundaries are specified. One motivation for studying these sets is to characterize how quantum information is distributed in subsystems of a quantum state.
 
 For this purpose, it is sometimes useful to specify a constraint that is not on the thickened entanglement boundary. This motivates us to define information convex sets with constraints, and their constrained fusion spaces (or constrained Hilbert space).

\begin{definition}[Constrained information convex sets and constrained fusion spaces]\label{def:sub-information-convex-set} 
	Let $\Omega=AE$ and choose $\rho^{\kappa}_A \in \textrm{ext} (\Sigma(A))$, the set of extreme points of $\Sigma(A)$. 
	We define the constrained information convex set as
	\begin{equation}
	\begin{aligned}		
		\Sigma_{[\kappa_A]}(\Omega) &\equiv \{ \rho_{\Omega} \in \Sigma(\Omega) \vert \Tr_E \, \rho_{\Omega} = \rho_A^{\kappa} \}, \quad \textrm{and }\\
		\Sigma_{I [\kappa_A]}(\Omega)& \equiv \{ \rho_{\Omega} \in \Sigma_I(\Omega) \vert \Tr_E \, \rho_{\Omega} = \rho_A^{\kappa} \}.\\
	\end{aligned}		
	\end{equation}
  According to Lemma~\ref{lemma:sub-information-convex-set} below, $\Sigma_{I[\kappa_A]}(\Omega) \cong \mathcal{S}(\mathbb{V}_{I[\kappa_A]})$, for some fusion space $\mathbb{V}_{I[\kappa_A]}(\Omega)$. We shall refer to $\mathbb{V}_{I[\kappa_A]}(\Omega)$ as a constrained fusion space.
\end{definition}

\begin{remark} We further allow the usage of $[\rho^\kappa_A]$ as an alternative for  $[\kappa_A]$, where $\rho^\kappa_A$ is the extreme point of $\Sigma(A)$ in question. 
	Constrained fusion spaces are particularly convenient for the study of closed manifolds. When $\Omega=\mathcal{M}$ is a closed manifold the index $I$ is dropped, we have constrained fusion spaces $\mathbb{V}_{[\kappa_A]}(\CM)$. 
\end{remark}

\begin{lemma}\label{lemma:sub-information-convex-set}
	The various constrained sets defined in Definition~\ref{def:sub-information-convex-set}  satisfy:
	\begin{enumerate}
		\item  $\Sigma_{[\kappa_A]}(\Omega)$ and $\Sigma_{I[\kappa_A]}(\Omega)$ are compact convex sets.
		\item $\Sigma_{I[\kappa_A]}(\Omega) \cong \mathcal{S}(\mathbb{V}_{I[\kappa_A]}(\Omega))$, where $\mathbb{V}_{I[\kappa_A]}(\Omega)$ is a finite-dimensional Hilbert space, and it is a subspace of $\mathbb{V}_I(\Omega)$.
		\item If $A$ is sectorizable, we can relabel $\kappa$ as $J \in \calC_{A}$. In this case,
		\begin{equation}
			 \mathbb{V}_I(\Omega) = \oplus_J \mathbb{V}_{I[J_A]}(\Omega), \label{eq:direct-sum-constrained}
		\end{equation}
       is a direct sum.\footnote{A Hilbert space is a direct sum when the subspaces in the sum are mutually orthogonal.}
	\end{enumerate}
\end{lemma}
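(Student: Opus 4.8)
The plan is to handle all three parts through the already-established structure of $\Sigma_I(\Omega)$, exploiting two facts: the defining constraint $\Tr_E \rho_\Omega = \rho_A^\kappa$ is \emph{affine} in $\rho_\Omega$, and $\rho_A^\kappa$ is an \emph{extreme} point of $\Sigma(A)$. For part 1 I would first record that restriction to $A$ maps $\Sigma(\Omega)$ into $\Sigma(A)$, so $r_A := \Tr_E$ is a continuous affine map between compact convex information convex sets. Then $\Sigma_{[\kappa_A]}(\Omega) = r_A^{-1}(\rho_A^\kappa)$ is the intersection of the compact convex set $\Sigma(\Omega)$ with the closed affine condition $\Tr_E(\cdot)=\rho_A^\kappa$; a closed subset of a compact set that is also an intersection of convex sets is compact and convex. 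The identical argument inside the orthogonal component gives compactness and convexity of $\Sigma_{I[\kappa_A]}(\Omega)$.

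For part 2 the key observation is that $\Sigma_{I[\kappa_A]}(\Omega)$ is not merely convex but a \emph{face} of $\Sigma_I(\Omega)$. Indeed, if $\rho_\Omega \in \Sigma_{I[\kappa_A]}(\Omega)$ splits as $\rho_\Omega = \lambda \rho' + (1-\lambda)\rho''$ with $\rho',\rho'' \in \Sigma_I(\Omega)$ and $0<\lambda<1$, then applying $r_A$ gives $\rho_A^\kappa = \lambda\, r_A(\rho') + (1-\lambda)\, r_A(\rho'')$ with $r_A(\rho'),r_A(\rho'') \in \Sigma(A)$; since $\rho_A^\kappa$ is extreme in $\Sigma(A)$ this forces $r_A(\rho')=r_A(\rho'')=\rho_A^\kappa$, so $\rho',\rho'' \in \Sigma_{I[\kappa_A]}(\Omega)$. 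I then invoke the Hilbert space theorem, $\Sigma_I(\Omega) \cong \mathcal{S}(\mathbb{V}_I(\Omega))$, together with the elementary fact that the faces of the state space $\mathcal{S}(\mathbb{V})$ of a finite-dimensional Hilbert space are exactly the state spaces $\mathcal{S}(W)$ of its subspaces $W \subseteq \mathbb{V}$ (the density matrices supported on $W$). Transporting the face across the affine isomorphism identifies $\Sigma_{I[\kappa_A]}(\Omega)$ with $\mathcal{S}(\mathbb{V}_{I[\kappa_A]}(\Omega))$ for a subspace $\mathbb{V}_{I[\kappa_A]}(\Omega) \subseteq \mathbb{V}_I(\Omega)$ (degenerating to the zero subspace when the constraint is unsatisfiable), which is exactly the isomorphism and subspace claim.

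For part 3, with $A$ sectorizable the simplex theorem makes $\Sigma(A)$ a simplex whose extreme points $\{\rho_A^J\}_{J\in\calC_A}$ have \emph{mutually orthogonal} supports. Mutual orthogonality of the subspaces then comes for free: a state restricting to $\rho_A^J$ and one restricting to $\rho_A^{J'}$ with $J\neq J'$ are perfectly distinguishable by a measurement confined to $A$, and since the isomorphism preserves distance measures, the corresponding unit vectors in $\mathbb{V}_I(\Omega)$ are orthogonal, so $\mathbb{V}_{I[J_A]}(\Omega) \perp \mathbb{V}_{I[J'_A]}(\Omega)$. Writing the induced affine map $\mathcal{S}(\mathbb{V}_I(\Omega)) \to \Sigma(A)$ as $\omega \mapsto \sum_J \Tr(M_J \omega)\,\rho_A^J$ for a POVM $\{M_J\}$, the subspace $\mathbb{V}_{I[J_A]}(\Omega)$ is the eigenvalue-one eigenspace of $M_J$, and the asserted direct sum $\mathbb{V}_I(\Omega)=\bigoplus_J \mathbb{V}_{I[J_A]}(\Omega)$ is precisely the statement that $\{M_J\}$ is a \emph{projective} measurement ($M_J$ orthogonal projectors with $\sum_J M_J = \mathbb{1}$) rather than a general POVM.

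This last reduction is where I expect the main obstacle to lie: orthogonality of the supports of the $\rho_A^J$ does not by itself force a POVM with orthogonal unit-eigenspaces to be projective, so pure convex geometry is insufficient and one must show that the sector on $A$ is a \emph{sharp} superselection label on all of $\Omega$. Concretely, letting $\Pi_J$ be the support projector of $\rho_A^J$ and $\mathcal{D}(\cdot)=\sum_J (\Pi_J\otimes\mathbb{1}_E)(\cdot)(\Pi_J\otimes\mathbb{1}_E)$ the associated dephasing, one checks readily that every $\rho_\Omega \in \Sigma_I(\Omega)$ is supported in $\bigl(\bigoplus_J \mathrm{range}\,\Pi_J\bigr)\otimes \calH_E$ and shares both its $A$- and $E$-marginals with $\mathcal{D}(\rho_\Omega)$. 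The hard part will be upgrading this to genuine block-diagonality, $\mathcal{D}(\rho_\Omega)=\rho_\Omega$, so that $\Sigma_I(\Omega)$ is the orthogonal convex hull of the faces $\Sigma_{I[J_A]}(\Omega)$; I anticipate that this step requires the entanglement-bootstrap input used to prove the Hilbert space and associativity theorems (the merging theorem and vanishing conditional mutual information across the $A$--$E$ interface) rather than convex-geometric reasoning alone. Once block-diagonality is in hand, the orthogonal faces exhaust $\Sigma_I(\Omega)$ and the direct sum follows.
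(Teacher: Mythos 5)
Parts 1 and 2 of your proposal are correct. Part 1 matches the paper's (essentially one-line) argument. Part 2, however, takes a genuinely different route from the paper: the paper shrinks $A$ to $\widetilde{A}=A\setminus \partial A$, notes that the extreme point $\kappa$ induces a sector $\Phi(\kappa)$ on the sectorizable region $\partial A$, and realizes every element of $\Sigma_{I[\kappa_A]}(\Omega)$ as the merge of an element of $\Sigma_{I\Phi(\kappa)}(\widetilde{E})$ with the \emph{single fixed} state $\rho_A^\kappa$. This identifies $\Sigma_{I[\kappa_A]}(\Omega)\cong \mathcal{S}(\mathbb{V}_{I\Phi(\kappa)}(\widetilde{E}))$ and, crucially, yields the dimension formula $\dim \mathbb{V}_{I[\kappa_A]}(\Omega)=\dim \mathbb{V}_{I\Phi(\kappa)}(\widetilde{E})$. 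Your face argument (extremality of $\rho_A^\kappa$ makes the constrained set a face of $\Sigma_I(\Omega)$; faces of $\mathcal{S}(\mathbb{V})$ are exactly the state spaces of subspaces) is clean and more elementary, but it is purely abstract: it produces the subspace with no handle on its dimension. That dimension formula is precisely the input the paper consumes in part 3, which is where your proposal breaks down.

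The genuine gap is in part 3. The orthogonality half is fine (it is the paper's own argument: monotonicity of fidelity plus $F(\rho_A^J,\rho_A^{J'})=0$). But the spanning half, which you propose to obtain by proving $\mathcal{D}(\rho_\Omega)=\rho_\Omega$ for every $\rho_\Omega\in \Sigma_I(\Omega)$ so that ``the orthogonal faces exhaust $\Sigma_I(\Omega)$,'' is not a hard step awaiting bootstrap input --- it is \emph{false} whenever at least two of the subspaces $\mathbb{V}_{I[J_A]}(\Omega)$ are nonzero, which is exactly the situation the lemma is about. If every element of $\Sigma_I(\Omega)$ were a mixture of elements of the faces $\Sigma_{I[J_A]}(\Omega)$, then every extreme point of $\Sigma_I(\Omega)\cong \mathcal{S}(\mathbb{V}_I(\Omega))$, i.e.\ every pure state of $\mathbb{V}_I(\Omega)$, would lie in some $\mathbb{V}_{I[J_A]}(\Omega)$; a finite-dimensional complex vector space is never a finite union of proper subspaces, so one summand would be all of $\mathbb{V}_I(\Omega)$ and, by orthogonality, the rest would vanish. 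Thus your target statement is equivalent to the degenerate single-summand case, and proving it in general would contradict the very direct sum you are after. Physically, the whole content of Eq.~\eqref{eq:direct-sum-constrained} with several summands is that there exist states carrying \emph{coherence} between different sectors $J$ on $A$ (superpositions of fusion channels, e.g.\ three Fibonacci anyons in a three-hole disk with $A$ an annulus enclosing two of them); for such states $\mathcal{D}(\rho_\Omega)\neq \rho_\Omega$. One can also see this quantitatively: for unit vectors $v_J\in \mathbb{V}_{I[J_A]}$, $v_{J'}\in \mathbb{V}_{I[J'_A]}$ and $\psi_\pm=(v_J\pm v_{J'})/\sqrt{2}$, block-diagonality of both $\rho_{\psi_\pm}$ would make their block-$J$ parts orthogonally supported operators summing to $\rho_{v_J}$, giving $F(\rho_{\psi_+},\rho_{v_J})+F(\rho_{\psi_-},\rho_{v_J})=\Tr\rho^{(J)}_{\psi_+}+\Tr\rho^{(J)}_{\psi_-}=1$, whereas fidelity preservation under the isomorphism requires each fidelity to equal $1/\sqrt{2}$. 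The underlying conflation is that $\mathbb{V}_I(\Omega)=\oplus_J \mathbb{V}_{I[J_A]}(\Omega)$ is a statement about Hilbert spaces, and the state space of a direct sum is strictly larger than the convex hull of the state spaces of its summands.

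The correct completion of part 3 is the paper's dimension count, which needs no block-diagonality: mutually orthogonal subspaces whose dimensions sum to $\dim \mathbb{V}_I(\Omega)$ necessarily span it. The sum rule comes from the paper's form of part 2, $\dim \mathbb{V}_{I[J_A]}(\Omega)=\dim \mathbb{V}_{I\Phi(J)}(\widetilde{E})$, combined with the associativity theorem applied to the cut of $\Omega$ into $\widetilde{A}$ and $\widetilde{E}$ along $S=\partial A$: since $A$ is sectorizable, $N_{\Phi(J)}(\widetilde{A})=1$, so $\dim \mathbb{V}_I(\Omega)=\sum_J \dim \mathbb{V}_{I\Phi(J)}(\widetilde{E})=\sum_J \dim \mathbb{V}_{I[J_A]}(\Omega)$. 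If you wish to keep your face argument for part 2, you must still supplement it with this (or an equivalent) dimension computation before part 3 can close; convex geometry alone cannot deliver the spanning statement.
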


\begin{proof}
The proof of statement 1 is as follows. It is evident that  $\Sigma_{[\kappa_A]}(\Omega)$ and $\Sigma_{I[\kappa_A]}(\Omega)$ are convex sets. The compactness follows from that of $\Sigma(\Omega)$. In more detail, the convex set $\Sigma(\Omega)$ is compact, and the subsets are defined by a set of constraints that are linear equalities. % on it.

To prove statement 2, we can shrink $A$ a little bit, without changing its topology. Let $\widetilde{A}= A \setminus \partial A$ and $\widetilde{E}=E \cup \partial A$ (recall that $E \equiv \Omega \setminus A$). Then $\widetilde{A}$ and $\widetilde{E}$ share an entire boundary. The extreme point label $\kappa$ on $A$ induces a label $\Phi(\kappa)$ of the extreme points of $\Sigma(\partial A)$. This follows from the structure theorem for the information convex set.   Note that $\partial A$ is a sectorizable region. While $\kappa$ may be a continuous label, $\Phi(\kappa)$ must be a discrete finite set. 
The state is a Markov state on $\widetilde A, \partial{A}, E$ whose marginal on $ A$ is $\rho^\kappa_A$.
Thus, any element of $\Sigma_{I [\kappa_A]}(\Omega)$ is obtained by merging some element of $\Sigma_{I \Phi(\kappa)}(\widetilde{E})$ with \emph{the same} state $\rho^{\kappa}_A$.  
Thus, $\Sigma_{I[\kappa_A]}(\Omega) \cong \mathcal{S}(\mathbb{V}_{I \Phi(\kappa)}(\widetilde{E}))$. 
Thus, statement~2 is true, and furthermore $\dim \mathbb{V}_{I[\kappa_A]}(\Omega)=\dim \mathbb{V}_{I \Phi(\kappa)}(\widetilde{E})$.

Statement 3 is a corollary of statement 2. To see this, we observe that density matrices in $\Sigma_I(\Omega)$, which reduces to the extreme points of $\Sigma(A)$ associated with $J, J'\in \calC_A$ must be orthogonal, for $J \ne J'$. This follows from the monotonicity of fidelity and that $F(\rho_A^J, \rho_A^{J'})=0$ for $ J \ne J'$. Second, if we take all different choices of $J$, the right-hand side of Eq.~\eqref{eq:direct-sum-constrained} gives the Hilbert space dimension that matches that for the left-hand side. 
\end{proof}

With the language of constrained Hilbert space, we can usefully refine the Associativity Theorem as:  
\begin{theorem}[Associativity theorem, new form]\label{thm:associativity}
	Let $S$ be the thickened hypersurface that appeared in the setup of the Associativity Theorem  
	(see around Eq.~\eqref{eq:Hilbert-older}), then
	\begin{equation}\label{eq:Hilbert-newer}
	\dim 	\mathbb{V}_{a_L[i_S]}^{a_R}(\Omega)= \dim \mathbb{V}_{a_L}^i(\Omega_L) \cdot \dim \mathbb{V}_i^{a_R}(\Omega_R), \quad \forall i\in \CC_S.
	\end{equation}
\end{theorem}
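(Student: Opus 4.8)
The plan is to read the claim as a \emph{sector-resolved refinement} of the Associativity Theorem in the form of Eq.~\eqref{eq:Hilbert-older}: the old form records only the total $\dim\mathbb{V}_{a_L}^{a_R}(\Omega)=\sum_{i\in\CC_S}\dim\mathbb{V}_{a_L}^i(\Omega_L)\,\dim\mathbb{V}_i^{a_R}(\Omega_R)$, whereas the new form asserts that the $i$-th summand is itself the dimension of the constrained fusion space $\mathbb{V}_{a_L[i_S]}^{a_R}(\Omega)$. Since both the old theorem and a direct-sum decomposition exhibit the \emph{same} total as a sum of non-negative integers indexed by $i$, it suffices to prove one inequality termwise and then squeeze. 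First I would establish the accounting identity, and then the termwise lower bound by an explicit merging construction.

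For the accounting identity, I apply Lemma~\ref{lemma:sub-information-convex-set}(3) with $A=S$ the thickened separating hypersurface (guaranteed sectorizable in the Associativity setup) and $E=\Omega\setminus S$, taking the entanglement-boundary label of $\Omega$ to be $I=(a_L,a_R)$. Relabeling $\kappa\to i\in\CC_S$, statement~(3) gives the orthogonal direct sum
\begin{equation}
\mathbb{V}_{a_L}^{a_R}(\Omega)=\bigoplus_{i\in\CC_S}\mathbb{V}_{a_L[i_S]}^{a_R}(\Omega),
\qquad\text{hence}\qquad
\dim\mathbb{V}_{a_L}^{a_R}(\Omega)=\sum_{i\in\CC_S}\dim\mathbb{V}_{a_L[i_S]}^{a_R}(\Omega).
\end{equation}
Comparing with Eq.~\eqref{eq:Hilbert-older} yields $\sum_i\dim\mathbb{V}_{a_L[i_S]}^{a_R}(\Omega)=\sum_i\dim\mathbb{V}_{a_L}^i(\Omega_L)\,\dim\mathbb{V}_i^{a_R}(\Omega_R)$, so it remains to show $\dim\mathbb{V}_{a_L[i_S]}^{a_R}(\Omega)\ge\dim\mathbb{V}_{a_L}^i(\Omega_L)\,\dim\mathbb{V}_i^{a_R}(\Omega_R)$ for every fixed $i$.

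For this lower bound I would build enough mutually orthogonal states inside $\Sigma_{a_L[i_S]}^{a_R}(\Omega)\cong\mathcal{S}(\mathbb{V}_{a_L[i_S]}^{a_R}(\Omega))$. Write $p=\dim\mathbb{V}_{a_L}^i(\Omega_L)$ and $q=\dim\mathbb{V}_i^{a_R}(\Omega_R)$. Since $\Sigma_{a_L}^i(\Omega_L)\cong\mathcal{S}(\mathbb{V}_{a_L}^i(\Omega_L))$, choose extreme points $\{\rho_{\Omega_L}^\mu\}_{\mu=1}^{p}$ with mutually orthogonal supports, and likewise $\{\rho_{\Omega_R}^\nu\}_{\nu=1}^{q}$ in $\Sigma_i^{a_R}(\Omega_R)$. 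Every such state reduces on the boundary component $S$ to the single extreme state $\rho_S^i$ of the simplex $\Sigma(S)$, so any pair $(\rho_{\Omega_L}^\mu,\rho_{\Omega_R}^\nu)$ agrees on the overlap and can be merged (merging theorem, $S$ being thick enough) into a state $\rho_\Omega^{\mu\nu}\in\Sigma_{a_L[i_S]}^{a_R}(\Omega)$ whose marginals on $\Omega_L$ and $\Omega_R$ are the two inputs. Monotonicity of the fidelity under the partial traces onto $\Omega_L$ and $\Omega_R$ then forces $F(\rho_\Omega^{\mu\nu},\rho_\Omega^{\mu'\nu'})=0$ whenever $(\mu,\nu)\ne(\mu',\nu')$, since $F(\rho_{\Omega_L}^\mu,\rho_{\Omega_L}^{\mu'})=0$ for $\mu\ne\mu'$ and $F(\rho_{\Omega_R}^\nu,\rho_{\Omega_R}^{\nu'})=0$ for $\nu\ne\nu'$. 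These $pq$ mutually orthogonal states in a state space $\mathcal{S}(\mathbb{V})$ force $\dim\mathbb{V}_{a_L[i_S]}^{a_R}(\Omega)\ge pq$, which is exactly the required inequality.

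Combining, each termwise $\ge$ together with the equality of the two totals forces termwise equality, establishing Eq.~\eqref{eq:Hilbert-newer}. The main obstacle is the lower bound rather than the bookkeeping: one must resist trying to prove the tensor-product isomorphism $\mathbb{V}_{a_L[i_S]}^{a_R}(\Omega)\cong\mathbb{V}_{a_L}^i(\Omega_L)\otimes\mathbb{V}_i^{a_R}(\Omega_R)$ directly, because merging produces only the Markov (product) states and so yields the $\ge$ direction but no obvious matching $\le$. The squeeze against Eq.~\eqref{eq:Hilbert-older} is what supplies the $\le$ for free; the only care required is to confirm that all the orthogonal pieces on both sides share the common $S$-marginal $\rho_S^i$, so that merging is legal and lands inside the \emph{constrained} set.
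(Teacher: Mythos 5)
Your proof is correct, but it is not the route the paper takes. The paper offers no separate proof of Theorem~\ref{thm:associativity} because it is designed to fall out of statement~2 of Lemma~\ref{lemma:sub-information-convex-set} (not statement~3): taking $A=S$ there, one shrinks $S$ slightly so that $\widetilde{E}$ becomes a disjoint union of two regions smoothly deformable to $\Omega_L$ and $\Omega_R$; every element of the constrained set is recovered by merging its restriction to $\widetilde{E}$ with the \emph{same} state $\rho^i_S$, which identifies $\Sigma_{a_L[i_S]}^{a_R}(\Omega)$ with the state space of the fusion space of $\widetilde{E}$ carrying the induced labels $\Phi(i)$; and since all boundary sectors of the disjoint union are then fixed, that fusion space factorizes across the two connected components, giving Eq.~\eqref{eq:Hilbert-newer} directly together with a canonical isomorphism $\mathbb{V}_{a_L[i_S]}^{a_R}(\Omega)\cong \mathbb{V}_{a_L}^{i}(\Omega_L)\otimes \mathbb{V}_{i}^{a_R}(\Omega_R)$. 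You instead combine statement~3 (the orthogonal direct sum over $i\in\CC_S$) with the already-established Eq.~\eqref{eq:Hilbert-older}, produce the termwise lower bound by merging pairwise-orthogonal extreme points, and squeeze. Your key steps are sound: the gluing is along an entire boundary component, which is exactly the setting in which the merging theorem is used to prove the old associativity theorem; both families of inputs restrict to the unique extreme point $\rho^i_S$ on $S$, so they agree on the overlap and the merged states land in the \emph{constrained} set; and orthogonality descends to the merged states by monotonicity of fidelity under partial trace. What your route buys is economy of inputs --- you never need the factorization of information convex sets (or fusion spaces) across disjoint components, only the merging theorem and fidelity monotonicity. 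What it loses is structure: the squeeze proves each summand equal only because the totals happen to match, i.e., by global accounting, and it yields only the dimension identity rather than the canonical tensor-product isomorphism, which is the content that makes the constrained-fusion-space language useful later in the paper.
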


\section{Regions dual to excitations}\label{section:3}

In this section, we discuss various regions that can detect either an excitation or a cluster of excitations. As reviewed in \S\ref{subsection:axioms-and-tools}, immersed regions can be classified into sectorizable regions and non-sectorizable regions. A sectorizable region has an information convex set isomorphic to a simplex; the extreme points correspond to the superselection sectors. For non-sectorizable regions,  information about fusion spaces can be detected; in the case of a nontrivial fusion space, the information is quantum.\footnote{When we say the information is quantum, we mean that the information cannot be copied. This happens when we have a fusion space of at least 2 dimensions.}  

Sometimes, we say a region can detect the superselection sector of some excitation or the fusion space associated with a cluster of excitations. When does this happen?\footnote{The question becomes nontrivial when the region is not embedded in a ball or a sphere.  The thickened Klein bottle in 3d is one example where we don't know what are the excitations it characterizes, or whether these excitations can be identified as excitations living in a sphere.} The purpose of this section is to explain this. We further provide many examples, the physical setups of which differ in the dimensions and the presence (absence) of a gapped boundary. 

The basic intuition is as follows. Imagine a ground state $\vert \psi \rangle$ of a topologically ordered system, on a sphere. Consider an excited state $| \varphi\rangle$ with a few excitations on the sphere. The state $\vert \varphi \rangle$ on a region $\Omega$ separated from the excitation(s) by a few correlation lengths must be locally indistinguishable from the ground state. Roughly speaking, the region $\Omega$ is the complement of the excitations. If the excitations are nontrivial, that is, if the excitations cannot be created by local operators, the region $\Omega$ must be able to detect them. 

In entanglement bootstrap, this intuition is guaranteed. The reference state plays the role of the ground state. The reduced density matrix of $\vert\varphi \rangle$ on $\Omega$ must be an element of the information convex set $\Sigma(\Omega)$. From the ``sphere completion lemma" of Ref.~\cite{knots-paper}, we can always construct a reference state on the sphere ($\mathbf{S}^n$), where the axioms {\bf A0} and {\bf A1} are satisfied everywhere. (We shall not review the sphere completion technique here. Instead, we review it after proving a more powerful version Lemma~\ref{lemma:completion-trick-v2}; see Fig.~\ref{fig:sphere-completion-bulk-bdy}.) 
In the context of gapped boundaries, as we discuss in \S\ref{section:gapped-boundary}, the analog of sphere completion is the fact that we can always define a reference state on a ball with an entire gapped boundary, which we denote as $\underline{\underline{\mathbf{{B}}}}^n$. String operators and membrane operators creating the excitations can also be constructed in entanglement bootstrap, and the properties of these operators can be useful in proving things; see \cite{Shi:2019ngw}. In this work, we will not rely on string or membrane operators to prove any statement, but as they provide complementary intuition, we sometimes draw them in figures.  They will be discussed more explicitly in \cite{paper-two}.

 All regions considered in this section \emph{can} be embedded in a ball $\mathbf{B}^n$ (or $\mathbf{S}^n$, or $\underline{\underline{\mathbf{{B}}}}^n$ in the presence of a gapped boundary). However, we also consider the immersed version of these regions. They will be useful later, as building blocks (see \S\ref{section:Vacuum-completion}). We start with the 2d bulk and 3d bulk. Then we discuss the boundary generalizations.

\subsection{2d bulk and 3d bulk regions}\label{section:2d3d-bulk}
The setup and axioms of the 2d bulk and 3d bulk have been discussed in \S\ref{subsection:axioms-and-tools}.
The intuition to keep in mind is that the combination of the two axioms (especially axiom {\bf A1}) makes it possible to deform the regions smoothly: if two regions can be connected by a path, i.e., if two regions are related by a regular homotopy, then the information convex sets associated with the pair of regions are isomorphic. Thus, we shall only be interested in the topological class of immersed regions.

\subsubsection{2d bulk regions}\label{subsection:2d-bulk}
The 2d bulk is the physical context of anyons and topological orders, and it is the context in which 2+1d TQFT \cite{Witten1989,walker1991,simon2020topological} is expected to apply. As studied in detail in~\cite{shi2020fusion}, there is a finite set of basic regions: the disk, the annulus, and the pair of pants. 
\begin{equation}\label{eq:anyons-sphere}
\parfig{.78}{figs/fig-2d-region-basics-v2} 
\end{equation} 
These regions are closely related to anyons and the string operators creating them; see Eq.~\eqref{eq:anyons-sphere}. First, the regions are homeomorphic to the complement of the anyon excitations on $S^2$. The string operators pass through the regions and cut them into disks. Note that a single excitation on the sphere always carries a trivial superselection sector.

In the topology literature, it is well-known that any compact orientable surface can be constructed by gluing these basic topology types along closed curves. This is known as the pants decomposition~\cite{Hatcher1999}. This intuition is known in the framework of topological quantum field theory (TQFT), see e.g.~\cite{walker1991} and \cite{simon2020topological}. 
In \S\ref{section:Vacuum-completion}, we will construct reference states on more interesting manifolds by building them from pieces of a state on a topologically trivial region.  To achieve that, we shall make use of immersed versions of these regions. 
To gently prepare the reader for this discussion, we illustrate immersed versions of regions in Fig.~\ref{fig:2d-sphere-is-more-flexible}.

\begin{figure}[h!]
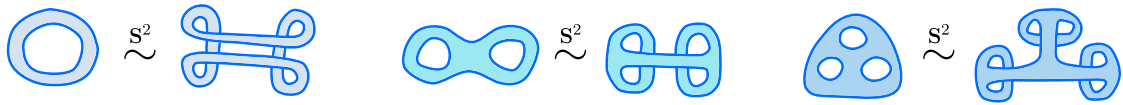

	\centering
	\parfig{.98}{figs/fig-sphere-is-more-flexible-v3} 
	\caption{Immersed versions of embedded regions. For every example shown here, there exists a smooth deformation that connects the two configurations. The deformation is done on $\mathbf{S}^2$. Note that the deformation is more flexible on $\mathbf{S}^2$ compared with $\mathbf{B}^2$.}
	\label{fig:2d-sphere-is-more-flexible}
\end{figure}
 
Even for the simple class of embedded regions, e.g., a sphere minus $k$-balls, nontrivial immersed versions exist. If we only allow deformations within $\mathbf{B}^2$, the immersed regions described in Fig.~\ref{fig:2d-sphere-is-more-flexible} are not regular homotopic to embedded regions. This is denoted as $\Omega \overset{\mathbf{B}^2}{\nsim} \Omega'$. Nonetheless, they \emph{are} regular homotopic to embedded regions on the background manifold $\mathbf{S}^2$, (denoted as $\Omega \overset{\mathbf{S}^2}{\sim} \Omega'$). In other words, the deformation is more flexible on $\mathbf{S}^2$ compared with $\mathbf{B}^2$.  
This is one of the reasons why the notion of manifold completion, detailed in \S\ref{sec:completion}, is useful.

\subsubsection{3d bulk regions} \label{subsub:3d-bulk}
The 3d bulk provides more diverse topology types. One simple class of regions is genus-$g$ handlebodies; see Fig.~\ref{fig:handlebodies}. Note that it is already an infinite series. While there are other interesting topologies, we remark that the seemingly simple list of handlebodies is surprisingly fundamental: every closed compact orientable 3-manifold can be divided into two handlebodies; this is known as Heegaard splitting~\cite{sadanand2017two}; see Fig.~\ref{fig:fig-Heegaard} for an illustration. (The Heegaard genus, which is the smallest number of handles of a Heegaard splitting, is additive under connected sums of 3-manifolds. Therefore, all genus-$g$ handlebodies are needed to construct an arbitrary closed compact orientable 3-manifold in this approach.)

 \begin{figure}[h!]
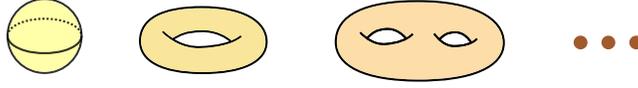

	\centering
	\parfig{.56}{figs/fig_handlebodies} 
	\caption{Genus-$g$ handlebodies, where $g=0,1,2,\cdots$.}
	\label{fig:handlebodies}
\end{figure}

 \begin{figure}[h!]
	\centering
	\parfig{.36}{figs/fig-Heegaard-v2} 
	\caption{A Heegaard splitting of a 3-sphere, inspired by \cite{sadanand2017two}.}
	\label{fig:fig-Heegaard}
\end{figure}
 
 \begin{figure}[h!]
	\centering
	\parfig{.36}{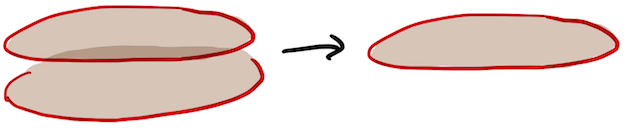} 
 	\parfig{.36}{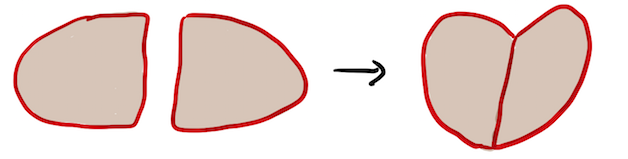} 
	\caption{Two ways to fuse flux loops: (left) along the whole loop or (right) along a segment.  The latter produces a genus-two graph excitation.}
	\label{fig:fusion-of-loops}
\end{figure}

Interestingly, all these handlebodies are sectorizable. The excitations detected by these regions are supported on thin handlebodies of the same genus. Thin handlebodies look like graphs, and for this reason, we call these excitations \emph{graph excitations}. For genus $g=1$, the graph excitations are detected by the solid torus; they are familiar in the literature, and known as pure flux loops; see, e.g., \cite{Lan:2018vjb,Lan:2018bui,Johnson-Freyd:2020usu,knots-paper}.

How about graph excitations with a higher genus? Are they intrinsically new? Shouldn't they be labeled by two flux loops as is suggested by the fusion process of loops, illustrated in Fig.~\ref{fig:fusion-of-loops}?  As it turns out, the answer is `no' in general. For instance, the $G=S_3$ quantum double model has $3$ flux sectors and $11 > 3\times 3$ graph excitations\footnote{Among the 11 types, 6  are \emph{genuine} graph excitations, meaning that they cannot be reduced to a single loop. This number is greater than the number ($9-5=4$) predicted by the naive approach.} for genus $g=2$. Here $S_3$ is the permutation group of three elements. This is illustrated in the example of the 3-dimensional $S_3$ quantum double model below.

Here are the various superselection sector labels for 3d topological orders, which we shall use in Example~\ref{exmp:3d-graph-excitations} and later. We use the same notation as Ref.~\cite{knots-paper}. $\calC_{\rm point}= \{ 1, a, b, \cdots \}$ refers to the superselection sectors of point particles detected by the sphere shell. $\calC_{\rm flux}= \{1, \mu, \nu \}$ is the set of pure flux loops, i.e., graph excitations supported on $g=1$ graphs. Below, we shall use $\calC_{g}$ to denote the superselection sectors for the graph excitations on an unknotted genus $g$ graph; they are detected by a genus $g$ handlebody.

\begin{exmp}[3d $S_3$ quantum double]\label{exmp:3d-graph-excitations}
The finite group $S_3$ is the smallest non-abelian group: $S_3=\{ 1,r,r^2, s, sr, sr^2| r^3=s^2=1, sr=r^2 s\}$. The quantum double with $S_3$ group has the following superselection sectors:
    \begin{enumerate}
 	\item $\calC_{\rm{point}}$  contains 3 labels, with  $\{d_a\}=\{1,1,2\}$.
 	\item $\calC_{\rm{flux}}$ contains 3 labels, with  $\{d_{\mu}\}=\{ 1, \sqrt{2}, \sqrt{3}\}$.
 	\item $\calC_{g}$ contains $6^{g-1}+3^{g-1}+2^{g-1}$ labels. In particular, $\calC_{g=2}$ contains 11 labels.
 	\end{enumerate}
  The reader familiar with the 2d $S_3$ quantum double model may be disturbed by the square roots in the quantum dimensions of the fluxes.  One way to see this choice is sensible is to check the matching rule $\sum_a d_a^2 = \sum_\mu d_\mu^2$.
\end{exmp}
 
\begin{restatable}[Graph sectors for 3d quantum double]{Proposition}{graphsectorsforqd}\label{prop:Cg}
    For 3d quantum double with finite group $G$, the set ($\calC_g$) of superselection sectors of graph excitations characterized by the information convex set of genus-$g$ handlebody has 
    \begin{equation} 
    	|\calC_g| = \frac{1}{|G|} \sum_{h \in G} |E(h)|^{g} 
    \end{equation}
    elements, where $E(h)$ is the centralizer group of $h: E(h) \equiv \{k \in G | k h = h k\}$. $|G|$ denotes the order of finite group $G$. In particular when $G = S_3, |\CC_g| = 6^{g-1}+3^{g-1}+2^{g-1}$. 
\end{restatable}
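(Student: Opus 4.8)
The plan is to reduce the counting of graph superselection sectors to a purely group-theoretic orbit count, and then to evaluate that count with Burnside's orbit-counting lemma. The conceptual heart is an identification; the rest is arithmetic.

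\textbf{Step 1 (the identification).} Since a genus-$g$ handlebody is sectorizable (see \S\ref{subsub:3d-bulk}), its information convex set is a simplex, so $\calC_g$ is exactly the finite set of its extreme points. The first thing I would establish, within the $D(G)$ quantum double model, is that each extreme point is labeled by a gauge-equivalence class of flat $G$-connection on the handlebody. A genus-$g$ handlebody deformation retracts onto a wedge of $g$ circles, so its fundamental group is the free group $F_g$ on $g$ generators; a flat connection is then freely specified by the holonomies $(h_1,\dots,h_g)\in G^g$ around the generating loops, with no relation to impose. Two such connections give the same superselection sector precisely when they are related by a global gauge transformation, i.e.\ by simultaneous conjugation $(h_1,\dots,h_g)\mapsto(kh_1k^{-1},\dots,kh_gk^{-1})$ for some $k\in G$. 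Thus the claim to secure is the bijection $\calC_g\cong G^g/G$ under the conjugation action. As a consistency check, for $g=1$ this reproduces $\calC_{\rm flux}\cong G/\text{conj}$, the conjugacy classes (magnetic fluxes), which for $S_3$ has three elements, matching Example~\ref{exmp:3d-graph-excitations}.

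\textbf{Step 2 (Burnside).} With the identification in hand, I would apply the orbit-counting lemma to the conjugation action of $G$ on $G^g$:
\begin{equation}
|\calC_g| = |G^g/G| = \frac{1}{|G|}\sum_{k\in G}|\mathrm{Fix}(k)|.
\end{equation}
A tuple $(h_1,\dots,h_g)$ is fixed by conjugation by $k$ if and only if $kh_ik^{-1}=h_i$ for every $i$, i.e.\ every $h_i$ lies in the centralizer $E(k)$. Hence $\mathrm{Fix}(k)=E(k)^g$ and $|\mathrm{Fix}(k)|=|E(k)|^g$, which gives the stated formula directly.

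\textbf{Step 3 (the $S_3$ case).} To finish, I would substitute the centralizer orders for $S_3$: $|E(1)|=6$; $|E(r)|=|E(r^2)|=3$ (the cyclic subgroup $\langle r\rangle$); and $|E(s)|=|E(sr)|=|E(sr^2)|=2$. This yields
\begin{equation}
|\calC_g| = \frac{1}{6}\left(6^g + 2\cdot 3^g + 3\cdot 2^g\right) = 6^{g-1}+3^{g-1}+2^{g-1}.
\end{equation}

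The main obstacle is Step 1: rigorously establishing, inside the entanglement bootstrap framework, that the extreme points of the information convex set of the handlebody are in bijection with $G^g/G$ and carry no further data — in particular, no additional electric-charge or vertex labels beyond the $g$ holonomies, in contrast to the point sectors $\calC_{\rm point}$ (which are the irreps of $G$). I expect to handle this through the model-specific analysis of $D(G)$ developed in Appendix~\ref{appendix:handlebodies}, building the handlebody by merging solid tori along pairs of pants and invoking the associativity theorem (Theorem~\ref{thm:associativity}) to show that the gauge class of holonomies is a complete invariant of the sector. Once this bijection is secured, Steps 2 and 3 are routine.
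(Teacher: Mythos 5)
Your proposal is correct and follows essentially the same route as the paper's proof: the paper identifies $\calC_g$ with conjugation-orbits of $G^g$ by working on the \emph{minimal diagram} of the genus-$g$ handlebody (one vertex, $g$ links, no plaquettes), where the vertex operator $A_{v_0}^h$ acts by simultaneous conjugation of the $g$ link holonomies — this is exactly the lattice realization of your Step 1 identification of flat connections on a wedge of $g$ circles modulo global gauge transformations — and then applies Burnside's lemma precisely as in your Step 2. The only cosmetic difference is that the paper secures your Step 1 directly from the minimal-diagram technology of $D(G)$ rather than via pants decompositions and the associativity theorem, so your anticipated "main obstacle" is handled more simply than you expect.
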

The proof of Proposition~\ref{prop:Cg} is given in Appendix~\ref{appendix:handlebodies}.

A few other basic topologies are shown in Fig.~\ref{fig:3d-sphere-shell-etc}.  The first one detects a  particle,\footnote{On a 3-sphere, the complement of a sphere shell is two disjoint balls. For this reason, one can say the sphere shell detects a particle-antiparticle pair.} the second one detects a three-particle cluster and the third one detects a particle-loop cluster. Only the sphere shell is sectorizable. These regions are studied in \cite{knots-paper}. 

 \begin{figure}[h!]
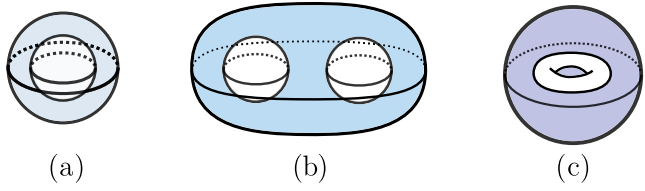

	\centering
	\parfig{.56}{figs/fig_3d-sphere-shell-etc} 
	\caption{(a) A sphere shell. (b) A ball with two balls removed. The two removed balls are smaller, disjoint, and are within the interior of the large ball. (c) A ball with a solid (unknotted) torus removed.}
	\label{fig:3d-sphere-shell-etc}
\end{figure}

The sphere shell and $B^3$ with two balls removed, shown in Fig.~\ref{fig:3d-sphere-shell-etc}(a) and (b) are direct analogs of the 2d region we discussed in \S\ref{subsection:2d-bulk}. In fact, every region in Fig.~\ref{fig:3d-sphere-shell-etc} is the volume of revolution of a 2d region along an axis, and therefore, the dimensional reduction consideration in \cite{knots-paper} applies. Clearly, this list is incomplete.
For instance, there are regions with knotted or linked torus entanglement boundaries; some of them are studied in~\cite{knots-paper}.

As with the 2d bulk, we can make immersed versions of these basic regions. Intuitively speaking, immersion can let the region ``flip" along an entanglement boundary; see Fig.~\ref{fig:3d-immersion-extra}.

\begin{figure}[h]
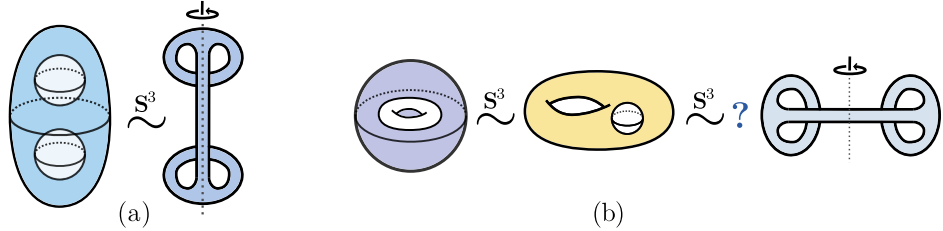

    \centering
    \parfig{.82}{figs/fig-3d-region-flip-v3}
    \caption{Examples of embedded regions in $\mathbf{S}^3$ and immersed regions obtained by a ``smooth" deformation starting from embedded ones on $\mathbf{S^3}$. Whether the rightmost region can be obtained by the indicated deformation is unclear to us, and this is indicated by the question mark.}
    \label{fig:3d-immersion-extra}
\end{figure}

Connected components of thickened entanglement boundaries provide another basic class of 3d region. The topology of any such region is of the form $m\times \mathbb{I}$, where $m$ is a genus-$g$ surface and $\mathbb{I}$ is an interval. In other words, such regions are thickened genus-$g$ surfaces. Therefore, it is always possible to embed them in $\mathbf{S}^3$. Interestingly, when $g\ge 1$, it is also possible to immerse such a region in $\mathbf{S}^3$ nontrivially, in a way not regular homotopic to any embedding; see, e.g., Corollary~1.3 of Ref.~\cite{Hass1982}.

\subsection{${}^\star$Systems with gapped boundaries in 2d and 3d}\label{section:gapped-boundary}

We shall first describe the setup and the axioms of entanglement bootstrap for the gapped boundary problems \cite{Shi:2020rne}. They are direct analogs of the axioms in other previously studied contexts~\cite{shi2020fusion,knots-paper}. After that, we describe the basic choices of regions and the excitations or fusion spaces they characterize.  Although we focus here on gapped boundaries, the same technology applies to the more general case of gapped domain walls between topological phases \cite{Shi:2020rne}.
  
\subsubsection{${}^\star$Axioms for gapped boundaries in 2d and 3d}\label{section:boundary-axioms}

{\bf 2d setup:}  The entanglement bootstrap setup of the 2d gapped boundary problem is a reference state ($\sigma$) on a half disk ($\underline{\mathbf{B}}^2$) adjacent to the gapped boundary. The total Hilbert space is the tensor product of finite dimensional Hilbert spaces on lattice sites. The number of such lattice sites is a finite number but is large enough. These lattice sites make a sensible discretization of a topological manifold. One can, for example, obtain such a lattice by coarse-graining a realistic many-body system made of qudits on a manifold.\footnote{The same considerations as in footnote \ref{footnote:RG-axioms} apply here.}  Each site has a finite-dimensional Hilbert space. The region with the gapped boundary is depicted in Fig.~\ref{fig:2d-with-bdy-setup}, where the thick black line is the gapped boundary. We assume that nontrivial Hilbert space only exists on one side of the boundary, and the other side is empty. Two boundary axioms are assumed in addition to the bulk axioms. These axioms are of similar forms, in the sense that we always partition a small region into two or three pieces, labeled by $BC$ or $BCD$:
\begin{itemize}
    \item If we partition the region into two pieces, we call the interior $C$ and call the thickened entanglement boundary $B$, then we require $\Delta(B,C)_{\sigma}=0$ on the reference state, for the indicated region (yellow disk or half-disk in Fig.~\ref{fig:2d-with-bdy-setup}).
    \item If we partition the region in three pieces, we call the interior $C$ and call the thickened entanglement boundary $BD$, where the topology of $B$ and $D$ are indicated in Fig.~\ref{fig:2d-with-bdy-setup} (green regions). We require that $\Delta(B,C,D)_{\sigma}=0$. 
\end{itemize}

\begin{figure}[h]
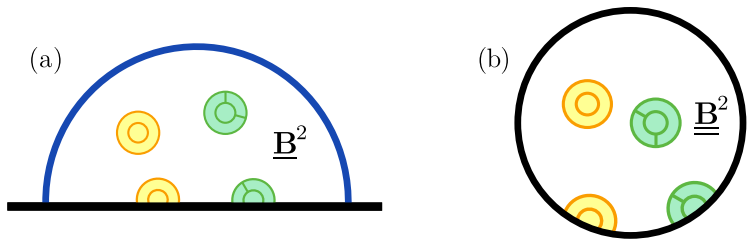

	\centering
	\parfig{.65}{figs/fig_bdy-axioms-v3}
	\caption{The setup of 2d entanglement bootstrap with a gapped boundary. One may start from a reference state on either: (a) a ``half disk" adjacent to the boundary ($\underline{\mathbf{B}}^2$), or (b) a disk with an entire boundary ($\underline{\underline{\mathbf{B}}}^2$). Axioms are imposed on bounded radius disks both within the bulk and on the gapped boundary. The gapped boundary is represented by a thick black line.}
	\label{fig:2d-with-bdy-setup}
\end{figure}

An equally simple setup is a reference state on a disk with an \emph{entire} boundary, with analogous axioms imposed; see Fig.~\ref{fig:2d-with-bdy-setup}(b) for an illustration.
The justification is a direct analog of the sphere completion lemma (Lemma 3.1 of Ref.~\cite{knots-paper}).
See \eg~\cite{Kitaev:2011dxc, Beigi2011} for some solvable models of gapped boundaries in 2+1d and  see \cite{Shi:2018krj} for explicit computation of information convex sets in a related context. We note that there are systems that admit only gapless boundaries, see \cite{Levin2013, Kaidi:2021gbs}; we expect that the boundary axioms are violated for these systems.

We clarify the distinction between terminologies. The \emph{gapped boundary} should not be confused with the \emph{entanglement boundary}. The entanglement boundary is a boundary that has nontrivial physical degrees of freedom lying on both sides, and for the quantum states we are interested in, these physical degrees of freedom have entanglement across the entanglement boundary. Another perspective is that it is possible to pass the entanglement boundary with the extensions allowed by the (generalized) isomorphism theorem. When we write $\partial \Omega$ for an immersed region $\Omega$, we always mean the thickened entanglement boundary. (In the remainder of the paper, we sometimes refer to an entanglement boundary as a boundary for short, but we always say gapped boundary.)

\begin{figure}[h]
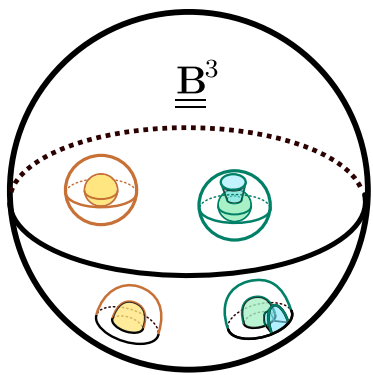

	\centering
	\parfig{.32}{figs/fig_3d-bdy-axioms-v3}
	\caption{ The setup of 3d entanglement bootstrap with a gapped boundary. We impose bulk and boundary versions of axioms {\bf A0} and {\bf A1} on bounded radius balls. Here the reference state is given on a 3-dimensional ball with an entire gapped boundary, denoted as $\underline{\underline{\mathbf{B}}}^3$, where the reference state is pure. (As in 2d, another reasonable starting point is a reference state on a half ball $\underline{\mathbf{B}}^3$ adjacent to a gapped boundary. The ``completion trick" shows that these two starting points are equally simple.) }  
	\label{fig:3d-with-bdy-setup} 
\end{figure}

{\bf 3d setup:} Below we describe the entanglement bootstrap setup for the gapped boundary of a 3d gapped system. It is very similar to the 2d setup, as can be seen in Fig.~\ref{fig:3d-with-bdy-setup}. We refer to the half-ball adjacent to the boundary as $\underline{\mathbf{B}}^3$. An alternative starting point is the (pure) vacuum state on a ball with an entire boundary $\underline{\underline{\mathbf{B}}}^3$.

\begin{remark}
    A topologically ordered system in 2d and 3d usually has multiple gapped boundary types. In 3d, models of gapped boundaries have been studied by several authors (see e.g.~\cite{Luo2022, Ji:2022iva} and references therein), where we expect our axioms to apply. A complete classification of gapped boundaries of 3d topologically ordered systems is an open problem.  These gapped boundary types cannot be converted to each other by finite depth circuit, and they should be thought of as different ``boundary phases" separated by a phase transition. These different boundary types have different universal data and a ``defect" on the boundary is necessary to separate two boundary phases. (Here the defect is codimension 1 to the gapped boundary.) Axiom {\bf A1} is expected to be violated on such defects.  Each reference state we consider in Fig.~\ref{fig:2d-with-bdy-setup} or Fig.~\ref{fig:3d-with-bdy-setup} is associated with a particular boundary type.  
\end{remark}

\subsubsection{${}^\star$2d: regions adjacent to a gapped boundary}
Figure~\ref{fig-2d-bdy-region-basics} shows a few basic topology types of regions, for a 2d system with a gapped boundary. We draw them either on ${\underline{\mathbf{B}}}^2$ or $\underline{\underline{\mathbf{B}}}^2$, whichever is more convenient.

\begin{figure}[h!]
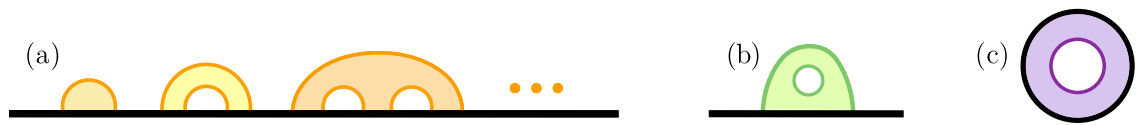

	\centering
	\parfig{.99}{figs/fig-2d-bdy-region-basics-v2} 
	\caption{A few basic topologies of regions adjacent to a gapped boundary.}
	\label{fig-2d-bdy-region-basics}
\end{figure}

For some of the topology types, one can make immersed versions of the regions. Two examples are shown in Fig.~\ref{fig-2d-bdy-region-immersion}. In particular, in the second example, the ``clock" region cannot be smoothly deformed into the ``mushroom" region within $\underline{\underline{\mathbf{B}}}^2$. This is a new feature, and because of this, it is not obvious whether the two regions have isomorphic information convex sets. 
Nevertheless, it is possible to show 
that the information convex sets of the ``clock" and the ``mushroom" are isomorphic.

\begin{figure}[h!]
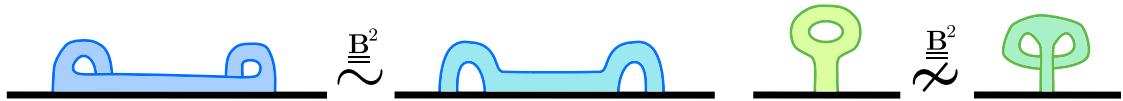

	\centering
	\parfig{.98}{figs/fig-2d-bdy-region-immersion-v2} 
	\caption{Immersed versions of embedded regions. (a) The immersed region is regular homotopic to the embedded one on $\underline{\underline{\mathbf{B}}}^2$. (b) The immersed region is not regular homotopic to the embedded one on $\underline{\underline{\mathbf{B}}}^2$, although the two regions are homeomorphic.}  
	\label{fig-2d-bdy-region-immersion}
\end{figure}

We also note that the gapped boundary of a 2d system only contributes one type of thickened entanglement boundary. That is the half annulus, i.e., the second region in Fig.~\ref{fig-2d-bdy-region-basics}(a). On $\underline{\underline{\mathbf{B}}}^2$, any immersion of the half annulus is regular homotopic to the embedded one.

\subsubsection{${}^\star$3d: regions adjacent to a gapped boundary}
In the context of 3d systems with a gapped boundary, we discuss two classes of basic topologies. One class is the boundary analog of genus-$g$ handlebodies; see Fig.~\ref{fig-3d-bdy-region-basics-1}. There are three subclasses. Each region is sectorizable. Another class can be thought of as the boundary analog of ball minus $k$-ball; see Fig.~\ref{fig-3d-bdy-region-basics-2}. 

\begin{figure}[h!]
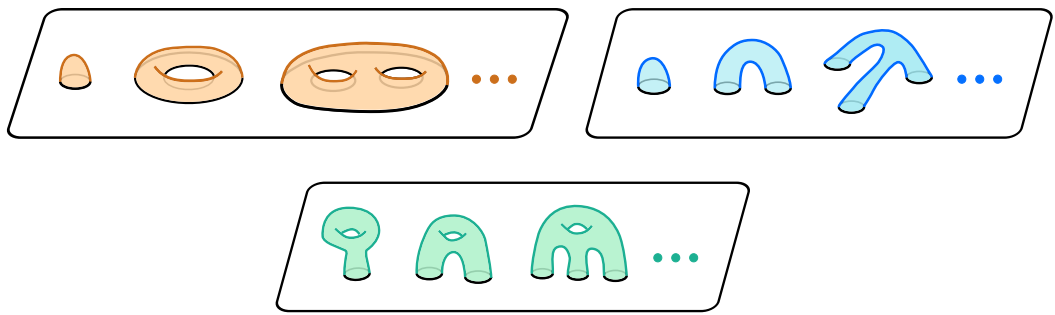

	\centering
	\parfig{.92}{figs/fig-3d-bdy-region-basics-1-v2}
	\caption{Boundary analogs of genus-$g$ handlebodies. There are three subclasses.}
	\label{fig-3d-bdy-region-basics-1}
\end{figure}

\begin{figure}[h!]
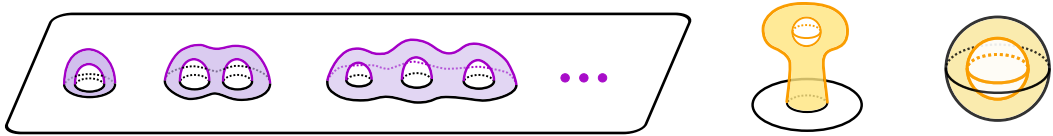

	\centering
	\parfig{.92}{figs/fig-3d-bdy-region-basics-2-v2} 
	\caption{Boundary analogs of ball minus $k$ balls.}
	\label{fig-3d-bdy-region-basics-2}
\end{figure}

Connected components of thickened entanglement boundaries adjacent to the gapped boundary provide another class of basic regions. They are of the form genus-$g$ handlebody shell cut by circles. We note that there can be inequivalent ways to ``fill in" the shell and obtain connected sectorizable regions. In fact, the orange regions and the blue regions in Fig.~\ref{fig-3d-bdy-region-basics-1} are related in this way. To see this, we observe that the orange regions and blue regions in Fig.~\ref{fig-3d-bdy-region-basics-1} complement each other in $\underline{\underline{\mathbf{B}}}^3$.

\section{Closed manifolds completed from immersed regions}
\label{sec:completion}
 
In this section, we shall be interested in finding reference states on closed manifolds. If a gapped many-body system is described by topological quantum field theory (TQFT) \cite{Witten1989,walker1991,simon2020topological},  it is natural to put the system on various space manifolds (and spacetime manifolds). If an entanglement bootstrap reference state secretly obeys the TQFT description, as is suggested by all the progress up to now, one should expect that a reference state can be put on various orientable space manifolds. The justification of this for general space dimensions remains an important open problem. Below, we solve the 2d case and make concrete progress on the 3d case.

Our progress rests on the idea of completing immersed regions to compact manifolds. As before, we use $\CA \looparrowright \CB$ to denote $\CA$ immersed in $\CB$.  When it is necessary to specify the immersion map, we use $\CA \overset{\varphi}{\looparrowright} \CB$. 
We start with a general definition.

\begin{definition}[Completion in $\CN$]\label{def:completion}
Let $\CN$ be a connected  
manifold, possibly with entanglement boundaries,   
equipped with a reference state $\sigma_{\CN}$ that satisfies the entanglement bootstrap axioms.
	We say a closed manifold $\mathcal{M}$ allows a {\it completion in $\CN$}, $(\CM,\CW \overset{\varphi}{\looparrowright} \CN)$, if    
	\begin{enumerate}
		\item  $\mathcal{W}\subset\mathcal{M}$ is obtained by removing a finite number of separated balls from $\CM$. 
		\item $ \mathcal{W} \overset{\varphi}{\looparrowright} \CN$ is an immersion for which $\Sigma(\mathcal{W})$ is non-empty.  
	\end{enumerate} 
	If $\CN=\mathbf{B}^n$ or $\mathbf{S}^n$, 
	we call $(\CM, \CW\overset{\varphi}{\looparrowright} \CN)$ a \emph{completion} for short.
\end{definition}

\begin{figure}[h!]
	\centering
 	\begin{tikzpicture}
	\begin{scope}
 \node[] (A) at (0, 0.3) {\parfig{.82}{figs/fig_immersion-point-v3} };
	  \node[] (A) at (-4.4,- 1.4) {$\CM=T^2$};	
	 \node[] (A) at (0.2, -1.4) {$\CW=T^2 \setminus \Diamond$};	
	 \node[] (A) at (5.0, -1.4) {$\CW \looparrowright \mathbf{S}^2$};	
	\end{scope}
\end{tikzpicture}	
	\caption{The \emph{completion} $(T^2, T^2\setminus B^2 \looparrowright \mathbf{S}^2)$ as an illustration of Definition~\ref{def:completion}. When $\CM$ is a torus $T^2$, we remove a small ball $B^2$ (which we refer to as a \emph{completion point}, represented in the figure as a ``diamond" $\Diamond$) to obtain $\CW$. The immersion $\CW \looparrowright \mathbf{S}^2$ is explicitly shown in the figure. In many of the later figures, we omit the drawing of the black circle indicating $\mathbf{S}^n$ whenever this is convenient and not confusing.}  
	\label{fig:completable-region-immersion}
\end{figure}

The reason we are interested in completion is that it provides a state on manifold $\CM$ which possesses a few nice properties (Proposition~\ref{prop:completion-reference-state}). The state satisfies axioms {\bf A0} and {\bf A1} except possibly at a number of ``completion points". Moreover, on  $B^n \subset \CW$ the state is locally indistinguishable from the reference state $\sigma$. The reason we are interested in the specific choice $\CN=\mathbf{B}^n$ is that it is the ``cheapest" choice: a ball is a subsystem of any manifold. While we require $\CN$ to be connected, $\CM$ does not have to be connected. An equally simple choice is a sphere $\mathbf{S}^n$; it is equally simple because of the sphere completion lemma, which we shall review; see Fig.~\ref{fig:sphere-completion-bulk-bdy}. (We shall also consider versions of completion in the context related to gapped boundaries, where $\CM$ is a compact\footnote{In entanglement bootstrap, we always consider regions consisting of a finite number of sites. We shall refer to a manifold without entanglement boundaries as compact, and if  entanglement boundaries and gapped boundaries are both absent, we say the manifold is closed.} manifold with boundaries and the simplest choices of $\CN$ are $\ubBn$ and $\uubBn$.)

\begin{Proposition}\label{prop:completion-reference-state}
    Let $(\CM, \CW \overset{\varphi}{\looparrowright} \CN)$ be a completion in $\CN$ of the closed manifold $\CM$, where the thickened boundary $\partial \CW$ has $k$ connected components. Then, there exists a choice of superselection sector labels $\{a_i\}_{i=1}^k$ and a state $\vert \phi^{\{a_i\}}\rangle$ on $\CM$ such that 
    \begin{equation}
       \Tr_{\CM\setminus \CW}   \,\,\vert \phi^{\{a_i\}}\rangle \langle \phi^{\{a_i\}} \vert \in {\rm ext}(\Sigma_{a_1\cdots a_k}(\CW)).
    \end{equation}
    Furthermore, {\bf A0} and {\bf A1} hold everywhere on $\vert \phi^{\{a_i\}}\rangle$ expect for possible violations of {\bf A1} at the completion points: for a ball centered at the $i$th completion point, $\Delta(B,C,D)_{\vert \phi^{\{a_i\}}\rangle} = 2 \ln d_{a_i}$. Here, $d_{a_i}$ is the quantum dimension of $a_i$. The information convex set $\Sigma_{a_1\cdots a_k}(\CW)$ is determined by the immersion $\CW \overset{\varphi}{\looparrowright} \CN$ as well as the reference state $\sigma_{\CN}$.
\end{Proposition}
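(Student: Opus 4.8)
The plan is to build $\vert\phi^{\{a_i\}}\rangle$ by first fixing a good state on $\CW$ and then healing the $k$ punctures one at a time, controlling the axioms at every stage. Since the immersion $\CW\overset{\varphi}{\looparrowright}\CN$ makes $\Sigma(\CW)$ non-empty, and $\Sigma(\CW)$ is compact and convex, it has an extreme point $\rho_{\CW}$. By the Hilbert space theorem, $\rho_{\CW}$ lies in a single sector $\Sigma_{a_1\cdots a_k}(\CW)\cong\mathcal{S}(\mathbb{V}_{a_1\cdots a_k}(\CW))$, where the labels $a_i\in\calC_{S^{n-1}\times\mathbb{I}}$ are read off from the $k$ sphere-shell components of $\partial\CW$; being extreme, $\rho_{\CW}$ is a pure state of the fusion space $\mathbb{V}_{a_1\cdots a_k}(\CW)$. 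This fixes both the sector labels $\{a_i\}$ claimed in the statement and the target membership in $\mathrm{ext}(\Sigma_{a_1\cdots a_k}(\CW))$.

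Next I would construct, for each puncture $i$, a ``cap'' state $\tau_i$ on the solid ball $\hat B_i$ that fills the $i$-th hole. The requirements are: (i) the sphere-shell marginal of $\tau_i$ adjacent to $\partial\CW$ equals the extreme point $\rho^{a_i}$ of the simplex $\Sigma(S^{n-1}\times\mathbb{I})$; (ii) $\tau_i$ is locally indistinguishable from the reference state on every ball that avoids the center; and (iii) on a ball centered at the completion point, partitioned as in Fig.~\ref{fig:quantum-dim-point}(b)/(c), one has $\Delta(B,C,D)_{\tau_i}=2\ln d_{a_i}$. Such a cap is exactly a single point excitation $a_i$ placed at the center and surrounded by vacuum-type concentric shells; its existence and properties follow from the sphere-completion lemma together with the sphere-shell characterization of the quantum dimension, Eq.~\eqref{eq:shell-da-immersed-bc}, which delivers precisely the defect $2\ln d_{a_i}$ at the center while guaranteeing {\bf A0} and {\bf A1} on all smaller balls. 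I expect this to be the main obstacle: verifying that the nontrivial-sector extension keeps the axioms on every ball not containing the center, and produces the sharp value $2\ln d_{a_i}$ there, is where the real content lies, and it is the point at which the quantum-dimension technology of \S\ref{subsection:axioms-and-tools} and the constructions of \cite{Shi:2019ngw} are needed.

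With $\rho_{\CW}$ and the caps $\{\tau_i\}$ in hand, I would assemble the global state by the merging theorem. Because $\tau_i$ and $\rho_{\CW}$ share the marginal $\rho^{a_i}$ on the overlapping sphere shell, and because the relevant conditional mutual information vanishes there (sphere shells are precisely the regions on which the Markov condition of the merging lemma holds), each cap can be merged onto $\CW$; performing the $k$ merges in sequence yields a state on all of $\CM=\CW\cup\bigcup_i\hat B_i$. The caps supply exactly the degrees of freedom needed to purify the fusion-space-pure state $\rho_{\CW}$, so the resulting state is a pure state $\vert\phi^{\{a_i\}}\rangle$ on the closed manifold $\CM$ whose restriction to $\CW$ is the chosen extreme point, as required.

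Finally I would check the axioms region by region. On balls contained in $\CW$ they hold because $\rho_{\CW}\in\Sigma(\CW)$ is locally indistinguishable from $\sigma_{\CN}$ pulled back through $\varphi$; on balls contained in a cap and avoiding its center they hold by property (ii); on balls straddling an interface they hold because the merged state saturates strong subadditivity on the sphere-shell gluing region, which is exactly the statement of {\bf A1} there. The only surviving violation is {\bf A1} at the $k$ completion points, where property (iii) gives $\Delta(B,C,D)=2\ln d_{a_i}$. A secondary subtlety to watch is the ordering independence of the successive merges and the resulting purity of $\vert\phi^{\{a_i\}}\rangle$; I expect this to follow from the uniqueness clause of the merging lemma, but it should be confirmed once the single-puncture healing in the obstacle step is pinned down.
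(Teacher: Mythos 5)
Your first paragraph is fine and matches the intended setup (pick an extreme point of $\Sigma(\CW)$; its boundary sector labels fix $\{a_i\}$), but the construction collapses at exactly the step you yourself flag as the main obstacle: the existence of the cap states $\tau_i$. Within the entanglement bootstrap there is no primitive that produces a state on a geometric ball $\hat B_i$ carrying a point excitation $a_i$ at its center, vacuum-like on every ball avoiding the center, and with prescribed shell marginal $\rho^{a_i}$. The sphere-completion lemma you invoke handles only the vacuum sector (and is itself a corollary of the lemma being proven), while Eq.~\eqref{eq:shell-da-immersed-bc} is a formula for $d_a$, not a construction of states. Worse, for $a_i\neq 1$ such a cap can never arise by extending the shell state inward (a non-vacuum sector on a sphere shell does not extend to a ball, since the information convex set of a ball contains only the vacuum), and there is no way to pull the reference state back into the cap interior, because the puncture's boundary is only \emph{immersed} in $\CN$ and need not bound any ball of $\CN$ (see e.g.\ Fig.~\ref{fig:completable-region-immersion}); "locally indistinguishable from the reference state" has no canonical meaning for balls deep inside a cap that is not identified with any part of $\CN$. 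Constructing your $\tau_i$ is therefore essentially the content of the statement to be proven, and the proposal is circular at its core.

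The paper's proof (Proposition~\ref{prop:completion-reference-state} is reduced to Lemma~\ref{lemma:completion-trick-v2}) never fills the hole with a structured region at all: it thickens $\CW$, uses the Hilbert space theorem (Proposition~D.4 of \cite{Shi:2020rne}) to factorize any element of the relevant sector of the information convex set across each spherical boundary as in Eq.~\eqref{eq:fuzzy-cuts}, purifies each boundary factor $\rho^{a_i}_{A_iB_i^L}$ with an \emph{abstract} auxiliary system $E_i$, and collapses $\widetilde{X}_i E_i$ to a single lattice site --- the completion point. This sidesteps both of your difficulties: no axioms need to hold "inside" the cap (there is no inside), and purity of $\vert\phi^{\{a_i\}}\rangle$ is automatic because, for an extreme point, the residual factor $\lambda$ in Eq.~\eqref{eq:state-on-W-tilde} is pure, so the global state is a tensor product of pure states --- no merging is ever performed. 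By contrast, in your route purity does not follow from the uniqueness clause of the merging lemma (merging controls marginals and Markov structure, not global entropy), the merging \emph{theorem} cannot be applied wholesale since the union region contains the axiom-violating center, and your claim that {\bf A1} holds on balls straddling the cap--$\CW$ interface "because the merged state saturates SSA on the gluing region" conflates a single global Markov identity with the family of local constraints {\bf A1} imposes. The paper instead verifies {\bf A0} at the completion point directly from the tensor-product structure and then derives the exact violation $2\ln d_{a_i}$ through the chain of {\bf A0}-extensions in Fig.~\ref{fig:completion-da}; that computation is also what fixes the coefficient $2$ rather than $4$ (compare Eq.~\eqref{eq:shell-da-immersed-a} with Eq.~\eqref{eq:shell-da-immersed-bc}), a distinction your property~(iii) asserts but does not establish.
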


\begin{proof}
The proof follows from the completion trick (Lemma~\ref{lemma:completion-trick-v2}). 
\end{proof}

Intuitively speaking, Proposition~\ref{prop:completion-reference-state} says that, if a manifold $\CM$ allows a completion,  we can obtain a state on $\CM$ that satisfies the entanglement bootstrap axioms, up to possible violations at a finite number of isolated completion points. These violations are well-controlled and are attributed to non-Abelian superselection sectors of point particles\footnote{For topologically nontrivial choices of $\CN$, such non-Abelian superselection sectors may also come from topological defects. 
By "topological defects" in this context, we meant extrinsic defects, such as the point defect in the toric code on which the duality wall ends.
More generally, these defects exhibit interesting phenomena such as permuting anyons, \cite{Kitaev:2011dxc}. See Example~\ref{exmp:defect-manifold} in \S\ref{subsec:non-vacuum-completions} for more discussion.} at those completion points. Furthermore, the state is locally ``vacuum-like" since it is locally indistinguishable from the reference state on $\CN$ for all small balls contained in $\CW$.

\begin{Proposition}
	If $(\CM,\CW \overset{\varphi}{\looparrowright} \CN)$ is a completion in $\CN$ of $\CM$, and $\CN$ is orientable, then $\CM$ is orientable.
\end{Proposition}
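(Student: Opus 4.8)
The plan is to establish orientability of $\CM$ in two stages: first transfer orientability from $\CN$ to the punctured manifold $\CW$, exploiting the fact that $\varphi$ is a codimension-zero immersion, and then extend the resulting orientation across the finitely many filled-in balls to obtain an orientation of all of $\CM$. Note at the outset that for this statement the only relevant part of the completion data is the existence of the immersion $\CW \overset{\varphi}{\looparrowright}\CN$ with $\CN$ orientable; the reference state $\sigma_{\CN}$ and the nonemptiness of $\Sigma(\CW)$ play no role, since the conclusion is purely topological.

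First I would observe that because $\CW$ and $\CN$ have the same dimension $n$, the immersion $\varphi$ is a local homeomorphism onto an open set (a local diffeomorphism in the smooth setting): at each $w\in\CW$ the differential $d\varphi_w\colon T_w\CW \to T_{\varphi(w)}\CN$ is a linear isomorphism. Fixing a continuous choice of local orientation $o_{\CN}(x)$ on $\CN$, I would define a local orientation on $\CW$ by the pointwise pullback $o_{\CW}(w) \equiv (d\varphi_w)^{*}\,o_{\CN}(\varphi(w))$. The key point is that this is well defined even though $\varphi$ need not be injective: the construction uses only the single image point $\varphi(w)$ and the isomorphism $d\varphi_w$, and its continuity in $w$ is a purely local statement following from continuity of $\varphi$ and of $o_{\CN}$. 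Hence $\CW$ is orientable, and injectivity of $\varphi$ is never needed — this is what makes an immersion as good as an embedding for orientation purposes.

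Next I would fill in the punctures. By Definition~\ref{def:completion}, $\CM$ is recovered from $\CW$ by gluing back a finite collection of separated closed balls $\{D_i\}$, each of which meets $\CW$ in a collar $\partial D_i \times (0,1] \cong S^{n-1}\times(0,1]$. For $n\ge 2$ this collar is connected, and a connected orientable manifold carries exactly two orientations; thus the restriction map from orientations of $D_i$ to orientations of its collar is a bijection of two-element sets, so the orientation $o_{\CW}$ already chosen on the collar extends uniquely to $D_i$. Performing this for each of the finitely many disjoint balls yields a globally consistent orientation on $\CM = \CW \cup \bigcup_i D_i$, so $\CM$ is orientable. Because the construction is local near each gluing locus, it applies verbatim when $\CM$ is disconnected (orient each component separately).

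I do not anticipate a serious obstacle. The two points needing care are (i) that the pullback of an orientation under a \emph{non-injective} immersion is nonetheless well defined, handled above by making the construction pointwise, and (ii) connectedness of the collar $S^{n-1}\times(0,1)$, which holds for the relevant range $n\ge 2$ (closed $1$-manifolds are orientable automatically). One could instead phrase the second stage through the orientation double cover, or via $w_1\in H^1(\CM;\mathbb{Z}/2)$ together with van Kampen, but the direct extension-across-collars argument is the most elementary and, unlike a $\pi_1$-based argument, requires no separate treatment of $n=2$ and $n\ge 3$.
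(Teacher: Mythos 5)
Your proof is correct, but it runs in the opposite direction from the paper's. The paper argues by contradiction: if $\CM$ were nonorientable, one could transport a small ball around an orientation-reversing loop lying entirely inside $\CW$; pushing that process forward through the immersion $\varphi$ would flip the orientation of a ball that stays embedded in $\CN$ throughout, contradicting orientability of $\CN$. You instead argue directly and constructively: pull the orientation of $\CN$ back to $\CW$ pointwise, using the fact that an equal-dimension immersion is a local homeomorphism (so non-injectivity is irrelevant), and then extend that orientation uniquely across each filled-in ball via the connectedness of the collar $S^{n-1}\times(0,1]$ for $n\ge 2$. Both arguments hinge on the same core fact—that a codimension-zero immersion transfers orientation data between $\CW$ and $\CN$—but your version buys two things: it makes explicit the step the paper leaves implicit (that the obstruction passes between $\CM$ and the punctured manifold $\CW$; in the paper's version this is the unstated claim that an orientation-reversing loop in $\CM$ can be pushed off the finitely many completion points), and it dispatches the edge cases ($n=1$, disconnected $\CM$) cleanly. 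The paper's version is shorter and more physical in flavor. Your closing alternative via $w_1$ is essentially the route sketched in the paper's own follow-up Remark, which notes that characteristic classes of $T\CW$ pull back from $T\CN$ under the immersion.
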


\begin{proof}  
We provide proof by contradiction. Suppose that the manifold $\mathcal{M}$ is nonorientable.  
Then there exists a small ball in $\CW$, which can be transported around and back to the original place and gets its orientation flipped; the whole process happens within $\CW$. This happens no matter how small the ball is. 
However, because $\CW \looparrowright \CN$, we can always choose the ball small enough so that it remains embedded in $\CN$ during the whole process. Such a process cannot flip the orientation of the small ball because $\CN$ is orientable. This leads to a contradiction and accomplishes the proof.  
\end{proof}

\begin{remark}
Here is the more general observation in differential geometry terms. An immersion $\CW\looparrowright\CN$ implies that all nontrivial characteristic classes of the tangent bundle $T \CW$ must agree with those of $T\CN$.  
This includes the first Stieffel-Whitney class $w_1$ whose vanishing is required for orientability.  Therefore, $\CM$ is orientable if $\CN$ is orientable, and $\CM$ is spin if $\CN$ is spin (see \eg~\cite{wall2016differential} Corollary 6.2.4).  
\end{remark}

Next, we describe the completion trick (Lemma~\ref{lemma:completion-trick-v2}). It is a direct generalization of the idea of sphere completion introduced in~\cite{knots-paper}; see Lemma 3.1 therein. The completion trick allows us to collapse an arbitrary subset of spherical entanglement boundaries, each of which becomes a completion point. See Fig.~\ref{fig:completion-trick-1} for an illustration.

\begin{lemma}[Completion trick]\label{lemma:completion-trick-v2}
Let $\widetilde{\CW}$ be an $n$-dimensional manifold with $l$ boundary components. $\CW \subset \widetilde{\CW}$ is obtained by deleting $k$ internal balls (see Fig.~\ref{fig:completion-trick-1} for an illustration). $\CW$ is an immersed region, which carries a non-empty information convex set $\Sigma(\CW)$. Let $\partial \CW = (X_1\cup \cdots \cup X_k) \cup (Y_1\cup \cdots \cup Y_l)$, where $X_i$ and $Y_j$ each is a connected component. $X_i$s are spherical boundaries (i.e., $X_i= S^{n-1}\times \mathbb{I}$) obtained by deleting balls from $\widetilde{\CW}$. For each $\rho_{\CW}\in \Sigma_{a_1\cdots a_k b_1\cdots b_l}(\CW)$, there is a state $\widetilde{\rho}_{\widetilde{\CW}_+}$ (where $\widetilde{\CW}_+$ is a thickening of $\widetilde{\CW}$) such that
\begin{enumerate}
    \item $\Tr_{\widetilde{\CW}_+ \setminus \CW} \, \widetilde{\rho}_{\widetilde{\CW}_+}= \rho_{\CW}$. 
    \item If $l=0$,  and $\rho_{\CW}\in {\rm ext} (\Sigma_{a_1\cdots a_k}(\CW))$, then $\widetilde{\CW}$ has no entanglement boundary, and $\widetilde{\rho}_{\widetilde{\CW}}$ is pure. 
    \item Axioms {\bf A0} and {\bf A1} hold on $\widetilde{\CW}_+$ except for possible violation of {\bf A1} on the ``completion points". On a ball centered at the $i$th completion point,  
\begin{equation}\label{eq:BDA-da}
\Delta(B,C,D)_{\widetilde{\rho}}  = 2 \ln d_{a_i}, \quad a_i \in \CC_{X_i}, \quad \textrm{for the partition in Fig.~\ref{fig:completion-da}(b)}. 
\end{equation}
\end{enumerate}
In particular, {\bf A1} holds on the $i$-th completion point if $a_i$ is Abelian. 
\end{lemma}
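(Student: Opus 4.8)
The plan is to heal the $k$ spherical punctures one at a time, reducing each healing step to a generalization of the sphere completion lemma of Ref.~\cite{knots-paper} applied locally. Fix an element $\rho_{\CW}\in\Sigma_{a_1\cdots a_k b_1\cdots b_l}(\CW)$. Near the $i$-th spherical boundary $X_i=S^{n-1}\times\mathbb{I}$, the marginal of $\rho_{\CW}$ on $X_i$ is an extreme point $\rho^{a_i}$ of the simplex $\Sigma(X_i)$ guaranteed by the simplex theorem. Since the immersion $\varphi$ is a local embedding, a thin collar of $X_i$ is carried by $\varphi$ to an embedded spherical shell in $\CN$, which bounds a small ball there; all of the capping can therefore be carried out inside $\CN$ and then pulled back to the ball $B_i$ that was deleted from $\widetilde{\CW}$.

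First I would construct, for each $i$, a cap state $\omega_i$ on $B_i$ together with a collar of $X_i$, with the properties that (i) its marginal on $X_i$ equals $\rho^{a_i}$, so that it agrees with $\rho_{\CW}$ on the overlap, and (ii) it realizes the point excitation $a_i$ localized at the center of $B_i$, which becomes the $i$-th completion point. This $\omega_i$ is the local reduced density matrix of a state carrying sector $a_i$ at a point; it satisfies {\bf A0} and {\bf A1} away from that point and is the natural generalization of the vacuum cap used in the sphere completion lemma (the special case $a_i=1$). I would then glue $\omega_i$ onto $\rho_{\CW}$ using the merging lemma~\cite{kato2016information,shi2020fusion}: the two states coincide on the overlapping collar $X_i$, so their merge is the unique quantum Markov state reproducing both marginals. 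Iterating over $i=1,\dots,k$ fills all $k$ balls. When $l=0$ this yields $\widetilde{\rho}$ directly on the closed $\widetilde{\CW}$; when $l>0$ the external components $Y_j$ call for a small collar, producing the thickened $\widetilde{\CW}_+$. Property~1 is then immediate, since the merge reproduces $\rho_{\CW}$ on its marginal.

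Next I would verify the axioms and extract the controlled violation. Away from completion points, {\bf A0} and {\bf A1} hold on every small ball because the merged state is quantum Markov there and each constituent satisfies the axioms on its domain. On a ball centered at the $i$-th completion point, the merged state equals the localized excitation $\rho^{a_i}$, so the quantum-dimension definition of Eq.~\eqref{eq:shell-da-immersed-bc}, with the partition of Fig.~\ref{fig:quantum-dim-point}(b), gives $\Delta(B,C,D)_{\widetilde{\rho}}=2\ln d_{a_i}$; axiom {\bf A0} still holds there because the localized-excitation state is Markov across every full shell, only the finer hemispherical partition of {\bf A1} being sensitive to the quantum dimension. The ``in particular'' clause then follows at once: an Abelian sector has $d_{a_i}=1$, hence $\Delta(B,C,D)=0$ and {\bf A1} holds. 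For Property~2, when $l=0$ and $\rho_{\CW}$ is extreme, the merge of extreme constituents over a manifold with no external boundary leaves no residual entropy and is therefore pure, by the same argument as in the sphere completion lemma.

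The step I expect to be the main obstacle is the construction of the excitation cap $\omega_i$ with the correct sector. Unlike the vacuum cap, the information convex set of the filled ball in $\CN$ contains only the reference-state marginal, so $\omega_i$ cannot be a marginal of $\sigma_{\CN}$; one must instead produce a genuine localized-excitation state (for instance, as the reduced density matrix of a state created by a string or membrane operator for $a_i$) and check that its shell marginal is exactly $\rho^{a_i}$ so that the merging lemma applies. A secondary technical point is that the merges at distinct punctures must be shown to be compatible and performable simultaneously; this follows from the punctures being separated, so that the relevant Markov conditions have disjoint supports, but it should be verified carefully.
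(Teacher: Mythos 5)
Your construction has a genuine gap at its foundation: you assume that a collar of the spherical boundary $X_i$ is carried by $\varphi$ to an \emph{embedded} sphere shell in $\CN$ that bounds a small ball there, so that the cap can be built inside $\CN$ and pulled back. Immersion only gives local embedding; the whole shell $X_i$ need not be embedded, and the lemma is designed precisely for this situation --- the paper remarks immediately after the proof sketch that ``we do not need to assume that the spherical entanglement boundaries are embedded.'' In the key applications (e.g.\ the punctured torus of Example~\ref{exmp:T2}, where $\bar{X}^{\Diamond}$ has its third boundary circle immersed with nontrivial winding), the boundary to be collapsed bounds no ball in $\mathbf{S}^n$, so there is nothing to pull back. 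This is also why the statement of the lemma invokes the quantum dimension via Eq.~\eqref{eq:shell-da-immersed-bc}, the definition that makes sense for immersed shells. A second, related gap is your reliance on a ``genuine localized-excitation state'' created by a string or membrane operator to serve as the cap $\omega_i$: the existence of such operators for an arbitrary sector is not among the available tools (the paper states explicitly that it will not rely on string or membrane operators to prove any statement, deferring them to a companion paper), so your main obstacle is not merely technical --- it is outside the toolkit.

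The paper's proof avoids both problems by never constructing a geometric cap at all. It uses the factorization of any $\rho_{\CW_+}\in\Sigma_{a_1\cdots a_k b_1\cdots b_l}(\CW_+)$ along thickened boundaries (Proposition~D.4 of Ref.~\cite{Shi:2020rne}), which isolates a factor $\rho^{a_i}_{A_iB_i^L}$ near each spherical boundary; this factor is then purified by an \emph{abstract} auxiliary system $E_i$, and the completion point is defined by topologically collapsing $\widetilde{X}_i E_i$ to a single lattice site. No merging step and no interior excitation state are needed: statement~1 is read off from the resulting tensor-product expression, statement~2 follows because the residual factor $\lambda$ is pure for extreme points, and the violation $\Delta(B,C,D)=2\ln d_{a_i}$ is computed by repeated applications of extended {\bf A0} around the collapsed site to reduce the partition of Fig.~\ref{fig:completion-da}(b) to the immersed-shell quantum-dimension formula. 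If you replace your geometric cap by this purify-and-collapse step, your outline essentially becomes the paper's proof; as written, it fails on exactly the immersed cases the lemma exists to handle.
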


\begin{figure}[h!]
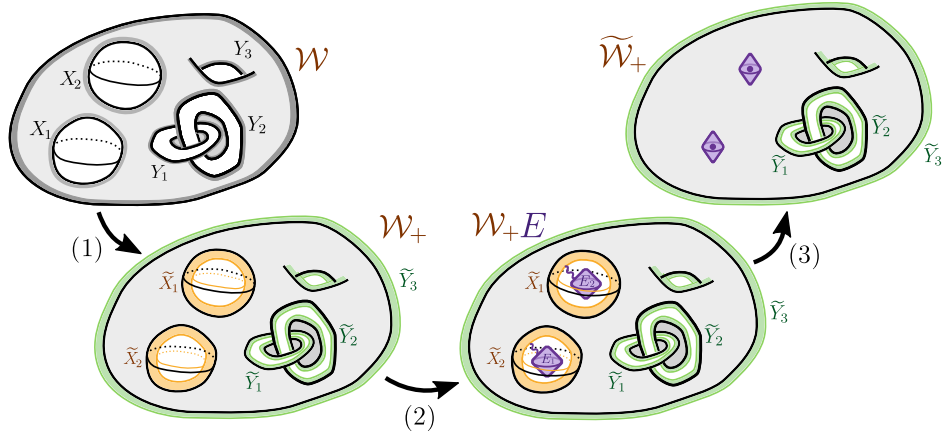

	\centering
	\parfig{.82}{figs/fig_completion_with-E-v2}
	\caption{Illustration of the completion trick (Lemma~\ref{lemma:completion-trick-v2}). For illustration purposes, we consider a region $\CW$ (gray) in the 3d bulk, which has 2 spherical entanglement boundaries and 3 torus entanglement boundaries. We wish to collapse both spherical entanglement boundaries, and therefore this corresponds to the case with $k=2$ and $l=3$ in Lemma~\ref{lemma:completion-trick-v2}. The dark gray area in the leftmost figure is $ \cup_{i = 1}^k X_i \cup_{j = 1}^l Y_j$, the thickened entanglement boundary of $\CW$. (1) Expand $\CW$ passing its entanglement boundary and obtain its thickening $\CW_{+}$. $\partial\CW_{+} =\CW_{+}\setminus \CW= \cup_{i = 1}^k \widetilde{X}_i \cup_{j = 1}^l \widetilde{Y}_j$.  The orange sphere shells are $\cup_{i = 1}^k \widetilde{X}_i $. The green torus shells are $\cup_{j = 1}^l \widetilde{Y}_j$. (2) Purify each spherical entanglement boundary of $\CW_+$ separately, where each purple box represents a purifying system $E_i$.  $E=\cup_{i=1}^kE_i$. (3) Collapsing each spherical entanglement boundary (labeled by $i\in \{ 1, \dots, k\}$) to a single site. 
 In terms of quantum states, this is done by identifying each purifying region $E_i$ and the thickened spherical entanglement boundary $\widetilde{X}_i$ (which we choose to collapse) as a site. Each site so obtained is a \emph{completion point}. The resulting region is $\widetilde{\CW}_+$.  }
	\label{fig:completion-trick-1}
\end{figure}

Note that we do not need to assume that the spherical entanglement boundaries are embedded. This trick will be useful in ``filling the holes" for various immersed spherical entanglement boundaries. 
The key idea, as is illustrated in Fig.~\ref{fig:completion-trick-1}, is (1) for an extreme point, each connected component of the entanglement boundary can be purified separately, and (2) each spherical boundary, together with its associated purifying region, can be collapsed to a  completion point.

\begin{remark}
    For most applications, it is sufficient to thicken the subset of (spherical) entanglement boundaries that we wish to purify. Nonetheless, we choose to thicken all entanglement boundaries, both orange and green ones in Fig.~\ref{fig:completion-trick-1}. (For this reason, the final region we obtain is $\widetilde{\CW}_+$, which contains $\widetilde{\CW}$.) This is convenient in some applications: the extra layer can sometimes be useful when considering information convex sets. 
\end{remark}

\begin{proof}[Proof of Lemma~\ref{lemma:completion-trick-v2}, items 1 and 2]
First, we expand $\CW$ passing every entanglement boundary to obtain $\CW_{+}$, so that $\partial \CW_{+}=\CW_{+} \setminus \CW$. See Fig.~\ref{fig:completion-trick-1} for an illustration. We further write $\CW_{+} \setminus \CW= (\widetilde{X}_1 \cup \cdots \cup \widetilde{X}_k) \cup (\widetilde{Y}_1 \cup \cdots \cup \widetilde{Y}_l)$, where the labels of the connected components match that of $\partial \CW$. 
By the Hilbert space theorem (in particular, Proposition~D.4 of Ref.~\cite{Shi:2020rne}), 
any state $\rho_{\CW_{+}} \in  \Sigma_{a_1\cdots a_k b_1\cdots b_l}(\CW_+)$ allows a simple factorization along the thickened boundaries:
\def\Wminus{\widehat C}
\begin{equation}\label{eq:fuzzy-cuts}
   \rho_{\CW_{+}} = \left( \bigotimes_{i=1}^k \rho^{a_i}_{{A_iB_i^L}}  \right) \otimes 
    \left( \bigotimes_{j=1}^l \rho^{b_j}_{A_{k+j} B_{k+j}^L}\right)
    \otimes  \lambda_{(\cup_{i=1}^{k+l} B_i^R)\cup \Wminus },
\end{equation}
where $\Wminus$ is $\CW_+ \setminus \cup_i A_iB_i$.
Here, for $i\in \{1, \cdots, k\}$, $A_i B_i\subset \widetilde{X}_i X_i$, $A_i \supset \widetilde{X}_i$, and $B_i$ can deform to $X_i$ by extensions; similarly, for $i\in \{k+1, \cdots, k+l\}$, $A_{k+i} B_{k+i} \subset\widetilde{Y}_{i} Y_{i}$, $A_{k+i} \supset \widetilde{Y}_{i}$ and $B_{k+i}$ can deform to $Y_i$ by extensions.  $\mathcal{H}_{B_i} = (\mathcal{H}_{B_i^L} \otimes \mathcal{H}_{B_i^R}) \oplus \cdots$. Note that $B_i^L$ and $B_i^R$ do not represent subsystems in general.
The density matrices $\{ \rho^{a_i}_{A_i B_i^L} \}$ are supported on $\mathcal{H}_{A_i} \otimes \mathcal{H}_{B_i^L}$, $\{ \rho^{b_i}_{A_{k+i} B_{k+i}^L} \}$ are supported on $\mathcal{H}_{A_{k+i}} \otimes \mathcal{H}_{B_{k+i}^L}$, and the state $\lambda_{(\cup_{i=1}^{k+l} B_i^R)\cup \Wminus}$ is supported on $(\otimes_{i=1}^{k+l} \mathcal{H}_{B_i^R} ) \otimes \mathcal{H}_{\Wminus}$.  $\{a_i\}_{i=1}^k$ and $\{b_j\}_{j=1}^l$ are labels of superselection sectors, and the states on $\rho^{a_i}_{{A_iB_i^L}}$ and $\rho^{b_j}_{A_{k+j} B_{k+j}^L}$ only depend on the choice of them; other information of $\rho_{\CW_{+}} \in  \Sigma_{a_1\cdots a_k b_1\cdots b_l}(\CW_+)$ are in $\lambda_{(\cup_{i=1}^{k+l} B_i^R)\cup \Wminus}$.  To summarize the intuition behind Eq.~\eqref{eq:fuzzy-cuts}, we say a ``fuzzy cut" is contained in $B_i$ for any $i$.

Next, we introduce a purifying system supported on $E = \cup_{i=1}^k E_i$. Each $E_i$ carries a Hilbert space $\mathcal{H}_{E_i}$, whose dimension is finite but large enough. For $i\in \{1, \cdots, k \}$ we purify $\rho^{a_i}_{A_i B_i^{L}}$ with $E_i$ and let the resulting state be $\vert \phi^{a_i}_{A_i B_i^{L} E_i}\rangle $.  We topologically collapse $\widetilde{X}_i E_i$ to a single site (i.e., a completion point). We call the region obtained by collapsing all the $k$ sphere shells  
as $\widetilde{\CW}_{+}$, which is equipped with a topology associated with topologically collapsing the entanglement boundaries into points (i.e., ``healing the punctures").  The state we obtain can be written as 
\begin{equation}\label{eq:state-on-W-tilde}
  \widetilde{\rho}_{\widetilde{\CW}_{+}} = \left( \bigotimes_{i=1}^k \vert \phi^{a_i}_{A_i B_i^L E_i}\rangle\langle \phi^{a_i}_{A_i B_i^L E_i}\vert  \right) \otimes 
    \left( \bigotimes_{j=1}^l \rho^{b_j}_{A_{k+j} B_{k+j}^L}\right)
    \otimes  \lambda_{(\cup_{i=1}^{k+l} B_i^R)\cup \Wminus },
\end{equation}
From this expression, we can immediately verify statement 1. Because $\lambda_{(\cup_{i=1}^{k+l} B_i^R)\cup \Wminus }$ is pure for extreme points (Eq.~D.9 of Ref.~\cite{Shi:2020rne}), statement 2 holds. This completes the verification of statements 1 and 2 of Lemma~\ref{lemma:completion-trick-v2}.
\end{proof}

\begin{figure}[h!]
	\centering
 \begin{tikzpicture}
     \node[] (A) at (0.00,1.5) {\includegraphics[scale=1.36]{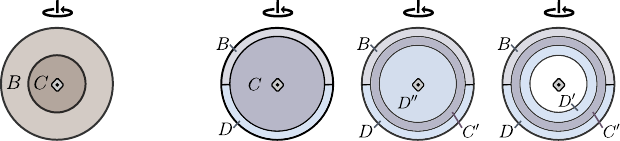}};
  \node[] (A) at (-7.1,2.9) {(a)};
  \node[] (A) at (-2,2.9) {(b)}; 
   \node[] (A) at (-5.8, -0.85) {$\Delta(B,C)_{\widetilde{\rho}}=0$};	
   \node[] (A) at (2.6,-0.85) {\resizebox{0.65\hsize}{!}{$\Delta(B,C,D)_{\widetilde{\rho}}=\Delta(B,C',D D'')_{\widetilde{\rho}}=\Delta(B,C',DD')_{\widetilde{\rho}}=2 \ln d_{a_i}$}};
 \end{tikzpicture} 
	\caption{About axiom {\bf A0} and the potential violation of axiom {\bf A1} centered at a completion point. Note that $\widetilde{X}_i E_i$ is within the completion point.}
	\label{fig:completion-da}
\end{figure}

\begin{proof}[Proof of Lemma \ref{lemma:completion-trick-v2}, item 3] 
For axiom {\bf A0} on $i$-th completion point in statement 2, consider the partition of a ball centered around the $i$-th completion point as $BC$, shown in Fig.~\ref{fig:completion-da}(a). We can choose $BC$ so that $X_i \subset C$. Here the state considered is $\widetilde{\rho}$ in Eq.~\eqref{eq:state-on-W-tilde}.  To prove $\Delta(B, C)_{\widetilde{\rho}} = 0$, we can first prove $\Delta(X_i, \widetilde{X}_i E_i)_{\widetilde{\rho}} = 0$. Then $\Delta(B, C)_{\widetilde{\rho}} = 0$ follows from the extension of the axioms \cite{shi2020fusion}.
The structure of the state 
$\tilde \rho$ (and the relation between $A_iB_iC_i$ and $X_i \tilde X_i$) is shown in 
\eqref{fig:ABC-XtildeX};
the horizontal direction in the figure is the distance to the hole labeled $i$.  
The wiggly yellow line indicates the ``fuzzy cut" that separates $B_i^L$ and $ B_i^R$.  
\begin{equation}\label{fig:ABC-XtildeX}
    \includegraphics[scale=2.0]{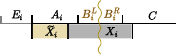}
\end{equation}

The reduced density matrices on $X_i \widetilde{X}_i E_i$ for $\widetilde{\rho}_{\widetilde{\CW}_{+}}$ can be directly computed from Eq.~\eqref{eq:state-on-W-tilde} 
as $\widetilde{\rho}_{X_i \widetilde{X}_i E_i} = \vert \phi^{a_i}_{A_i B_i^L E_i}\rangle\langle \phi^{a_i}_{A_i B_i^L E_i}\vert \otimes \lambda_R$, where $A_i, B_i$ were defined below Eq.~\eqref{eq:fuzzy-cuts}, and $\lambda_R$ is $\lambda$ reduced to the Hilbert space  $\CH_{B_i^R} \otimes \CH_{ X_i \tilde X_i \setminus A_iB_i}$. Using the tensor product structure of $\widetilde{\rho}_{X_i \widetilde{X}_i E_i}$ and $\widetilde{\rho}_{X_i }$, it is straightforward to see that $\Delta(X_i, \widetilde{X}_i E_i)_{\widetilde{\rho}}  = 0$. 
 
Next we show that axiom {\bf A1} is violated by an amount $ 2 \ln d_{a_i}$ on the $i$-th completion point, where $d_{a_i}$ is the quantum dimension of the point excitation $a_i \in \CC_{X_i}$. (Note that the sphere shell may not be embedded and therefore we use the definition of the quantum dimension in Eq.~\eqref{eq:shell-da-immersed-bc}.) The proof idea is shown in Fig.~\ref{fig:completion-da}(b) and is similar to the use of the Decoupling Lemma in Appendix D of \cite{knots-paper}. The first step is to show
\be
    \Delta(B,C,D)_{\widetilde{\rho}} = \Delta(B,C',DD'')_{\widetilde{\rho}} 
\ee
for the partitions illustrated in Fig.~\ref{fig:completion-da}(b).
This is true because $\Delta(C', D'')_{\widetilde{\rho}} = 0$ and $\Delta(BC', D'')_{\widetilde{\rho}} = 0$, which are extended axiom {\bf A0} at the $i$-th completion point, as verified in the earlier part of the proof. The second step is to prove
\be
    \Delta(B,C',DD'')_{\widetilde{\rho}} = \Delta(B,C',DD')_{\widetilde{\rho}} 
\ee
This follows from $\Delta(D', D'' \setminus D')_{\widetilde{\rho}} = 0$ and $\Delta(C'DD', D'' \setminus D')_{\widetilde{\rho}} = 0$, where we applied axiom {\bf A0} twice again.

From the second definition of the quantum dimension of a point particle Eq.~\eqref{eq:shell-da-immersed-bc}, we see that the correction is $\Delta(B,C',DD')_{\widetilde{\rho}} 
 = 2 \ln d_{a_i}$, and it is nonzero only when the point particle is non-Abelian. This completes the proof. 
\end{proof}

Immediate corollaries of the completion trick (Lemma~\ref{lemma:completion-trick-v2}) are: (1) It is possible to complete to a sphere $\mathbf{S}^n$ with a reference state on a ball $\mathbf{B}^n$. This is discussed in Ref.~\cite{knots-paper}. (2) It is possible to obtain a reference state on $\underline{\underline{\mathbf{B}}}^n$ from a reference state on ${\underline{\mathbf{B}}}^n$. In both cases, the reference state on the compact manifolds ($\mathbf{S}^n$ or $\underline{\underline{\mathbf{B}}}^n$) satisfies the axioms everywhere, including the ball containing the completion point. (This also means that the Hilbert space dimensions on compact manifolds $\mathbf{S}^n$ and $\underline{\underline{\mathbf{B}}}^n$ are one dimensional, i.e., $\dim \mathbb{V}(\mathbf{S}^n)= \dim \mathbb{V}(\underline{\underline{\mathbf{B}}}^n)=1$.)
We illustrate the two cases in Fig.~\ref{fig:sphere-completion-bulk-bdy}. 

\begin{figure}[h]
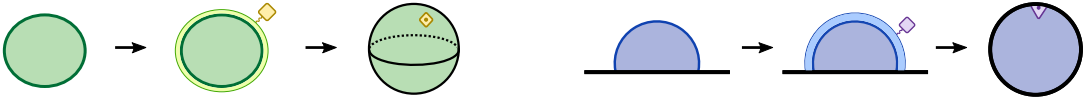

    \centering
    \parfig{.95}{figs/fig_completion-illustration}
    \caption{Sphere completion and its analog for systems with a gapped boundary. (Explicitly shown here are the 2d cases.) They are immediate corollaries of the completion trick. Completion points in the bulk are illustrated as a `diamond' $\Diamond$ and completion points adjacent to a gapped boundary are illustrated as a `triangle' $\triangle$. This convention will be adopted throughout the paper.}
    \label{fig:sphere-completion-bulk-bdy}
\end{figure}

For a given closed manifold $\CM$, is it possible to determine if
allows a completion $(\CM, \CW \overset{\varphi}{\looparrowright} \mathbf{S}^n)$? 
There are a couple of challenges to answering this question generally. The first challenge is to understand what closed manifold $\CM$ can immerse in a sphere upon removing a ball. In space dimensions $d\le 4$, some interesting topology results have been obtained recently \cite{Freedman:2022ksk}. 
The second challenge is that even if $\CM$ can immerse in a sphere, we still need to know if the information convex set $\Sigma(\CW)$ is non-empty. In the presence of topological order, where there are nontrivial superselection sectors, we need to understand if the density matrices on pieces of the immersed regions can be consistently merged. In addition, we wish to know if there is always a special \emph{abelian} superselection sector allowed on the punctures, so that they may be healed.
The general answer to this question will be explored elsewhere. Nonetheless, the vacuum block completion, developed in \S\ref{section:Vacuum-completion}, gives a constructive answer to many interesting cases, including all orientable manifolds in two dimensions.

Let us point out that for invertible phases (i.e., systems with only trivial superselection sectors), the second challenge mentioned in the previous paragraph disappears, and the task is simpler. Intriguingly, for invertible phases, Kitaev has developed a different approach for putting the ground state on a class of closed manifold; see \cite{Kitaev-2013} at around 1 hour and 35 minutes, where the class of manifolds is referred to as  normally framed manifolds.  
We note that this is the same condition\footnote{Normally framed implies that the tangent bundle can be made trivial by adding trivial bundles. Thus the characteristic classes must vanish.}
that allows a punctured manifold to be immersed in a sphere, at least up to dimension four \cite{Freedman:2022ksk}. 
It is unclear to us, however, if our approach is related to Kitaev's approach.

\subsection{Vacuum block completion}\label{section:Vacuum-completion}

In this section, we consider a specific type of completion, which we shall refer to as \emph{vacuum block completion}. It comes with two important features. First, the immersed region $\CW \subset \CM$ decomposes as $\CW=\cup_i \CV_i^{\Diamond}$, where $\{ \CV^{\Diamond}_i \}$ is a set of \emph{building blocks}, and the ``gluing" between any two building blocks is on whole entanglement boundaries. Second,
    on each building block $\CV^{\Diamond}_i$, a ``vacuum" can be specified, even though $\CV^{\Diamond}_i$ may not be embedded in the space on which the reference state lives. 

In the rest of the paper, we will mainly  be interested in the immersion of $\CW$ in $\mathbf{S^n}$ in the study of bulk phases and the immersion $\CW$ in $\underline{\underline{\mathbf{B}}}^n$ in cases related to the gapped boundaries. The reason, as explained in Fig.~\ref{fig:sphere-completion-bulk-bdy}, is that it is always possible to make a reference state on these compact manifolds starting from a reference state on a ball ($\mathbf{B}^n$ or $\underline{\mathbf{B}}^n$). Nevertheless, we define building blocks in such a way that they have the freedom to be immersed in balls.

\subsubsection{Building blocks for vacuum block completion}

We first give a precise definition of building blocks.
  
\begin{definition}[Building blocks, bulk]\label{def:building-block-bulk}
		We say a connected region $\CV^{\Diamond} \looparrowright \mathbf{S}^n$ (or $\mathbf{B}^n$) is a  building block if it satisfies:
	\begin{enumerate}
		\item $\partial \CV^{\Diamond} = (\cup_{i=1}^k E_i) \cup F$, where $E_i, F$ are connected components of $\partial \CV^{\Diamond}$. $E_i$s are embedded, $F$ is either empty or an immersed sphere shell ($S^{n-1}\times\mathbb{I}$); in the former case, we may denote $\CV^{\Diamond}$ as $\CV$.
		
		\item $\Sigma_{[1_{E}]}(\CV^{\Diamond})= \{\hat{\sigma}_{\CV^{\Diamond}}\}$, has a unique element, where $E=\cup_{i=1}^k E_i$. Furthermore, $\hat{\sigma}_{\CV^{\Diamond}}$ reduces to an Abelian sector on $F$. (We denote this Abelian sector\footnote{Note that $\hat{1}\in \CC_{F}$ is a special Abelian sector and it is an analog of the vacuum. We use $\hat{1}$ instead of $1$ because if $F$ is not embedded, there will be no reference state (on $F$) to compare with. We do not know if $\hat{1}$ depends on $\CV^{\diamond}$ or not for a given $F$.  Below (in Example \ref{example-of-ball-minus-torus}) we will give an example where different vacuum block completions of the same manifold produce the same abelian state on $F$.} as $\hat{1}\in \CC_{F}$.)
	\end{enumerate}
\end{definition}

\begin{remark}
Note that, in writing $\Sigma_{[1_{E}]}(\CV^{\Diamond})$, we are using the notation of constrained information convex set (Definition~\ref{def:sub-information-convex-set}). In other words, we want the convex subset of $\Sigma(\CV^{\Diamond})$, whose elements reduce to the vacuum on the embedded region $E= \cup_i E_i$.
We shall often drop the distinction between the abstract region $\CV^{\Diamond}$ with its immersion in the physical system, which is equipped with a Hilbert space and needs the immersion map to specify.
 
\end{remark}

The generalization of building blocks to gapped boundaries is straightforward. The difference is that we immerse the region in $\underline{\underline{\mathbf{B}}}^n$ or $\underline{\mathbf{B}}^n$ (instead of $\mathbf{S}^n$ or $\mathbf{B}^n$) and allow a broader sense of immersed ``spherical" entanglement boundary. These entanglement boundaries may either lie in the bulk or adjacent to the gapped boundary.

\begin{definition}[${}^\star$Building blocks, gapped boundary]\label{def:building-block-boundary}
		We say a connected region $\CV^{\Diamond/\triangle} \looparrowright \underline{\underline{\mathbf{B}}}^n$ (or $\underline{\mathbf{B}}^n$) is a building block if it satisfies:
	\begin{enumerate}
		\item $\partial \CV^{\Diamond/\triangle} = (\cup_{i=1}^k E_i) \cup F$, where $E_i, F$ are connected components of $\partial \CV^{\Diamond/\triangle}$. $E_i$s are embedded. $F$ may be empty, an immersed bulk sphere shell in the bulk or an immersed half-sphere shell\footnote{In $n$ space dimensions, half-sphere shell $B^{n-1}\times \mathbb{I}$ attaches to the gapped boundary at $S^{n-2}\times \mathbb{I}$.} attaches to the boundary. In these three cases, we may alternatively denote $\CV^{\Diamond/\triangle}$ as $\CV$, $\CV^{\Diamond}$, or $\CV^{\triangle}$. 
		
		\item $\Sigma_{[1_{E}]}(\CV^{\Diamond/\triangle})= \{\hat{\sigma}_{\CV^{\Diamond/\triangle}}\}$, has a unique element, where $E=\cup_{i=1}^k E_i$. Furthermore, $\hat{\sigma}_{\CV^{\Diamond/\triangle}}$ reduces to an Abelian sector ($\hat{1} \in \calC_F$) on $F$.
	\end{enumerate}
\end{definition}

At the moment, it should not be clear how hard it is to find such building blocks. While the first requirement can be verified by looking at the topology, the requirement on the constrained information convex set $\Sigma_{[1_{E}]}(\CV^{\Diamond/\triangle})= \{\hat{\sigma}_{\CV^{\Diamond/\triangle}}\}$ may not be easy to check. Nonetheless, it turns out that a large class of immersed regions can be shown to be building blocks. Here are some examples.

\begin{exmp}[Building blocks, bulk]\label{exmp:bulk-building blocks}
	A list of examples of building blocks $\CV^{\Diamond}$ satisfying Definition~\ref{def:building-block-bulk}: 
\begin{enumerate}
	\item Any dimension: any connected embedded region.	
	
	\item 2d bulk: $S^2$ with $(k+1)$ balls removed. Here, $k$ of the spherical boundaries are embedded, and the remaining one is not. See Eq.~\eqref{eq:k+1-balls-removed} for some examples:
		\begin{equation}\label{eq:k+1-balls-removed}
		    \parfig{0.66}{figs/fig-block-2d-k}
		\end{equation}	
  To see why these regions are building blocks, we let the thickening of the immersed spherical boundary be $F$. Then the first condition in Definition~\ref{def:building-block-bulk} is verified. Some thought goes into the verification of the second condition. We provide two methods. 
  \begin{itemize}
      \item We convert the immersed region to an \emph{embedded} $k$-hole disk by a smooth deformation, the same as the strategy illustrated in Fig.~\ref{fig:2d-sphere-is-more-flexible}. Here, we want a deformation that keeps the $k$ embedded boundaries embedded (in $\mathbf{S}^2$) in the whole process. Therefore, the sectors must still be the vacuum when they reach the final configuration. The remaining boundary must also carry the vacuum, according to the structure theorem  
      of the information convex set of embedded $k$-hole disks and the fusion rules for anyons. 
      
      \item Alternatively, we use the completion trick (Lemma~\ref{lemma:completion-trick-v2}). The strategy is outlined in Eq.~\eqref{eq:collapsing-argument} below.
      \begin{equation}\label{eq:collapsing-argument}
		  \parfig{0.80}{figs/fig-block-collapse-argument-v2}
		\end{equation}
 First, $\Sigma_{[1_E]}(\CV^{\Diamond})$ is non-empty. This is because we can always obtain the immersed region in question by merging a state on an embedded disk to the vacuum of the $k$ disjoint embedded annuli ($E_i$). This gives a construction of an element of $\Sigma_{[1_E]}(\CV^{\Diamond})$.
 With the completion trick, we collapse the $k$ embedded spherical boundaries and get a disk $B^2$ with $k$ completion points. (Shown in Eq.~\eqref{eq:collapsing-argument} is the case $k=3$.)	Because the sector is the vacuum at all these completion points, the axioms are satisfied on them. Finally, the information convex set on a disk has a unique element, where the boundary carries an Abelian sector. We identify this Abelian sector as $\hat{1}$, and this completes the argument.
  \end{itemize}
  \item 3d bulk: $S^3$ with $k+1$ balls removed. Here, $k$ of the spherical boundaries are embedded, and the remaining one is not.
		This is true for the same reason as case 2. 
	
	\item 3d bulk: some choices of regions with torus entanglement boundaries, e.g.:  
	\begin{equation}
	\parfig{0.35}{figs/fig-block-2d-torus}
	\end{equation}
    The nontrivial feature of these regions is the existence of a torus entanglement boundary, which cannot be collapsed to a point. We also do not know if it is possible to deform the regions smoothly into an embedded region and keep the torus boundary embedded in the whole process. Alternative justification is needed.
	One justification comes from the dimensional reduction method developed in Appendix~E of \cite{knots-paper}. This method maps the problem to a problem in a 2d system with a gapped boundary. Then, according to Example~\ref{exmp:bdy-buiding blocks} below, these 3d regions must be valid building blocks. 
	\end{enumerate}
\end{exmp}

\begin{exmp}[${}^\star$Building blocks, gapped boundary]\label{exmp:bdy-buiding blocks}
Here are a few examples of building blocks adjacent to the gapped boundary. In this example, all embeddings are in $\underline{\underline{\mathbf{B}}}^n$, and in the figures, we only draw part of the gapped boundary of $\underline{\underline{\mathbf{B}}}^n$.
\begin{enumerate}
    	\item Any dimension: connected embedded regions adjacent to the gapped boundary.
    	\item 2d with a gapped boundary: a few non-embedded examples as shown: 
    	\begin{equation}
		  \parfig{0.80}{figs/fig-block-2d-bdy-v2}
		\end{equation}
        These are building blocks, as can be checked with the ``collapsing" argument explained around Eq.~\eqref{eq:collapsing-argument}. The crucial observation is that, after we collapse all the embedded entanglement boundaries, the region so obtained (with the completion points included) is a half-disk adjacent to the gapped boundary. Thus, the remaining entanglement boundary must be in an Abelian sector. 
    	
    	\item 3d with a gapped boundary: a few non-embedded examples as shown:
    	\begin{equation}
		 \parfig{0.55}{figs/fig-block-3d-bdy-v2}
		\end{equation}
        It is evident that the topology of the regions satisfies the definition of building blocks. We also need to verify the property related to the quantum state.
    	The first example can be shown to be a building block by the ``collapsing" argument explained around Eq.~\eqref{eq:collapsing-argument}. However, the same trick does not apply to the remaining two examples due to the existence of an embedded entanglement boundary that is not spherical. Nevertheless, these are building blocks; we shall establish this fact only after we understand pairing manifolds. Explicitly, the first purple region is precisely the $\bar{X}^{\triangle}$ of Example~\ref{exmp:3d-sphere-shell-pairing} in \S\ref{subsub:example-bdy}. The second purple region is related to the first one by the collapsing of the spherical entanglement boundary in the bulk.
\end{enumerate}

\end{exmp}

\subsubsection{Vacuum block completion: definition and properties}

We give the precise definition of vacuum block completion and then discuss its properties. Examples will be presented in the next few sections. 

\begin{definition}[Vacuum block completion, bulk]\label{def:vacuum-completion}
	We say a manifold $\mathcal{M}$ allows a vacuum block completion $(\CM,\{\CV_i\},\CW \overset{\varphi}{\looparrowright}\mathbf{S}^n)$ if $\CM= \cup_i \CV_i$, glued along entire boundaries, $\CW \subset \CM$, and
	\begin{enumerate}
	    \item $\CV_i \cap (\CM\setminus \CW)$ is either an internal ball of $\CV_i$ or empty.
	    
	    \item  $\CV_i^{\Diamond}\equiv \CV_i \cap \CW$, equipped with the immersion $\varphi$ restricted onto it, is a building block (by Definition~\ref{def:building-block-bulk}), where $\partial \CV_i^{\Diamond}= E \cup F$ with $E= \partial \CV_i$ embedded in $\mathbf{S}^n$ and $F=\CV_i \cap \partial \CW$.
	\end{enumerate}
\end{definition}
\begin{remark}
    We will also use a boundary version of vacuum block completion. We omit the precise definition since it is obtained by replacing $\mathbf{S}^n$ with $\underline{\underline{\mathbf{B}}}^n$ and adopting the boundary version of building blocks (Definition~\ref{def:building-block-boundary}).
\end{remark}

\begin{Proposition}\label{prop:canonical-W}
    Any vacuum block completion $(\CM,\{\CV_i\},\CW \overset{\varphi}{\looparrowright} \mathbf{S}^n)$ gives a completion of $\CM$. Furthermore, there is a special element $\hat{\sigma}_{\CW}$ of $\Sigma(\CW)$. It is the unique state in $\Sigma(\CW)$ that reduces to $\hat{\sigma}_{\CV_i^{\Diamond}}$ for any building block $ \CV_i^{\Diamond}$.
    We call $\hat{\sigma}_{\CW}$ the canonical state on $\CW$.  
\end{Proposition}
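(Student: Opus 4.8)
The plan is to handle the two assertions separately: first verify the completion property essentially by unwinding the definitions, then construct the canonical state by iterated merging of the building-block vacua, and finally obtain uniqueness from the rigidity of the vacuum sector.

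First, to see that a vacuum block completion is a completion in the sense of Definition~\ref{def:completion}, I would check its two defining conditions directly against Definition~\ref{def:vacuum-completion}. Condition~1 there says each $\CV_i \cap (\CM\setminus\CW)$ is an internal ball of $\CV_i$ or empty; since $\CM=\cup_i \CV_i$ is a finite gluing along entire boundaries, $\CM\setminus\CW=\cup_i\bigl(\CV_i\cap(\CM\setminus\CW)\bigr)$ is a finite union of balls, each lying in the interior of a distinct $\CV_i$ and hence separated. This is exactly condition~1 of Definition~\ref{def:completion}. The remaining condition, non-emptiness of $\Sigma(\CW)$, I would obtain for free once the canonical state is built, so I postpone it; since $\CN=\mathbf{S}^n$, the result is then a completion in the short sense.

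Second, for existence of $\hat{\sigma}_{\CW}$, I would assemble it from the block vacua $\hat{\sigma}_{\CV_i^{\Diamond}}$ by repeated merging. By the building-block axiom (Definition~\ref{def:building-block-bulk}, item~2), each $\hat{\sigma}_{\CV_i^{\Diamond}}$ is the unique element of $\Sigma_{[1_E]}(\CV_i^{\Diamond})$ and reduces to the reference vacuum on the embedded boundary $E_i=\partial\CV_i$. The blocks are glued along these embedded components, so on every shared thickened boundary the two adjacent vacua agree (both equal the reference state restricted to that embedded shell, which is the unique vacuum element there). This agreement is precisely the overlap hypothesis $\rho_{BC}=\lambda_{BC}$ of the merging lemma, so I would invoke the merging theorem iteratively to glue the blocks into a single $\hat{\sigma}_{\CW}\in\Sigma(\CW)$, which simultaneously supplies the non-emptiness needed above. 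For uniqueness, I would show that any $\rho\in\Sigma(\CW)$ reducing to $\hat{\sigma}_{\CV_i^{\Diamond}}$ on each block equals $\hat{\sigma}_{\CW}$: such a $\rho$ carries the vacuum on every internal gluing boundary (since $\hat{\sigma}_{\CV_i^{\Diamond}}$ does on $E_i$), and, being an element of an information convex set, it is a quantum Markov state across each such thickened boundary by axiom {\bf A1}. Choosing a spanning tree of the gluing graph, the merging lemma reconstructs $\rho$ uniquely from its (fixed) block marginals along that tree, forcing $\rho=\hat{\sigma}_{\CW}$.

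I expect the main obstacle to be the consistency of the merging when the gluing graph of $\{\CV_i\}$ contains cycles, which it must for manifolds such as $T^2$. Merging along a spanning tree is routine, but one must check that the cycle-closing boundaries introduce no inconsistency in either the existence or the uniqueness step. The resolution I would give is that every gluing boundary carries the vacuum from both sides, so the two restrictions to be matched already coincide; the rigidity encoded in the single-element condition $\Sigma_{[1_E]}(\CV^{\Diamond})=\{\hat{\sigma}_{\CV^{\Diamond}}\}$ leaves no residual freedom, so the closing merge is automatically consistent and adds no constraint beyond those already imposed by the spanning tree. This same rigidity is what makes the uniqueness argument work and, a posteriori, renders the construction independent of the order in which the blocks are merged.
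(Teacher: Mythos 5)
Your proposal is correct in substance and follows the same route as the paper: the paper's proof also builds $\hat{\sigma}_{\CW}$ by merging the block vacua $\hat{\sigma}_{\CV_i^{\Diamond}}$ after slightly expanding the blocks across their shared embedded hypersurfaces, with the merging theorem's hypotheses supplied exactly as you say (both states reduce to the reference state on the embedded shared shells, since they carry the vacuum sector there, and elements of information convex sets have the required quantum Markov structure). Your additions — the definitional check that a vacuum block completion is a completion, and the explicit uniqueness argument from the rigidity $\Sigma_{[1_E]}(\CV^{\Diamond})=\{\hat{\sigma}_{\CV^{\Diamond}}\}$ — are elaborations of points the paper's proof leaves implicit, not a different method.

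One mechanical point in your cycle discussion should be repaired. There is no well-defined ``closing merge'': the merging lemma takes two states $\rho_{ABC}$, $\lambda_{BCD}$ on regions with disjoint complements $A$ and $D$, so it cannot glue a connected region to itself along a leftover boundary, nor can you merge two blocks ``along one shared boundary only'' while they also touch along another (the union of the blocks inside $\CW$ carries all the gluings at once, and a merge that ignores one adjacency is not licensed to land in $\Sigma$ of the union). The correct mechanics — and what the paper implicitly does, e.g.\ for $T^2$ where $\CV_1$ and $\CV_2^{\Diamond}$ share \emph{two} boundary circles — is that each pairwise merge is performed along the \emph{entire} shared thickened boundary, which may be disconnected: take $BC$ to be the union of all shared shells, and check $I(A:C|B)_{\rho}=0$ and $I(B:D|C)_{\lambda}=0$ with these disconnected $B,C$. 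These conditions hold by the same factorization/extension structure you invoke, so cycles in the gluing graph never require a separate step; they are absorbed into a single multi-component merge. With that replacement, both your existence and uniqueness arguments (the latter phrased as: any $\rho$ with the fixed block marginals and vacuum on the shells is the unique Markov extension of those marginals across the full collection of shells) go through.
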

\begin{proof}
   First, since $\CV_i^{\Diamond}$ is a building block, it has a state $\hat{\sigma}_{\CV_i^{\Diamond}}$ in its information convex set. We can merge these states, after expanding the regions ($\CV_i^{\Diamond}$) slightly on their shared embedded hypersurfaces (i.e., embedded entanglement boundaries). This merging is possible because these states all carry the vacuum sector on these shared embedded hypersurfaces, and therefore, they also have the quantum Markov state structure required in the merging theorem. The merged state on $\CW$ is an extreme point $\hat{\sigma}_{\CW}\in \Sigma_{\hat{1}}(\CW)$. This follows from Lemma 2.20  of \cite{knots-paper} (used to prove the Associativity Theorem). Here $\hat{1}$ refers to the fact that the state carries an Abelian sector $\hat{1}$ on every (spherical) entanglement boundary of $\CW$. 
\end{proof}

\begin{definition}[Canonical state on $\CM$]\label{def:canonical-M}
 For a given vacuum block completion $(\CM,\{\CV_i\},\CW \overset{\varphi}{\looparrowright} \mathbf{S}^n)$, call the completion of  the canonical state $\hat{\sigma}_{\CW}$ of $\CW$ (defined in Proposition~\ref{prop:canonical-W}) on $\CM$ (by the completion trick,       Lemma~\ref{lemma:completion-trick-v2}) as the canonical state on $\CM$. We denote this pure canonical state as $\vert 1_{\{\CV_i \}} \rangle$.  
\end{definition}
The immediate consequences of this definition are 
    (1) $\vert 1_{\{\CV_i \}}\rangle$ reduces to $\hat{\sigma}_{\CV_i^{\Diamond}}$ for any building block $ \CV_i^{\Diamond}$, and
    (2) axioms {\bf A0} and {\bf A1} holds everywhere on $\CM$ for  $\vert 1_{\{\CV_i \}}\rangle$.  

Note that the state $\vert 1_{\{\CV_i \}}\rangle$ is not unique. Nevertheless, it is unique up to tensoring in extra product degrees of freedom 
 and applying local unitaries $\{ U_i\}$ on the completion points. Taking these operations as equivalence relations, we say $\vert 1_{\{\CV_i \}}\rangle$ is canonical.

\subsection{Examples of vacuum block completion}\label{subsec:vacuum-completion}

\subsubsection{Vacuum block completion in the 2d bulk}\label{subsubsec:example-completion-2d-bulk}

\begin{Proposition}
   Every 2d orientable manifold allows a vacuum block completion.   
\end{Proposition}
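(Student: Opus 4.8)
The plan is to reduce to a connected surface, stratify by genus, and cut $\CM$ into \emph{planar} pieces along a finite system of disjoint circles; each piece will be a building block by Example~\ref{exmp:bulk-building blocks}, and the circles along which the pieces are glued will supply the embedded entanglement boundaries required by Definition~\ref{def:vacuum-completion}. First I would reduce to the connected case: a closed orientable surface is a disjoint union of connected components $\CM_\alpha$, each a genus-$g_\alpha$ surface, and since immersions are purely local it suffices to immerse each punctured component into a disjoint patch of $\mathbf{S}^2$, placing one completion point in each component. So I fix a connected $\CM$ of genus $g$.

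The decomposition I would use depends on $g$. For $g=0$ take $\CM=S^2$ as a single piece $\CV_1=\CM$, so that $\CV_1^{\Diamond}=S^2\setminus B^2$ is a disk. For $g=1$ cut $T^2$ along two parallel meridians into two annuli $\CV_1,\CV_2$ glued along the two circles. For $g\ge 2$ take a pants decomposition~\cite{Hatcher1999}, writing $\CM$ as a union of $2g-2$ pairs of pants glued along $3g-3$ circles. Every piece that appears is planar (genus $0$ with boundary: a disk, an annulus, or a pair of pants), hence embeds in $\mathbf{S}^2$, so by item~1 of Example~\ref{exmp:bulk-building blocks} each is a building block. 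I then single out one piece $\CV_{i_0}$, delete one interior ball (the completion point), and observe that $\CV_{i_0}^{\Diamond}=\CV_{i_0}\setminus B^2$ is $S^2$ with $(k+1)$ balls removed, which is still a building block by item~2 of Example~\ref{exmp:bulk-building blocks}, with the new inner circle playing the role of the immersed boundary $F$; all other pieces have $F$ empty. Setting $\CW=\cup_i\CV_i^{\Diamond}=\CM\setminus B^2$, the three requirements of Definition~\ref{def:vacuum-completion} are met: the gluing is along entire (circle) boundaries, $\CM\setminus\CW$ is a single interior ball, and each $\CV_i^{\Diamond}$ is a building block. The existence of the resulting canonical state and its properties then follow from Proposition~\ref{prop:canonical-W} and Lemma~\ref{lemma:completion-trick-v2}.

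It is worth emphasizing why cutting into planar pieces is forced rather than optional: a single building block must have $\Sigma_{[1_E]}(\CV^{\Diamond})$ equal to a single point, so no piece may retain a handle (a once-punctured torus already has a nontrivial information convex set). The pants decomposition is precisely the operation that severs every handle, transferring all of the topological content of $\CM$ onto the embedded gluing circles $E$, where it is encoded in the way the vacuum blocks are merged.

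The main obstacle is the immersion step: I must produce $\CW\overset{\varphi}{\looparrowright}\mathbf{S}^2$ under which the $3g-3$ cutting circles map to \emph{disjoint embedded} circles, since embeddedness of $E$ is demanded by the building-block definition and is exactly what legitimizes the merging used in Proposition~\ref{prop:canonical-W}. I would start from the known fact (recalled in \S\ref{subsection:immersed-regions}, see \cite{thurston2014three,Freedman:2022ksk}) that $\CM\setminus B^2$ immerses in $\mathbf{B}^2\subset\mathbf{S}^2$, and refine it so that the decomposition is compatible with the immersion: each planar piece can be embedded in its own small disk patch, and the handles of $\CM$ are realized by the self-overlaps intrinsic to an immersion, so that the cutting circles, lying in the interiors of the embedded patches, remain embedded. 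The delicate point is to verify that these finitely many disjoint curves, together with disjoint annular neighborhoods, can all be kept embedded while the global map is only locally an embedding; here the extra flexibility of regular homotopies on $\mathbf{S}^2$ as compared to $\mathbf{B}^2$ (illustrated in Fig.~\ref{fig:2d-sphere-is-more-flexible}) is what makes the construction go through, and I expect this to be the real content of the proof.
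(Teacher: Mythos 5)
Your overall strategy---cut the closed surface into planar pieces glued along circles, argue each piece is a building block, and heal one puncture---is the same as the paper's, and for $g\le 2$ your decomposition literally coincides with the paper's constructive proof (Examples~\ref{exmp:S2-the-basic}, \ref{exmp:T2}, \ref{exmp:genus-g}; for $g\ge 3$ the paper uses only two blocks, an embedded $g$-hole disk glued to one immersed punctured block, rather than your $2g-2$ pairs of pants, but that difference is inessential). The genuine gap is the step you yourself flag and postpone: producing the immersion $\CW \overset{\varphi}{\looparrowright} \mathbf{S}^2$ under which all gluing circles are embedded and disjoint, and under which each piece, with the restricted immersion, satisfies Definition~\ref{def:building-block-bulk}. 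This is not a technicality one may ``expect'' to work---it is essentially the entire content of the paper's proof, which consists of exhibiting these immersions explicitly for every genus (Eq.~\eqref{eq:torus-immersion} and Fig.~\ref{fig:fig-vac-2d-genus-g1}); once the immersion with embedded gluing circles is in hand, the rest (merging vacua via Proposition~\ref{prop:canonical-W} and healing the puncture) is routine. A proof that defers exactly this step has not proved the Proposition.

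Worse, the route you sketch for filling the gap cannot work as described. You claim ``each planar piece can be embedded in its own small disk patch,'' with the handles supplied by self-overlaps of the global map. This is already impossible for $T^2$: if the annulus $\CV_1$ is embedded with boundary circles $c_1,c_2$, then $\mathbf{S}^2\setminus(c_1\cup c_2)$ has three components, and local embeddedness of $\CW$ along the gluing circles forces the collars of the punctured piece $\CV_2^{\Diamond}$ to enter the two complementary \emph{disks} (the sides opposite to $\CV_1$). If $\varphi$ restricted to $\CV_2^{\Diamond}$ were injective, the image of its connected interior would be disjoint from $c_1\cup c_2$ (the image of its own boundary), hence contained in a single complementary component---contradicting that it has points in both disks. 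So at least one block must be genuinely immersed, with interior sheets crossing over the gluing circles, and its puncture boundary $F$ is then immersed rather than embedded. This is precisely why Definition~\ref{def:building-block-bulk} allows a non-embedded $F$, and why the paper draws the immersed $(g+1)$-hole block the way it does; note also that your appeal to item~2 of Example~\ref{exmp:bulk-building blocks} (a block with one \emph{non-embedded} boundary) is inconsistent with your claim that the punctured piece sits embedded in a small patch---if it were embedded, item~1 would apply and $F$ would be embedded. The correct completion of your argument is therefore not ``keep every piece embedded,'' but to construct, for each genus, one honestly overlapping immersed block whose gluing boundaries land on the boundaries of the embedded block(s) and whose remaining boundary is the puncture, which is what the paper does.
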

We construct each case as an example; they comprise a constructive proof of the Proposition.
We leave the generalization of this statement to nonorientable manifolds to the interested reader. The trick is to consider the completion in a non-orientable $\CN$.

\begin{exmp}[Sphere $S^2$]\label{exmp:S2-the-basic}
We already have a reference state on $S^2$, and therefore not much needs to be done. For pedagogical reasons, we discuss two methods. 
The first method is to pick one building block $\CV= \mathbf{S}^2$. There is no completion point, and $\CW=\CM=\mathbf{S}^2$.
The second method is to pick two building blocks $\{\CV_1, \CV_2\}$, both of which are balls obtained by cutting the sphere in halves. In this construction, again, we do not need any completion points.  As the reader should expect, the state completed on $\CM$ is precisely the reference state on $\mathbf{S}^2$, for both constructions.  
\end{exmp}

\begin{exmp}[Torus $T^2$]\label{exmp:T2}
For the torus $\CM=T^2$, we need two building blocks $\{ \CV_1 ,  \CV_2^{\Diamond} \} $ and one completion point. $\CV_1$ is an embedded annulus.  $\CV_2^{\Diamond}$ is a 2-hole disk, immersed in such a way that two of the three entanglement boundaries are embedded. Here is an illustration:
\begin{equation}\label{eq:torus-immersion}
    \parfig{0.65}{figs/fig-vac-2d-torus-v2}
    %{figs/fig-vac-2d-torus}
\end{equation}
The essence of vacuum block completion is that we can choose the vacuum state on $\CV_1$ and merge it with a unique state on $\CV_2^{\Diamond}$, so that the state has a special Abelian sector ($\hat{1}$) on $\partial \CW$. Then we heal the puncture to obtain a state on $T^2$ that satisfies the axioms everywhere.
\end{exmp}

\begin{figure}[h]
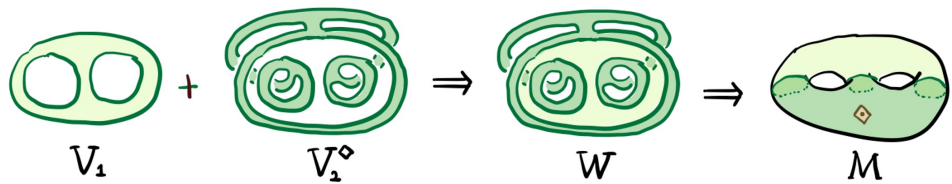

	\centering
	\parfig{.84}{figs/fig-vac-2d-genus-g-2-pieces-v2} 
	\caption{The vacuum block completion for a genus-$g$ surface $\# g \,T^2$.  The case $g$=2 is shown explicitly. $\CV_1$ is a  $g$-hole disk. $\CV_2^{\Diamond}$ is an immersed $(g+1)$-hole disk.}
	\label{fig:fig-vac-2d-genus-g1}
\end{figure}

\begin{exmp}[genus-$g$ surface $\# g \, T^2$]\label{exmp:genus-g}
The genus-$g$ surface is the connected sum of $g$ tori, and therefore we denote it as $\# g \, T^2$.
The vacuum block completion for 
\begin{equation}
		\CM  =  \# g \, T^2 =\underbrace{T^2 \# \cdots \#  T^2}_\text{$g$ times},
	\end{equation}
can be done in multiple ways. Consider the way shown in Fig.~\ref{fig:fig-vac-2d-genus-g1}. This construction needs two building blocks $\{ \CV_1, \CV_2^{\Diamond} \}$, and there is one completion point.  $\CV_1$ is an embedded $g$-hole disk, and $\CV_2^{\Diamond}$ is an $(g+1)$-hole disk with $g$ of its entanglement boundaries embedded.
\end{exmp}

\subsubsection{Vacuum block completion in the 3d bulk}\label{subsec:example-completion-3d-bulk}

In this section, we provide examples of vacuum block completions in the 3d bulk. 

\begin{exmp}[3-sphere $S^3$] This can be done in a way analogous to the completion of $S^2$ in Example~\ref{exmp:S2-the-basic}. Essentially, nothing needs to be done, given that we have a reference state on the 3-sphere already.
\end{exmp}

\begin{exmp}[$S^2\times S^1$]\label{exmp:S2S1-vacuum-block}
	The vacuum block completion of $\CM=S^2\times S^1$ can be constructed as follows.
	\begin{equation}\label{eq:in-exmp-S2S1-vb}
    \parfig{0.70}{figs/fig-vac-3d-S2S1-v2}
    \end{equation}
     This construction needs two building blocks  $\{ \CV_1, \CV_2^{\Diamond} \}$, and there is one completion point.
	Here $\CV_1=S^2 \times \mathbb{I}$ is an embedded sphere shell, and $\CV_2^{\Diamond}$ is a ball minus two balls. $\CW= \CM \setminus B^3$. (In Eq.~\eqref{eq:in-exmp-S2S1-vb}, $\CM$ is illustrated in such a way that a pair of $S^2$ are identified as indicated.)	Note that this is a close analog of vacuum block completion of the torus in 2d, considered in Example~\ref{exmp:T2}.
\end{exmp}

\begin{exmp}[$\# g(S^2\times S^1)$] 
Here $g \ge 2$ is an integer. It is possible to choose two building blocks $\{ \CV_1, \CV_2^{\Diamond} \}$, and there is one completion point, as the illustration below shows.
\begin{equation}
    \parfig{0.88}{figs/fig-vac-3d-genus-g-2-pieces-v2}
    \end{equation}
Here, $\CV_1$  is embedded and it is homeomorphic to a ball with $g$ balls removed. $\CV_2^{\Diamond}$ is a ball with $g+1$ balls removed, immersed in such a way that $g$ of its spherical entanglement boundaries are embedded. 
\end{exmp}

\begin{remark}
    In many of the examples above, the immersed building block can be deformed to an embedded region keeping all (but one) entanglement boundary embedded in the whole deformation process. If we keep track of the deformation, we can further use the isomorphism theorem to compute the Hilbert space dimensions $\dim \mathbb{V}_a(\CW)$ and $\dim \mathbb{V}(\CM)$. We omit the details. As we shall explain later, the pairing manifold provides another way to compute the Hilbert space dimensions. Examples will be given in \S\ref{sec:pairing-examples}.  
\end{remark}

\subsubsection{${}^\star$Vacuum block completion in 2d with gapped boundaries}
Entanglement bootstrap works for gapped boundaries, the setup and axioms are reviewed in \S\ref{section:gapped-boundary}. Below, we provide examples of vacuum block completion in this physical context.

\begin{exmp}[Disk with an entire gapped boundary $\underline{\underline{\mathbf{B}}}^2$]\label{exmp:D2-completion} Let $\CM=\underline{\underline{\mathbf{B}}}^2$. For vacuum block completion, nothing needs to be done here, given that we already have a reference state on $\underline{\underline{\mathbf{B}}}^2$; see Example~\ref{exmp:S2-the-basic} for an analog. (Recall that the reference state on $\underline{\underline{\mathbf{B}}}^2$ can come from the boundary analog of the sphere completion; see Fig.~\ref{fig:sphere-completion-bulk-bdy}.)
\end{exmp}

\begin{exmp}[A finite cylinder with a pair of gapped boundaries]
Let $\CM$ be the finite cylinder $S^1\times \mathbb{I}$, with a pair of gapped boundaries. Here $\mathbb{I}=[0,1]$ is a finite interval. (Note that the cylinder with gapped boundaries is not a sectorizable region because we cannot smoothly deform the region back to $\CM$ if we trace out the region along the boundary. This is in contrast with the annulus in the bulk.)
The vacuum block completion of $\CM$ needs two building blocks $\{ \CV_1, \CV_2^{\triangle} \}$, and a completion point, illustrated as follows.
 \begin{equation}
     \parfig{0.53}{figs/fig-vac-2d-bdy-finite-cylinder-blue} 
 \end{equation}
 Here $\CV_1$ is a half-annulus embedded in the disk $\underline{\underline{\mathbf{B}}}^2$.  $\CV_2^{\triangle}$ has two embedded entanglement boundaries and its third entanglement boundary is not embedded.
From the construction, we see that the two gapped boundaries are of the same type because they are copies of the same boundary of the disk $\underline{\underline{\mathbf{B}}}^2$.

Another vacuum block completion of the cylinder still uses two building blocks $ \{ \widetilde{\CV}_1, \widetilde{\CV}_2^{\triangle}\}$ and has one completion point. This is illustrated as follows.
 \begin{equation}
     \parfig{0.53}{figs/fig-vac-2d-bdy-finite-cylinder-2nd-method-blue} 
 \end{equation}
 In this construction, which is different from the one above,  $\widetilde{\CV}_1$ is the annulus that covers the entire gapped boundary, and it is embedded in $\underline{\underline{\mathbf{B}}}^2$.  $\widetilde{\CV}_2^{\triangle}$ is a ``mushroom", which has an embedded entanglement boundary in the bulk and an immersed entanglement boundary that touches the gapped boundary. 
 \end{exmp}

\begin{exmp}[A sphere with $k$ disks removed, with $k$ gapped boundaries]\label{exmp:k-gapped-bdy-2d}
Let $\CM$ be a sphere with $k$ disks removed, where all the boundaries are gapped boundaries. The vacuum block completion of $\CM$ can be done with $k+1$ building blocks, $\{\CV_j \}_{j=1}^{k} \cup \{ \CV^{\Diamond}_{k+1}\} $ and a completion point $\Diamond$ within the bulk, as follows:
 \begin{equation}\label{eq:vacuum-block-pants}
     \parfig{0.73}{figs/fig-vac-2d-bdy-k-bdy-v2} 
 \end{equation}
Here $\{ \CV_j \}_{j=1}^k$ are ``identical copies" of the annulus that covers the entire boundary, embedded in $\underline{\underline{\mathbf{B}}}^2$. $\CV_{k+1}^{\Diamond}$ is an immersed region within the bulk, homeomorphic to a $(k+1)$-hole disk, and it is immersed in such a way that $k$ of its $k+1$ entanglement boundaries are embedded. 

Since the construction in Eq.~\eqref{eq:vacuum-block-pants} is sufficiently nontrivial, we recall the basic intuition hidden in the abstract formulation of vacuum block completion for this example. At an intuitive level, how do we obtain $\CM$ by gluing the building blocks? First, we need to deform the $k$ copies of the boundary annulus $\{ \CV_j \}_{j=1}^k$ such that their entanglement boundaries match the positions of the embedded entanglement boundaries of $\CV^{\Diamond}_{k+1}$. Then we glue these entanglement boundaries. The remaining entanglement boundary is in the bulk, and it is then filled using the completion trick, resulting in the completion point in the bulk.
\end{exmp}

\begin{remark}
It is also possible to construct vacuum block completions for nonplanar regions with gapped boundaries. These are 2d regions that cannot be embedded in a sphere, which have only gapped boundaries but no entanglement boundaries. One example is illustrated below, and we leave the explicit construction to interested readers.
 \begin{equation}
     \parfig{0.18}{figs/fig-vac-2d-bdy-non-planar-v2} 
 \end{equation}
 \end{remark}

\subsubsection{${}^\star$Vacuum block completion in 3d with gapped boundaries}\label{subsec:example-3d-bulk-bdy}

The setup of the following examples is the 3d entanglement bootstrap with gapped boundaries, as stated in \S\ref{section:gapped-boundary}.

\begin{exmp}[$\underline{\underline{\mathbf{B}}}^3$, the ball with an entire gapped boundary]
As a region equipped with a reference state, we can simply take one building block $\CV=\underline{\underline{\mathbf{B}}}^3$. No completion point is needed here.
\end{exmp}

\begin{exmp}[Solid torus with an entire gapped boundary]
Let $\CM$ be a solid torus with an entire gapped boundary. It is possible to have a vacuum block completion of $\CM$ with two building blocks $\{\CV_1, \CV_2^{\triangle} \}$, and it has a completion point lying adjacent to the boundary, as illustrated below.
\begin{equation}
     \parfig{.55}{figs/fig_completion-solid-T-v3} 
\end{equation} 
Here, $\CV_1$ is a half-sphere shell adjacent to the gapped boundary. $\CV_2^{\triangle}$ is a half-sphere shell with a $\underline{B}^3$ removed, and it is immersed in $\underline{\underline{\mathbf{B}}}^3$ in such a way that two of its three entanglement boundaries are embedded.  In fact, the topologies of these regions are closely related to those that appeared in Example~\ref{exmp:S2S1-vacuum-block}, the vacuum block completion of $S^2\times S^1$ in the 3d bulk. To see the relation, we notice that every region (and the completion point) that appeared in this example is some region in Eq.~\eqref{eq:in-exmp-S2S1-vb} cut in half by the ``plane of reflection symmetry".
\end{exmp}

\begin{remark}
    Following the same idea, we can have a vacuum block completion for the genus-$g$ handlebody, whose gapped boundary is a genus-$g$ surface. We left the details as an exercise for the interested reader. An alternative (and more efficient) way of constructing these manifolds follows from the ``sequential completion" technique discussed in \S\ref{subsec:non-vacuum-completions}. The idea is to use regions constructed from simpler manifolds to assemble more interesting topologies.
\end{remark}

\begin{exmp}[Sphere shell $S^2\times \mathbb{I}$ with two gapped boundaries] \label{exmp:sphere-shell-vb}
The vacuum block completion of 
	 $S^2 \times \mathbb{I}$, a sphere shell in 3d with two spherical gapped boundaries, is shown below.
	 \begin{equation}\label{eq:sphere-shell-vb}
	     \parfig{0.64}{figs/fig_completion-sphere-shell-v2}
	 \end{equation}
	 In this construction, we used three building blocks $\{ \CV_1, \CV_2, \CV_3^{\Diamond}\}$. $\CV_1$ and $\CV_2$ are sphere shells that cover an entire gapped boundary. They are embedded in $\underline{\underline{\mathbf{B}}}^3$, and they are identical copies up to deformations. Therefore, the two gapped boundaries must match in their type. The third building blocks $\CV_3^{\Diamond}$ is a ball with two balls removed, within the bulk, and it is immersed in such a way that two of its three entanglement boundaries are embedded. 
	 
	 How do we obtain $\CM$ topologically? The idea is that we ``glue" the entanglement boundary of $\CV_1$ ($\CV_2$) to an embedded entanglement boundary of $\CV_3^{\Diamond}$. An important detail is that we need to deform the regions slightly to get the position right! The completion point in the bulk is obtained by collapsing the immersed entanglement boundary of $\CV_3^{\Diamond}$.
\end{exmp}

\begin{remark}
    Following the same idea that we used in Example~\ref{exmp:k-gapped-bdy-2d}, we can obtain a vacuum block completion for the connected sum of $k$ solid balls.  It has $k$ disjoint spherical gapped boundaries of the same type. (The $k=2$ case recovers Example~\ref{exmp:sphere-shell-vb}.) We simply need $k$ copies $\CV_1$ shown in Eq.~\eqref{eq:sphere-shell-vb} and generate the building block in the bulk ($\CV_3^{\Diamond}$ of Eq.~\eqref{eq:sphere-shell-vb}) in an obvious way.
\end{remark}

\subsection{Other examples of completion in $\CN$}\label{subsec:non-vacuum-completions}

Here we provide a few examples of completions in $\CN$, which are either not vacuum block completion or not obviously so.

\begin{exmp}[Exotic immersion of punctured torus]\label{exmp:punctured-torus-exotic}
Let $\CM=T^2$ be a torus. Let $\CW= T^2\setminus B^2$ be
a torus with a disk removed, i.e., a punctured torus. As we have discussed, $\CW$ can be immersed in $\mathbf{S}^2$; see Fig.~\ref{fig:completable-region-immersion}. Moreover, $\CW$ cannot be embedded in $\mathbf{S}^2$ since every embedded region is homeomorphic to a $k$-hole disk.
Here we consider another immersion $\CW \looparrowright \mathbf{S}^2$, as depicted in Fig.~\ref{fig:type2-M}. 

\begin{figure}[h!]
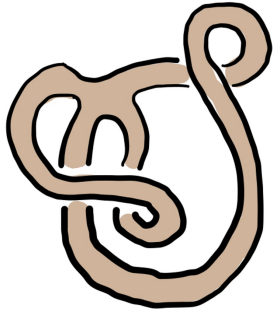

	\centering
	\parfig{.24}{figs/fig-type2-T2}
	\caption{Another immersion of the punctured torus to $\mathbf{S}^2$. We think it is not regular homotopic to the one shown in the rightmost of Fig.~\ref{fig:completable-region-immersion}.}
	\label{fig:type2-M}
\end{figure}

It is an inequivalent immersion compared with the one shown in Fig.~\ref{fig:completable-region-immersion}.
This immersion has a non-empty information convex set $\Sigma(\CW)$, and therefore $(T^2,\CW \overset{\varphi}{\looparrowright} \mathbf{S}^2)$ is a completion. 
However, we do not know 
if the immersion $\CW \looparrowright \mathbf{S}^2$ in Fig.~\ref{fig:type2-M} allows a decomposition of $\CW$ into building blocks. (The other immersion does, as is discussed in Example \ref{exmp:T2} and Eq.~\eqref{eq:torus-immersion}.)
\end{exmp}

\begin{exmp}[Torus with an anyon]\label{exmp:torus-with-anyon}
Let $\CM=T^2$ and $\CW=T^2\setminus B^2$. We take the immersion $\CW \looparrowright \mathbf{S}^2$ shown in Fig.~\ref{fig:completable-region-immersion}, which is the immersion that allows a vacuum block completion \eqref{eq:torus-immersion}. However, for some topologically ordered systems, there are states $\rho^a_{\CW} \in {\rm ext}(\Sigma_{a}(\CW))$, where $a$ is a superselection sector of anyon that can exist on the puncture $\partial \CW$, which is different from $\hat{1}$, the sector obtained in the vacuum block completion. 
By the completion trick, we can then obtain a state $\vert \varphi^a\rangle$ on $\CM$.

The most interesting case is when $a$ is Abelian (and different from $\hat{1}$). In this case, the state $\vert \varphi^a\rangle$ on $\CM$ satisfies axioms {\bf A0} and {\bf A1} everywhere on $\CM$. 
A simple model where this happens is the Ising anyon model. $\CC = \{1, \sigma, \psi\}$, where $\psi$ is Abelian and $\sigma$ is non-Abelian with $d_{\sigma}=\sqrt{2}$ (which should not be confused with the reference state $\sigma$). If we take $| \varphi^{\psi}\rangle$ as the reference state on $T^2$, one can verify 
\begin{equation}
    \dim \mathbb{V}(T^2) = \dim \mathbb{V}_{\psi}(\CW) =1, \quad | \varphi^{\psi}\rangle \textrm{ as the reference state on $T^2$.}
\end{equation}
This is different from the answer we obtain from the vacuum block completion, which gives $\dim \mathbb{V}(T^2)=| \calC | =3$, where the reference state on $T^2$ is taken to be the canonical state associated with the vacuum block completion; we postpone the explanation of this fact in Example~\ref{exmp:torus-pairing-manifold} in \S\ref{subsub:example-bulk}.
\end{exmp}

\begin{remark}
    The Hilbert space dimension in both cases has a nice TQFT interpretation; see, e.g., Appendix E of \cite{Kitaev2005}.
    In the Ising anyon model, we omitted the subtlety that the model is chiral. Its chiral central charge is $c_-= 1/2$. It is likely that the axioms can only hold approximately,
    although we expect the errors of the axioms to decay towards zero at large length scales in realistic models with finite correlation length, e.g., \cite{Kitaev2005}. This phenomenon described in Example~\ref{exmp:torus-with-anyon} also happens in zero correlation length models; see, e.g., Appendix~E of \cite{Bombin2007} and \cite{Wen2019}.
\end{remark}

\begin{exmp}[When $\CN$ is an annulus with a defect line passes through]
\label{exmp:defect-manifold}

Let $\CN$ be an annulus, and suppose that a defect line\footnote{Defects are located at the endpoints of the defect line. Anyons are permuted when crossing the defect line.} passes through it; see the gray annulus in Fig.~\ref{fig:T2-completion-exotic}. Let $\CM=T^2$ be the torus and $\CW= \CM \setminus B^2$. with the immersion  $\CW \looparrowright \CN$ in Fig.~\ref{fig:T2-completion-exotic}(a) and (b) by $\varphi_1$ and $\varphi_2$ respectively. 
 
 \begin{figure}[h!]
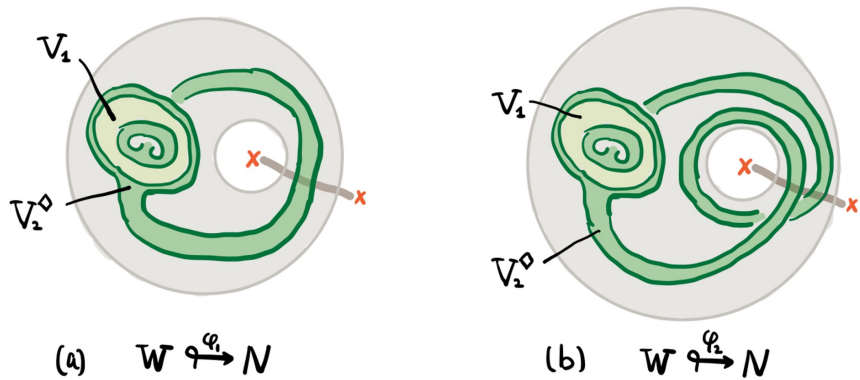

	\centering
	\parfig{.78}{figs/fig_completion-trick-torus-exotic-v3}
	\caption{Completion in $\CN$ of a torus. Here $\CN$ is an annulus with a defect line passes through. $\CW=T^2 \setminus B^2$. (a) $\CW \overset{\varphi_1}{\looparrowright} \CN$, and the decomposition of $\CW$ into two building blocks. (b) $\CW \overset{\varphi_2}{\looparrowright} \CN$, and the decomposition of $\CW$ into two building blocks.}
	\label{fig:T2-completion-exotic}
\end{figure}

When a defect exists, $\CN$ cannot be identified with any annulus subsystem of the sphere $\mathbf{S}^2$. (Recall that when we call a sphere $\mathbf{S}^2$ rather than $S^2$, we assume that there is a reference state on it, which satisfies {\bf A0} and {\bf A1} everywhere.) This also prevents the identification of the punctured tori $\CW$ in 
Fig.~\ref{fig:T2-completion-exotic}(a) and (b) as regions immersed in $\mathbf{S}^2$. 
 While the punctured tori $\CW$ in the figures are divided into two building blocks, 
 this example does not qualify as a vacuum block completion.

We explore the analog and distinction further. Interestingly, each subset of $\CW$ (i.e., $\CV_1$ or $\CV_2^{\Diamond}$) shown in Fig.~\ref{fig:T2-completion-exotic}(a) and (b) is still a building block. The reason is that each piece is contained in a disk. (While the 2-hole disk $\CV_2^{\Diamond}$ is not immersed in any embedded disk, it is immersed in an immersed disk!) The reason this construction is not a vacuum block completion is that the union $\CW$ is not immersed in a disk or sphere. Nonetheless, following the technique behind vacuum block completion, we can learn a few things about this special type of completion in $\CN$.
\begin{enumerate}
    \item The vacuum state on the building block can merge. The resulting state has an Abelian sector $\hat{1}$ on $\partial \CW$. 
    
    \item After healing the puncture, we have a state on $T^2$ which satisfies the axioms everywhere.
      The Hilbert space dimension $\dim \mathbb{V}(T^2)$ for this reference state can be different from that of models without defects. For instance, if we take the defect to be the ``duality wall" of the toric code \cite{Bombin2010}, we should expect $\dim \mathbb{V}(T^2)=2$, not 4.
\end{enumerate}

The immersions in Fig.~\ref{fig:T2-completion-exotic} generalize into a class of immersions that differ from one another by the ``winding number" around the defect annulus; those shown in Fig.~\ref{fig:T2-completion-exotic}(a) and (b) correspond to the winding number equal to 1 and 2.
\end{exmp}

\begin{exmp}[Sequential completion]
Here is another way to obtain states on a nontrivial manifold that satisfies the axioms everywhere. The idea is that we first start from a ball $\mathbf{S}^n$ or $\underline{\underline{\mathbf{B}}}^n$ and construct a state on a relatively simple manifold  $\CM_1$, by vacuum block completion. Then, we make use of pieces of density matrices on $\CM_1$ to construct more interesting manifolds $\CM_2$. We can do it this repeatedly and obtain states on $\CM_3, \CM_4, \cdots$. We call this approach \emph{sequential completion}. 

Sequential completion is interesting for a few reasons. First, it gives a sequence of completions of $\CM_{i+1}$ in $\CM_i$. Second, the states so obtained satisfy the axioms everywhere on closed manifolds $\CM_{i+1}$.
More interestingly, as long as we avoid the completion points in $\CM_i$ in constructing the state in $\CM_{i+1}$, the construction can be translated into an immersion of $\CM_{i+1}$ with punctures into $\mathbf{S}^n$ or $\underline{\underline{\mathbf{B}}}^n$. For this reason, sequential completion can produce completions.

Below we explain the details of constructing a state on $\CM_{i+1}$ from $\CM_{i}$, using the idea of sequential completion. We are able to do the connected sum in the bulk and adjacent to the gapped boundary. 

{\bf Case 1, connected sums in the bulk:}  Here we obtain $\# g \CM$ from $\CM$, where $\CM$ is compact, connected, and may or may not have boundaries. (The way to obtain $\# g T^2$ from $T^2$ is illustrated in  Fig.~\ref{fig:g=2-completion-method2}.) The idea is to remove a ball from $\CM$ and call the resulting region $\mathcal{P}$. We take $g$ copies of $\mathcal{P}$ and denote the them separately as $\{ \CP_i \}_{i=1}^g$. We pick $\CV$, a building block immersed in the ball, which is topologically a ball with $g$ balls removed, and it is immersed in such a way that $g$ of its $g+1$ entanglement boundaries are embedded. We merge the reduced density matrix of a state on $\CM$ to the regions $\{ \CP_i \}_{i=1}^g$ and the state $\hat{\sigma}_{\CV}$, such that the merging glues whole entanglement boundaries. The end result is a state on $\# g \CM$ with a puncture in the bulk. The sector on the spherical entanglement boundary is Abelian (in fact, a special Abelian sector, which we referred to as $\hat{1}$). Therefore, we can heal the puncture as usual and obtain a state on $\# g \CM$. 
 
\begin{figure}[h]
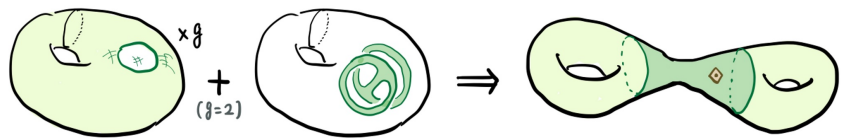

	\centering
	\parfig{.74}{figs/fig_completion-trick-g=2-v2} 
	\caption{Sequential completion to obtain $\# g T^2$ from $T^2$. The case $g=2$ is illustrated explicitly.} 
	\label{fig:g=2-completion-method2}
\end{figure}

{\bf Case 2, connected sums near the boundary:} We can also obtain $\natural g  \CM$ from $\CM$, where $\CM$ is compact, connected, and with a (connected) gapped boundary. 
(The case of obtaining a genus-$g$ handlebody with a gapped boundary from a solid torus with a gapped boundary is illustrated in Fig.~\ref{fig:3d-bdy-g-handle-completion}.)
Similarly, we have a set of regions $\{ \CP_i \}_{i=1}^g$, which are ``identical copies" obtained by removing a ball $\underline{B}^3$ from the solid torus. We further need a region $\CV$, which is a building block adjacent to the gapped boundary.
$\CV$ is a ball minus $g$ balls cut in half by the gapped boundary: it has $g+1$ half-sphere entanglement boundaries; it is immersed in $\underline{{B}}^n \subset \CM$ in such a way that $g$ of these entanglement boundaries are embedded. We can therefore attach entire entanglement boundaries of $\mathcal{P}_i$ to $\CV$ and obtain $\natural g \CM \setminus \underline{B}^n$.
Then we heal the puncture to obtain the closed manifold $\natural g  \CM$.  
	\begin{figure}[h]
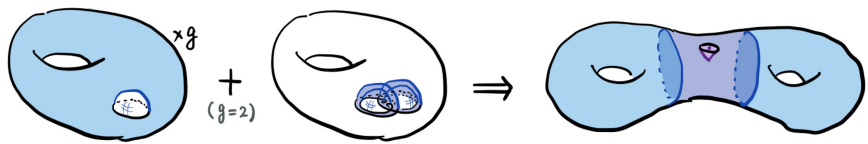

		\centering
		\parfig{.76}{figs/fig_3d-bdy-g-handle-completion-v2} 
		\caption{Sequential completion, to obtain a genus-$g$ handlebody with a gapped boundary. The case $g=2$ is illustrated.} 
		\label{fig:3d-bdy-g-handle-completion}
	\end{figure}
\end{exmp}
It is interesting to notice that the partition of all regions in  Fig.~\ref{fig:3d-bdy-g-handle-completion}, restricted to the neighborhood of the gapped boundary, look similar to regions in Fig.~\ref{fig:g=2-completion-method2}.

\section{Pairing manifolds}\label{sec:5-pairing-manifold}

Compact manifolds without entanglement boundaries are interesting in many ways.  
For instance, ground states on the torus in 2d encode the mutual braiding of anyons~\cite{Zhang:2011jd, Zhang:2014sza} (see \cite{Haah2014,Shi:2019ngw,2022arXiv220304329L,Cian:2022vjb} for other thoughts). In higher dimensions, one may hope that certain closed manifolds can provide insight into why one class of excitations should be paired with another class of excitations that braid nontrivially with it;
this is termed as \emph{remote detectibility} \cite{Kong2014,Lan:2018vjb,Johnson-Freyd:2020usu} and assumed for physical reasons. Moreover, as the existence of graph excitations (Fig.~\ref{fig:fusion-of-loops}) shows, the understanding of remote detectability should require \emph{a class of} compact manifolds. An important motivation we have in mind is to justify remote detectability in entanglement bootstrap for general dimensions.

For these purposes, we introduce a class of  connected compact manifolds called \emph{pairing manifolds}. Each of these pairing manifolds can be constructed using vacuum block completion, discussed in \S\ref{section:Vacuum-completion}. Pairing manifolds satisfy a few extra conditions, from which we derive concrete information-theoretic constraints. We shall find many examples, in which the physical meaning of these constraints will be clear.

The structure of this section is as follows. In \S\ref{section:motivating-pairing}, we motivate the definition of pairing manifolds by asking a few questions. In \S\ref{sec:pairing-def}, we give the general definition of the pairing manifold (Definition~\ref{def:pairing-vacuum-type}). In \S\ref{subsec:consequences}, we derive several general consequences from the definition. Furthermore, in \S\ref{sec:pairing-examples}, we provide examples of pairing manifolds in a few physical setups and describe the explicit properties we derive for them.

\subsection{Motivating questions and Kirby's torus trick}\label{section:motivating-pairing}

To motivate the notion of pairing manifolds (which we define in Definition~\ref{def:pairing-vacuum-type}), we discuss three examples and consider a few questions. The three examples are the torus $T^2$, the finite cylinder with a pair of gapped boundaries, and the manifold $(S^2\times S^1)\# (S^2 \times S^1)$ in 3d. These examples are illustrated in Fig.~\ref{fig:motivating-examples}.

\begin{figure}[h]
    \centering
\includegraphics[width=0.9\textwidth]{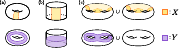}
    \caption{Three motivating examples ($\CM = X \bar{X} = Y \bar{Y}$) for pairing manifolds. (a) The torus $T^2$ in 2d. (b) The finite cylinder with a pair of gapped boundaries, in 2d. (c) The manifold $(S^2\times S^1)\# (S^2 \times S^1)$ in 3d. It is visualized as the gluing of a pair of solid genus-two handlebodies.}
    \label{fig:motivating-examples}
\end{figure}

As can be seen from the figure, common features of these examples include: 
\begin{itemize}
\item There are two ways to partition the manifold into pieces: $\CM = X \bar{X} = Y \bar{Y}$. Here $\bar{X} = \CM \setminus X$ and $\bar{Y} = \CM \setminus Y$ are the complements of $X$ and $Y$ on $\CM$.
\item The regions, $X$, $Y$, and their complements, cut each other into balls. These balls can be $B^n$ or $\underline{B}^n$ depending on whether we consider the gapped boundary.
\end{itemize}

Some natural questions arise: (1) Can we count the excitations characterized by $X$ in terms of those characterized by $Y$? (2) Is the ``ground state degeneracy" on $\CM$, i.e., $\dim \mathbb{V}_{\CM}$, determined by the information convex set of either $X$ or $Y$? (3) Can the excitations characterized by $X$ remotely detect those characterized by $Y$? (4) Can we extract an analog of the $S$ matrix that characterizes the remote detectability?
 
As we shall explain, the pairing manifold provides a general (sufficient) condition for affirmative answers to all of these questions. Therefore, pairing manifolds are closely related to remote detectability in very general settings.

The intuition we develop is that the two ways of cutting $\CM$ into pieces characterize two classes of excitations (or two classes of excitation clusters). 
One class of excitations is associated with the information convex set $\Sigma(X)$, and the other is associated with $\Sigma(Y)$.  
The fact that each of $X$ and $Y$ cut the other into balls means that they hide (classical and quantum) information from each other completely. Each region is associated with a natural choice of vacuum by immersion into a large ball or a sphere (possibly upon removing a ball). As a consequence of these relations, we will show that $X$ and $Y$ enjoy an uncertainty relation: when the state on $X$ is a minimum entropy state of $\Sigma(X)$ 
the state on $Y$ must be the maximum-entropy state on $\Sigma(Y)$ and vice versa (Lemma~\ref{lemma:max-X-Y}). $X$ and $Y$ provide two natural bases of states on the closed manifold, and therefore, a braiding matrix can be extracted, which shall be referred to as the pairing matrix (see Section~\ref{sec:S-matrix-closed-manifolds}). In general, this is not associated with any mapping class group, and therefore, our approach is a different generalization than the $S$ matrix on $T^3$~\cite{Jiang:2014ksa}.

Finally, we comment on the use of immersion in our approach and its relation to Kirby's torus trick. Of conceptual interest is the fact that we only needed a reference state on a topologically trivial region, that is, either a ball or its completion to a sphere. By immersing the pairing manifolds (with a ball removed) into the trivial region, we are able to ``pull back" several finite-dimensional consistency relations, and a finite-dimensional ``pairing matrix."  
This is, in spirit, a close analog of Kirby's torus trick, which uses immersion of a compact manifold to pull back mathematical structures. (See \cite{Hastings:2013vma} for a different application of Kirby's torus trick, which applies to the local Hamiltonian of gapped invertible phases.)

\subsection{Pairing manifolds, the definition}
\label{sec:pairing-def}

In this section, we give a precise 
definition of \emph{pairing manifolds}.   
For later convenience, for vacuum block completion $(\CM, \{X,\bar{X}\}, \CW \overset{\varphi}{\looparrowright} \mathbf{S}^n)$, we will use a short-hand notation for the canonical state 
\begin{equation}\label{eq:1_X}
	\ket{1_X} \equiv | 1_{\{X,\bar{X}\}}\rangle,
\end{equation} 
where $\vert 1_{\{ X, \bar{X} \}}\rangle$ is the canonical state in the notation introduced in  Definition~\ref{def:canonical-M}. (Everything said here applies to the context with a gapped boundary once we replace $\mathbf{S}^n$ by $\underline{\underline{\mathbf{B}}}^n$.) 
 The notation on the left-hand side of Eq.~\eqref{eq:1_X} is well-motivated, noting that the constrained Hilbert spaces all contain a single element, i.e., $\dim \mathbb{V}_{[1_X]}(\CM) =\dim \mathbb{V}_{[1_{\bar{X}}]}(\CM) = \dim \mathbb{V}_{[1_{\partial X}]}(\CM)=1$. This simplification happens because $\CW$ is decomposed into precisely two building blocks.
 To prepare the definition, we further introduce two concepts.

\begin{definition}[Transverse intersection]\label{def:transverse}
Let $A$ and $B$ be two subsystems of a manifold $\CX$. 
    We say the intersection between $A$ and $B$ is transverse if  
    \begin{itemize}
    \item It is possible to write $\partial A \cup \partial \bar{A}$ as a topological product $ m \times \mathbb{I}$, where $\mathbb{I}$ is an interval, such that $B \cap (\partial A \cup \partial \bar{A})= m_B \times \mathbb{I}$ for a suitable $m_B \subset m$. (Note that both $m$ and $m_B$ may be empty.)
    
    \item The same holds if we switch $A$ and $B$.
    \end{itemize}   
\end{definition}

\begin{remark}
    Our terminology ``transverse intersection" comes from the analogous terminology used in the smooth category (see \eg~\cite{guillemin2010differential,bredon1993topology}). 
    In that context, a transverse intersection of submanifolds $A, B$ is one where the tangent vectors to $A$ and $B$ at each point of intersection span the tangent space of the ambient manifold $\CX$.  
\end{remark}

\begin{figure}[h]
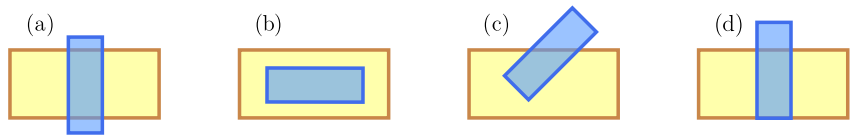

	\centering
	\parfig{.74}{figs/fig_transverse}
	\caption{Transverse intersection, examples, and non-examples. $A$ is yellow, and $B$ is blue. For illustration purposes, they are chosen to be disk-like regions in the 2d bulk. (a), (b) and (c) are examples, whereas (d) is a non-example.} 
	\label{fig:transverse-intersection}
\end{figure}

Informally speaking, a transverse intersection between $A$ and $B$ is an intersection such that $B$ ($A$) intersects with the entanglement boundary of $A$ ($B$) in a non-singular way. In other words, there is enough room such that a small deformation of the entanglement boundary of $A$ will not change the topology of the intersection. Fig.~\ref{fig:transverse-intersection}(a), (b), and (c) are examples of transverse intersections between two 2d disk-like regions. The intersection between $A$ and $B$ in Fig.~\ref{fig:transverse-intersection}(d) is not transverse, and this can be seen from the fact that $B\cap \partial A$ and $B \cap \partial \bar{A}$ are topologically different.

\begin{definition}[Natural partition]\label{def:natural}
	 Let $\Omega$ be an immersed region. We say the ordered triple $\{B, C, D\}$ is a natural partition of $\Omega$ for an element $\rho_{\Omega}\in \Sigma(\Omega)$, if  
	 \begin{enumerate}
	 	\item  $C= \Omega \setminus \partial\Omega$ is the interior, and $BD= \partial \Omega$ is the thickened boundary. 
	 	
	 	\item $B$ and $D$ are of the form $m_B \times \mathbb{I}$ and $ m_D\times\mathbb{I} $, where $\mathbb{I}$ is the interval appears in $\partial \Omega = m \times \mathbb{I}$ and $m= m_B \cup m_D$. Note that either $B$ or $D$ can be empty. 
	 	\item $ \Delta(B,C,D)_{\rho}=0$. 
	 \end{enumerate} 
\end{definition}

\begin{definition}[Pairing manifold, bulk]\label{def:pairing-vacuum-type}
	A connected compact manifold $\mathcal{M}$ is a pairing manifold if it allows a pairing $(\CM, \{X,\bar{X}\},\{Y,\bar{Y}\},\CW \overset{\varphi}{\looparrowright}\mathbf{S}^n)$. Here, a pairing is the requirement that the following properties are satisfied:  
	\begin{enumerate}
		\item  $(\CM, \{X,\bar{X}\}, \CW\overset{\varphi}{\looparrowright} \mathbf{S}^n)$ is a vacuum block completion.
		\item $(\CM, \{Y,\bar{Y}\}, \CW\overset{\varphi}{\looparrowright} \mathbf{S}^n)$ is another vacuum block completion whose canonical state on $\CW$ agrees with the canonical state of the other vacuum block completion on $\partial \CW$.  
		\item The intersection between $X$ and $Y$ is transverse. 
		\item $X\cap Y$, $\bar{X}\cap Y$, $X\cap \bar{Y}$ and $\bar{X}\cap \bar{Y}$ are balls.
		\item $\{(\partial X) \cap Y, X\setminus \partial X, (\partial X) \cap \bar{Y} \}$ is a natural partition of $X$ for the canonical state $|1_X\rangle$; the same holds if we switch $X$ and $Y$.
	\end{enumerate}
\end{definition} 

\begin{remark}
 The requirement that $X\cap Y$ is a ball ($B^n$) can be  understood physically as the demand that one cannot detect any information stored in the information convex set $\Sigma(X)$ by analyzing the reduced density matrix on $X\cap Y$, the piece of $X$ that is available in $Y$. Therefore, the information in $X$ is hidden from $Y$ completely and vice versa. As before, the generalization of pairing manifold to the gapped boundary context is straightforward. We omit the precise statement since it can be obtained by replacing $\mathbf{S}^n$ by $\underline{\underline{\mathbf{B}}}^n$, using the boundary version of building blocks (Definition~\ref{def:building-block-boundary}), and allowing the ``balls" in condition 4 to be either $B^n$ or $\underline{B}^n$.

 In Definition~\ref{def:pairing-vacuum-type}, the following symmetries are either manifest or implied: (1) the switch of $X\leftrightarrow \bar{X}$, (2) the switch of $Y \leftrightarrow \bar{Y}$, and (3) the switch $X\leftrightarrow Y$ combined with $\bar{X}\leftrightarrow \bar{Y}$. To see why, we observe that the intersection between $X$ and $Y$ is transverse, implying that the intersections involving their complements ($X$ with $\bar{Y}$, $\bar{X}$ with $Y$, and $\bar{X}$ with $\bar{Y}$) are transverse. In addition, if $Y$ cuts $X$ into a natural partition, then it also cuts $\bar{X}$ into a natural partition for the state $|1_X\rangle$, thanks to purity.  
\end{remark}

In fact, because $X$ and $Y$ cut each other into balls, the natural partitions that appear in a pairing must have a special property, summarized in the following lemma:
\begin{lemma}\label{lemma:BCD-in-balls}
 Let $\{ B,C,D \}$ be a natural partition of $\Omega$, with respect to a state $\rho_{\Omega} \in \Sigma(\Omega)$. Suppose, $B \subset \omega_1 \subset \Omega$ and $D \subset \omega_2 \subset \Omega$, where $\omega_1$ and $\omega_2$ are balls. 
 Then, 
 \begin{enumerate}
     \item $\rho_{\Omega}$ is a minimum entropy state of $\Sigma(\Omega)$.
     \item the same $\{ B,C,D \}$ is a natural partition for $\lambda_{\Omega} \in \Sigma(\Omega)$ if and only if $\lambda_{\Omega}$ is a minimum entropy state of $\Sigma(\Omega)$.
 \end{enumerate}  
\end{lemma}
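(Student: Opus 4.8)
The plan is to show that, under the stated hypotheses, $\Delta(B,C,D)$ restricted to $\Sigma(\Omega)$ is an affine function of the total entropy $S_\Omega$, after which both statements follow at once from the nonnegativity of $\Delta(B,C,D)$. Concretely, I would prove the identity $\Delta(B,C,D)_{\lambda} = 2\,\big(S_{\Omega}(\lambda)-S_{\Omega}(\rho_\Omega)\big)$ for every $\lambda\in\Sigma(\Omega)$, where $\rho_\Omega$ is the given state satisfying $\Delta(B,C,D)_{\rho_\Omega}=0$. Granting this identity, nonnegativity gives $S_\Omega(\lambda)\ge S_\Omega(\rho_\Omega)$ for all $\lambda$, so $\rho_\Omega$ is a minimum-entropy state of $\Sigma(\Omega)$, which is statement 1. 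For statement 2, the topological conditions in the definition of a natural partition are manifestly state-independent, so $\{B,C,D\}$ is a natural partition for $\lambda$ precisely when $\Delta(B,C,D)_\lambda=0$, and by the identity this holds precisely when $S_\Omega(\lambda)=S_\Omega(\rho_\Omega)$ attains the minimum, i.e. precisely when $\lambda$ is a minimum-entropy state.

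To establish the identity I would assemble two inputs. First, $S_B$ and $S_D$ are the same for every element of $\Sigma(\Omega)$: here the hypotheses $B\subset\omega_1$ and $D\subset\omega_2$ enter. Since $\omega_1$ is a ball, $\Sigma(\omega_1)$ is a single point (the vacuum), so every $\lambda\in\Sigma(\Omega)$ restricts to $\sigma_{\omega_1}$ on $\omega_1$ and hence to $\sigma_B$ on $B$; thus $S_B=S(\sigma_B)$ is constant on $\Sigma(\Omega)$, and likewise $S_D$. Second, I claim $\Sigma(BC)\cong\Sigma(\Omega)$ and $\Sigma(CD)\cong\Sigma(\Omega)$ are realized by the partial-trace maps, with the entropy difference preserved. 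Because the collar $D$ is contained in the ball $\omega_2$, the region $BC=\Omega\setminus D$ is reached from $\Omega$ by a sequence of elementary restriction steps carried out entirely inside $\omega_2$, hence topologically trivial, so the isomorphism theorem applies and the isomorphism is implemented by $\Tr_D$; the symmetric argument using $B\subset\omega_1$ handles $CD=\Omega\setminus B$. Combining these, for any $\lambda,\mu\in\Sigma(\Omega)$ one gets $S_{BC}(\lambda)-S_{BC}(\mu)=S_\Omega(\lambda)-S_\Omega(\mu)$ and the analogous relation for $S_{CD}$. Taking $\mu=\rho_\Omega$, adding the two relations, and using the constancy of $S_B,S_D$ together with $\Delta(B,C,D)_{\rho_\Omega}=0$ produces the advertised identity $\Delta(B,C,D)_\lambda=2\,(S_\Omega(\lambda)-S_\Omega(\rho_\Omega))$.

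The only external fact required is that $\Delta(B,C,D)\ge 0$ for every state; this is weak monotonicity of the von Neumann entropy (the dual form of strong subadditivity), already used implicitly in the quantum-dimension discussion where $d_a\ge 1$ was deduced, so no new work is needed there. I expect the main obstacle to be the careful justification of the second input: that excising the collar $D$ (respectively $B$) is a genuine path of elementary steps and that the resulting isomorphism $\Sigma(BC)\cong\Sigma(\Omega)$ is the partial trace rather than merely some abstract convex isomorphism preserving entropy differences. The containments $D\subset\omega_2$ and $B\subset\omega_1$ in balls are exactly what make these excisions topologically trivial and keep every step within the hypotheses of the isomorphism theorem, so this is where the assumptions of the lemma are used in an essential way. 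A final routine check is that the topological half of the natural-partition definition (that $C=\Omega\setminus\partial\Omega$ is the interior and $BD=\partial\Omega=m\times\mathbb{I}$ splits as $m_B\times\mathbb{I}$ and $m_D\times\mathbb{I}$) is independent of the state, which is immediate from the definition and is what allows the equivalence in statement 2 to be reduced entirely to the condition $\Delta(B,C,D)_\lambda=0$.
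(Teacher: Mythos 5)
Your proposal is correct and takes essentially the same route as the paper's proof: the identical key identity $\Delta(B,C,D)_{\lambda} = 2\left( S_{\Omega}(\lambda) - S_{\Omega}(\rho_{\Omega}) \right)$, obtained by canceling $S_B$ and $S_D$ (constant on $\Sigma(\Omega)$ because $B$ and $D$ sit inside balls) and by using the isomorphism theorem to deform $BC$ and $CD$ to $\Omega$, then concluding via $\Delta(B,C,D)\ge 0$ from strong subadditivity. Both statements of the lemma follow from this identity exactly as in the paper, so no gap remains.
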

\begin{proof}
    Considering an arbitrary state $\rho'$ in $\Sigma(\Omega)$, we notice that
    \begin{equation}\label{eq:natural-pairing-special}
    \begin{aligned}
        \Delta(B,C,D)_{\rho'} & = \Delta(B,C,D)_{\rho'} - \Delta(B,C,D)_{\rho}  \\
        &= \left( S(\rho'_{BC} ) - S(\rho_{BC}) \right) + \left( S(\rho'_{CD} ) - S(\rho_{CD}) \right) \\
        &= 2 \left( S(\rho'_{\Omega} ) - S(\rho_{\Omega}) \right) .
    \end{aligned}
    \end{equation}
    The first line follows from the natural partition condition $\Delta(B,C,D)_{\rho}=0$. In the second line, the canceled terms are due to the fact that $S(\rho'_{B})= S(\rho_B)$ and $S(\rho'_{D})= S(\rho_D)$: since $B$ and $D$ are contained in ball-like subsystems of $\Omega$, any state in $\Sigma(\Omega)$ has identical reduced density matrices on them. 
    The third line follows from the fact that any smooth deformation of a region preserves the entropy difference (isomorphism theorem), and that it is possible to deform $BC$ and $CD$ to $\Omega$ by elementary steps of extensions.

    Since $\Delta(B,C,D)_{\rho'} \ge 0$ by the strong subadditivity, $S(\rho'_{\Omega} ) - S(\rho_{\Omega}) \ge 0$. Therefore, $\rho_{\Omega}$ is a minimum entropy state of $\Sigma(\Omega)$ -- the first statement holds. The second statement also follows from Eq.~\eqref{eq:natural-pairing-special}.
\end{proof}

Determining whether a manifold is a pairing manifold may not be easy.
One reason is that verifying two vacuum block completions and showing that the second completion is compatible with the first one (i.e., verifying condition 2 of Definition~\ref{def:pairing-vacuum-type}) can be tricky. 
Luckily there are alternative sufficient conditions. We describe one below.

\begin{lemma}[Sufficient condition for pairing, bulk] \label{lemma:pairing-sufficient}
 	Let $\mathcal{M}$ be a connected compact manifold and $(\CM, \{X,\bar{X}\}, \CW \overset{\varphi}{\looparrowright} \mathbf{S}^n)$ be a vacuum block completion. $\CM= Y \bar{Y}$. Suppose furthermore that the following two groups of conditions hold:
	\begin{enumerate}
        \item \label{topology-assumptions}Topology conditions:
        \begin{enumerate}
            \item  $X$ intersects transversely with $Y$.
            \item  \label{assumption-geometric} $X\cap Y$, $\bar{X}\cap Y$, $X\cap \bar{Y}$ and $\bar{X}\cap \bar{Y}$ are balls. 
            \item On $\CM$, $X \cup Y = \CW \setminus \partial \CW$.
            \item  \label{assumption-six} The immersion map $\varphi$ embeds  
            $Y$ into $\mathbf{S}^n$. 
        \end{enumerate}
        \item Natural partition conditions:
        \begin{enumerate}
            \item \label{assumption-four} $\{(\partial X) \cap Y, X\setminus \partial X, (\partial X) \cap \bar{Y} \}$ is a natural partition of $X$ for $| 1_X \rangle $.
            \item \label{assumption-five} $\{(\partial Y) \cap X, Y\setminus \partial Y, (\partial Y) \cap \bar{X} \}$ is a natural partition of $Y$ for $\sigma_Y$.  
            (Note that, by assumption 1(d), $Y$ is embedded, and $\sigma_Y$ is the reduced density matrix of the reference state.) 
        \end{enumerate}
 \end{enumerate}
	Then $(\CM, \{X,\bar{X}\},\{Y,\bar{Y}\},\CW \overset{\varphi}{\looparrowright}\mathbf{S}^n)$ gives a pairing.  
\end{lemma}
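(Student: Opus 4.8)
The plan is to verify, one by one, the five defining properties of a pairing listed in Definition~\ref{def:pairing-vacuum-type}, showing each follows from the hypotheses. Three are essentially free: property~1 is the standing assumption that $(\CM,\{X,\bar X\},\CW\overset{\varphi}{\looparrowright}\mathbf{S}^n)$ is a vacuum block completion, property~3 is topology condition (a) (transversality), and property~4 is topology condition (\ref{assumption-geometric}) (the four intersections are balls). Likewise the first half of property~5 is exactly natural-partition condition (\ref{assumption-four}). So the genuine work is property~2 (that $\{Y,\bar Y\}$ is also a vacuum block completion whose canonical state agrees with $|1_X\rangle$ on $\partial\CW$) together with the second half of property~5.

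First I would check that $\{Y,\bar Y\}$ is a vacuum block completion in the sense of Definition~\ref{def:vacuum-completion}. Write $Y^{\Diamond}=Y\cap\CW$ and $\bar Y^{\Diamond}=\bar Y\cap\CW$. Topology condition (\ref{assumption-six}) says $\varphi$ embeds $Y$, so $Y^{\Diamond}=Y$ is an embedded building block (Definition~\ref{def:building-block-bulk}, with $F=\emptyset$) whose vacuum is the genuine reference vacuum, $\hat\sigma_{Y^{\Diamond}}=\sigma_Y$; the completion point therefore lies in $\bar Y$, so $\partial\bar Y^{\Diamond}=E\cup F$ with $E=\partial Y$ embedded and $F=\partial\CW$ the immersed spherical boundary. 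Using topology conditions (\ref{assumption-geometric}) and (c) — which present $\bar Y$ as a union of the balls $X\cap\bar Y$ and $\bar X\cap\bar Y$ with the completion ball removed — I would verify that $\bar Y^{\Diamond}$ is a building block by the collapsing argument around Eq.~\eqref{eq:collapsing-argument}: collapsing the embedded spherical components of $E$ (or, when $\partial Y$ is not spherical, invoking the dimensional-reduction variant used in the $3$d cases) reduces $\Sigma_{[1_E]}(\bar Y^{\Diamond})$ to a single vacuum-like point carrying an Abelian sector $\hat 1$ on $F$. By Proposition~\ref{prop:canonical-W} and the completion trick (Lemma~\ref{lemma:completion-trick-v2}, Definition~\ref{def:canonical-M}) this produces a second canonical state $|1_Y\rangle$ with $|1_Y\rangle|_Y=\sigma_Y$.

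The second half of property~5 is then immediate: since $Y$ is the embedded building block of this completion, $|1_Y\rangle$ restricts to $\hat\sigma_{Y^{\Diamond}}=\sigma_Y$ on $Y$, and natural-partition condition (\ref{assumption-five}) asserts precisely that $\{(\partial Y)\cap X,\ Y\setminus\partial Y,\ (\partial Y)\cap\bar X\}$ is a natural partition of $Y$ for $\sigma_Y$, hence for $|1_Y\rangle$. For the agreement clause of property~2 I would use that $\CW$, and therefore $\partial\CW$, is common to both completions, and that both $|1_X\rangle$ and $|1_Y\rangle$ are extreme points reducing to Abelian sectors on the spherical boundary $\partial\CW$. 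Because in both constructions the completion point is surrounded by building blocks carrying the vacuum on their embedded boundaries, the merged state near the puncture is locally indistinguishable from the reference state, so the reduced state on the sphere shell $\partial\CW$ is the same vacuum-analog $\hat 1$ in each case; since an Abelian sector pins down the reduced state on a sphere shell, $|1_X\rangle$ and $|1_Y\rangle$ agree on $\partial\CW$.

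I expect the main obstacle to be exactly the two steps where information-theoretic (rather than purely topological) content enters: verifying that $\bar Y^{\Diamond}$ is genuinely a building block, and checking that the two completions induce the same Abelian state on the shared puncture $\partial\CW$. The difficulty is that the building blocks and completion points of the two completions are different regions, so matching them on $\partial\CW$ is delicate; this is precisely the subtlety flagged in Definition~\ref{def:building-block-bulk} about whether $\hat 1$ depends on the block, and controlling it is the crux of the argument. Along the way one must also confirm that Lemma~\ref{lemma:BCD-in-balls} is applicable, i.e.\ that the collar pieces $(\partial X)\cap Y$, $(\partial X)\cap\bar Y$, $(\partial Y)\cap X$ and $(\partial Y)\cap\bar X$ really lie inside balls — each being the thickened interface between two of the balls supplied by topology condition (\ref{assumption-geometric}) — which is what licenses reading the natural-partition conditions as minimum-entropy statements.
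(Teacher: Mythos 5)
Your skeleton matches the paper's (only property 2 of Definition~\ref{def:pairing-vacuum-type}, plus the $Y$-half of property 5, carries real content), but both arguments you give for that hard part would fail. The first gap is your verification that $\bar{Y}^{\Diamond}$ is a building block. The collapsing argument of Eq.~\eqref{eq:collapsing-argument} only applies when every \emph{embedded} component of $\partial \bar{Y}^{\Diamond}$ is a sphere shell, and that is false in the central examples this lemma exists to handle: in Example~\ref{exmp:S2S1} ($\CM = S^2\times S^1$, $Y$ a solid torus) the embedded boundary $\partial \bar{Y}$ is a torus shell; in Example~\ref{example-of-ball-minus-torus} it has both a torus and a sphere component; and for the gapped-boundary case, item 3 of Example~\ref{exmp:bdy-buiding blocks} states explicitly that the collapsing trick does not apply and that the building-block property is established \emph{only after} pairing manifolds are understood --- i.e., by this very lemma (it points to Example~\ref{exmp:3d-sphere-shell-pairing}). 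So your route is circular precisely where the lemma has content; the "dimensional-reduction variant" is an ad hoc device for specific regions, not a general argument. The paper never verifies the building block topologically. Instead it runs the quantum-information machinery: condition (\ref{assumption-four}) lets Lemma~\ref{lemma:max-X-Y-without-pairing} show that $|1_X\rangle$ reduces to the maximum-entropy state of $\Sigma(Y)$, which contains $\sigma_Y$ with nonzero weight; hence $P^1_{\partial Y}|1_X\rangle \in \Sigma_{[\sigma_Y]}(\CM)$, and by condition (\ref{assumption-five}) this projected state satisfies Definition~\ref{def:Psi} with the roles of $X$ and $Y$ reversed; Lemma~\ref{X-is-true-v2} then gives uniqueness of $\Sigma_{[\sigma_Y]}(\CW)$, and the associativity theorem converts that uniqueness into $\dim \mathbb{V}_{[1_{\partial \bar{Y}}]}(\bar{Y}^{\Diamond}) = 1$, which \emph{is} the building-block property. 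Tellingly, in your write-up the natural-partition hypotheses are spent only on the trivial half of property 5; they are in fact the engine of the whole proof, because the building-block property is not a purely topological fact.

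The second gap is the matching of the two canonical states on $\partial\CW$. Your argument --- that near the puncture the state is "locally indistinguishable from the reference state, so the reduced state on $\partial\CW$ is the same $\hat{1}$" --- proves nothing: \emph{every} element of $\Sigma(\CW)$ is locally indistinguishable from the reference state by definition of the information convex set, and the superselection sector on $\partial\CW$ is exactly the globally detectable data that local indistinguishability cannot fix. Nor does "Abelian" imply "the same Abelian sector"; the footnote to Definition~\ref{def:building-block-bulk} flags precisely this possible block-dependence of $\hat{1}$ as unknown in general. In the paper's proof the matching comes for free from the same uniqueness step: the unique element of $\Sigma_{[\sigma_Y]}(\CW)$ is the reduction of $P^1_{\partial Y}|1_X\rangle$, and since the projector acts only on $\partial Y$, away from $\partial\CW$, this state inherits the sector $\hat{1}$ that $|1_X\rangle$ carries on $\partial\CW$; the canonical state of the $Y$-completion, being an element of the same one-point set, therefore carries that same sector.
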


The advantage of this sufficient condition is that we only need to check one vacuum block completion. Other properties (namely the group of topology conditions and the group of natural partition requirements) are relatively easy to check.  The assumption that $Y$ is embedded in the sphere makes the range of application of Lemma~\ref{lemma:pairing-sufficient} more limited than Definition~\ref{def:pairing-vacuum-type}. 
We note that 
Lemma~\ref{lemma:pairing-sufficient} has the ability to verify vacuum block completion $\{ \CM, \{Y, \bar{Y} \}, \CW \overset{\varphi}{\looparrowright} \mathbf{S}^n \}$ condition for some nontrivial choices of $Y$.  

We postpone the proof of Lemma~\ref{lemma:pairing-sufficient} to the end of \S\ref{subsec:prepare}.

\subsection{Consequences of pairing}
\label{subsec:consequences} 

This section is a collection of consequences of the definition of pairing manifold (Definition~\ref{def:pairing-vacuum-type}). The main consequences are the uncertainty relation (Lemma~\ref{lemma:max-X-Y}), Propositions~\ref{prop:pairing-manifold-fusion-space}, \ref{prop:total-qds-equal-v2} and Porism~\ref{porism:total-quantum-dimension}.

In \S\ref{subsec:prepare}, we prepare for the proof by showing what happens when a strict subset of the conditions is satisfied. These lemmas often use some of the conditions in pairing, not symmetric with respect to $X$ and $Y$. In \S\ref{subsec:consequence-main}, we prove the main consequences.

\subsubsection{Prepare for the derivations}\label{subsec:prepare}

To prepare for the derivation, we prove a sequence of lemmas, each of which uses some of the conditions in the definition of pairing. In particular, Lemma~\ref{lemma:max-X-Y-without-pairing} and \ref{X-is-true-v2} are about a state $| \Psi_X(\CM) \rangle$ defined as follows:

\begin{definition}\label{def:Psi}
    We let $  | \Psi_X (\CM) \rangle $ be a state on a connected compact manifold $\mathcal{M}$  with the following three properties:
    \begin{enumerate}  \item $\CM=X\bar{X}= Y \bar{Y}$, such that topology conditions 1(a) and 1(b) of Lemma~\ref{lemma:pairing-sufficient} are satisfied.  \item $ | \Psi_X (\CM) \rangle $ satisfies axioms {\bf A0} and {\bf A1} everywhere on $\CM$, and \item $\{(\partial X) \cap Y, X\setminus \partial X, (\partial X) \cap \bar{Y} \}$ is a natural partition of $X$ for $ | \Psi_X (\CM) \rangle $. \end{enumerate}
\end{definition}
The state $|1_X\rangle$ satisfying the statements in Definition~\ref{def:pairing-vacuum-type} or Lemma~\ref{lemma:pairing-sufficient} 
satisfy the conditions of Definition~\ref{def:Psi}, but Definition~\ref{def:Psi} does not refer to any immersion.

In the proofs of Lemma~\ref{lemma:max-X-Y-without-pairing} and \ref{X-is-true-v2}, transverse intersections play an important role. Here is a reminder of a related convention. When we pick a thickened entanglement boundary $\partial \Omega= m \times \mathbb{I}$, we always choose one that is thick enough so that it can be partitioned further into thinner pieces along the interval $\mathbb{I}$ when needed. Similarly, when we choose the thickened entanglement boundaries for a transverse intersection in proofs, we choose them such that these regions can be further extended outwards. This convention applies to the partitions in Fig.~\ref{fig:M-and-Psi}.

\begin{lemma} 
\label{lemma:max-X-Y-without-pairing}
 The state $| \Psi_X(\CM)\rangle$ in Definition~\ref{def:Psi}, reduced to $Y$, gives the maximum-entropy state of $\Sigma(Y)$. 
\end{lemma}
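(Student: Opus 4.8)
The plan is to transfer a minimum-entropy statement about $X$ into a maximum-entropy statement about $Y$, using the global purity of $|\Psi_X(\CM)\rangle$ together with the fact that $X$ and $Y$ cut one another into balls. First I would fix notation for the four balls $P\equiv X\cap Y$, $Q\equiv\bar X\cap Y$, $R\equiv X\cap\bar Y$ and $T\equiv\bar X\cap\bar Y$, so that $X=PR$, $Y=PQ$, $\bar X=QT$ and $\bar Y=RT$. The natural-partition hypothesis of Definition~\ref{def:Psi} states that $\{(\partial X)\cap Y,\,X\setminus\partial X,\,(\partial X)\cap\bar Y\}$ is a natural partition of $X$ for $|\Psi_X\rangle$; since $(\partial X)\cap Y\subset P$ and $(\partial X)\cap\bar Y\subset R$, with $P,R$ balls contained in $X$, Lemma~\ref{lemma:BCD-in-balls} applies verbatim and yields that $\rho_X\equiv\Tr_{\bar X}|\Psi_X\rangle\langle\Psi_X|$ is a minimum-entropy state of $\Sigma(X)$. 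By the simplex and Hilbert-space structure theorems this pins $\rho_X$ into a single vacuum-like superselection sector, and $|\Psi_X\rangle$ is then a global purification of it.

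Next I would reduce the target to a statement about correlations between the two halves of $Y$. Because $P$ and $Q$ are balls, $\Sigma(P)$ and $\Sigma(Q)$ are singletons, so every element of $\Sigma(Y)$ has the same marginals on $P$ and on $Q$; consequently, for any $\rho'_Y\in\Sigma(Y)$ one has $S(\rho_Y)-S(\rho'_Y)=I(P:Q)_{\rho'_Y}-I(P:Q)_{\rho_Y}$, where $\rho_Y=\Tr_{\bar Y}|\Psi_X\rangle\langle\Psi_X|$. Thus, by strict concavity of the von Neumann entropy (which also makes the maximizer unique), it suffices to show that $|\Psi_X\rangle$ minimizes the mutual information $I(P:Q)$ over $\Sigma(Y)$. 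This is the precise content of the uncertainty relation advertised in the introduction: locking $X=PR$ into a definite sector should force $P$ to spend its correlations on $R\subset\bar Y$, leaving $P$ and $Q$ as uncorrelated as the vacuum permits.

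The core step is to establish this minimization. I would run a strong-subadditivity chain: using that {\bf A0} and {\bf A1} hold everywhere on $|\Psi_X\rangle$ (so the relevant conditional mutual informations vanish) and that entropy differences are preserved under the regular homotopies of the isomorphism theorem, I would express $I(P:Q)_{\Psi}$ through entropies of balls and the saturated natural partition of $X$, invoking purity (e.g. $S_{PQ}=S_{RT}$ and $S_{PR}=S_{QT}$) to eliminate the complement. Comparing the resulting expression with the analogous bound for an arbitrary $\rho'_Y$, whose ball marginals on $P,Q$ coincide with those of $\rho_Y$, gives $I(P:Q)_\Psi\le I(P:Q)_{\rho'_Y}$ and identifies the extremal value of $S(\rho_Y)$ with $S(\sigma_Y)+2\ln\mathcal Q$, in agreement with the topological entanglement entropy. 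I would finish by promoting the argument to non-sectorizable $Y$ through the Hilbert-space-theorem decomposition $\Sigma(Y)=\mathrm{conv}\,\bigcup_I\Sigma_I(Y)$, applying the inequality within each sector $I\in\calC_{\partial Y}$, with the partitions organized as in Fig.~\ref{fig:M-and-Psi}.

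The hard part will be the strong-subadditivity chain in the core step: mutual information is not monogamous in general, so ``$P$ maximally correlated with $R$ implies $P$ minimally correlated with $Q$'' cannot be invoked as a black box. The argument must instead lean on the genuine Markov (vanishing conditional mutual information) structure supplied by the everywhere-saturated axioms, and on the transverse, ball-cutting geometry that lets the thickened entanglement boundary of $X$ be partitioned cleanly by $Y$ and $\bar Y$. Care is also needed to keep every region thick enough for the merging and extension steps to be legitimate, and to handle the identification of the minimum-entropy state of $\Sigma(X)$ with a definite sector when several Abelian sectors share the minimal quantum dimension.
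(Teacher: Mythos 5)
Your reduction step is correct and is a legitimate reformulation: since $P=X\cap Y$ and $Q=\bar X\cap Y$ are balls, every element of $\Sigma(Y)$ has the same marginals on $P$ and on $Q$, so maximizing $S_Y$ over $\Sigma(Y)$ is equivalent to minimizing $I(P:Q)$. (Two side remarks: Definition~\ref{def:Psi} does not require $X$ to be sectorizable, so nothing ``pins $\rho_X$ into a single vacuum-like sector''---fortunately you never use that claim; and the extremal entropy is $S(\sigma_Y)+\ln\mathcal{Q}$, not $S(\sigma_Y)+2\ln\mathcal{Q}$.)

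The genuine gap is the core step, which you flag as ``the hard part'' but never close---and which \emph{cannot} be closed with the data you propose to compare. Fixing the marginals of a bipartite state on the two complementary halves $P$ and $Q$ places no constraint whatsoever on $I(P:Q)$ (any correlation between fixed marginals is allowed), so ``comparing with an arbitrary $\rho'_Y$ whose ball marginals on $P,Q$ coincide'' can never yield $I(P:Q)_\Psi\le I(P:Q)_{\rho'_Y}$. The missing idea is a \emph{three}-part split of $Y$ in which the thickened boundary of $X$ acts as a buffer. Set $B=(\partial X)\cap Y$, $C=X\setminus\partial X$, $D=(\partial X)\cap\bar Y$, and write $Y=ABC_Y$ with $A=Y\cap\bar X$ and $C_Y=Y\cap C$, as in Fig.~\ref{fig:M-and-Psi}(a). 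Then
\begin{equation*}
I(A:C_Y|B)_{\Psi}\;\le\; I(A:C|B)_{\Psi}\;\le\;\Delta(B,C,D)_{\Psi}\;=\;0,
\end{equation*}
where the first inequality is monotonicity of conditional mutual information, the second is weak monotonicity $S_{ABC}+S_{CD}\ge S_{AB}+S_{D}$ (both consequences of SSA), and the equality is exactly your natural-partition hypothesis. Hence $\rho_Y^{\ket{\Psi}}$ is a quantum Markov state over $A$--$B$--$C_Y$. Transversality makes $AB$ and $BC_Y$ balls, so their marginals---which \emph{overlap} on the buffer $B$---are the same for every element of $\Sigma(Y)$; the Markov state is the unique maximum-entropy state consistent with prescribed overlapping marginals, so $\rho_Y^{\ket{\Psi}}$ is the maximum-entropy state of $\Sigma(Y)$. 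In your notation the identity $I(P:Q)=I(A:B)+I(A:C_Y|B)$ does all the work: $I(A:B)$ is fixed across $\Sigma(Y)$ because $AB$ is a ball, while the conditional term is nonnegative and vanishes precisely for $\ket{\Psi_X(\CM)}$. This buffer decomposition---not purity identities like $S_{PQ}=S_{RT}$---is the content of the paper's proof; once you supply it, your argument collapses onto the paper's.
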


\begin{proof}
We relabel the natural partition of $X$ for $| \Psi_X(\CM)\rangle$ from its transverse intersection with $Y$ as $\{B= (\partial X) \cap Y, C= X \setminus \partial X, D= (\partial X) \cap \bar{Y} \}$. Choose $A$ and $C_Y \subset C$ such that $Y=ABC_Y$. The partition is schematically shown in Fig.~\ref{fig:M-and-Psi}(a). On the state $\vert \Psi_X(\CM) \rangle$ we have
\begin{equation}
\begin{aligned}
      I(A:C_Y|B) & \le I(A:C \vert B)\\
                 & \le \Delta(B,C,D) \\
                 &=0.
\end{aligned}
\end{equation}
The first line and the second line follow from SSA. The last line uses the fact that $\{B,C,D\}$ is a natural partition of $X$ for the state $| \Psi_X(\CM) \rangle$.
	
Therefore, $I(A:C_Y|B)_{\rho_Y^{\ket{\Psi}} } = 0 $, where $\rho_Y^{\ket{\Psi}} \equiv \tr_{\bar Y} \ketbra{\Psi_X(\CM)}{\Psi_X(\CM)}$. 
Note that $Y = ABC_Y$, where $AB$ and $BC_Y$ are balls. ($AB$ is a ball because $A\equiv Y \cap \bar{X}$ is a ball, and the transverse intersection condition implies that $A$ can extend to $AB$ smoothly.) The Markov condition implies\footnote{The basic statement we use here is that any quantum Markov state $\lambda_{XYZ}$ ($I(X:Z|Y)_\lambda=0$) is the maximum entropy state among all states that agrees with $\lambda$ on marginals $XY$ and $YZ$.} that $\rho_Y^{\ket{\Psi}}$ is the maximum-entropy state of all states consistent with the state $| \Psi \rangle $ on balls $AB$ and $BC_Y$. Then, because $\rho_Y^{\ket{\Psi}} \in \Sigma(Y)$, and every state in $\Sigma(Y)$ has the same reduced density matrices on balls $AB$ and $BC_Y$,  $\rho_Y^{\ket{\Psi}}$ must be the maximum-entropy state in $\Sigma(Y)$.  
\end{proof}

\begin{figure}[h]
    \centering
\includegraphics[width=0.85\columnwidth]{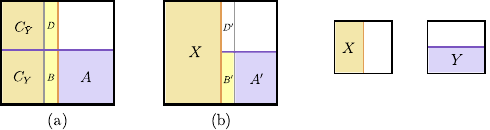}
    \caption{Schematic picture for two useful sets of subsystems on $\CM$. (a) $BD=\partial X$, $B=Y \cap\partial X $, $Y=ABC_Y$. (b) $B'D'=\partial \bar{X}$ and $A'B'= Y \cap \bar{X}$.  $X$ is on the left and $Y$ is at the bottom, and the orange and purple lines are their entanglement boundaries. The black box is not an entanglement boundary.}
    \label{fig:M-and-Psi}
\end{figure}

We continue the study of the state $| \Psi_X(\CM)\rangle$ in Definition~\ref{def:Psi}. We have learned a couple of facts about this state.
\begin{itemize}
    \item Its reduced density matrix on $X$ is a minimum entropy state on $\Sigma(X)$; Lemma~\ref{lemma:BCD-in-balls}.
    \item  Its reduced density matrix on $Y$ is \emph{the} maximum entropy state on $\Sigma(Y)$; Lemma~\ref{lemma:max-X-Y-without-pairing}.
    \item Its reduced density matrix on $X \cup Y$ is a minimum entropy state on $\Sigma(X \cup Y)$. This is because the complement of $X \cup Y$ on $\CM$ is the ball $\bar{X} \cap \bar{Y}$, where the axioms hold on this ball. 
Axiom {\bf A0} applied to this ball implies the extreme point condition for the state on $X \cup Y$.
An extreme point is a minimum entropy state.

Moreover, this is sufficient to guarantee that the superselection sector on the spherical boundary of $X\cup Y$ is Abelian, because the information convex set of the ball has only one element. Therefore, the entropy of this state on $X\cup Y$ is the absolute minimum among states in $\Sigma(X\cup Y)$.
\end{itemize}

Next, we discuss a useful uniqueness property related to $| \Psi_X(\CM)\rangle$. We borrow the notation of constrained Hilbert (fusion) space. Let $ \Sigma_{[\rho^{\ket{\Psi}}_X]}(X \cup Y)$ be the convex subset of $\Sigma(X \cup Y)$ with the constraint that it matches $| \Psi_X(\CM)\rangle$ on $X$. 

\begin{lemma}\label{X-is-true-v2}
 For the state $| \Psi_X(\CM)\rangle$ in Definition~\ref{def:Psi}, $ \Sigma_{[\rho^{\ket{\Psi}}_X]}(X\cup Y)$ 
 contains a unique state, which is the reduced density matrix of $\vert \Psi_{X}(\CM)\rangle $.   
\end{lemma}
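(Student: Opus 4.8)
The plan is to show that every $\mu\in\Sigma_{[\rho^{\ket{\Psi}}_X]}(X\cup Y)$ equals the reduced density matrix $\rho^{\ket{\Psi}}_{X\cup Y}$ of $\vert\Psi_X(\CM)\rangle$, so that the constrained set is a single point. Throughout I would write $P=X\cap\bar{Y}$, $Q=X\cap Y$, $R=\bar{X}\cap Y$, so that $X\cup Y=PQR$ with $X=PQ$ and $Y=QR$; by condition 1(b) of Lemma~\ref{lemma:pairing-sufficient} these are balls, and transversality (condition 1(a)) makes $Q$ a buffer separating $P$ from $R$. The first, trivial, observation is that any such $\mu$ agrees with $\vert\Psi_X(\CM)\rangle$ on $X$, so the triple $\{B,C,D\}=\{(\partial X)\cap Y,\ X\setminus\partial X,\ (\partial X)\cap\bar{Y}\}$ remains a natural partition of $X$ for $\mu$, i.e.\ $\Delta(B,C,D)_\mu=0$, since $\Delta(B,C,D)$ depends only on marginals contained in $X$.

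Next I would re-run the strong-subadditivity argument of Lemma~\ref{lemma:max-X-Y-without-pairing} verbatim, now for $\mu$: writing $Y=ABC_Y$ with $A=R$ and $C_Y\subset C$, weak monotonicity gives $I(A:C_Y\mid B)_\mu\le I(A:C\mid B)_\mu\le\Delta(B,C,D)_\mu=0$, and because $AB$ and $BC_Y$ are balls this forces $\mu|_Y$ to be the maximum-entropy state of $\Sigma(Y)$. As the maximum-entropy state is unique, $\mu|_Y=\rho^{\ket{\Psi}}_Y$ for \emph{every} element of the constrained set. Thus all constrained states share their marginal on $X$ (by the constraint) and on $Y$ (just derived).

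It then remains to reconstruct $\mu$ on $PQR$ from these two marginals, and the key is the Markov condition $I(P:R\mid Q)_\mu=0$. First note that the auxiliary conditions needed for merging hold automatically: splitting $Q=Q_LQ_R$, the relations $I(P:Q_R\mid Q_L)_{\mu|_X}=0$ and $I(Q_L:R\mid Q_R)_{\mu|_Y}=0$ are elementary-step Markov conditions (each amounts to removing a sub-ball of $Q$ without changing the topology of $X$, resp.\ $Y$), valid for all states by Axiom~{\bf A1}. Hence the merge $\tau_{PQR}$ of $\mu|_X$ and $\mu|_Y$ exists, lies in $\Sigma(X\cup Y)$, and is the \emph{unique} state satisfying $I(P:R\mid Q)=0$ with those marginals. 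To conclude $\mu=\tau=\rho^{\ket{\Psi}}_{X\cup Y}$ I must therefore verify $I(P:R\mid Q)_\mu=0$ for the constrained $\mu$. This comes from the decoupling already in hand: $I(R:C\mid B)_\mu=0$ with $B\subset Q$ and $\mathrm{int}(P)\subset C$ gives $I(R:\mathrm{int}(P)\mid B)_\mu=0$ by monotonicity, and enlarging $\mathrm{int}(P)\to P$ and $B\to Q$ promotes this to $I(P:R\mid Q)_\mu=0$.

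I expect this last promotion to be the main obstacle: one must choose the thickenings of $P$, $Q$, $R$ compatibly so that $Q$ genuinely shields $P$ from $R$, and check that enlarging the interior $C$ to the full ball $P$ and the collar $B$ to the full ball $Q$ alters the relevant entropies only by amounts that vanish by Axiom~{\bf A0}/{\bf A1} on balls (every state of $\Sigma(X\cup Y)$ is pinned on $P$ and on $Q$, which are balls). This is exactly where transversality (Definition~\ref{def:transverse}) and the fact that $P,Q,R$ are balls enter, with the region bookkeeping organized by the partitions of Fig.~\ref{fig:M-and-Psi}. Once $I(P:R\mid Q)_\mu=0$ is established, the uniqueness half of the merging lemma finishes the proof, since $\mu$ and $\rho^{\ket{\Psi}}_{X\cup Y}$ are then both Markov across $Q$ with identical marginals on $X$ and $Y$.
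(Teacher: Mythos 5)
Your first two steps are correct and match the paper's own preparation: the natural partition of $X$ transfers to any constrained $\mu$ because it involves only marginals inside $X$, and re-running Lemma~\ref{lemma:max-X-Y-without-pairing} pins $\mu|_Y$ to the unique maximum-entropy state of $\Sigma(Y)$. The overall architecture (two pinned marginals, a Markov condition across the overlap, uniqueness of the Markov extension) is also viable. The gap is the step you yourself flag: establishing $I(P:R|Q)_\mu=0$. Write $B=(\partial X)\cap Y$, $C=X\setminus \partial X$, $D=(\partial X)\cap\bar{Y}$, $C_Y=C\cap Y$, $C_{\bar Y}=C\cap\bar{Y}$, so that $Q=BC_Y$ and $P=DC_{\bar Y}$. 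The chain rule legitimately gives $I(R:C_{\bar Y}|Q)_\mu=0$ from $I(R:C|B)_\mu=0$, since discarding $C_Y$ from the tested slot (or moving it to the conditioner) can only decrease a conditional mutual information. But your next move, ``enlarging'' $C_{\bar Y}\to P$, \emph{adds} $D$ to the tested system, and CMI is monotone only under discarding, never under adding. Indeed, by the chain rule
\begin{equation}
 I(P:R|Q)_\mu \;=\; I(C_{\bar Y}:R|Q)_\mu \;+\; I(D:R|BC)_\mu ,
\end{equation}
so what you actually need is the extra vanishing $I(D:R|BC)_\mu=0$, and nothing in your argument produces it. It is not an ``amount that vanishes by Axiom {\bf A0}/{\bf A1} on balls'': the extended axioms yield Markov conditions only for regions attached by shielded, topologically trivial elementary steps, whereas $D$ and $R$ sit at opposite ends of the topologically nontrivial region $X\cup Y=\CM\setminus(\bar{X}\cap\bar{Y})$, and the step that would close this region up is a nontrivial handle attachment to which they do not apply.

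That this cannot be repaired by generic information-convex-set properties is shown by an explicit example. Take the toric code with $\CM=T^2$ and $X,Y$ the two annuli of Example~\ref{exmp:torus-pairing-manifold}, and let $\mu$ be the even mixture of the two extreme points of $\Sigma(X\cup Y)$ carrying sectors $1$ and $e$ on the puncture. This is a perfectly good element of $\Sigma(X\cup Y)$, obeying every consequence of {\bf A0}/{\bf A1}, yet $I(P:R|Q)_\mu=\ln 2$: the classical bit is recoverable from $PQ=X$ and from $QR=Y$ but not from $Q$ alone. Of course this $\mu$ violates the constraint, but that is exactly the point: Markovianity across $Q$ must be extracted from the constraint together with the natural partition, and your route channels the constraint in only through $I(R:C|B)_\mu=0$, which cannot reach $I(D:R|BC)_\mu$. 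The paper closes precisely this hole by conditioning not on $Q$ but on the thin buffer $B'=(\partial\bar{X})\cap Y$ just outside $X$ (Fig.~\ref{fig:M-and-Psi}(b)), so that \emph{all} of $X$ occupies the middle slot of a single quantity: $I(A':X|B')_\lambda\le \Delta(B',X,D')_\lambda$ by weak monotonicity, and $\Delta(B',X,D')_\lambda=0$ because every marginal entering it is pinned ($XB'$ and $XD'$ are the unique extensions of the constrained $X$-marginal, while $B'$ and $D'$ lie in balls), and for $\ket{\Psi_X(\CM)}$ itself the quantity vanishes by purity on $\CM$, which converts the natural partition of $X$ into one of the complement $\bar{X}$ (cf.\ the remark after Definition~\ref{def:pairing-vacuum-type}). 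Markov uniqueness with marginals on $XB'$ and on the ball $A'B'=Y\cap\bar{X}$ then determines $\lambda$ on $XA'B'=X\cup Y$. Note that your missing ingredient $I(D:R|BC)_\mu=0$ in fact follows from the paper's $I(A':X|B')_\mu=0$ by chain-rule manipulations, but not conversely; so the buffer-outside-$X$ argument is not an optional refinement of your approach — it is the crux your proof still needs.
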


\begin{proof} 
First, $\Sigma_{[\rho^{\ket{\Psi}}_X]}(X \cup Y)$ is nonempty because $\rho_{X \cup Y}^{| \Psi\rangle}$, the reduced density matrix of $| \Psi_X(\CM)\rangle$, is an element. We only need to show this is the only element.
Consider the regions $X A' B' D'$ in Fig.~\ref{fig:M-and-Psi}(b), where $B'D'=\partial \bar{X}$. Let $\lambda_{X \cup Y} \in \Sigma_{[\rho^{\ket{\Psi}}_X]}(X \cup Y)$, then it is possible to do elementary steps of extension and obtain $\lambda_{XA'B'D'}$. (This uses the fact that the intersection between $X$ and $Y$ is transverse.) The natural partition condition (Def.~\ref{def:natural}) implies 
\begin{equation}
    0 = \Delta(B',X, D')_{\lambda} \geq  I(A' :X | B')_{\lambda}~,
\end{equation}
where the inequality is a form of SSA.
Therefore, $I(A' :X | B')_{\lambda}=0$. As a quantum Markov state, $\lambda_{X\cup Y}$ is uniquely determined by the marginals on $XB'$ and $A'B'$. Because $A'B'= Y \cap \bar{X}$ is a ball, $\lambda$ reduced on it gives a unique state. Furthermore, since $\lambda$ is in the information convex set, its reduced density matrix on $XB'$ is determined by that on $X$, because $X$ can smoothly extend to $XB'$. (Again, this needs the transverse intersection condition.) Thus, $\lambda_{X\cup Y}$ is uniquely determined by its reduced density matrix on $X$, i.e., by $\rho_{X}^{|\Psi\rangle}$. This accomplishes the proof of the lemma.  
\end{proof}

Next, we give the proof of Lemma~\ref{lemma:pairing-sufficient}. Recall that $|1_X\rangle$ in Lemma~\ref{lemma:pairing-sufficient} is a valid choice of $| \Psi_{X} (\CM) \rangle$. Furthermore, we use the notation of constrained fusion spaces: $\Sigma_{[1_X]}$  is the information convex set constrained to agree with $| 1_X\rangle $ on $X$, and  $\Sigma_{[\sigma_Y]}$ is the information convex set constrained to agree with $\sigma_Y$ on $Y$.  

\begin{proof}[Proof of Lemma~\ref{lemma:pairing-sufficient}]
We only need to check the second statement of Definition~\ref{def:pairing-vacuum-type}, i.e., the statement related to the vacuum block completion involving $Y$.  Because $Y$ is embedded in $\mathbf{S}^n$, the vacuum state $\sigma_Y$ is defined.

Let $| 1_X \rangle$ be the canonical state on $\CM$ for $(\CM, \{X,\bar{X}\}, \CW \overset{\varphi}{\looparrowright} \mathbf{S}^n)$. The reduced density matrix of $| 1_X \rangle$ on $Y$ must be the maximum entropy state on $\Sigma(Y)$. (This is by Lemma \ref{lemma:max-X-Y-without-pairing}, noticing that $|1_X\rangle$ is a valid choice of state $|\Psi_X(\CM)\rangle$.) 
Because the maximum entropy state of the information convex set is a convex combination of all extreme points with nonzero probability, $\vert 1_X \rangle$ reduces to $Y$, giving a nonzero probability of getting the reference state $\sigma_Y$.

Therefore, the set $\Sigma_{[\sigma_Y]}(\CM)$ is nonempty, where the reference state is taken to be $|1_X\rangle$. 
To see why this is true, we observe that the state $P^1_{\partial Y} \vert 1_{X} \rangle$ is one element (not normalized) of $\Sigma_{[\sigma_Y]}(\CM)$, where $P^1_{\partial Y}$ is the projector to the vacuum sector on the embedded region $\partial Y$.

The remaining thing is to show $\bar{Y}^{\Diamond}$ is a building block, and to show that the vacuum block completion involving $Y$ provides a canonical state that matches the other canonical state on $\partial \CW$. 
The state $P^1_{\partial Y} \vert 1_{X} \rangle$ satisfies the conditions of Def.~\ref{def:Psi} with the roles of $X$ and $Y$ reversed.  
Therefore, from Lemma~\ref{X-is-true-v2} and the fact that $\CW$ is the thickening of $X\cup Y$ on $\CM$, we know $\Sigma_{[\sigma_Y]}(\CW)$ has a unique element. $\CW$ is $Y$ and $\bar{Y}^{\Diamond}$ glued on whole entanglement boundaries. By the associativity theorem, $\dim \mathbb{V}_{[1_{\partial \bar{Y}}]}(\bar{Y}^{\Diamond}) =1$.  
In other words, if we choose the vacuum sector on the embedded entanglement boundary $\partial \bar{Y}$ (embedded in $\mathbf{S}^n$) of $\bar{Y}^{\Diamond}$, we get an isolated extreme point of $\Sigma(\bar{Y}^{\Diamond})$, and the sector on $\partial \CW$ must be determined uniquely.  
This verifies that $\bar{Y}^{\Diamond}$ is a building block, and further shows that the $\hat{1}$ on $\partial \CW$ for the vacuum block completion $(\CM, \{ Y, \bar{Y} \}, \CW \overset{\varphi}{\looparrowright} \mathbf{S}^n)$ matches the sector on $\partial\CW$ from the other vacuum block completion $(\CM, \{ X, \bar{X} \}, \CW \overset{\varphi}{\looparrowright} \mathbf{S}^n)$. This completes the proof of Lemma~\ref{lemma:pairing-sufficient}. 
\end{proof}

\subsubsection{The main consequences of pairing}\label{subsec:consequence-main}

The consequences of pairing stated in this section are based on the full definition of pairing manifold. Thus, they are properties of pairing 
\begin{equation}\label{eq:pairing-convenient}
    (\CM, \{X,\bar{X}\}, \{Y, \bar{Y}\}, \CW \overset{\varphi}{\looparrowright} \mathbf{S}^n).
\end{equation}
All these statements generalize naturally to the context with gapped boundaries.

We start with the ``uncertainty relation" between the state on $X$ and the state on $Y$. When we discuss information convex sets of subregions of $\CM$ we implicitly take either $|1_X\rangle$ or $|1_Y\rangle$ as the reference state. This covers the observation in Refs.~\cite{Zhang2012,Jian2015} as special cases.

\begin{lemma}[Uncertainty relation]\label{lemma:max-X-Y} Given a pairing \eqref{eq:pairing-convenient},
	the state $|1_{X}\rangle $ reduced to $Y$ gives the maximum-entropy state of $\Sigma(Y)$. The same holds if we switch $X\leftrightarrow Y$ and/or $X\leftrightarrow \bar{X}$.
\end{lemma}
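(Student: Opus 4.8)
The plan is to recognize that Lemma~\ref{lemma:max-X-Y} (the uncertainty relation) is an almost immediate consequence of Lemma~\ref{lemma:max-X-Y-without-pairing}, which was already proved in the preparatory section. The key observation is that the canonical state $|1_X\rangle$ of a pairing manifold is a valid instance of the abstract state $|\Psi_X(\CM)\rangle$ of Definition~\ref{def:Psi}. Indeed, a pairing $(\CM, \{X,\bar X\}, \{Y,\bar Y\}, \CW\overset{\varphi}{\looparrowright}\mathbf{S}^n)$ supplies exactly the three data Definition~\ref{def:Psi} requires: condition~4 of Definition~\ref{def:pairing-vacuum-type} gives the transverse-intersection/ball topology (matching conditions 1(a),(b) of Lemma~\ref{lemma:pairing-sufficient}); the canonical state $|1_X\rangle$ satisfies {\bf A0} and {\bf A1} everywhere by Definition~\ref{def:canonical-M}; and condition~5 of Definition~\ref{def:pairing-vacuum-type} states precisely that $\{(\partial X)\cap Y,\ X\setminus\partial X,\ (\partial X)\cap\bar Y\}$ is a natural partition of $X$ for $|1_X\rangle$.

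First I would verify this identification carefully, making explicit that each hypothesis of Definition~\ref{def:Psi} is met. Then the first assertion of the lemma---that $|1_X\rangle$ reduced to $Y$ is the maximum-entropy state of $\Sigma(Y)$---follows directly by applying Lemma~\ref{lemma:max-X-Y-without-pairing} to $|\Psi_X(\CM)\rangle = |1_X\rangle$. No further computation is needed for this half.

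Next I would handle the ``the same holds if we switch'' clauses by invoking the symmetries recorded in the remark after Definition~\ref{def:pairing-vacuum-type}. The switch $X\leftrightarrow\bar X$ is a manifest symmetry of the pairing data, and $\bar X\cap\bar Y$, $\bar X\cap Y$ are balls by condition~4, so $|1_{\bar X}\rangle$ (equal to $|1_X\rangle$ up to the allowed equivalence, since the canonical state is attached to $\CW$ symmetrically in $X,\bar X$) is again a valid $|\Psi_{\bar X}(\CM)\rangle$; applying Lemma~\ref{lemma:max-X-Y-without-pairing} once more gives the maximum-entropy statement for the reduction to $Y$. For the switch $X\leftrightarrow Y$ (together with $\bar X\leftrightarrow\bar Y$), I would appeal to condition~2 of Definition~\ref{def:pairing-vacuum-type}, which guarantees that $(\CM,\{Y,\bar Y\},\CW\overset{\varphi}{\looparrowright}\mathbf{S}^n)$ is itself a vacuum block completion whose canonical state $|1_Y\rangle$ agrees with $|1_X\rangle$ on $\partial\CW$; the natural-partition condition~5 is symmetric under $X\leftrightarrow Y$, so $|1_Y\rangle$ is a valid $|\Psi_Y(\CM)\rangle$ and Lemma~\ref{lemma:max-X-Y-without-pairing} yields that $|1_Y\rangle$ reduced to $X$ is the maximum-entropy state of $\Sigma(X)$.

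The only genuinely delicate point---the part I would watch most carefully---is the bookkeeping of \emph{which} canonical state plays the role of $|\Psi|$ in each of the four symmetric cases, and the sense in which $|1_X\rangle$ and $|1_Y\rangle$ are interchangeable once we have passed through condition~2. Since the definition only pins $|1_{\{\CV_i\}}\rangle$ down up to product degrees of freedom and local unitaries at completion points, and since reduced density matrices on $X$ or $Y$ are insensitive to those ambiguities (the completion points lie outside the balls being traced), this causes no actual problem; but the argument should state plainly that entropy-optimality of a reduced state is invariant under these equivalences. Beyond that, the proof is a direct citation of Lemma~\ref{lemma:max-X-Y-without-pairing} applied four times, with the pairing symmetries supplying the inputs.
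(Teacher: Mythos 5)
Your proposal is correct and follows essentially the same route as the paper: the paper's proof likewise observes that $|1_X\rangle$ is a valid choice of the state $|\Psi_X(\CM)\rangle$ from Definition~\ref{def:Psi}, invokes Lemma~\ref{lemma:max-X-Y-without-pairing}, and dismisses the switched cases as identical by the symmetries of the pairing. Your extra bookkeeping (verifying each hypothesis of Definition~\ref{def:Psi} and noting that reduced density matrices on $X$ or $Y$ are insensitive to the local-unitary ambiguity at completion points) only spells out what the paper leaves implicit; aside from a harmless citation slip (transversality is condition~3, the ball condition is condition~4 of Definition~\ref{def:pairing-vacuum-type}), nothing is missing.
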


\begin{proof}
We only need to prove one of the four statements. The proofs of other statements are identical.  
The conclusion follows from Lemma \ref{lemma:max-X-Y-without-pairing}, upon realizing that $|1_X\rangle$ is a valid choice of the state $| \Psi_X(\CM) \rangle $ there. This completes the proof.
\end{proof}
\begin{remark}
    Recall that $|1_X\rangle$ reduces to a minimum entropy state on $X$. Furthermore, any minimum entropy state on $\Sigma(X)$ has the same natural partition (Lemma~\ref{lemma:BCD-in-balls}). The uncertainty relation then implies that any pure state in $\Sigma(\CM)$ which reaches the minimum entropy on $X$ must reduce to the maximum entropy state of $\Sigma(Y)$.
\end{remark}  

We shall denote the maximum-entropy state of $\Sigma(X)$ as $\rho^{\star}_X$. A useful fact (which follows from the Structure Theorem of 
\cite{shi2020fusion, Shi:2020rne}) is that the probability to find sector $I \in \calC_{\partial X}$ on $\partial X$ in the state $\rho^{\star}_X$ is: 
\begin{equation}\label{eq:prob-I-fact}
    P_{I}(X) =  \frac{N_I(X)d_I}{\sum_{J\in \CC_{\partial X}} N_J(X) d_J},\quad \forall \, I \in \CC_{\partial X}.
\end{equation} Here $d_I$ is the quantum dimension of the superselection sector $I \in \calC_{\partial X}$, defined in Eq.~\eqref{eq:quantum-dim-definition-1}. (Note that we can use Eq.~\eqref{eq:quantum-dim-definition-1} because $\partial X$ is embedded in $\mathbf{S}^n$.   Recall that $X$ here participates in a pairing manifold, and it is a building block with its immersed boundary filled in, according to Def.~\ref{def:vacuum-completion}.)

\begin{Proposition} \label{prop:pairing-manifold-fusion-space} 
Given a pairing \eqref{eq:pairing-convenient}, $\Sigma(X)\cong\Sigma(\bar X)$ and  $\Sigma(Y)\cong\Sigma(\bar Y)$. Moreover, the fusion spaces match in each sector:
	\begin{equation}\label{eq:matching-relation}
		\begin{aligned}
			\dim \mathbb{V}_a(X)= \dim \mathbb{V}_a(\bar{X}), \quad \forall a \in \calC_{\partial X},\\
				\dim \mathbb{V}_{\alpha}(Y)= \dim \mathbb{V}_{\alpha}(\bar{Y}), \quad \forall \alpha \in \calC_{\partial Y}.\\
		\end{aligned}		
	\end{equation}
	Furthermore,\footnote{Here, $\mathbb{V}(\mathcal{M})$ is  the Hilbert space whose state space (i.e., space of density matrices) is isomorphic to $\Sigma(\mathcal{M})$.}
	\begin{equation}\label{eq:associativity-for-pairing} 
	    \sum_{a \in \CC_{\partial X}} \dim \mathbb{V}_a(X)^2 = \sum_{\alpha\in \CC_{\partial Y}} \dim \mathbb{V}_{\alpha}(Y)^2 = \dim \mathbb{V}(\CM).
	\end{equation}
\end{Proposition}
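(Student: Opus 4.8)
The plan is to prove the three assertions in the order: the sector-by-sector fusion-space matching \eqref{eq:matching-relation}, then the convex-set isomorphisms $\Sigma(X)\cong\Sigma(\bar X)$ and $\Sigma(Y)\cong\Sigma(\bar Y)$, and finally the dimension sum rule \eqref{eq:associativity-for-pairing}. The engine of the whole argument is the uncertainty relation (Lemma~\ref{lemma:max-X-Y}) combined with the purity of the canonical states on $\CM$.

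First I would isolate the following elementary fact about a closed manifold. Suppose $|\psi\rangle$ is a pure state on $\CM=A\bar A$ satisfying the axioms everywhere, and suppose its reduction to $A$ is the maximum-entropy state $\rho^{\star}_A$ of $\Sigma(A)$. I claim $\dim\mathbb{V}_s(A)\le\dim\mathbb{V}_s(\bar A)$ for every $s\in\calC_{\partial A}$. To see this, use the Hilbert-space form of the associativity theorem for the cut along $\partial A=\partial\bar A$, namely $\mathbb{V}(\CM)\cong\bigoplus_s\mathbb{V}_s(A)\otimes\mathbb{V}_s(\bar A)$; since $|\psi\rangle$ is a unit vector in $\mathbb{V}(\CM)$, its component in block $s$ lives in $\mathbb{V}_s(A)\otimes\mathbb{V}_s(\bar A)$ and hence has Schmidt rank at most $\dim\mathbb{V}_s(\bar A)$ across the cut. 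On the other hand, by the structure theorem the maximum-entropy state is maximally mixed inside each sector, so the rank of $\rho^{\star}_A$ in block $s$ equals $\dim\mathbb{V}_s(A)$. Comparing the two gives the inequality.

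Next I would feed the uncertainty relation into this fact. The canonical state $|1_X\rangle$ is pure and satisfies the axioms everywhere on $\CM$ (a consequence of Definition~\ref{def:canonical-M}), and by Lemma~\ref{lemma:max-X-Y-without-pairing} applied to the transverse intersections of $X$ with $Y$ and with $\bar Y$ (the relevant natural partition of $X$ is symmetric under exchanging the roles of $Y$ and $\bar Y$), it reduces to the maximum-entropy state of $\Sigma(Y)$ and, equally, of $\Sigma(\bar Y)$. Applying the elementary fact with $(A,\bar A)=(Y,\bar Y)$ and then with $(A,\bar A)=(\bar Y,Y)$ sandwiches the dimensions, yielding $\dim\mathbb{V}_\alpha(Y)=\dim\mathbb{V}_\alpha(\bar Y)$ for all $\alpha\in\calC_{\partial Y}$. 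Running the identical argument with the pure state $|1_Y\rangle$, which by the symmetric role of $Y$ reduces to the maximum-entropy states of both $\Sigma(X)$ and $\Sigma(\bar X)$, gives $\dim\mathbb{V}_a(X)=\dim\mathbb{V}_a(\bar X)$; this is \eqref{eq:matching-relation}. The isomorphisms $\Sigma(X)\cong\Sigma(\bar X)$ and $\Sigma(Y)\cong\Sigma(\bar Y)$ then follow: $X$ and $\bar X$ share the same entanglement boundary, so the sector sets $\calC_{\partial X}=\calC_{\partial\bar X}$ and the quantum dimensions $d_a$ agree; with the fusion-space dimensions now also equal in every sector, the two information convex sets decompose (structure theorem) into state spaces of Hilbert spaces of equal dimension over identical labels with identical entropy offsets, hence are isomorphic as convex sets.

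Finally, the sum rule \eqref{eq:associativity-for-pairing} is a one-line consequence of the associativity theorem (Theorem~\ref{thm:associativity}). Because $\CM$ is closed, cutting along the sectorizable hypersurface $\partial X$ carries no outer boundary labels and the formula collapses to $\dim\mathbb{V}(\CM)=\sum_{a\in\calC_{\partial X}}\dim\mathbb{V}_a(X)\,\dim\mathbb{V}_a(\bar X)$; substituting the matching \eqref{eq:matching-relation} gives $\sum_a\dim\mathbb{V}_a(X)^2$, and the cut along $\partial Y$ gives $\sum_\alpha\dim\mathbb{V}_\alpha(Y)^2$ identically (the two canonical states agree on $\partial\CW$, so $\dim\mathbb{V}(\CM)$ is unambiguous). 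I expect the main obstacle to be the elementary fact of the second paragraph: one must justify that the sector grading on $\partial A$ block-diagonalizes the pure state compatibly on the two sides and that the Schmidt rank within a sector is genuinely bounded by the partner fusion-space dimension. This is exactly where the Hilbert-space refinement of the associativity theorem---rather than only its dimension count---is needed, and it is the step most likely to require care about whether the grading is respected at the level of states and not merely dimensions.
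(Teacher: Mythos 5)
Your overall skeleton---canonical states reduce to maximum-entropy states on both halves of the complementary cut, sandwich the dimensions, then invoke associativity for the sum rule---is the same as the paper's, but the load-bearing step, your ``elementary fact,'' has a genuine gap. The structure theorem gives $\Sigma_s(A)\cong\mathcal{S}(\mathbb{V}_s(A))$ only as an isomorphism of \emph{convex sets} (preserving entropy differences and distances); it does not identify the rank of a physical density matrix with its rank in the abstract fusion space. The maximum-entropy state restricted to sector $s$ is a density matrix on $\calH_A$ of enormous rank, not rank $\dim\mathbb{V}_s(A)$, and a pure state on $\CM$ is a vector in $\calH_A\otimes\calH_{\bar A}$ whose sector-$s$ component has correspondingly huge Schmidt rank across the physical cut. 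To bound the ``universal part'' of that Schmidt rank by $\dim\mathbb{V}_s(\bar A)$, and to equate the in-sector rank of $\rho^{\star}_A$ with $\dim\mathbb{V}_s(A)$, you would need a Hilbert-space-level factorization of the form $\calH_{\rm supp}\cong\bigoplus_s \mathbb{V}_s(A)\otimes\mathbb{V}_s(\bar A)\otimes(\text{local edge d.o.f.})$ compatible with partial traces---i.e.\ a Hilbert-space refinement of the associativity theorem. The associativity theorem in this framework (including its constrained form, Theorem~\ref{thm:associativity}) is purely a dimension count, and no such refinement is established; so the step you flag as ``most likely to require care'' is not merely delicate, it is unavailable with the paper's toolkit as written.

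The paper closes exactly this step by different means, which also shows how to repair your argument: it works with sector \emph{probabilities} instead of ranks. In the maximum-entropy state the probability of sector $a$ on $\partial X$ is $P_a(X)=N_a(X)d_a/\sum_{J}N_J(X)d_J$ (Eq.~\eqref{eq:prob-I-fact}), and in the global state $|1_Y\rangle$ (which by Lemma~\ref{lemma:max-X-Y} reduces to $\rho^{\star}_X$ and $\rho^{\star}_{\bar X}$) the sector distributions on $\partial X$ and $\partial\bar X$ must coincide, since the two thickened boundaries lie on opposite sides of the same cut and their labels are perfectly correlated. Setting $a=1$ and using $N_1(X)=N_1(\bar X)=1$---an input from the vacuum block completion that your proposal never uses---equates the two denominators (capacities), after which a general $a$ yields $N_a(X)d_a=N_a(\bar X)d_a$, hence the matching \eqref{eq:matching-relation}. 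This argument needs only density-matrix-level facts (probabilities, quantum dimensions), no rank identification and no Hilbert-space factorization, which is precisely why it goes through. Your remaining steps---the convex-set isomorphisms from the structure theorem once dimensions match, and the sum rule \eqref{eq:associativity-for-pairing} from associativity---agree with the paper.
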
 
In Eq.~\eqref{eq:matching-relation}, we have adopted an obvious isomorphism between $\calC_{\partial X}$ and $\calC_{\partial \bar{X}}$ as follows. Considered is the isomorphism $\Sigma(\partial X) \cong \Sigma(\partial \bar{X})$ by the following path: first, extend $\partial X$ to $\partial X  \partial \bar{X}$ and then restrict the region to $\partial \bar{X}$. This path always exists because $\partial X$ and $\partial \bar{X}$ lie on the opposite sides of an entanglement boundary. The same convention is adopted for matching the sector labels of $\calC_{\partial Y}$ and $\calC_{\partial \bar{Y}}$.

\begin{proof}
To derive Eq.~\eqref{eq:matching-relation}, we consider the matching of probabilities of finding $a \in \calC_{\partial X}$ on the entanglement boundaries $\partial X$ and $\partial \bar{X}$, in a particular state. 
To prove the first line of Eq.~\eqref{eq:matching-relation}, we consider the state $| 1_Y \rangle$. According to Lemma~\ref{lemma:max-X-Y}, it reduces to the maximum entropy state $\rho^{\star}_X$ and $\rho^{\star}_{\bar{X}}$. Therefore, the probability of having $a\in \calC_{\partial X}$ on $\partial X$ for $\rho^{\star}_X$ and $\partial \bar{X}$ for $\rho^{\star}_{\bar{X}}$ must match. The rest of the analysis follows from Eq.~\eqref{eq:prob-I-fact}. First, we take $a=1$, $d_1=1$. It follows from the properties of vacuum block completion that $N_1(X)=N_1(\bar{X}) =1$. 
Plugging in Eq.~\eqref{eq:prob-I-fact} and $P_1(X) = P_1(\bar{X})$, we find 
\begin{equation}\label{eq:denominator}
    \sum_{J \in \calC_{\partial X}} N_J(X) d_J = \sum_{J \in \calC_{\partial X}} N_J( \bar{X} ) d_J.
\end{equation}
On the right-hand side, we wrote $\calC_{\partial X}$ instead of $\calC_{\partial \bar{X}}$ because the two sets are identical. Now take an arbitrary $a \in \calC_{\partial X}$ and plug it into Eq.~\eqref{eq:prob-I-fact} again. With Eq.~\eqref{eq:denominator} canceling the denominator, we find $N_a (X) d_a = N_a(\bar{X}) d_a$ for any $a \in \CC_{\partial X}$. This implies the first line of Eq.~\eqref{eq:matching-relation}, since $N_a(X) \equiv \dim \mathbb{V}_a(X)$.
The second line of Eq.~\eqref{eq:matching-relation} is derived similarly.

From the associativity theorem, we have 
\begin{equation}
   \dim \mathbb{V}(\CM)=\sum_{a \in \mathcal{C}_{\partial X}} \dim \mathbb{V}_{a}(X) \cdot\dim \mathbb{V}_{a}(\bar{X})=\sum_{\alpha \in \mathcal{C}_{\partial Y}} \dim\mathbb{V}_{\alpha}(Y) \cdot \dim \mathbb{V}_{\alpha}(\bar{Y}). 
\end{equation}
Plugging in 
Eq.~\eqref{eq:matching-relation}, we obtain Eq.~\eqref{eq:associativity-for-pairing}.  
\end{proof}

\def\capacitySymbol{{\cal Q}}    

The next few statements are best understood with the concept ``capacity," introduced as follows. 
\begin{definition}[Capacity]\label{def:capacity}
    We define the capacity of an immersed region $\Omega$, as
\begin{equation}
    \capacitySymbol(\Omega) = \exp{ \left({S_{\rm max}(\Omega)- S_{\rm min}(\Omega)} \right)},
\end{equation} 
where $S_{\rm max}(\Omega)$ is the entropy of the maximum entropy state in $\Sigma(\Omega)$ and $S_{\rm min}(\Omega)$ is the entropy of a minimum entropy state of $\Sigma(\Omega)$.
\end{definition}
The physical meaning of capacity is the exponential of the entropy (this is the effective number of states) that $\Omega$ can absorb without changing the local reduced density matrix.  As with the information convex set $\Sigma(X)$, the capacity $\capacitySymbol(\Omega)$ depends on the choice of the reference state.

If $\Omega$ is embedded in a ball (or more generally, immersed but has all its entanglement boundaries embedded, aligned with the context of Eq.~\eqref{eq:prob-I-fact}), we can write
\begin{equation}\label{eq:capacity-standard}
    \capacitySymbol(\Omega) = \sum_{J \in \calC_{\partial\Omega}} N_J(\Omega) d_J.
\end{equation}
If $S$ is sectorizable, the capacity is the square of the ``total quantum dimension for the sectorizable region $S$":
\begin{equation}
    \capacitySymbol(S) = \sum_{a \in \calC_{S}} d_a^2.
\end{equation} 
For a compact manifold $\CM$ (i.e., without entanglement boundaries), the capacity is the Hilbert space dimension, $\capacitySymbol(\CM)= \dim \mathbb{V}(\CM)$.

Proposition~\ref{prop:total-qds-equal-v2} below says, given a pairing \eqref{eq:pairing-convenient}, the capacity must be the same for both $X$ and $Y$. 
Porism~\ref{porism:total-quantum-dimension} below implies that if there are two pairings of the same $\CM$, where one involves $X, Y$ and another involves $X', Y'$, the capacity of $X$ is the same as that of $X'$, under an extra condition.

\begin{Proposition}\label{prop:total-qds-equal-v2}
	Given a pairing \eqref{eq:pairing-convenient}, the Hilbert space dimensions obey
	\begin{equation}
	    \sum_{a \in \CC_{\partial X}}  \dim \mathbb{V}_a(X) \cdot d_a = \sum_{\alpha\in \CC_{\partial Y}}  \dim \mathbb{V}_\alpha(Y) \cdot d_\alpha.  \label{eq:d-N-defect-too}
	\end{equation}
\end{Proposition}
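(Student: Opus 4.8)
The plan is to recognize that the two sides of Eq.~\eqref{eq:d-N-defect-too} are precisely the capacities $\capacitySymbol(X)$ and $\capacitySymbol(Y)$, and then to prove the ``capacity matching'' $\capacitySymbol(X)=\capacitySymbol(Y)$. Since the pairing is built from two vacuum block completions, the entanglement boundaries $\partial X$ and $\partial Y$ are embedded in $\mathbf{S}^n$ (they are the embedded components of the respective building blocks, and the only completion points are healed interior points), so their only entanglement boundaries are embedded. Hence Eq.~\eqref{eq:capacity-standard} applies and gives $\sum_{a\in \CC_{\partial X}}\dim\mathbb{V}_a(X)\,d_a=\capacitySymbol(X)$ and $\sum_{\alpha\in \CC_{\partial Y}}\dim\mathbb{V}_\alpha(Y)\,d_\alpha=\capacitySymbol(Y)$, reducing the whole claim to $\capacitySymbol(X)=\capacitySymbol(Y)$.

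To prove this, I would pass to vacuum probabilities. Setting $I=1$ in Eq.~\eqref{eq:prob-I-fact}, using $N_1=d_1=1$ and Eq.~\eqref{eq:capacity-standard}, gives $P_1(X)=1/\capacitySymbol(X)$, where $P_1(X)$ is the probability of the vacuum sector on $\partial X$ in the maximum-entropy state $\rho^\star_X$ of $\Sigma(X)$. By the uncertainty relation (Lemma~\ref{lemma:max-X-Y}), $\rho^\star_X$ is the reduction of $|1_Y\rangle$ to $X$, so $P_1(X)=\|P^1_{\partial X}|1_Y\rangle\|^2$, where $P^1_{\partial X}$ projects onto the vacuum sector on the embedded collar $\partial X$. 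The key claim is $P^1_{\partial X}|1_Y\rangle\propto |1_X\rangle$: the normalized projected state satisfies {\bf A0} and {\bf A1} everywhere and carries the vacuum on $\partial X$, hence lies in the one-dimensional constrained space with $\dim\mathbb{V}_{[1_{\partial X}]}(\CM)=1$ (recorded below Eq.~\eqref{eq:1_X}), which also contains $|1_X\rangle$. Since $|1_X\rangle$ carries the vacuum on $\partial X$, so that $\langle 1_X|P^1_{\partial X}=\langle 1_X|$, taking the overlap with $|1_X\rangle$ identifies the proportionality constant and yields $P_1(X)=|\langle 1_X|1_Y\rangle|^2$.

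The conclusion then follows from symmetry. Running the identical argument with $X$ and $Y$ exchanged (legitimate because a pairing is symmetric under $X\leftrightarrow Y$, $\bar X\leftrightarrow\bar Y$) gives $P_1(Y)=|\langle 1_Y|1_X\rangle|^2$, with $P_1(Y)=1/\capacitySymbol(Y)$. Because $|\langle 1_X|1_Y\rangle|^2=|\langle 1_Y|1_X\rangle|^2$, we obtain $P_1(X)=P_1(Y)$ and therefore $\capacitySymbol(X)=\capacitySymbol(Y)$, which is Eq.~\eqref{eq:d-N-defect-too}. As a by-product the common value is $\capacitySymbol(X)=\capacitySymbol(Y)=|\langle 1_X|1_Y\rangle|^{-2}$, exhibiting the pairing overlap as the reciprocal square root of the capacity.

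The main obstacle is making the overlap $\langle 1_X|1_Y\rangle$ well defined and symmetric, together with the claim $P^1_{\partial X}|1_Y\rangle\propto|1_X\rangle$. For the overlap to make sense, $|1_X\rangle$ and $|1_Y\rangle$ must live in the same Hilbert space on $\CM$; this is ensured because both completions use the same punctured region $\CW$, the same immersion $\varphi$, and agree in the abelian sector $\hat 1$ on $\partial\CW$ (condition 2 of Definition~\ref{def:pairing-vacuum-type}), so they heal the same completion points. The residual ambiguity of the canonical states (extra product degrees of freedom and local unitaries at the completion points, as noted after Definition~\ref{def:canonical-M}) must be fixed consistently for both states so that $|\langle 1_X|1_Y\rangle|$ is unambiguous. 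The other delicate point is that the sector projection $P^1_{\partial X}$ preserves the axioms away from $\partial X$, so the projected state indeed belongs to $\mathbb{V}_{[1_{\partial X}]}(\CM)$; this follows because measuring a superselection sector on an embedded region does not disturb the quantum Markov structure elsewhere. All remaining manipulations are routine.
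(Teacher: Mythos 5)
Your proof is correct, and it takes a genuinely different route from the paper's. The paper works purely with entropies: setting $V=X\cup Y$, it computes the single quantity $S(\rho^\star_V)-S_{\rm min}(V)$ in two ways, using the Markov partition $V=XA'B'$ of Fig.~\ref{fig:M-and-Psi}(b) and the minimum-entropy state $\rho^{\ket{1_X}}_V$ to collapse the difference to $S(\rho^\star_X)-S(\rho^{\ket{1_X}}_X)=\ln\capacitySymbol(X)$, and then repeating with the roles of $X$ and $Y$ exchanged to get $\ln\capacitySymbol(Y)$. You instead work at the Hilbert-space level, identifying both capacities with $|\langle 1_X|1_Y\rangle|^{-2}$ through vacuum-sector probabilities and the projection identity $P^1_{\partial X}\ket{1_Y}\propto\ket{1_X}$. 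Your argument is not circular---Eq.~\eqref{eq:prob-I-fact}, Lemma~\ref{lemma:max-X-Y}, $N_1(X)=N_1(Y)=1$, and $\dim\mathbb{V}_{[1_{\partial X}]}(\CM)=1$ all precede the proposition---and it buys, as a by-product, the $(1,1)$ entry of the pairing matrix, $|\langle 1_X|1_Y\rangle|=1/\mathcal{D}_{\rm pair}$, which the paper only extracts later in \S\ref{sec:S-matrix-closed-manifolds}. What the paper's route buys is robustness: it compares only reduced density matrices on $V\subset\CW$, away from the completion points, so it never has to make sense of $\ket{1_X}$ and $\ket{1_Y}$ as vectors in a common Hilbert space or fix the completion-point gauge, which is the most delicate point of your argument (and which you correctly flag). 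Two refinements would tighten your write-up. First, rather than arguing that $P^1_{\partial X}\ket{1_Y}$ ``satisfies the axioms everywhere,'' invoke the direct-sum decomposition $\mathbb{V}(\CM)=\oplus_{a\in\CC_{\partial X}}\mathbb{V}_{[a_{\partial X}]}(\CM)$ (Lemma~\ref{lemma:sub-information-convex-set}, statement 3, applied to the sectorizable region $\partial X$), so that $P^1_{\partial X}$ is literally the projector onto the one-dimensional summand spanned by $\ket{1_X}$; this is the same move the paper itself makes with $P^1_{\partial Y}\ket{1_X}$ in the proof of Lemma~\ref{lemma:pairing-sufficient}. Second, note explicitly that $P_1(Y)=1/\capacitySymbol(Y)$ requires $N_1(Y)=1$, which holds because $Y$ is itself a building block of the second vacuum block completion.
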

In other words, the capacity for $X$ and $Y$ match, $\capacitySymbol(X)= \capacitySymbol(Y)$.

\begin{proof}
On $\CM$, let $V= X \cup Y$.
To derive Eq.~\eqref{eq:d-N-defect-too}, we consider two different ways to compute $S_{\rm max}(V)-S_{\rm min}(V)$, which is the entropy difference between the maximum entropy state\footnote{The maximum entropy state is always unique, for any convex set. If there were more than one, a higher-entropy state could be made by mixing them.} ($\rho^{\star}_V$) and any minimum entropy state of $\Sigma(V)$. 
 Note that there can be multiple minimum-entropy states, and two will be useful soon: (i) the reduced density matrix of $\vert 1_X\rangle$ on $V$, which we denote as $\rho^{|1_X\rangle}_{V}$, and (ii)  the reduced density matrix of $\vert 1_Y \rangle$ on $V$, which we denote as $\rho^{|1_Y\rangle}_{V}$.

To compute the entropy difference $S(\rho^{\star}_V) -S(\rho^{|1_X\rangle}_V)$, we consider the partition $V=XA'B'$ in Fig.~\ref{fig:M-and-Psi}(b). We notice that both states are quantum Markov states for this partition, $I(A' :X|B')=0$, and they have identical reduced density matrix on $A'B'$. (For the maximum entropy state $\rho^{\star}_{V}$, we know it is the merged state as follows, $\rho^{\star}_{V} = \rho^{\star}_{XB'} \bowtie \sigma_{A'B'}$. 
Therefore, 
\begin{equation}
	\begin{aligned}
		S(\rho^{\star}_V) -S(\rho^{|1_X\rangle}_V) &=  S(\rho^{\star}_{XB'}) -S(\rho^{|1_X\rangle}_{XB'})\\
		&=S(\rho^{\star}_X) -S(\rho^{|1_X\rangle}_X)\\
		&=\ln (\sum_{a \in \CC_{\partial X}} \dim \mathbb{V}_{a}(X) \cdot d_{a}).
	\end{aligned}
\end{equation}
In the first line, we used the quantum Markov chain condition to write $S_{V}=S_{XB'} + S_{A'B'} -S_{B'}$ for both states and cancel the terms on $A'B'$ and $B'$. In the second line, we used the fact that $X$ can be deformed to $XB'$ by elementary steps of extensions. The third line follows from formula \eqref{eq:capacity-standard} for the capacity, noticing that $\rho_{X}^{|1_X\rangle}$ is a minimum entropy state of $\Sigma(X)$.
A parallel analysis gives $S(\rho^{\star}_{V}) -S(\rho^{|1_X\rangle}_V)=\ln(\sum_{\alpha \in \CC_{\partial Y}} \dim \mathbb{V}_{\alpha}(Y) \cdot d_{\alpha})$.  By comparing these two expressions, we are able to verify the claim. 
\end{proof}

\begin{remark}
    An analog of Eq.~\eqref{eq:d-N-defect-too} is true under weaker assumptions about $X$ and $Y$.  We leave this generalization, which is relevant for the punctured $S$-matrix or the defect $S$-matrix of \cite{Barkeshli:2014cna}, for future work.  
\end{remark} 
 
    If a region $Y$ participating in a pairing manifold is embedded, 
    its capacity $\capacitySymbol(Y)$ can be related to the topological entanglement entropy (TEE) of the phase of matter, as defined in \cite{Levin2006}.  Specifically, the partition of $Y = ABC_Y$ with $C_Y = Y \cap (X\setminus \partial X)$, $B = Y \cap \partial X$, and $A = Y \setminus X$ as in Fig.~\ref{fig:M-and-Psi}(a) gives 
   \be \label{eq:levin-wen-TEE-1} \text{TEE} =  I(A:C_Y|B)_{\sigma_Y} = \ln \capacitySymbol(Y).\ee
    For the case of $\CM = T^2$, this is the Levin-Wen partition of the annulus.   
    Eq.~\eqref{eq:levin-wen-TEE-1} follows because 
    \be \label{eq:levin-wen-TEE}
     I(A:C_Y|B)_{\sigma_Y}= - I(A:C_Y|B)|^{\rho^\star_Y}_{\sigma_Y}
    =  S_Y|^{\rho^\star_Y}_{\sigma_Y} = \ln \capacitySymbol(Y)
    \ee
    where $\rho_Y^\star$ is the maximum entropy state in $\Sigma(Y)$.  In the first step, we used the fact that $X$ cuts $Y$ into a natural partition:
    \be 0 =  \Delta(B,C,D)_{\ket{1_X}} 
 \buildrel{\text{SSA}}\over{\geq}  I(A:C|B)_{\ket{1_X}} 
 \buildrel{\text{SSA}}\over{\geq} I(A:C_Y|B)_{\ket{1_X}}  
 \buildrel{\text{Lemma}~\ref{lemma:max-X-Y}}\over{=}  I(A:C_Y|B)_{\rho^{\star}_Y} .
\ee
    In the second step of 
    \eqref{eq:levin-wen-TEE} we used the fact that $AB$ and $BC_Y$ are balls and $B$ is contained in a ball.
    
    When both $X$ and $Y$ are embedded, Proposition~\ref{prop:total-qds-equal-v2} implies that the TEE computed from $X$ is the same as that computed from $Y$. We shall give yet another general perspective of TEE in \cite{paper-two}, recast into  $\Delta(B,C,D)_{\sigma}$ for some $BCD$ partition of a ball. These partitions are analogs of a partition considered by Kitaev~\cite{Kitaev-2011-BCD}; see around the 27th minute. Also see \cite{Shi:2020jxd,Kim-Brown2014} for examples with domain walls or gapped boundaries.

A consequence of the proof of Proposition~\ref{prop:total-qds-equal-v2} is the following: 
\begin{porism}\label{porism:total-quantum-dimension}
If regions $X$ and $X'$ participate in pairing manifolds with the same $\CW$, with canonical states in the same sector of $\Sigma(\partial \CW)$,  then they have the same capacity,
\be \sum_{a \in \CC_{\partial X}} d_a N_a(X) 
= \sum_{a' \in \CC_{\partial X'}} d_{a'} N_{a'}(X') . \ee
\end{porism}

\begin{proof}
    In the proof of Proposition~\ref{prop:total-qds-equal-v2}, we saw that the TEE is a property of the immersion $\CW$: it is the entropy difference between the maximum-entropy state on $V$ and any minimum-entropy state on $V$.  The immersion $\CW$ is an extension of $V$.  
\end{proof}

A non-trivial example we will give in the next section (Example \ref{example-of-ball-minus-torus}) is $X = $ genus-two handlebody (or $X = $ ball minus two balls) and $X' = $ ball minus torus; each of these regions participates in a certain pairing of the pairing manifold $\# 2( S^2 \times S^1)$, with the same immersion of 
$\CW = \# 2( S^2 \times S^1) \setminus B^3$, and therefore all have the same capacity. 

\section{Examples of pairing manifolds}\label{sec:pairing-examples} 

Here we demonstrate that the examples in Table \ref{table:unitarity-combined} all satisfy the requirements in the definition of pairing manifold. These include examples in different space dimensions as well as cases with gapped boundaries.  It is worth noting that not every compact manifold is a pairing manifold (see \S\ref{subsub:non-examples}), and a manifold can be a pairing manifold in more than one way (see Example \ref{example-of-ball-minus-torus}).

This section is organized as follows. In \S\ref{subsub:example-bulk}, we present examples of pairing manifolds in the bulk, namely, $(\CM, \{ X, \bar{X} \}, \{ Y, \bar{Y} \}, \CW \overset{\varphi}{\looparrowright} \mathbf{S}^n )$. In \S\ref{subsub:example-bdy}, we present examples of pairing manifolds for systems with gapped boundaries. They are of the form $(\CM, \{ X, \bar{X} \}, \{ Y, \bar{Y} \}, \CW \overset{\varphi}{\looparrowright} \underline{\underline{\mathbf{B}}}^n )$. In \S\ref{subsub:natual-partitions}, we verify all the natural partition requirements used in these examples. In \S\ref{subsub:non-examples}, for pedagogical reasons, we provide various non-examples.

In our way of presenting the examples, we develop some concise diagrams which contain all the relevant information of a pairing manifold in a compact way. The rules of the diagram are explained below 
and we will use these rules throughout the section.
Some efforts are taken to verify that the conditions for pairing manifolds are satisfied in these examples.   Furthermore, we translate the consequences of pairing manifolds (described in \S\ref{subsec:consequences}) to these concrete examples and explain their physical meaning.

The following notation for immersed regions will be useful.  In the examples below, we will form immersed unions of regions $X$ and $Y$ such that they only intersect on a ball ${\color{intersectionColor}\CIRCLE}$; elsewhere, they are disjoint, even when they occupy the same point in the ambient space.  More precisely, we define 
\be \label{eq:immersed-union}  X \cup_{\color{intersectionColor}\CIRCLE} Y 
\equiv X \amalg Y / \sim \ee 
where $X \amalg Y$ denotes the disjoint union, and the equivalence relation identifies\footnote{This identification can be thought of as an analog of the ``plumbing" used extensively in topology~\cite{Gompf1994}; see Page 129 therein.} 
each point in $ X \cap {\color{intersectionColor}\CIRCLE}$ with the corresponding point in 
$Y \cap {\color{intersectionColor} \CIRCLE}$. 
 In the figures with room for ambiguity, we will color the shared region of $X$ and $Y$ using {\color{intersectionColor}this color}.

For each example, we show two kinds of diagrams:
\begin{enumerate}
    \item The global view: This is a figure with $\CM$ shown and an indication of how it is partitioned into $X\bar{X}$ and $Y \bar{Y}$. Also shown are the completion point(s). It is understood that the complement of the completion point(s) is $\CW$. This diagram, however, misses the information about the immersion map $\varphi$. (The illustration of the completion point is convenient because any region immersed in $\CM$ that does not touch the completion points must be immersed in $\mathbf{S}^n$ or $\underline{\underline{\mathbf{B}}}^n$, whichever applies.) 
    
    \item The immersion view: it contains the information of the immersion  $X \cup_{\color{intersectionColor}\CIRCLE} Y \overset{\varphi}{\looparrowright} {\mathbf{S}^n}$ (or $\underline{\underline{\mathbf{B}}}^n$). For all the examples we discuss in this section, $\CW$ is the thickening\footnote{If $\CW$ is not the thickening of $X \cup_{\color{intersectionColor}\CIRCLE} Y$, the immersion view will, instead, indicate the immersion of $\CW$ into the ball or sphere.} of $X \cup_{\color{intersectionColor}\CIRCLE} Y$ and therefore the regular homotopy class of $\CW \overset{\varphi}{\looparrowright} \mathbf{S}^n$ (or $\underline{\underline{\mathbf{B}}}^n$) is implied, so does the way $X^{\Diamond}, \bar{X}^{\Diamond}$ and $Y^{\Diamond}, \bar{Y}^{\Diamond}$ are contained in $\CW$.\footnote{In all the examples below, we are able to choose $X, Y \subset \CW$.}  We put a point of $\mathbf{S}^n$ at infinity, whenever this is convenient. In fact, the immersion view by itself contains all the information about the pairing manifold. For instance, it determines the topology of $\CM$ because there is a unique topology obtainable by healing all the punctures of $\CW$.
\end{enumerate}
In the illustrations of the global view and the immersion view, in this section, we will use the following color setting. $X$ is orange and $Y$ is purple. The intersection of $X$ and $Y$ (i.e., the place we take the immersed union) is green. The place where $X$ and $Y$ ``pass through" each other (without touching) is either illustrated as one region on top and covers the other or shown with transparency.

These diagrammatic representations of pairing manifolds take some practice to get used to, and we shall be pedagogical in some early examples by explicitly drawing the natural partition obtained from them. The justification of all the natural partition conditions in the remaining examples uses similar ideas, and we collect these partitions in~\S\ref{subsub:natual-partitions}.

\subsection{Pairing manifold examples in the bulk}\label{subsub:example-bulk}
In this section, we provide examples of pairing manifolds in the 2d and 3d bulk. In these physical contexts, a connected compact manifold $\CM$ is a pairing manifold if it allows a pairing
\begin{equation}\label{eq:pairing-bulk-example-ref}
    (\CM, \{ X, \bar{X} \}, \{ Y, \bar{Y} \}, \CW \overset{\varphi}{\looparrowright} \mathbf{S}^n ).
\end{equation}
The 2d and 3d cases correspond to space dimensions $n=2$ and $n=3$. 
In all the examples below, Lemma~\ref{lemma:pairing-sufficient} is a more convenient list of properties to check compared with Definition~\ref{def:pairing-vacuum-type}.

\begin{exmp}[Torus $T^2$]\label{exmp:torus-pairing-manifold}
The torus $\CM=T^2$ is a pairing manifold. This is because it is possible to construct a pairing~\eqref{eq:pairing-bulk-example-ref} for it. See Fig.~\ref{fig:T2-as-pairing-manifold-newer}(a) for the global view and \ref{fig:T2-as-pairing-manifold-newer}(b) for the immersion view. 
Below, we check all the conditions of Lemma~\ref{lemma:pairing-sufficient}.

\begin{figure}[h]
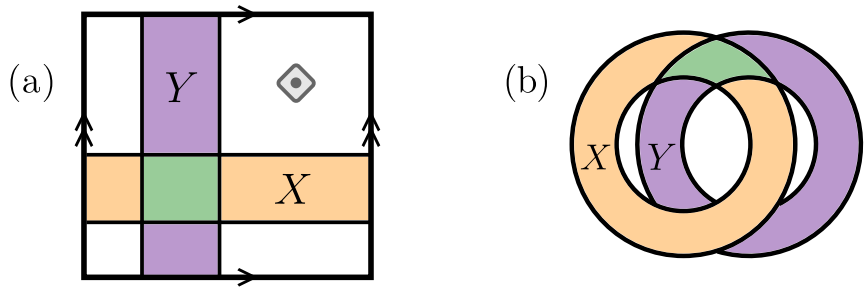

    \centering
    \parfig{0.75}{figs/fig_T2-pairing-v2}
    \caption{A pairing for the pairing manifold $\CM = T^2$. (a) The global view. $T^2$ is the box with opposite boundaries identified. Regions $X$ and $Y$ are annuli with a transverse intersection at a ball ($B^2$) and there is a single completion point.   $X$ is orange, $Y$ is purple, with the intersection $X\cap Y$ colored green. We will use this color setting throughout \S\ref{sec:pairing-examples}.
    (b) The immersion view. $X \cup_{\color{intersectionColor} \CIRCLE} Y\overset{\varphi}{\looparrowright} \mathbf{S}^2$ is explicitly shown, where a point of $\mathbf{S}^2$ is taken to be the point of infinity of $\mathbb{R}^2$.  ($\CW$ is the thickening of $X \cup_{\color{intersectionColor} \CIRCLE} Y$, which we omitted in the drawing for simplicity.) }
    \label{fig:T2-as-pairing-manifold-newer}
\end{figure}

First, $(\CM, \{ X, \bar{X}\}, \CW \overset{\varphi}{\looparrowright}{\mathbf{S}^2})$ is a vacuum block completion. This follows from Example~\ref{exmp:T2} in \S\ref{subsec:vacuum-completion}. The essential idea is recalled here for readers' convenience. From the immersion view, we can see the topology of $X$ and $\bar{X}^{\Diamond}$ and verify the fact that they are building blocks. This is because $X$ is an embedded annulus, and $\bar{X}^{\Diamond}$ is a 2-hole disk immersed in such a way that two of its three entanglement boundaries are embedded. Thus, $(\CM, \{X, \bar{X}\}, \CW \overset{\varphi}{\looparrowright}\mathbf{S}^2)$ is a vacuum block completion. 
     Second, the conditions about the topological relations between regions (conditions 
     \ref{topology-assumptions}) of Lemma~\ref{lemma:pairing-sufficient} can be verified directly from Fig.~\ref{fig:T2-as-pairing-manifold-newer}.
     Lastly, we need to verify that
    the intersection between $X$ and $Y$ gives two natural partitions. We explain this in detail below.  
 
First, we infer the partition for $X$ and $Y$ from the immersion view (Fig.~\ref{fig:T2-as-pairing-manifold-newer}(b)). These partitions, which we refer to as $X=BCD$ and $Y=B'C'D'$ are illustrated in Fig.~\ref{fig:T2-natural-partiton}. Then it is easy to explain the fact that
\begin{equation}
    \Delta(B,C,D)_{\sigma}= 0, \quad \Delta(B',C',D')_{\sigma}=0.
\end{equation}
These are true because $\sigma$ is the vacuum state and the right-hand side of each equality is $4 \ln d_1=0$. (Here we are using the relation \eqref{eq:shell-da-immersed-a} for the quantum dimension.)
This completes the checking of all the conditions in Lemma~\ref{lemma:pairing-sufficient}. Therefore, Fig.~\ref{fig:T2-as-pairing-manifold-newer} indeed describes a pairing for the torus $T^2$.  

Now we have verified that $T^2$ is a pairing manifold, we discuss a consequence: $| \calC | = \dim \mathbb{V}(T^2)$. This agrees with the intuition from minimally entangled states \cite{Zhang:2011jd}: for each sector on $X \subset T^2$, we have precisely one state reduced to the extreme point $\rho^a_X \in \Sigma(X)$. This set of states has a clear TQFT interpretation and the state on the torus we construct is free from topological defects.  
\end{exmp}

\begin{figure}
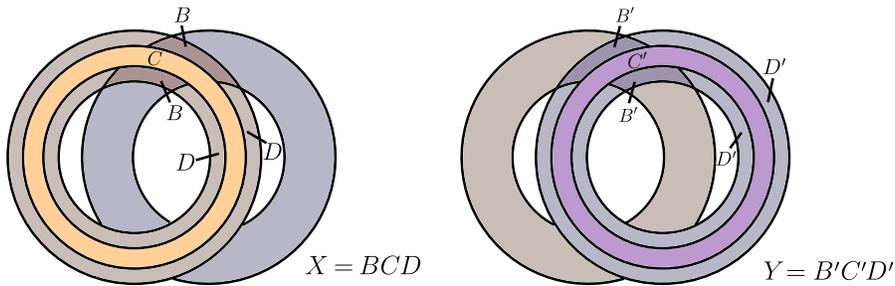

    \centering
   \parfig{0.78}{figs/fig_T2-BCD-X-Y}
    \caption{The natural partition of $X$ and $Y$, obtained from the transverse intersection between them. This information can be read off from the immersion view, Fig.~\ref{fig:T2-as-pairing-manifold-newer}(b).}
    \label{fig:T2-natural-partiton}
\end{figure}

\begin{remark}
 Our requirement that $X$ participates in a vacuum block completion excludes the possibility of adding a Dehn twist. Let $X^{\rm Dehn}$ be an annulus obtained from $X$ by adding a finite number of Dehn twists on $T^2$, and suppose that the completion point is away from $X^{\rm Dehn}$. Then $X^{\rm Dehn}$ is immersed in $\mathbf{S}^2$ in such a way that both boundaries of annulus $X^{\rm Dehn}$ are immersed instead of embedded. Then $X^{\rm Dehn}$ cannot be a building block. On the other hand, once we construct $\CW$, immersed in $\mathbf{S}^2$ as in Fig.~\ref{fig:T2-as-pairing-manifold-newer}(b), we can deform $\CW$ on $\mathbf{S}^2$ to get more interesting immersions. This, however, does not exhaust all possible immersions of the punctured torus; see Example~\ref{exmp:punctured-torus-exotic} for a related discussion. We left the relation between such transformations and the mapping class group of $T^2$ as an open problem.
\end{remark}

\begin{figure}[h!]
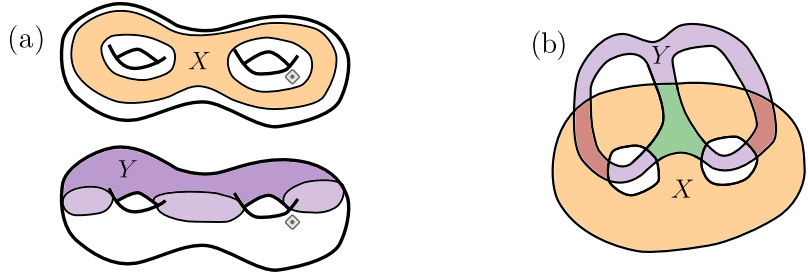

    \centering
\parfig{.7}{figs/fig_g=2-views-ideas-v3}
    \caption{The genus $g$ (here $g=2$) Riemann surface is a pairing manifold. Illustrated here is a particular pairing. (a) The global view. (b) The immersion view. Here $X$ and $Y$ are 2-hole disks.} 
    \label{fig:genus-two-views}
\end{figure}

\begin{figure}[h]
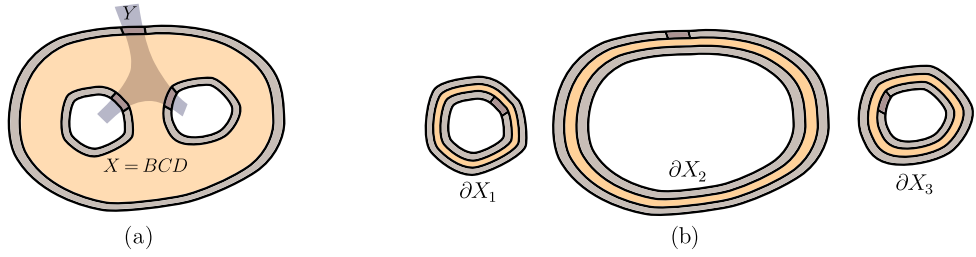

    \centering
    \parfig{.85}{figs/fig_g=2-natural-partitions-v2}
    \caption{Natural partitions for $g=2$. (a) $X=BCD$ according to its transverse intersection with $Y$. Only part of $Y$ is shown. (b) Using the decoupling lemma (Lemma~D.1 of \cite{knots-paper}), the problem can be simplified into a set of natural partitions on annuli, $\{\partial X_i\}_{i=1}^3$, where $ \cup_{i=1}^3 \partial X_i = \partial X$. }
    \label{fig:g=2-natural-partitions}
\end{figure}

\begin{exmp}[Genus-$g$ Riemann surface, $\#g\, T^2$]  A genus-$g$ Riemann surface $\CM = \# g \, T^2$ is a pairing manifold. See Fig.~\ref{fig:genus-two-views} for the explicit construction of a pairing. For illustration purposes, we take $g=2$ for concreteness. 
From the global view, we see that $X$ and $Y$ are $2$-hole disks. They are obtained by cutting the donut either ``horizontally" or ``vertically."  There is precisely one completion point, located in $\bar{X} \cap \bar{Y}$. 

The immersion view shows how the pair of 2-hole disks $X$ and $Y$ immerse in $\mathbf{S}^2$, and in this example, they are embedded. The region $\CW$ is the thickening of  $X \cup_{\color{intersectionColor} \CIRCLE} Y$, it is immersed in the sphere in a nontrivial way. We omit the explicit drawing of $\CW$ for simplicity since it can be inferred from $X \cup_{\color{intersectionColor} \CIRCLE} Y$.
The fact that $(\CM, \{ X, \bar{X}\}, \CW \overset{\varphi}{\looparrowright}{\mathbf{S}^2})$ is a vacuum block completion can be checked easily from the definition. (In fact, this follows from Example~\ref{exmp:genus-g} in \S\ref{subsec:vacuum-completion}.)
Requirements about the topological relations between regions (conditions \ref{topology-assumptions} of Lemma~\ref{lemma:pairing-sufficient}) can be verified by looking at Fig.~\ref{fig:genus-two-views}. Explicitly, $X$ and $Y$ slice each other into two balls, and the intersection between $X$ and $Y$ is transverse.  

To see that the intersection between $X$ and $Y$ gives two natural partitions (conditions 
\ref{assumption-four} and \ref{assumption-five} of Lemma~\ref{lemma:pairing-sufficient}), we only need to consider the decomposition of $X=BCD$ according to its intersection with $Y$ and show $\Delta(B,C,D)_{\sigma}=0$. This partition is illustrated in Fig.~\ref{fig:g=2-natural-partitions}(a). (The decomposition of $Y$ is identical topologically.) A simplification is that $\partial X$ has three connected components, each of which is an annulus, $\partial X=\partial X_1 \partial X_2 \partial X_3$.  The ``decoupling lemma"  (Lemma~D.1 in Appendix D of \cite{knots-paper}) converts this problem to the computation of the sum of $\Delta(\widetilde{B}_i, \widetilde{C}_i, \widetilde{D}_i)_{\sigma}$ for the decomposition $\partial X_i= \widetilde{B}_i \widetilde{C}_i \widetilde{D}_i $, on the connected components of $\partial X$. These partitions, illustrated in Fig.~\ref{fig:g=2-natural-partitions}(b), are natural partitions for the annuli, $\forall i=1,2,3$, and $\Delta(\widetilde{B}_i, \widetilde{C}_i, \widetilde{D}_i)_{\sigma} =0$. One way to see this is by noticing that $d_1=1$ and $2 \ln d_1 =0$. This completes the verification of sufficient conditions for pairing.
\end{exmp}

We anticipate that the detailed explanation of the two examples above introduces the readers to the idea of the global view and the immersion view, and the fact that they contain the necessary information to verify the sufficient conditions for pairing (Lemma~\ref{lemma:pairing-sufficient}). The steps for verifying these conditions are similar to the examples below. For this reason, in the remaining examples, we shall omit most of the consistency checks, including the somewhat nontrivial checking of the natural partition conditions. Nonetheless, the relevant details about natural partitions will be collected and explained in a separate section \S\ref{subsub:natual-partitions}. In the description of the examples below, we shall shift the focus to explaining the consequence of the pairing manifold (propositions in \S\ref{subsec:consequences}), because these consequences are of interest.

\begin{figure}[h]
    \centering
    \parfig{.76}{figs/fig_S2S1-views-v3} 
    \caption{A way to make $\CM= S^2\times S^1$ a pairing manifold. (a) The global view. (b) The immersion view. $X$ is a sphere shell, $Y$ is a solid torus. $\CW$ is the thickening of the immersed union $X \cup_{\color{intersectionColor} \CIRCLE} Y$. } 
    \label{fig:S2S1-3d-pairing}
\end{figure}

\begin{exmp}[$S^2 \times S^1$]\label{exmp:S2S1}
The 3d closed manifold $\CM=S^2 \times S^1$ is a pairing manifold. See Fig.~\ref{fig:S2S1-3d-pairing} for an explicit pairing. Here $X$ is a sphere shell and $Y$ is a solid torus. $\CW$ is the thickening of the immersed union $X \cup_{\color{intersectionColor} \CIRCLE} Y$. The verification of the conditions for pairing is similar to the examples above. In particular, one can verify that $X$ and $Y$ cut each other into balls, and the transverse intersection between $X$ and $Y$ gives natural partitions of $X$ and $Y$. (See \S\ref{subsub:natual-partitions} for the detailed verification of natural partitions.)

The fact that $S^2 \times S^1$ is a pairing manifold leads to a few physical consequences (propositions in \S\ref{subsec:consequences}). We explain them here. First, 
we note that $\calC_{\rm point}=\calC_X$ and $\calC_{\rm loop}=\calC_Y$ (this notation was introduced in \cite{knots-paper}).
By Proposition~\ref{prop:pairing-manifold-fusion-space}, 
    \begin{equation}
        | \calC_{\rm point} | = | \calC_{\rm loop} | = \dim \mathbb{V}(S^2\times S^1).
    \end{equation}
    To see why this is true, we notice that both $X$ and $Y$ (sphere shell and solid torus) are sectorizable. Therefore, the multiplicities in \eqref{eq:associativity-for-pairing} are either 0 or 1, so the sums reduce to the number of sectors.  
    Remarkably, this is a derivation within the entanglement bootstrap framework that the number of the point particles and the pure flux loops must be identical in 3d. Furthermore, the number of groundstates on $S^2 \times S^1$, $\dim \mathbb{V}(S^2\times S^1)$, equals the number of point particles (and fluxes). This agreement with the TQFT prediction provides support for the following idea: the states on $S^2\times S^1$ which we construct from the density matrix of a ball by the torus trick has a TQFT interpretation, and the closed manifold is free from defects.
    
  Moreover, Proposition~\ref{prop:total-qds-equal-v2} gives another derivation of the matching of total quantum dimension in 3d topological orders:
    \begin{equation}
    \sum_{a \in \calC_{\rm point}} d_a^2 = \sum_{\mu \in \calC_{\rm flux}} d_{\mu}^2,
\end{equation}
where the left-hand side is the total quantum dimension of point excitations, and the right-hand side is the total quantum dimension of the pure flux loops. This equation is derived in~\cite{knots-paper} with a different method. 
\end{exmp}

\begin{figure}[h!]
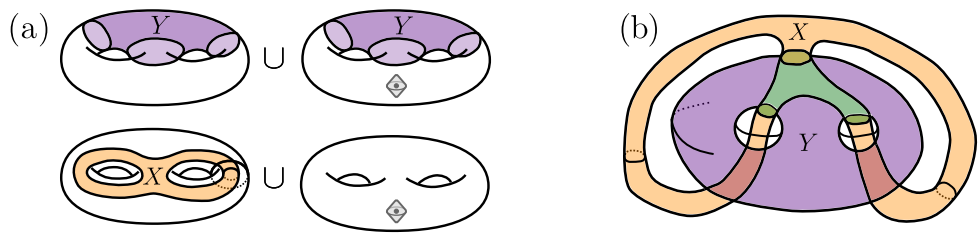

    \centering
\parfig{.85}{figs/fig_2S2S1-views-v3}
    \caption{A pairing for the 3d manifold $\# g(S^2 \times S^1)$, where $\# g(S^2 \times S^1)$ is the connected sum of $g$ copies of $S^2 \times S^1$. (a) The global view is shown as the gluing of a pair of solid tori. (b) The immersion view. $X$ is a genus-2 handlebody, and $Y$ is a ball minus two balls. Again, $\CW$ is the thickening of the immersed union $X \cup_{\color{intersectionColor} \CIRCLE} Y$.}
    \label{fig:3d-connected-sum-S2S1}
\end{figure}

\begin{exmp}[$\# g ( S^2 \times S^1)$]
The 3d closed manifold $\# g ( S^2 \times S^1)$ is a pairing manifold. Recall that $\#$ denotes the connected sum. See Fig.~\ref{fig:3d-connected-sum-S2S1} for an explicit pairing, illustrated for $g=2$. For a general $g$, $X$ is a genus-$g$ handlebody and $Y$ is a ball minus $g$ balls. The conditions for pairing can be checked explicitly. (Again, see \S\ref{subsub:natual-partitions} for the verification of the natural partitions.)

The most interesting consequence of the pairing manifold construction, in this example, is the counting of the graph excitations. According to Proposition~\ref{prop:pairing-manifold-fusion-space},
\begin{equation}
     |\CC_g| = \sum_{a_1, \cdots, a_{g+1} \in \CC_{\rm point}}  \( \dim \mathbb{V}_{a_1 \, \cdots \,  a_{g}}^{a_{g+1}} \)^2.
\end{equation}
On the left-hand side, $\CC_g$ denotes the list of superselection sectors of genus-$g$ graph excitations.  (We note that this includes sectors where some one-cycles of the graph contain trivial excitations.)  
The right-hand side is the fusion space dimension of $g+1$ point particles, which is further determined by just the data $\{ \dim \mathbb{V}_{ab}^c\}$, where $a,b,c \in \calC_{\rm point}$.
In other words, the number of genus-$g$ graph excitations can be counted knowing just $\{ \dim \mathbb{V}_{ab}^c\}$, the fusion multiplicities of the point excitations.
Another consequence (via Proposition~\ref{prop:total-qds-equal-v2}) is that the total quantum dimension of the graph excitations satisfies:
\begin{equation}
    \sum_{\theta \in \CC_g} d_{\theta}^2 = \sum_{a_1, \cdots, a_{g+1} \in \CC_{\rm point}}   \dim \mathbb{V}_{a_1 \, \cdots \,  a_{g}}^{a_{g+1}}  d_{a_1} \cdots d_{a_g} d_{a_{g+1}} = \calD^{2g}.
\end{equation}
The second equation follows by the associativity constraints and 
the relation $ d_a d_b = \sum_c N_{ab}^c d_c$.
\end{exmp}

\begin{figure}[h]
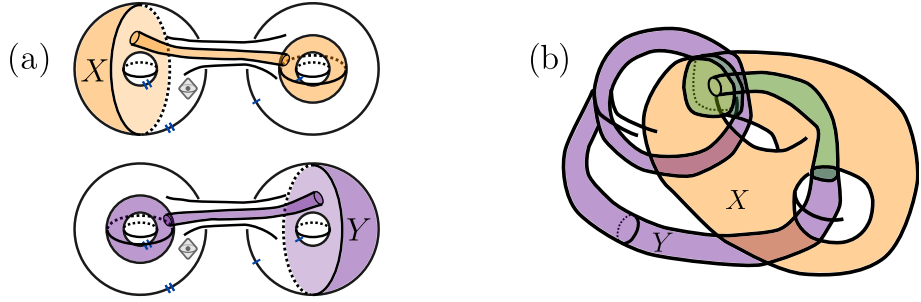

    \centering
    \parfig{.8}{figs/fig_2S2S1-two-veiw-v3}
    \caption{Another pairing for $\# 2 (S^2 \times S^1)$, compared with that in Fig.~\ref{fig:3d-connected-sum-S2S1}. (a) The global view. Here $\# 2 (S^2 \times S^1)$ is visualized as a pair of $S^2 \times S^1$ connected by a ``bridge". (b) the immersion view. Both $X$ and $Y$ are homeomorphic to a solid torus minus a ball. As before, $\CW$ is the thickening of the immersed union $X \cup_{\color{intersectionColor}\CIRCLE} Y$.} 
    \label{fig:2S2S1-pairing-2nd-way}
\end{figure}

\begin{exmp}[$\# 2( S^2 \times S^1)  $ again] \label{example-of-ball-minus-torus}
The 3-manifold $\CM=\# 2( S^2 \times S^1)$ can be a pairing manifold in more than one way. Here we describe a pairing of $\CM$, shown in Fig.~\ref{fig:2S2S1-pairing-2nd-way}, which is different from that described in Fig.~\ref{fig:3d-connected-sum-S2S1}. From the global view in Fig.~\ref{fig:2S2S1-pairing-2nd-way}, it can be seen that the topology of $X$ and $Y$ are identical (and so are their complements $\bar{X}$ and $\bar{Y}$ in $\CM$). The topology of each of these regions is a solid torus with a ball removed, which is non-sectorizable.  
This manifold can also be described as a ball with a solid torus removed since it is $S^3$ minus a disjoint ball and solid torus.

The immersion view clearly shows how $X$ and $Y$ cut each other into balls. Their intersection is transverse, and the resulting partitions are natural. (We refer to the reader to \S\ref{subsub:natual-partitions} for the details, but point out that the decoupling idea used in Fig.~\ref{fig:g=2-natural-partitions} helps.) 
$\CW$ is the thickening of the immersed union $X \cup_{\color{intersectionColor} \CIRCLE} Y$. Its topology is $\# 2 (S^2 \times S^1) \setminus B^3$ (a demonstration is given in Fig.~\ref{fig:ball-minus-torus-gives-connect-sum}). Furthermore,
$(\CM, \{ X, \bar{X} \} , \CW \overset{\varphi}{\looparrowright} \mathbf{S}^3)$ gives a vacuum block completion. To see this, we observe that $X$, being an embedded region (solid torus minus a ball), is a building block, and $\bar{X}^{\Diamond} = \CW \setminus X$ is a solid torus minus two balls immersed in $\mathbf{S}^3$ in such a way that the torus entanglement boundary is embedded and can be filled in; moreover, one of the two spherical entanglement boundaries are embedded. This verifies that $\bar{X}^{\Diamond}$ is a building block. This completes the verification that $(\CM, \{ X, \bar{X} \} , \CW \overset{\varphi}{\looparrowright} \mathbf{S}^3)$ is a vacuum block completion.  

\begin{figure}[h!]
    \centering
\parfig{.28}{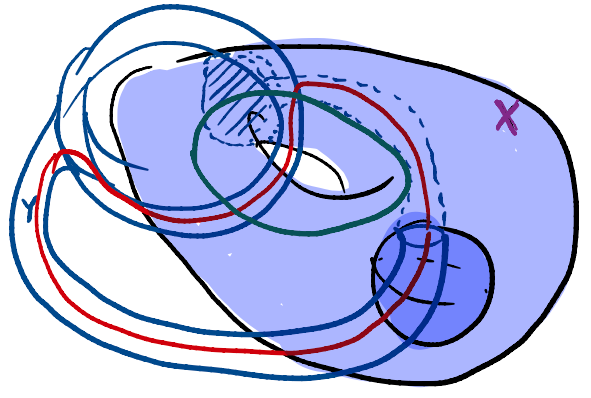}$\to$
\parfig{.28}{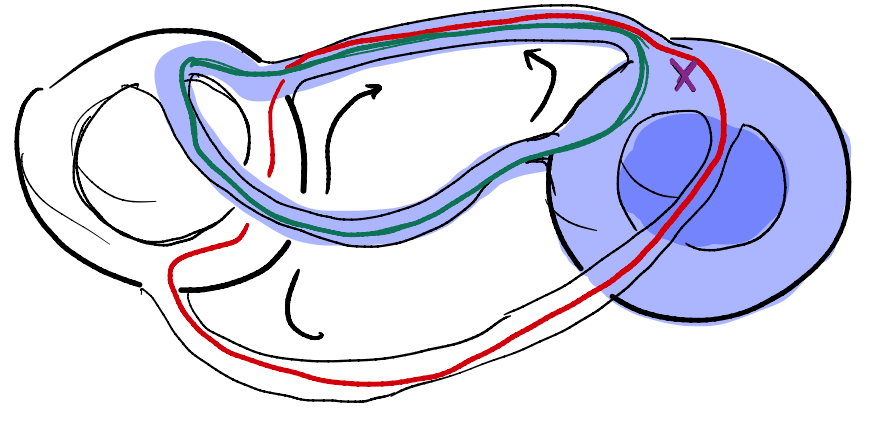}$\to$
\parfig{.28}{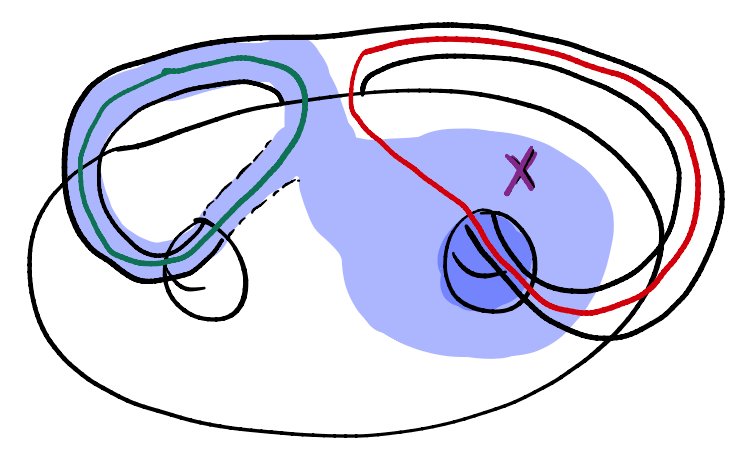}
    \caption{ An explicit deformation (regular homotopy) that relates the two drawings of the immersed regions $\CW =\# 2 (S^2 \times S^1) \setminus B^3$; one is that in  Fig.~\ref{fig:2S2S1-pairing-2nd-way}(b) and another is that in Fig.~\ref{fig:3d-connected-sum-S2S1}(b). Also shown is $X$ (blue).} 
    \label{fig:ball-minus-torus-gives-connect-sum}
\end{figure}

It might not be obvious that this immersed region $\CW$ in Fig.~\ref{fig:2S2S1-pairing-2nd-way}(b) can be smoothly deformed into the $\CW$ considered in the other pairing shown in Fig.~\ref{fig:3d-connected-sum-S2S1}(b). Nevertheless, this is true.
We illustrate the series of deformations in Fig.~\ref{fig:ball-minus-torus-gives-connect-sum}. This illustration also shows how the regions $X \subset \CW$ deform in the process. From this information, we are able to verify that there is a non-vanishing probability of reducing $| 1_X \rangle$ to the vacuum on the genus-2 handlebody contained in $\CW$. This implies that the Abelian sector $\hat{1}_{\partial \CW}$ must be identical for both pairings (i.e., the one considered in Fig.~\ref{fig:2S2S1-pairing-2nd-way} and Fig.~\ref{fig:3d-connected-sum-S2S1}). Then, by Porism~\ref{porism:total-quantum-dimension}, the capacities (or ``total quantum dimensions") for both pairings must match. We summarize the implications below.

The matching of Hilbert space dimensions implies that 
   \begin{equation}\label{eq:shrinkable-loopsv2}
       \sum_{a \in \calC_{\rm point}} \sum_{l \in \calC_{\rm loop}} (\dim \mathbb{V}_{l}^a)^2 
       = |\CC_{g=2}| = \sum_{a,b,c \in \CC_\text{point}} (N_{ab}^c)^2. 
   \end{equation}   Here $\calC_{\rm loop}$ is the set of `shrinkable loop excitations', following the notation and terminology in \cite{knots-paper}. 
   $\CC_g$ is the set of extreme points of the information convex set of the genus-$g$ handlebody.

The matching of capacities implies
  \begin{equation}\label{eq:shrinkable-loops-qd}
       \sum_{a \in \calC_{\rm point}} \sum_{l \in \calC_{\rm loop}} d_a d_l \dim \mathbb{V}_{l}^a 
      =  \calD^4.
   \end{equation}
 In appendix \ref{app:qd-check}, we verify that both of these relations hold in any 3d quantum double model.

\end{exmp}

\begin{remark} First, Example~\ref{example-of-ball-minus-torus} is an example in $d=3$ where neither $X$ nor $Y$ is sectorizable.
Second, in Example~\ref{example-of-ball-minus-torus}, the extreme points of $\Sigma(X)$ can be created by operators of the form 
shown in Fig.~\ref{fig:trumpet-operator}. Interestingly, the operator is neither a string nor a membrane. Thirdly, as $\# 2( S^2 \times S^1)$ has two pairings, we obtain {\it four} bases for its fusion space $\mathbb{V}(\# 2( S^2 \times S^1))$. 
Below, we will recognize the unitary transformations between different bases of the fusion space of a pairing manifold as analogs of the $S$-matrix.  In this example, we have {\it six} such matrices.  Two of them are pairing matrices in the sense defined in \S\ref{sec:S-matrix-closed-manifolds}; the other four are not. We leave the interpretation of the other four for future work.
\end{remark}

\begin{figure}[h!]
\centering
 \parfig{.35}{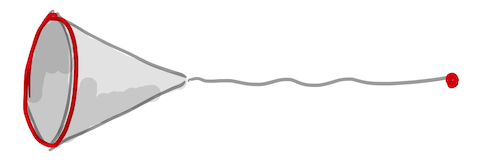}
\caption{The form of the operator that creates a general extreme point of $\Sigma(X)$, where $X$ is a ball minus a solid torus, as appears in Example~\ref{example-of-ball-minus-torus}. The interesting feature is that it is neither a membrane operator nor a string operator. Rather, it is a ``hybrid" type.   
\label{fig:trumpet-operator}}
\end{figure}

\subsection{${}^\star$Pairing manifold examples with gapped boundaries}\label{subsub:example-bdy}

In this section, we consider a few examples of pairing manifolds with gapped boundaries. The setup corresponds to that discussed in \S\ref{section:gapped-boundary}.
  In this context, a connected compact manifold is a pairing manifold if it allows a pairing
\begin{equation}\label{eq:pairing-boundary-example-ref}
    (\CM, \{ X, \bar{X} \}, \{ Y, \bar{Y} \}, \CW \overset{\varphi}{\looparrowright} \underline{\underline{\mathbf{B}}}^n ).
\end{equation}
The 2d and 3d cases correspond to space dimensions $n=2$ and $n=3$. We shall find Lemma~\ref{lemma:pairing-sufficient} handy in verifying the conditions required.

\begin{figure}
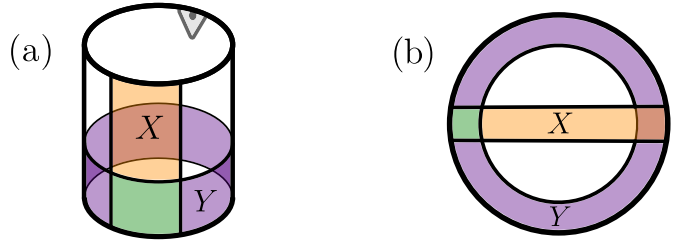

    \centering
    \parfig{.58}{figs/fig_Cylinder-2d-pairing-v4} 
    \caption{Cylinder with a pair of gapped boundaries as a pairing manifold. (a) The global view. (b) The immersion view. As before, the regular homotopy class of $\CW \looparrowright \underline{\underline{\mathbf{B}}}^2$, can be inferred because $\CW$ is the thickening of $X \cup_{\color{intersectionColor} \CIRCLE} Y$.}   
    \label{fig:Cylinder-2d-pairing}
\end{figure}

\begin{exmp}[Cylinder with two gapped boundaries in 2d] \label{exmp:cylinder-2-gapped-bdy}
Take $\CM$ to be a cylinder with a pair of gapped boundaries. Fig.~\ref{fig:Cylinder-2d-pairing} describes a way to make $\CM$ a pairing manifold. Such an $\CM$ should be physically interpreted as a cylinder with both of its boundaries of the same type.
As the global view shows, $\CM = X \bar X = Y \bar Y$ such that $X$ is a half-annulus, and $Y$ is an annulus that covers the entire gapped boundary. From the immersion view, we are able to see the way $X$ and $Y$ are embedded in a disk $\underline{\underline{\mathbf{B}}}^2$, and the way they make the immersed union $X \cup_{\color{intersectionColor} \CIRCLE} Y$. $\CW$ is the thickening of $X \cup_{\color{intersectionColor} \CIRCLE} Y$. We are able to verify all the sufficient conditions for pairing. (The verification of the natural partition conditions will be reviewed in \S\ref{subsub:natual-partitions}.)

A consequence of the fact that the cylinder with identical gapped boundaries ($S^1 \times \mathbb{I}$) is a pairing manifold is
\begin{equation}
    | \calC_{\rm bdy} | = \sum_{a\in \calC} (\dim \mathbb{V}_a(Y))^2 = \dim \mathbb{V}(S^1 \times \mathbb{I}).
\end{equation}
Here $\calC_{\rm bdy}$ is the set of superselection sectors of boundary excitations. $\calC$ is the set of anyons. $\dim \mathbb{V}_a(Y)$ are the condensation multiplicities, i.e., integers that characterize how $a$ condenses to the boundary~\cite{Kitaev:2011dxc,Freed2020,Huston2022,Shi:2020rne}.
Another consequence is the matching of the total quantum dimension
\begin{equation}
    \sum_{\alpha \in \calC_{\rm bdy}} d_{\alpha}^2 = \sum_{a \in \calC} \dim \mathbb{V}_a(Y) \, d_a .
\end{equation}
This equality is known in Ref.~\cite{Shi:2020rne}, where it is also shown that $\sum_{\alpha\in \CC_{\rm bdy}} d_{\alpha}^2 = \sqrt{\sum_{a\in \CC} d_{a}^2}$.
\end{exmp}

Similarly, we can construct pairing manifolds in three dimensions with gapped boundaries. Intriguingly, not only is it possible to generate a region with more than one gapped boundary (Example~\ref{exmp:3d-sphere-shell-pairing}), it is further possible to construct $\CM$ with interesting boundary topology (Example~\ref{exmp:3d-solid-torus-pairing}).

\begin{exmp}[Sphere shell with gapped boundaries] \label{exmp:3d-sphere-shell-pairing} 
The sphere shell with a pair of gapped boundaries is a pairing manifold, and this can be seen by the construction in Fig.~\ref{fig:3d-gapped-pairing-sphere-shell}. The global view makes the following manifest: $X$ is a solid cylinder, which attaches to the gapped boundary at two disks.\footnote{This is a generalization of the half annulus in the context of 2d gapped boundary. It further has a generalization to the 3d domain wall, which gives an analog of the ``parton sectors" described in Ref.~\cite{Shi:2020rne}.} $Y$ is a sphere shell that covers an entire gapped boundary. In other words, $Y$ has one spherical gapped boundary and one spherical entanglement boundary.

From the immersion view, we see that the two ends of $X$ attach to the same gapped boundary. Therefore, the two gapped boundaries in the global view must be ``copies" of the same gapped boundary, and they must be of the same type.  As before, $\CW$ is the thickening of $X \cup_{\color{intersectionColor} \CIRCLE} Y$. We are able to verify the sufficient conditions for pairing, and therefore, the construction in Fig.~\ref{fig:3d-gapped-pairing-sphere-shell} gives a pairing of the sphere shell, viewed as a 3d manifold with a pair of spherical gapped boundaries. (See \S\ref{subsub:natual-partitions} for the details of the verification of the natural partition conditions.)

\begin{figure}[h!]
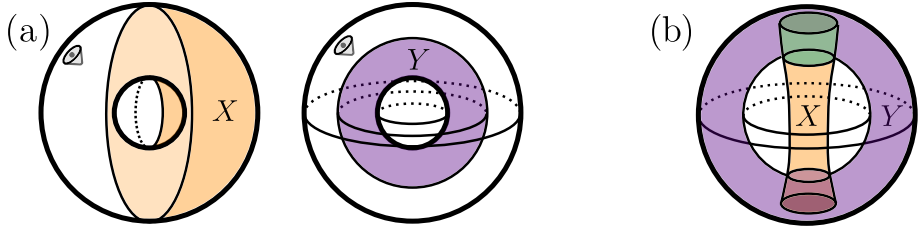

    \centering
    \parfig{.8}{figs/fig_sphere-shell-wbdy-views-v2}
    \caption{Sphere shell with a pair of gapped boundaries as a pairing manifold. (a) The global view. (b) The immersion view.}
    \label{fig:3d-gapped-pairing-sphere-shell}
\end{figure}

The physical consequence is the matching between two types of excitations and their total quantum dimensions, as follows:
\begin{equation}
    | \calC_{\rm bdy-flux} | = \sum_{a \in \calC_{\rm point}} (\dim \mathbb{V}_a(Y))^2. 
\end{equation}
On the left-hand side, $\calC_{\rm bdy-flux}$ refers to the set of superselection sectors of loop excitations lying on the boundary, which can \emph{exist alone} on $\underline{\underline{\mathbf{B}}}^3$. They are close analogs of the pure flux loops in the  3d bulk. On the right-hand side, $\dim \mathbb{V}_a(Y)$ are the condensation multiplicities of a bulk particle $a \in \calC_{\rm point}$ to the gapped boundary.

Another consequence is the matching between the ``total quantum dimensions":
\begin{equation}
    \sum_{m \in  \calC_{\rm bdy-flux} } d_m^2 = \sum_{a \in \calC_{\rm point}} \dim \mathbb{V}_a(Y)\, d_a. 
\end{equation}
\end{exmp}

\begin{exmp}[Solid torus with a gapped boundary]\label{exmp:3d-solid-torus-pairing} 
The solid torus with a gapped boundary is a pairing manifold, by the construction in Fig.~\ref{fig:3d-gapped-pairing-solid-torus}.
From the global view, it can be seen that $X$ is a sphere shell cut in half by the gapped boundary and $Y$ is half of a solid torus, cut by the gapped boundary like a bagel. There is a single completion point, and it lies on the gapped boundary.

In the immersion view, everything is immersed in $\underline{\underline{\mathbf{B}}}^3$. We put a point of the spherical gapped boundary implicit at infinity. In this example, we are able to make a torus gapped boundary from pieces of a gapped boundary contained in a ball! Indeed, if we look at $X$ and $Y$ and their intersection, in the neighborhood of the gapped boundary, the topology is precisely the same as that shown in Fig.~\ref{fig:T2-as-pairing-manifold-newer}, the pairing for the torus $T^2$. We verified the conditions for pairing in this example.

The physical consequences are as follows. First, the matching of the Hilbert space dimension implies:
\begin{equation}\label{eq:3d-bdy-torus-c1}
    | \calC_{\rm bdy-point} | = | \calC_{\rm bdy-arc} | = \dim \mathbb{V} (\textrm{solid torus}).
\end{equation}
This is derived by noticing that both $X$ and $Y$ are sectorizable.
Here $\calC_{\rm bdy-point} $ is the set of superselection sectors of point excitations lying on the boundary, $\calC_{\rm bdy-arc}$ is the set of superselection sectors of arcs (i.e., open string) excitations ended at the boundary. 
 Another consequence is the matching of the total quantum dimensions as follows:
 \begin{equation}\label{eq:3d-bdy-torus-c2}
     \sum_{\alpha \in \calC_{\rm bdy-point}} d_{\alpha}^2 = \sum_{ \eta \in \calC_{\rm bdy-arc}} d_{\eta}^2.
 \end{equation}
 From both Eq.~\eqref{eq:3d-bdy-torus-c1} and \eqref{eq:3d-bdy-torus-c2}, we see that the arc excitations must exist whenever the boundary loop excitations exist.
 
 Some readers may have found the analogy of the above statement with the powerful statement that in the 3d bulk: there must be nontrivial flux loops if and only if there are nontrivial point particles. (The latter is a fact explained in Example~\ref{exmp:S2S1}.) However, we would like to point out an important distinction. In the 3d bulk, all the fluxes (other than the vacuum) are genuine loop excitations, i.e., their support must be a loop and cannot be reduced further. The arc excitations here, however, can either be genuine arc excitations or those whose support can reduce (i.e., the arc can break). In fact, there can be models in which all the arc excitations are point excitations adjacent to the gapped boundary. Examples are topological orders attached to a trivial 3d bulk and the Walker-Wang models~\cite{Walker2011,Burnell2012}. (In both examples, $\Sigma(X)$ and $\Sigma(Y)$ are isomorphic, and they characterize the same set of excitations.) On the other hand, models with genuine arc excitations do exist, the models studied in Refs.~\cite{Wang2018arc,Bullivant2020} are candidates.
\end{exmp}

\begin{figure}[h!]
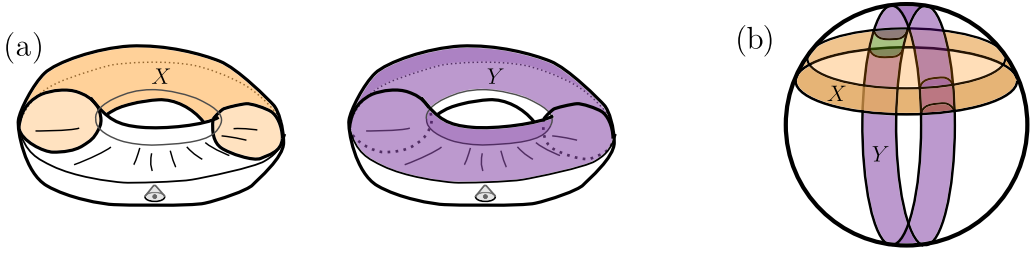

    \centering
\parfig{.9}{figs/fig_solid-torus-wbdy-views-v2}
    \caption{ Solid torus with a gapped boundary in 3d as a pairing manifold. (a) The global view. (b) The immersion view. $X$ is a half-sphere shell adjacent to the gapped boundary and $Y$ is a solid torus cut in half by the gapped boundary. As usual, $\CW$ is the thickening of the immersed union $X \cup_{\color{intersectionColor}\CIRCLE} Y$.}
    \label{fig:3d-gapped-pairing-solid-torus}
\end{figure}

\begin{remark}
    An interesting generalization is genus-$g$ handlebody with a gapped boundary. This will be related to a boundary analog of the graph excitations and the counting of them.
\end{remark}

\begin{exmp}[Riemann surfaces with domain walls]
So far, we have discussed gapped boundaries; gapped boundaries are a special case of the more general notion of gapped domain walls between topological phases.  
The entanglement bootstrap axioms have been generalized to the context of gapped domain wall~\cite{Shi:2020rne}.  
(The following examples require an extension of the notion of building blocks to this case.)

Consider a sphere whose polar caps are in topological phase $P$ and whose equatorial region is in topological phase $Q$, separated by gapped domain walls satisfying the domain wall entanglement bootstrap axioms.  This is a pairing manifold where $X$ is the $n$-shaped region 
of \cite{Shi:2020rne}, and $Y$ is an annulus shape that straddles the gapped domain wall.  See Table~\ref{table:unitarity-combined} for an illustration.  

Similarly, a torus partitioned into regions of $P$ and $Q$ separated by two gapped domain walls is a pairing manifold where $X$ and $Y$ are annuli divided in half into $P$ and $Q$ in two ways.
\end{exmp}

\subsection{Natural partitions used in examples}\label{subsub:natual-partitions}

In this section, we collect all the natural partitions useful in the examples in \S\ref{subsub:example-bulk} and \S\ref{subsub:example-bdy}. The relevant partitions are summarized in Fig.~\ref{fig:natural-partitions-collection}. These are some $X=BCD$ ($Y=BCD$) associated with the transverse intersection with $Y$ ($X$) in those examples.

 \begin{figure}[h]
    \centering
    \begin{tikzpicture}
    \begin{scope}[scale=1.03]
    \begin{scope}
         \node[] at (0,0.1) {\parfig{.35}{figs/fig_natural-a123}};
         \node[] at (-2.9, 0.7) {\small{$\mathbf{S}^2$}};
         \draw [black] (-3.2,-1.1) rectangle (3.0,1.0);
         \node[] at (-1.94, -0.8) {\footnotesize{(a1)}};
         \node[] at (0.0, -0.8) {\footnotesize{(a2)}};
         \node[] at (1.94, -0.8) {\footnotesize{(a3)}};
    \end{scope}
        
    \begin{scope}[yshift=-3.7 cm]
        \node[] at (0, 0)  {\parfig{.35}{figs/fig_natural-b123-c12} };
        \node[] at (-2.9, 1.85) {\small{$\mathbf{S}^3$}};
        \draw [black] (-3.20,-2.25) rectangle (3.0,2.2);

        \node[] at (-1.94-0.5, -0.00) {\footnotesize{(b1)}};
         \node[] at (0.0-0.5, -0.00) {\footnotesize{(b2)}};
         \node[] at (1.94-0.5, -0.00) {\footnotesize{(b3)}};

         \node[] at (-1.11, -1.95) {\footnotesize{(c1)}};
         \node[] at ( 1.4, -1.95) {\footnotesize{(c2)}};
    \end{scope}

        \begin{scope}[xshift= 7.3 cm]
          \node[] at (0,0.1)  {\parfig{.42}{figs/fig_natural-d123e} };
          \node[] at (-2.9-0.6, 0.7) {\small{$\underline{\underline{\mathbf{B}}}^2$}};
         \draw [black] (-3.2-0.6,-1.1) rectangle (3.5,1.0);

         \node[] at (-2.52, -0.8) {\footnotesize{(d1)}};
         \node[] at (-0.89, -0.8) {\footnotesize{(d2)}};
         \node[] at (0.83, -0.8) {\footnotesize{(d3)}};
         \node[] at (2.45, -0.8) {\footnotesize{(e)}};
        \end{scope}
        
        \begin{scope}[xshift= 7.3 cm, yshift=-3.7 cm]
          \node[] at (0,0.10) {\parfig{.44}{figs/fig_natural-f-after} };
          \node[] at (-2.9-0.6, 1.85) {\small{$\underline{\underline{\mathbf{B}}}^3$}};
        \draw [black] (-3.8,-2.2) rectangle (3.5,2.2);

        \node[] at (-1.94-0.5, -0.00) {\footnotesize{(f)}};
         \node[] at (0.0-0.1, -0.00) {\footnotesize{(g)}};
         \node[] at (1.94+0.2, -0.00) {\footnotesize{(h)}};

         \node[] at (-1.8-0.55, -1.95) {\footnotesize{(i1)}};
         \node[] at ( -0.1, -1.95) {\footnotesize{(i2)}};
         \node[] at ( 1.9+0.3, -1.95) {\footnotesize{(i3)}};
         \end{scope}
         \end{scope}
    \end{tikzpicture}   
    \caption{A collection of useful natural partitions. Each $BCD$ is embedded in $\mathbf{S}^2$, $\mathbf{S}^3$, $\underline{\underline{\mathbf{B}}}^2$ or $\underline{\underline{\mathbf{B}}}^3$. In all these cases, $C$ is the interior, and $BD$ is the thickened entanglement boundary; $C$ is light-colored, $B$ is dark-colored, and $D$ is not colored. The natural partition condition reads $\Delta(B,C,D)_{\sigma}=0$.} 
    \label{fig:natural-partitions-collection}
\end{figure}
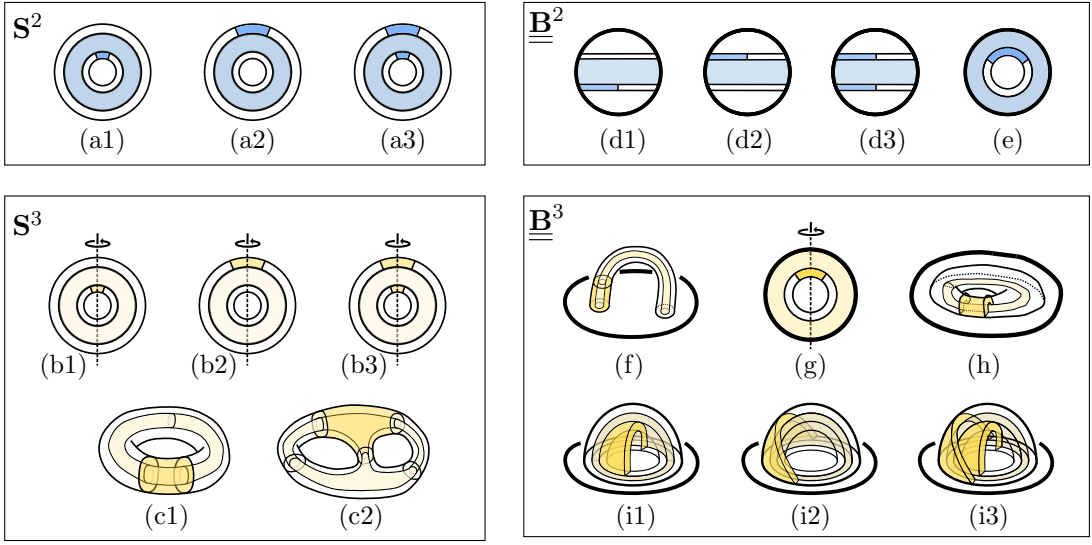

Before we explain why these partitions are natural partitions, we would like to mention the general strategy of showing such relations. Some of the relations directly follow from Appendix D of~\cite{knots-paper}. 
 The useful simplification is the fact that the reference states on   $\mathbf{S}^2$, $\mathbf{S}^3$, $\underline{\underline{\mathbf{B}}}^2$ and $\underline{\underline{\mathbf{B}}}^3$ are pure states. Let $A$ be the complement of $BCD$ on these compact manifolds. 
 In all cases, one of the following two strategies works:
 \begin{enumerate}
     \item [(I)] On the pure reference state $\Delta(B,C,D)= \Delta(B,A,D)$. Then we find relations between the computation of $\Delta(B,A,D)$ to the axioms on balls, and show that the value of this quantity vanishes.
     \item [(II)] On the pure reference state $\Delta(B,C,D) = I(A:C|B)$. Then we use a \emph{deformation technique} to show that $I(A:C|B)=0$. Here the deformation technique refers to the following idea.\footnote{This technique is known to be useful in several contexts. Notably, it is used in Lemma D.2 of \cite{shi2020fusion} and in Lemma 5.7 of~\cite{knots-paper}.} We ``smoothly deform" $B$ to $AB$ by attaching balls away from $C$, in a topologically trivial way. More precisely, we construct a sequence of regions $\{A_i B\}_{i=0}^N$ such that $A_0=\emptyset$, $A_N=A$, and
     \begin{equation}\label{eq:sequence-Ai}
         S_{CBA_i} - S_{BA_i} =  S_{CBA_{i+1}} - S_{BA_{i+1}}, \quad \forall i=0, \cdots, N-1. 
     \end{equation}
     This equation follows from strong subadditivity provided that the small ball $A_{i+i}\setminus A_i$ is attached according to the instruction above. In particular, it allows enough room to apply axiom {\bf A1} centered at the small ball $A_{i+1} \setminus A_i$; see Fig.~\ref{fig:strategy-II} for an illustration.  Then it follows from Eq.~\eqref{eq:sequence-Ai} that, on the reference state
     \begin{equation}
          I(A:C|B) = I(A_N : C |B) 
                = I(A_0 : C |B)
                 = 0.
     \end{equation}
     The verification of the last equality uses the fact that $A_0 = \emptyset$. At a high level, why can this technique be useful? Although attaching these small balls cannot make topology changes to $B$ (that is, $B$ can smoothly deform into $A_iB$), the topologies of the configurations in $\{ A_i \} $ need not be the same. 
  \end{enumerate}
  \begin{figure}[h]
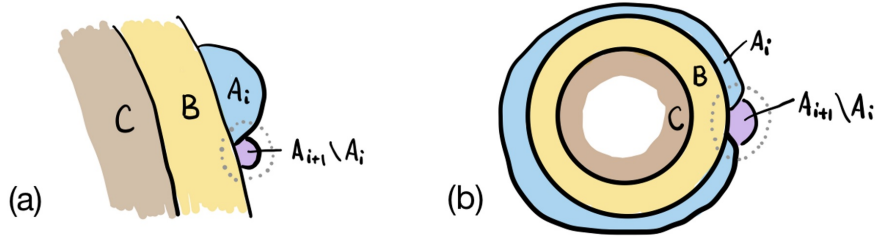

      \centering
       \parfig{.77}{figs/fig_attach-Ai-v2}
      \caption{A schematic illustration of a step in  strategy (II). $A_{i+i} \setminus A_i$ is always contained in a ball away from $C$. Note that the topology of $A_{i+1}$ and $A_i$ can either be (a) the same or (b) different.}
      \label{fig:strategy-II}
  \end{figure}

Below we explain why all the partitions in Fig.~\ref{fig:natural-partitions-collection} are natural partitions for the regions in question:
\begin{itemize}
    \item {\bf Regions embedded in $\mathbf{S}^2$:} In Fig.~\ref{fig:natural-partitions-collection}, (a1), (a2) and (a3) are embedded in the sphere $\mathbf{S}^2$.  For all of them, we apply strategy (I) and compute $\Delta(B,A,D)=0$. For (a1) and (a2) this boiled down to the fact that an enlarged version of {\bf A0} and {\bf A1} are satisfied on disks. For (a3), this reduces to the fact that the enlarged version of {\bf A1} is satisfied on two disjoint disks. (For all three cases, we need {\bf A0} to show that the reduced density matrix of the reference state on two disjoint disks factorizes as a tensor product.)

    \item {\bf Regions embedded in $\mathbf{S}^3$:} In Fig.~\ref{fig:natural-partitions-collection}, (b1), (b2), (b3), (c1) and (c2) are embedded in the sphere $\mathbf{S}^3$. The logic for showing $\Delta(B,C,D)=0$ for (b1), (b2) and (b3) are identical to the 2d analogs above. Namely, these follow straightforwardly from strategy (I), and therefore, we omit the details. The verification of (c1) and (c2) uses strategy (II). Namely, we use the deformation technique, which involves a sequence of deformations, starting with $B$ and ending with $AB$, and in the intermediate steps, we attach balls in locations away from $C$. This allows us to show $I(A:C|B)=0$. 
    Note that this strategy generalizes to genus-$g$ handlebodies. (A related observation is that for the case $X$ is a genus-$g$ handlebody, both $B$ and $AB$ are genus $g$ handlebodies.)

    \item {\bf Regions embedded in $\underline{\underline{\mathbf{B}}}^2$:} In Fig.~\ref{fig:natural-partitions-collection}, (d1), (d2), (d3) and (e) are embedded in the ball $\underline{\underline{\mathbf{B}}}^2$, where $\underline{\underline{\mathbf{B}}}^2$ is the compact disk with an entire gapped boundary. Strategy (I) is useful for solving all these cases. For cases (d1), (d2), and (d3), we need to verify $\Delta(B,A,D)_{\sigma}=0$, and this is reduced to the fact that boundary versions of {\bf A0} and {\bf A1} are satisfied. (We also need the fact that the reference state on two disjoint $\underline{B}^2$ has zero mutual information, a property follows from the boundary version of {\bf A0}.) To verify (e), we notice that the $\Delta(B,A,D)_{\sigma}=0$ for this case is precisely the bulk version of {\bf A1}. 

    \item {\bf Regions embedded in $\underline{\underline{\mathbf{B}}}^3$:}  In Fig.~\ref{fig:natural-partitions-collection}, (f), (g), (h), (i1), (i2) and (i3) are embedded in ball $\underline{\underline{\mathbf{B}}}^3$. It is sufficient to say that the verifications for (g), (i1), (i2), and (i3) follow directly from strategy (I).  The verifications for (f) and (h) follow directly from strategy (II).  
\end{itemize}

Finally, we remark that the verification of natural partitions discussed in this section are partitions of an embedded region of a ball (sphere) of the following forms: some choice of $B,C,D$ such that $\Delta(B,C,D)_{\sigma}=0$, or some choice of $A,B,C$ such that $I(A:C|B)_{\sigma}=0$, for a reference state $\sigma$ on the ball (sphere). Pairing manifolds are further related to TEEs~\cite{Kitaev2006, Levin2006}, which are partitions of subsystems such that $I(A:C|B)_{\sigma}$ or $\Delta(B,C,D)_{\sigma}$ is \emph{nonzero}. This connection was briefly discussed in a paragraph above Porism~\ref{porism:total-quantum-dimension}.

\subsection{Non-examples of pairing manifolds}\label{subsub:non-examples}

In this section, we discuss a few non-examples. Why non-examples? The goal is twofold. The first goal is to provide examples of $\CM$ which are not pairing manifolds and show that some choices of regions $X$ and $Y$ cannot appear in any pairing. 
Besides the pedagogical motivations, these non-examples are also useful for attempts to generate more examples of pairing manifolds. In some examples (in \S\ref{subsec:non-ex-type1}), a topological obstruction explains why something does not work.  

The second goal is to show that when some of the conditions of pairing manifolds are violated or relaxed, some of the consequences can fail. This justifies the need for various conditions stated in the definition of the pairing manifold. 
(Such examples are given in \S\ref{subsec:non-ex-type2}.) Furthermore, we consider the presentation of non-examples as an opportunity to show that some of the ideas and techniques we developed in the study of pairing manifolds can be generalized to other contexts.  
For instance, we give some interesting constructions of states on closed manifolds, even though these broader constructions do not possess all the properties of pairing manifolds.

\subsubsection{Topology types allowing no pairing}\label{subsec:non-ex-type1}

 We provide examples of connected compact manifolds $\CM$ which cannot be pairing manifolds no matter how we attempt to partition them.  
In other words, there is no way to find a pairing for it.  

\begin{nonexmp}[${}^\star$Manifolds with at least 3 gapped boundaries] \label{nonexample-3bdy}
Let $\CM$ be a compact connected manifold with at least 3 gapped boundaries, in any space dimension $d\ge 2$. Then $\CM$ does not allow any pairing, and therefore it cannot be a pairing manifold. (A corollary is that a pair of pants with three gapped boundaries does not allow any pairing; see Fig.~\ref{fig:pants} for a partition that does not work.)

To see why this is true, suppose there is a pairing such that $\CM = X\bar{X} = Y \bar{Y}$. Because $X$ and $Y$ slice each other into two balls (so do the complements), $\CM$ must be a union of 4 balls. Here a ball is either $B^n$ or $\underline{B}^n$. Therefore, each ball touches at most one gapped boundary, and it only covers part of it. Therefore, four such balls cannot cover three or more entire gapped boundaries.  Since $\CM$ has three or more (entire) gapped boundaries, it cannot be a pairing manifold.
\end{nonexmp}

\begin{figure}[h!]
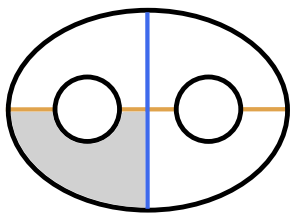

\centering
\parfig{.25}{figs/fig_quarter}
\caption{A pair of pants allows no pairing, as is implied by non-Example~\ref{nonexample-3bdy}. For instance, the partition shown here does not work because the shaded region is not a disk (neither $B^2$ nor $\underline{B}^2$).} 
\label{fig:pants}
\end{figure}

\begin{figure}[h!]
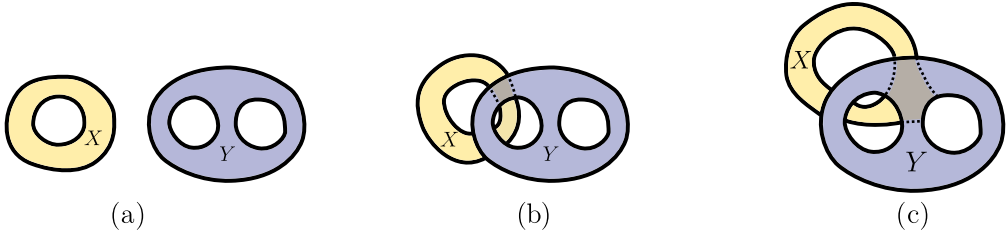

\centering
\parfig{0.88}{figs/fig_XY-miss-v2}
\caption{Example of regions $X$ and $Y$, which cannot appear in any pairing together. (a) The topology of $X$ and $Y$. (b) and (c) are two ``misguided" ways to combine $X$ and $Y$.} 
\label{fig:nonexample_XY-mis}
\end{figure}

The next example shows that not every pair of regions $X$ and $Y$ can be combined into a pairing manifold.
\begin{nonexmp}[$X$ and $Y$ are not compatible] In this example, we ask if a given pair $X$ and $Y$ may appear in a pairing $(\CM, \{X, \bar{X} \}, \{Y,\bar{Y}\}, \CW \overset{\varphi}{\looparrowright} \mathbf{S}^n)$. An illustrative example is the $X$ and $Y$ shown in Fig.~\ref{fig:nonexample_XY-mis}, where $X$ is an annulus and $Y$ is a 2-hole disk.
It is clear that the ``misguided" attempt shown in Fig.~\ref{fig:nonexample_XY-mis}(b) cannot be the way to combine $X$ and $Y$. This is because $X$ did not cut $Y$ into two balls. Furthermore, the attempt to combine $X$ and $Y$ as Fig.~\ref{fig:nonexample_XY-mis}(c) cannot work either. There, although $X$ and $Y$ cut each other into two balls, the intersection is not transverse.
Is it true that there is no way for the $X$ and $Y$ shown in Fig.~\ref{fig:nonexample_XY-mis}(a) to appear in any pairing? The answer is yes.

 A general strategy is to verify the failure of the consequences of pairing (\S\ref{subsec:consequences}). Suppose that $X$ and $Y$ participate in a pairing, then for any reference state, we must have (Proposition~\ref{prop:pairing-manifold-fusion-space})
 \begin{equation}\label{eq:NN}
     \sum_{a \in \CC_{\partial X}} N_a(X)^2  
= \sum_{b \in \CC_{\partial Y}} N_b(Y)^2 .
 \end{equation} 
 This equation generally does not hold for the $X$ and $Y$ in Fig.~\ref{fig:nonexample_XY-mis}. For instance, for the reference state of the toric code, the left-hand side of Eq.~\eqref{eq:NN} gives $4$, whereas the right-hand side gives 16. The existence of such a reference state shows that $X$ and $Y$ cannot appear together in any pairing.
\end{nonexmp}

\begin{nonexmp}[$T^3$, open question] \label{nonexample-T3}
The 3-torus can be constructed as $X  \bar X = Y  \bar Y = Z  \bar Z$ where $X, \bar X, Y, \bar Y, Z, \bar Z$ are torus shells ($T^2 \times B^1$). If we take the $T^3$ as the 3d region identified by translations symmetry of a cubic lattice, they are standard partitions parallel to the $xy$, $yz$, and $zx$-planes.  It is clear that $X$ and $Y$ (similarly, $Y$ and $Z$, $Z$ and $X$) cannot appear in any pairing together,\footnote{On the other hand, it is possible to relate $T^3$ to a pairing manifold by dimensional reduction.} because they cut each other into solid tori, instead of balls.  
However, we do not know general proof that there is no pairing for $T^3$. Below is a possible obstruction. Suppose there is no way to divide $T^3$ into 4 balls ($B^3$), then one can conclude that $T^3$ does not allow any pairing.
\end{nonexmp}

\subsubsection{Topologies are valid, quantum state conditions fail}\label{subsec:non-ex-type2}

Each non-example below has a set of regions and immersion maps $\CM$, $\{X, \bar{X} \}$, $\{Y, \bar{Y}\}$ and $\CW \overset{\varphi}{\looparrowright} \CN$, and a quantum state on $\CM$. The non-examples fail to fulfill at least one of the requirements of a pairing (Definition~\ref{def:pairing-vacuum-type}, or its boundary generalizations).
In particular, the manifolds $\CM$, $X$, and $Y$ satisfy the topological requirements of the definition, but some of the more interesting quantum state conditions fail.

\begin{nonexmp}[$T^2$ with an anyon]\label{exmp:T2-anyon} We have shown that $T^2$ has a pairing (Example~\ref{exmp:torus-pairing-manifold}). Here we still consider $\CM=T^2$, and the same immersion $\CW \overset{\varphi}{\looparrowright} \mathbf{S}^2$ as Fig.~\ref{fig:T2-as-pairing-manifold-newer}(b), but consider a different state on $T^2$. We then discuss why this reference state is not what we can construct from a pairing.

Take the Ising anyon model considered in Example~\ref{exmp:torus-with-anyon} with the set of anyon labels given by $\calC = \{ 1, \sigma, \psi\}$. We consider the state $| \varphi^{\psi}\rangle$ constructed in Example~\ref{exmp:torus-with-anyon}. It turns out that this state reduces on $X$ and $\bar{X}^{\Diamond}$ to extreme points. In particular, $| \varphi^{\psi}\rangle$ reduces to the extreme point of $\Sigma(X)$ that carries the non-Abelian anyon label $\sigma$. 
For the partition $X=BCD$ in Fig.~\ref{fig:T2-natural-partiton} (which would be a natural partition if we consider the state $|1_X\rangle$), $\Delta(B,C,D)=4 \ln d_{\sigma}=\ln 4$, not $0$. This is true for any state in the fusion space\footnote{Here, we use $\mathbb{V}_{\langle\psi\rangle}(T^2)$ to denote the Hilbert space associated with the torus $T^2$ given the reference state $| \varphi^{\psi}\rangle$. Note that, $\mathbb{V}_{\langle \psi\rangle}(T^2)= \mathbb{V}_{\psi}(T^2\setminus B^2)$.}  
$\mathbb{V}_{\langle\psi\rangle}(T^2)$, since $\dim \mathbb{V}_{\langle\psi\rangle}(T^2)=1$ for the chosen state $| \varphi^{\psi}\rangle$.  So, the natural partition condition in Definition~\ref{def:pairing-vacuum-type} is violated. Interestingly enough, the reference state on $T^2$ still satisfies the axioms everywhere; this is because $\psi$ is Abelian.

At the same time, the state $| \varphi^{\psi}\rangle$ reduces to the extreme point of $\Sigma(Y)$ labeled by the non-Abelian particle $\sigma$ on $Y$. Thus, both $X$ and $Y$ are at some extreme points. These extreme points are neither the minimum entropy states nor the maximum entropy states. This should be contrasted with the prediction in Lemma~\ref{lemma:max-X-Y}. However, the uncertainty relation, in a broader sense (not stated) still holds. Furthermore, in this example, $\sum_{a \in \calC_{\partial X} } (\dim \mathbb{V}_a)^2 > \dim \mathbb{V}_{\langle\psi\rangle}(T^2)$, where the left-hand side is $3$ and the right-hand side is $1$.

In this non-example, the following relations still hold:
\begin{equation}
    | \CC_X | = | \CC_Y |, \qquad \sum_{a \in \CC_X} d_a^2 = \sum_{b \in \CC_Y} d_b^2.
\end{equation}
\end{nonexmp}

\begin{nonexmp}[$T^2$ with a defect]\label{exmp:T2-defects} Here is another way to violate the pairing conditions on $T^2$. This is closely related to Example~\ref{exmp:defect-manifold} in \S\ref{subsec:non-vacuum-completions}, which takes $\CN$ to be an annulus through which a defect line passes, and constructs $T^2$ with some completion in $\CN$. (This already violates the immersion in $\mathbf{S}^2$.) Consider partitions of $T^2$ into annuli, $T^2 =X\bar{X}=Y \bar{Y}$. Here $X$ and $Y$ are embedded in $\CN$, $X$ is $\CV_1$ of Fig.~\ref{fig:T2-completion-exotic}(a), and $Y$ is regular homotopic to $\CN$. Thus the defect passes through $Y$. If the defect is nontrivial, $Y$ does not have any Abelian superselection sector at all. 
This violates the natural partition requirement.

The violation of conditions of pairing in this non-example (stronger than the previous one) leads to a violation of the consequence: $ | \CC_X | \ne | \CC_Y |$. 
\end{nonexmp}

 {\bf Defect disk:} The next two non-examples take a state on a disk with two types of gapped boundaries as the ``reference state". See Fig.~\ref{fig:defect-disk}. 
 We use labels I and II to label the two \emph{different}\footnote{We say two boundaries are different if a ribbon connecting the two boundaries does not have a vacuum; see Fig.~\ref{fig:defect-disk}(b).} boundary types.   We assume that boundary {\bf A0} is satisfied everywhere, but boundary {\bf A1} is violated at the points where the boundary condition changes.  This is enough to guarantee $ \dim \mathbb{V}(\parfig{.045}{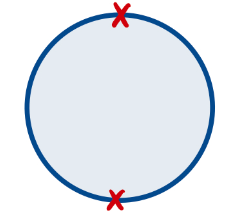}) = 1 $.

 \begin{figure}
     \centering
     \parfig{0.6}{figs/fig_defect-disk-setup-v2}
     \caption{(a) A disk with two gapped boundary types, labeled as I and II. Boundary defects are points at the junction of the two boundaries. We shall call such a disk a ``defect disk". (b) Axiom {\bf A0} holds on the defects, whereas {\bf A1} receives nontrivial correction at the defects. A ribbon connecting (half-annulus) connecting the two boundary types. There is no vacuum on the ribbon under the assumption that I and II are different boundary types.}
     \label{fig:defect-disk}
 \end{figure}
 
 We shall see that Example~\ref{exmp:2bdy-make-more-defects} makes more defect points, while Example~\ref{nonexmp:different-bdys} removes the defects and makes more boundaries of different types.

\begin{figure}[h!]
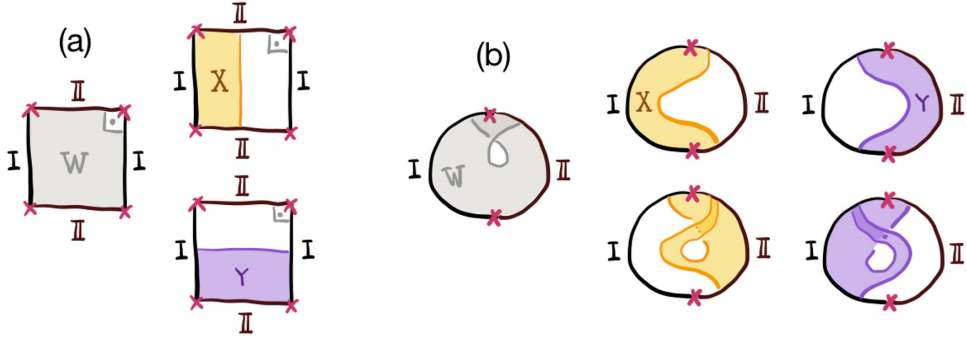

\centering
\parfig{0.88}{figs/fig_bdy-square1212-v2}
\caption{Make a square by immersion in a disk will two types of gapped boundaries. (a) The global view. (b) The immersion view.}
\label{fig:make-a-square}
\end{figure}

\begin{nonexmp}[${}^\star$A ``square" with two boundary types]\label{exmp:2bdy-make-more-defects} 
Let $\CM$ be a square with 4 boundaries, where the types are I-II-I-II, as shown in Fig.~\ref{fig:make-a-square}(a). Suppose we have a defect disk with a reference state on it. 
Let $\CW$ be the immersed region in the defect disk, as shown in  Fig.~\ref{fig:make-a-square}(b). The observation we make here is that the manifold $\CM$ can be ``completed" with the knowledge of the defect disk.  

This construction is a close analog to the pairing manifold construction: we can make $ \CM$ in two different ways, by first making $\CW$ immersed in the defect disk, and ``healing" the puncture. The two ways can be summarized by saying $\CM=X\bar{X}= Y \bar{Y}$, which involves regions cutting each other into two balls.

However, this does not qualify as a pairing of $\CM$. This is because there are defects lying in $X \cap Y$; they violate the boundary version of axiom {\bf A1}, and thus the ``disk" here is neither $B^2$ nor $\underline{B}^2$. Related to this, the sector on $\partial \CW$ cannot be an Abelian sector since $\partial \CW$ is a ribbon connecting two boundary types.

An interesting feature of this construction is that, although the original defect disk had a unique ground state, $\CM$ can have a nontrivial multiplicity (for example, for the toric code with $I = e$ and $II=m$, there are two ground states \cite{Kitaev:2011dxc}).

What consequences of pairing manifold does this non-example violate? We leave this for interested readers.  
\end{nonexmp}

\begin{figure}[h!]
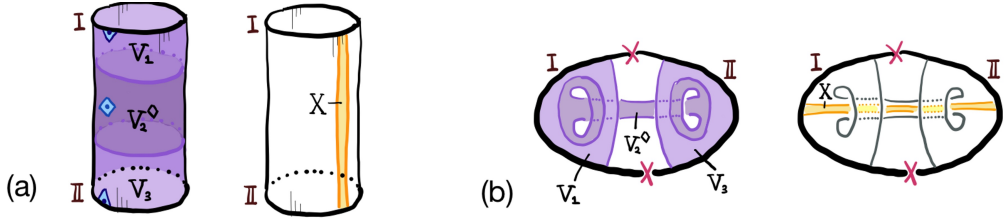

    \centering
    \parfig{0.88}{figs/fig_two-bdy-type-v3}
    \caption{Starting from a ``defect disk" we are able to construct a state on a cylinder with two different gapped boundary types. This state satisfies the axioms everywhere. In non-Example~\ref{nonexmp:different-bdys}, we explain why this state cannot be obtained from any pairing that involves the region $X$ indicated. (a) A global view. There are three completion points. (b) The immersion view.} 
    \label{fig:different-boundaries}
\end{figure}

\begin{nonexmp}[${}^\star$Cylinder with two different gapped boundaries]  \label{nonexmp:different-bdys}

As we discussed in Example~\ref{exmp:cylinder-2-gapped-bdy}, a cylinder with two gapped boundaries allows a pairing. In that case, the two gapped boundaries are of the same type. Here, starting with a defect disk with two gapped boundary types, I and II, we construct a cylinder with two different gapped boundaries of type I and II, free from defects.

In fact, we can construct the state from building blocks. As shown in Fig.~\ref{fig:different-boundaries}, there are three building blocks. There are three completion points; two are in the bulk and one is on the gapped boundary. 

In what sense this is a nonexample? No state in $\mathbb{V}(\textrm{cylinder})$ allows a pairing with $X$ indicated in Fig.~\ref{fig:different-boundaries}.
The reason is that the ribbon $X$ connects two different boundary types, and therefore, it does not have a vacuum. The natural partition requirement is violated.
\end{nonexmp}

\begin{remark} Consider the $\CM$, $X$ and $Y$ in non-Example~\ref{nonexmp:different-bdys}.
    From the logic we developed in deriving the consequences (\S\ref{subsec:consequences}) of the pairing manifold, it is not hard to verify that if there is an Abelian sector in $\Sigma(X)$, then $\Sigma(Y) \cong \Sigma(\bar{Y})$. More generally, it is possible that $\Sigma(Y) \cong \Sigma(\bar{Y})$ but $\Sigma(X)$ does not have a vacuum sector. 
\end{remark}

\section{Pairing matrix and remote detectability}
\label{sec:S-matrix-closed-manifolds}

For each pairing
\begin{equation}\label{eq:pairing-ref-7}
(\CM, \{X,\bar{X}\}, \{Y,\bar{Y} \}, \CW \overset{\varphi}{\looparrowright}\mathbf{S}^n)    \quad \textrm{or} \quad (\CM, \{X,\bar{X}\}, \{Y,\bar{Y} \}, \CW \overset{\varphi}{\looparrowright}\underline{\underline{\mathbf{B}}}^n), 
\end{equation} we can uniquely define a \emph{pairing matrix}. This is a generalization of the topological $S$ matrix in the anyon theory. However, it is a different generalization compared with the 3d $S$ matrix \cite{Jiang:2014ksa} 
which is associated with the mapping class group of the three-dimensional torus. In contrast, a pairing matrix often does not associate with the mapping class group of any manifold. 

The pairing matrix pairs two types of excitations (or excitation clusters), those detected by $\Sigma(X)$ and those detected by $\Sigma(Y)$.  We shall see that the pairing matrix, denoted as $S$, is always unitary.
The interpretation of this property is that the pair of excitations remotely detect each other.  
The full explanation of the relation to remote detectability will appear in \cite{paper-two, paper-three}, where we relate the pairing matrix to overlaps of \emph{open} string or membrane (or other forms of) operators that create the topological excitations. These operators act within a ball and overlap nontrivially with each other. In the context of 2d gapped phases, this reduces to a braiding definition of $S$-matrix appeared in~\cite{Shi:2019ngw}. 
We give a preview of the remaining steps in \S\ref{subsec:preview}.

As we shall explain in Definition~\ref{def:pairing-matrix}, the most general pairing matrix is a square matrix and is unitary. Loosely speaking, it is the following matrix,
\begin{equation}
    S_{(a,i,j) (\alpha,I,J)} = \langle a_X^{(i,j)} \vert \alpha_Y^{(I,J)} \rangle /\textrm{[unfixed gauge]}
\end{equation}
with some requirements on ``fixing the gauge". As we shall discuss, there will be three types of pairing matrices depending on if $X$ and $Y$ are sectorizable. Type-I pairing matrix has both $X$ and $Y$ sectorizable, and there is no unfixed gauge. For the other two types, the unfixed gauge exists and is of physical relevance. We will say more about this in \cite{paper-two}.

\subsection{Pairing Matrix}
Below we explain the notation in some detail to prepare for the accurate discussion that follows. The Hilbert space $\mathbb{V}_{\CM}$ has two bases, for a given pairing \eqref{eq:pairing-ref-7}. The first basis is
\begin{equation}\label{eq:basis-X}
    | a^{(i,j)}_X \rangle, \quad  a \in \calC_{\partial X}, \quad i, j = 1, \cdots, \dim \mathbb{V}_a(X).
\end{equation}
Physically, $a$ represents the sector that appears on the sectorizable region $\partial X$. The first label $i$, in the superscript, labels a state on $\mathbb{V}_a(X)$, and the second label $j$ labels a state on $\mathbb{V}_a (\bar{X})$. Recall that $\dim\mathbb{V}_a(X) = \dim\mathbb{V}_{a}(\bar{X})$. There is a vacuum sector,  $1 \in\CC_{\partial X}$ for which $\dim \mathbb{V}_1(X)=1$. This is because $X^{\Diamond / \triangle}$ is a building block for which $\partial X$ is embedded in a sphere or a ball. %\BS{Referee asks why $\partial X$ is embedded. We have an answer!} 
For the vacuum sector, we often write $1=(1,1,1)$ for short.
The second basis is
\begin{equation}\label{eq:basis-Y}
    | \alpha^{(I,J)}_Y \rangle, \quad  \alpha \in \calC_{\partial Y}, \quad I, J = 1, \cdots, \dim \mathbb{V}_{\alpha}(Y),
\end{equation}
where $\alpha$ is the sector that appear on $\partial Y$,  the $I$ and $J$ in the superscript labels the states on $\mathbb{V}_{\alpha}(Y)$ and $\mathbb{V}_{\alpha} (\bar{Y})$, respectively. Similarly, there is a vacuum $1\in \CC_{\partial Y}$ for which $\dim \mathbb{V}_1(Y)=1$.

Because these two are orthonormal bases, it is natural to consider the matrix of inner products,  
\begin{equation}
    M_{(a,i,j), (\alpha,I,J)} = \langle a^{(i,j)}_X \vert \alpha^{(I,J)}_Y \rangle,
\end{equation}
which must be a square and unitary. While the matrix of inner products can be defined for any two bases, the pairing matrix is more special. 

One thing that makes the pairing matrix special is the fact that $X$ and $Y$ cut each other into balls. The uncertainty relation says that the state $|1_X\rangle$ (that is $| 1_X^{(1,1)}$) reduces to a minimum entropy state on $\Sigma(X)$ and at the same time reduces to the maximum entropy state on $\Sigma(Y)$ (Lemma~\ref{lemma:max-X-Y}). 
This implies that there is a way to ``fix the gauge" such that we can write: 
\begin{eqnarray}
    | 1_X \rangle &=& \sum_{\alpha \in \CC_{\partial Y}}  \sum_{I=1}^{\dim \mathbb{V}_{\alpha}(Y)} \frac{\sqrt{d_\alpha}}{\mathcal{D}_{\rm pair}} | \alpha_Y^{(I,I)}\rangle, \label{eq:1X-gauge}\\
    | 1_Y \rangle &=& \sum_{a \in \CC_{\partial X}}  \sum_{i=1}^{\dim \mathbb{V}_{a}(X)} \frac{\sqrt{d_a}}{\mathcal{D}_{\rm pair}} | a_X^{(i,i)}\rangle, \label{eq:1Y-guage}
\end{eqnarray}
where $\mathcal{D}_{\rm pair}=\sqrt{\capacitySymbol(X)}=\sqrt{ \sum_{a \in \calC_{\partial X}} \dim \mathbb{V}_a(X) \, d_a } $. Here $\capacitySymbol(X)$ is the capacity of $X$. (Recall that $\capacitySymbol(X)=\capacitySymbol(Y)$, by Proposition~\ref{prop:total-qds-equal-v2}. $\mathcal{D}_{\rm pair}$ is determined by the pairing.)

Why do Eqs.~\eqref{eq:1X-gauge} and \eqref{eq:1Y-guage} hold, for some gauge choice? To answer it, we recall a simple problem on a tensor product Hilbert space. Suppose $| \phi\rangle$ is a maximally entangled state in a 2-qudit system $AB$, where $\dim\mathcal{H}_A = \dim\mathcal{H}_B=K$. $\mathcal{H}_{AB}= \mathcal{H}_A \otimes \mathcal{H}_B$. Then it is possible to find a unitary matrix $U_A$ such that
\begin{equation} \label{eq:Schrodinger}
    U_A\otimes 1_B|\phi \rangle = \frac{1}{\sqrt{K}}\sum_{i=1}^K | i \rangle_A \otimes | i \rangle_B,
\end{equation} 
where $\{ | i\rangle_A \}$ and $\{ | i\rangle_B \}$ appeared in the right-hand side are two orthonormal bases.
Back to the original problem. The states in the two bases \eqref{eq:basis-X} and \eqref{eq:basis-Y} are not in a tensor product Hilbert space in the sense that $(i,j)$ do not associate with two $\dim \mathbb{V}_a$ dimensional Hilbert spaces in a tensor product Hilbert space. Nonetheless, we can still do a unitary rotation on the fusion space $\mathbb{V}_a(X)$ by some unitary operator supported within $X$. This explains the gauge choice \eqref{eq:1X-gauge} and \eqref{eq:1Y-guage}.

Formally, a pairing matrix is defined by the following gauge choice. 

\begin{definition}[Pairing matrix]\label{def:pairing-matrix}
For a given pairing \eqref{eq:pairing-ref-7}, the pairing matrix is the unitary square matrix
\begin{equation}
    S_{(a,i,j), (\alpha,I,J)} = \langle a^{(i,j)}_X \vert \alpha^{(I,J)}_Y \rangle,
\end{equation}
written in a gauge choice of the bases $\{ \vert a_X^{(i,j)} \rangle \}$ and $\{ \vert \alpha_Y^{(I,J)} \rangle \}$ such that Eqs.~\eqref{eq:1X-gauge} and \eqref{eq:1Y-guage} hold. 
\end{definition}

This definition implies that the ``first row" and the ``first column" are:  
\begin{equation}
       S_{(1,1,1)(\alpha, I, J)} = \frac{\sqrt{d_{\alpha}}}{\calD_{\rm pair}} \, \delta_{I,J} , \quad
       S_{(a,i,j)(1,1,1)} = \frac{\sqrt{d_a}}{\calD_{\rm pair}} \, \delta_{i,j}.
\end{equation}
Does this fix the gauge completely? If not, what is the residual gauge freedom? We shall answer this question for each of the three types of pairing matrices.

{\bf Three types of pairing matrices:} Pairing matrices fall into three types based on whether $X$ and $Y$ are sectorizable or not. (If only one of them is sectorizable, we shall use the convention that $X$ is sectorizable.) We have summarized examples of pairings with all three types in Table \ref{table:unitarity-combined} and \S\ref{sec:pairing-examples}. 
 
 {\bf Type-I:} both $X$ and $Y$ are sectorizable, in which case the pairing matrix is a matrix with entries
        $S_{a \,\alpha}$, 
    where $a$ runs over all sectors in $\calC_{\partial X}$ such that $\dim \mathbb{V}_a (X)=1$ and $\alpha$ runs over all sectors in $\calC_{\partial Y}$ such that $\dim \mathbb{V}_{\alpha}(Y) =1 $. This is because, $\dim \mathbb{V}_a (X)$ and $\dim \mathbb{V}_\alpha(Y)$ are either $0$ or $1$. 
    
    With a suitable change of notation, 
    the pairing matrix becomes
    \begin{equation}
        S_{a \,\alpha}, \quad \textrm{with} \quad a \in \calC_X  \quad \alpha \in \calC_Y.
    \end{equation}
    This implies that $|\CC_X| = |\CC_Y|$, i.e., the numbers of superselection sectors of sectorizable regions $X$ and $Y$ match.
    The first row and column then becomes
    \begin{equation}\label{eq:gauge-fix-type-I}
        S_{1\alpha} = \frac{d_{\alpha}}{\mathcal{D}_{\rm pair}} , \quad S_{a 1} = \frac{d_{a}}{\mathcal{D}_{\rm pair}} , 
    \end{equation}
    The absence of the square root on the numerator is because of the embedding of $a\in \calC_X$ into $\calC_{\partial X}$ (by a partition trace on $X\setminus \partial X$), which squares the quantum dimension. In this case, there is no residual gauge symmetry, this is because all we can do is to do phase rotation on each vector $\{ |a_X \rangle$ and $\{ |\alpha_Y\rangle \}$. Eq.~\eqref{eq:gauge-fix-type-I} completely determines the relative phases. 

    On general grounds, a type-I pairing matrix describes the overlap between two sets of minimally entangled states (MES) \cite{Zhang:2011jd} on sectorizable regions.
    In the most familiar case where the pairing manifold is $T^2$, discussed in Example~\ref{exmp:torus-pairing-manifold}, these two MESs are with respect to the two cuts of $T^2$ parallel to the horizontal and the vertical direction. These MESs\footnote{Ref.~\cite{Zhang:2011jd} focuses on the case of abelian topological order, where all extreme points have the same entropy, which is, therefore, a global minimum; more generally, the entropy of different MES will differ and in general are only local minima of the entanglement entropy.  
The improved prescription in \cite{Zhang:2014sza} addresses problems of \cite{Zhang:2011jd} that our construction does not share: those of identifying an MES associated with the vacuum, and of prescribing an order for the bases of MES states associated with each direction of the torus.  We do not have these problems because of the existence of the canonical states $\ket{1_X}$ and $\ket{1_Y}$, and because in cases where $X$ and $Y$ are the same (such as the annulus), both our bases are defined starting from the same set of extreme points. } are precisely what appeared in \cite{Zhang:2011jd} in their prescription for extracting the $S$-matrix from the set of torus ground states (see also \cite{Zhang:2014sza, 2022arXiv220304329L,Cian:2022vjb}). Therefore, we expect that the pairing matrix for $T^2$ is identical to the $S$ matrix of the anyon theory.
    Moreover, the matrix that relates the point particle and the flux loops, in 3d, considered in \cite{Wang2019} can be understood as the type-I pairing matrix associated with Example~\ref{exmp:S2S1}.

 {\bf Type-II:} $X$ is sectorizable and $Y$ is not. In this case,
    the pairing matrix reads 
    \begin{equation}
        S_{a\, (\alpha,I,J)}, \quad \textrm{with} \quad a \in \calC_X, \quad \alpha \in \calC_{\partial Y} .
    \end{equation}
    There is an apparently-unavoidable residual gauge symmetry when $\dim \mathbb{V}_{\alpha}(Y) \ge 2$, which rotates both indices $I$ and $J$. (This is for the same reason that for the state in Eq.~\eqref{eq:Schrodinger}, any unitary on $A$ can be combined with some unitary on $B$, which leaves the state invariant.)
    The first row of the matrix is proportional to $\delta_{I,J}$, and so the phases of the states $ \ket{\alpha, I, J}$ with $I\neq J$ cannot be fixed by this method. 
In particular, the half-linking matrix of \cite{Shen:2019wop} is an example of a type-II pairing matrix, where the possible $I, J$ comes from the condensation multiplicities. In our perspective, this is the pairing matrix associated with Example~\ref{exmp:cylinder-2-gapped-bdy}.

   {\bf Type-III:} Neither $X$ nor $Y$ is sectorizable. Here we need pairing in its most general form that appeared in Definition~\ref{def:pairing-matrix}. While we have not been able to identify related examples in previous literature, a concrete topological context where type-III matrix appears is the partition of $(S^2\times S^1) \# (S^2 \times S^1)$ discussed in Example~\ref{example-of-ball-minus-torus}.

Pairing matrices have the following curious features other than the relation to braiding non-degeneracy.

First, for the case of the 3d quantum double model a finite group $G$, the pairing matrix for $S^2 \times S^1$ is the character table, properly normalized.  
The reader may believe this statement on general grounds; we will give an explicit demonstration using minimal diagrams in \cite{paper-two}.
Other examples will be given as well in \cite{paper-two}.
Our general demonstration of the unitarity of the pairing matrix gives an independent ``proof" (not the most direct one) of the orthogonality theorem for characters of finite groups.   

Second, we shall present a pair of Verlinde-like formulas for each type of pairing matrix. As we shall discuss in \cite{paper-two}, a convenient way is to think in terms of the algebras of flexible operators associated with the excitation clusters detected by $X$ and $Y$, respectively. The special features of type-II and III pairing matrices lead to the
noncommutativity of these operator algebras.  

\subsection{Relation to remote detectability}
\label{subsec:preview}

In this brief subsection, we give a sketch of the steps relating the pairing matrix $S$ to a process by which the excitations involved remotely detect each other.   The full argument will appear in a sequel to this paper.

The idea is a generalization of the argument in \cite{Shi:2019ngw} for the case of $\CM=T^2$.  The idea there is to use unitary open-string operators to create and transport anyon-anti-anyon pairs.   The existence of these operators can be guaranteed by the axioms on the reference state, and the phase produced by braiding the pair of anyons around each other can be shown to participate in a Verlinde formula involving the fusion coefficients for those anyons.  

The first step to generalizing this construction to more general excitation types (in general dimension, and including gapped boundaries) is the notion of {\it flexible operator algebra}.  This is an algebra of operators 
acting on a region $\Omega$ with the property that when acting on the reference state, their support can be deformed topologically within $\Omega$.  These are the generalizations of {\it closed} string operators in the case of anyons.  These operators are not necessarily unitary.  
In \cite{paper-two}, we will show a precise sense in which the flexible operator algebra contains information about fusion rules and braiding properties. 
This algebraic view is dual to the information convex set.  In the special case of a reference state on a pairing manifold $\CM = X \bar X = Y \bar Y$, we can construct two bases of flexible operators 
$\{ W_Y^{(a,i,j)} \}$ and $\{ W_X^{(\alpha,I,J)} \}$ such that the pairing matrix can be represented as
\begin{equation}\label{eq:closed}
    S_{(a,i,j) (\alpha, I,J)}= \langle 1_X | W_Y^{(a,i,j)\dagger} W_X^{(\alpha,I,J)} | 1_Y\rangle.
\end{equation}
This expression requires operations on topologically nontrivial subsets $X, Y \subset \CM$. (For the case $\CM=T^2$, $X$ and $Y$ will be two non-contractible annuli.) 

The final step is to cut the flexible operators in half and ``fold'' the resulting {\it open-ended operators}  in a certain way \cite{paper-three} so that the  right-hand side of Eq.~\eqref{eq:closed} becomes an expectation of open-ended operators supported within a ball. 
Their action on the vacuum of the ball has an explicit interpretation in terms of the creation and adiabatic transport of topological excitations.  
Explicitly,
$W^{(a,i,j)} \buildrel{X}\over{\sim} \sqrt{d_a} \left( U^{ai}_L\right)^\dagger U^{ aj}_R$, where the equivalence relation $\buildrel{X}\over{\sim}$ means that the two sides agree when acting on elements of the information convex set of $X$.
Using this relation, we can directly relate the pairing matrix to the phases acquired by the associated topological excitations under braiding.

\section{Discussion}\label{sec:discussion}

In this paper, we have effectively constructed an entanglement bootstrap reference state on closed manifolds, starting from a state on the ball.  We have shown explicitly that this can be done for arbitrary orientable 2-manifolds, and for various examples in higher dimensions.  The crucial technique is to immerse (i.e., locally embed) a closed manifold with a puncture in the ball and to construct quantum states on the immersed region.
It is natural to ask: for which closed manifolds does this method work?  

{\bf Differential topology questions:}  
The underlying technique of our construction makes it clear that the notion of \emph{immersion}, i.e., local embedding, plays an important role. Furthermore, this  allows us to borrow tools from existing topology literature, differential topology especially.

Our notion of immersion is not precisely the same as the notion from differential topology because we work with a discrete lattice instead of smooth manifolds. Nevertheless, because we work with a coarse-grained lattice that is large enough (but finite), we expect that in this regime, differential topology should effectively apply. In other words, an immersion in the sense of differential topology should imply an immersion in our sense. The finiteness of lattice in our setup rules out pathologies like the Alexander Horned Sphere, which has an infinite amount of structure in a finite region. This differentiates our setup from the ``topological category" in math. So we believe that results in the smooth category should apply to our needs. It would be nice to make this more precise. As those distinctions are for a small set of exotic cases, one may expect them to be irrelevant physically. On the other hand, more careful readers may wonder if there is another mathematical structure that precisely captures our setup; a candidate is ``coarse structure"~\cite{roe2006coarse}, which provides a context to answer ``what happens on the large scale".\footnote{We thank Daniel Ranard for suggesting this possibility.} 

The result on immersion of differential manifolds most relevant to us is that of Hirsch and Smale (see \eg~\cite{wall2016differential}, chapter 6) showing that a punctured $d$-manifold $\CW$ may be immersed in the $d$-ball if and only if its tangent bundle $T\CW$ is trivial.  The only-if is clear since the immersion provides a trivialization of the tangent bundle.  Since the tangent bundle of any oriented 3-manifold is trivializable, any oriented 3-manifold (minus a ball) can be immersed in the ball.

In the very recent paper \cite{Freedman:2022ksk}, the case of 4-manifolds is shown to have the most favorable possible conclusion for our purposes: the {\it only} obstruction to immersing a punctured connected 4-manifold is the Stiefel-Whitney classes of its tangent bundle.
  This means that any spin 4-manifold (for which $w_1(T\CW)$ and $w_2(T\CW)$ vanish) can be immersed in the 4-dimensional ball.
Another fact offered by the same reference is that any punctured connected orientable 4-manifold can be immersed in $\mathbb{CP}^2$. These results suggest what manifolds we can hope to construct by our approach in 4d, starting with a reference state on either a ball or a $\mathbb{CP}^2$.

What precisely are the secrets 
hidden in the ground states on manifolds with nontrivial characteristic classes, \eg~$\mathbb{R} \mathbb{P}^1$ and 
 $\mathbb{C} \mathbb{P}^2$, (or obstruction for putting ground states on them) is not clear to us.  One interesting point is that the vanishing of the characteristic classes of $T\CM$ is {\it also} precisely the condition for $\CM$ to be null-bordant, that is, for $\CM$ to be the entire boundary of a $(d+1)$-manifold.  In the Atiyah-Segal axiomatic approach to TQFT \cite{atiyah1988topological}, a manifold $\CM$ must be null-bordant in order to construct a vacuum on $\CM$.  

It is also worth noting that the characteristic classes of the tangent bundle of $\CM$, are also believed to be precisely the data 
characterizing the response to placing an invertible phase with a finite internal symmetry group and vanishing thermal Hall response
on $\CM$ \cite{Kapustin:2014tfa}.  Interestingly, here $\CM$ is a spacetime, whereas in our case $\CM$ is a space manifold.  
(Relatedly, comparing to a TQFT, which produces a number for any $(d+1)$-manifold, our approach so far only allows us to study spacetime manifolds of the form $\CM_d \times \CM_1$.)
It is an interesting question if there is any connection between the relevance of characteristic classes in the two seemingly different problems. An optimistic possibility\footnote{suggested by Jake McNamara} is that this is the {\it only} data that we miss because of this obstruction to Kirby's torus trick. 

Kitaev's unpublished work on the classification of invertible phases \cite{Kitaev-2011, Kitaev-2013} involves an apparently different construction of ground states on closed manifolds from a ground state on a ball.
Intriguingly, there is an identical restriction on which manifolds can be constructed; see~\cite{Kitaev-2013} at around 1 hour and 35 minutes. In particular, the condition of a manifold having a stable normal frame implies that the characteristic classes vanish.

 {\bf Braiding nondegeneracy and beyond:}
In two dimensions, the braiding matrix is called $S$ because of its relation to the eponymous generator of $\mathsf{SL}(2, \IZ)$, the modular group of $T^2$.  The braiding matrices in 3d relating graphs and collections of particles that we have studied in this paper are related to connected sums of $S^2 \times S^1$ and lack such a connection.  The orientation-preserving mapping class group of $S^2 \times S^1$ is generated by two order-two elements.  One of them reverses the orientation of each factor.  The other generalizes the Dehn twist: as we go around the circle, we do a $2\pi$ rotation of the $S^2$.  We expect that the latter generator is related to the statistics of the excitations; note that, unlike the Dehn twist of $T^2$, this element is of order two, reflecting the fact that the particle excitations in 3+1d can only be bosons or fermions.

We anticipate that a full understanding of the braiding matrix for the Hopf excitations \cite{knots-paper,Wen2017} will use $T^3$ as the analog of the pairing manifold and that the modular transformations of $T^3$ \cite{Moradi-Wen1, Moradi-Wen2, Jiang:2014ksa} will play a role.  
However, as we mentioned in non-Example~\ref{nonexample-T3}, $T^3$ seems not to be a pairing manifold by the definition in this paper: there are three apparent ways to divide $T^3 =(S^1)^3$ in half by cutting along each of the three circles; however, these regions $X, Y, Z$ intersect in pairs in solid tori rather than balls.  The excitations detected by $\Sigma(X)$ are paired with some excitations detected by $\Sigma(Y)$ and some excitations detected by $\Sigma(Z)$.  So there is a sort of triality rather than a duality.

The concept of pairing manifold involves two partitions $\CM=X\bar{X}=X \bar{Y}$, where $X$ and $Y$ cut each other into balls. This is a somewhat intricate condition designed in order to prove the nontrivial statement of braiding nondegeneracy. However, it might be possible to relax some of the assumptions to cover other braiding-related phenomena. For instance, is there a sense that a ``thickened Klein bottle" in 3d can be a choice of $X$ in a pairing manifold or its generalization? Note that the thickened Klein bottle can immerse (but cannot embed) in a three-dimensional ball.

 Our derivation of braiding non-degeneracy generalized to systems with gapped boundaries. The pairing matrix in that setup pairs excitations on the gapped boundaries and bulk excitations that condense on the boundary. One may hope to generalize this idea to higher-dimensional defects of TQFTs in higher dimensions. There are interesting defects
\cite{2017PhRvB..96d5136E, Roumpedakis:2022aik} that have trivial braiding with all particle excitations and do not modify the information convex set of a linked solid torus.  One example\footnote{We thank Shu-Heng Shao and Jake McNamara for mentioning this example to us.} of string-like defects (Cheshire charge loops) appear in $\IZ_2 \times \IZ_2$ gauge theory, where the loop is defined in the spacetime path integral by the insertion $W(C) = e^{ \ii \int_C a_1 \cup a_2 } $.  It corresponds to inserting an SPT (specifically, a cluster state) along the worldsheet of the loop.  Based on the analog to the condensation to a gapped boundary (e.g., Example~\ref{exmp:cylinder-2-gapped-bdy} and \ref{exmp:3d-sphere-shell-pairing}), we suspect that the solid torus that thickens the string should contain some signature.

{\bf Universal data from a state on a ball:} The construction of states on closed manifolds from a reference state on a ball suggests that the ball may contain all the information of a topologically ordered system (or an emergent TQFT).

The construction of closed manifolds in this work  relies on the understanding of the topology of immersed regions as well as the structure of information convex sets on such regions. The diverse topology comes from the fact that immersed regions are topologically more diverse than embedded regions. 

There are two other rich properties offered by immersion that deserve further study. First, immersion also provides inequivalent ways to immerse a manifold of the same topology; see, e.g., Fig.~\ref{fig:2d-bulk-region-types}(b) and Example~\ref{exmp:punctured-torus-exotic} for illustrations. (A related fact in 3d differential topology is that surfaces can immerse in multiple ways~\cite{Hass1982}.) One may wonder if inequivalent immersions of a region give isomorphic information convex sets. This is an open question in 2d, even for the context illustrated precisely in Fig.~\ref{fig:2d-bulk-region-types}(b). Second, the presence of immersion also implies that regions can deform through a sequence of immersed configurations. Each class of deformation that maps a region back to itself gives rise to (potentially nontrivial) automorphisms of the information convex set associated with the region. It is an interesting question what information these automorphisms characterize.

As is known from previous studies, information convex sets do not characterize all the data of a gapped phase by itself. One object we do not expect them to characterize in 2+1d is the chiral central charge, which is, nonetheless, characterized in a different way~\cite{Kim2021} by the density matrix on a ball.
(See Ref.~\cite{Fan2022} for a related proposal for systems with $U(1)$ symmetry.) An open question is how to extract the higher dimensional analogs of chiral central charge (which exists in $4n+2$ space dimensions based on TQFT arguments) and higher dimensional analogs of Hall response (which exists in even space dimensions). 
 Even for invertible phases, which necessarily have a trivial information convex set, there are more questions to be answered. For instance, there is an invertible phase in 4+1d, which has $\mathbb{Z}_2$ classification~\cite{Kapustin:2014tfa,Freed2014,Kravec2015,Fidkowski2021,Chen2021,Wan2022}. A non-additive characterization may detect this classification.

The perspective offered above, namely that an arbitrary 3-manifold (with entanglement boundaries) has a role to play in the entanglement bootstrap, begs the question of whether this huge multiplicity of data is all independent or whether it is determined from some finite subset.  
In two dimensions, the associativity theorem implies that the data of the annulus and two-hole disk suffice to determine the fusion multiplicities for any other region.  
In three dimensions, we do not yet know an analogous statement. There are multiple ways to partition a 3-manifold. One may ask if the prime decomposition of the general 3-manifold allows for the existence of such minimal data. While we do not know the answer to this question, we notice that the ``sequential completion" discussed in \S\ref{subsec:non-vacuum-completions} provides a way to obtain connected sums of 3-manifolds (as with the boundary analog). Three manifolds can also be split into two handlebodies, a procedure known as Heegaard splitting. A third method is to obtain 3-manifolds by gluing along torus boundaries, leading to the JSJ decomposition. It is an interesting question if we can learn anything from the consistency between different ways of obtaining a 3-manifold.

\vfill\eject
{\bf Acknowledgements.} 
We are grateful to Anuj Apte, Dan Freed, Tarun Grover, Peter Huston, Isaac Kim, Michael Levin, Xiang Li, Ho Tat Lam, Ting-Chun Lin, Daniel Ranard, Jake McNamara, Julio Parra-Martinez, David Penneys, Shu-Heng Shao and Hao-Yu Sun for helpful discussions and comments, to Lauren Miyashiro for the bundt cake picture, and to \cite{Zou:2020bly} for teaching us the word `porism'.  
This work was supported in part by
funds provided by the U.S. Department of Energy
(D.O.E.) under the cooperative research agreement 
DE-SC0009919, 
by the University of California Laboratory Fees Research Program, grant LFR-20-653926,
and by the Simons Collaboration on Ultra-Quantum Matter, which is a grant from the Simons Foundation (652264, JM).

\appendix
\renewcommand{\theequation}{\Alph{section}.\arabic{equation}}

\section{Notation glossary}\label{section:glossary}

% \begin{table}[h]
% \def\arraystretch{1.5}
% \centering
\[
\begin{array}{|c|c|c|} 
\hline
\text{notation} & \text{meaning} & \text{first appears in} \\ 
\hline \hline
S^d & \text{a $d$-dimensional sphere} & \S\ref{sec:introduction}\\ 
\mathbf{S}^{d} & \text{a $d$-dimensional sphere with a reference state} & \S\ref{sec:introduction} \\
B^d & \text{a $d$-dimensional ball} & \S\ref{sec:introduction}  \\
\mathbf{B}^{d} & \text{a $d$-dimensional ball with a reference state} & \S\ref{sec:introduction} \\
\underline{B}^{d} & \text{a $d$-dimensional half-ball adjacent to  } & \vspace{-0.13 cm}\\
 & \text{a gapped boundary} & \\
\underline{\mathbf{B}}^{d} & \text{a $d$-dimensional half-ball adjacent to a gapped} &  \vspace{-0.13 cm} \\ 
 & \text{boundary with a reference state} & \\
\underline{\underline{\mathbf{B}}}^{d} & \text{a $d$-dimensional ball with an entire gapped}& \vspace{-0.13 cm}\\
& \text{ boundary with a reference state} & \\
{\cal B}_g & \text{a ball minus $g$ balls} & \S\ref{sec:introduction} \\
A \# B & \text{connected sum of manifolds $A$ and $B$} & \S\ref{subsec:reader-guide}\\

\text{entanglement} & \text{a boundary of a region that} & \vspace{-0.13 cm} \S\ref{sec:2-background}\\
\text{boundary} & \text{is not part of the gapped boundaries} & \\
\partial \Omega & \text{thickened entanglement boundary of the region $\Omega$} & \S\ref{subsection:axioms-and-tools} \\
 {\rm ext} (\Sigma(\Omega)) & \text{the set of extreme points of the convex set $\Sigma(\Omega)$} & \text{Def.}~\ref{def:sub-information-convex-set} \\
\Sigma_{[\kappa_A]}(\Omega) & \text{convex subset of $\Sigma(\Omega)$ of density matrices} & \text{Def.}~\ref{def:sub-information-convex-set} \vspace{-0.13 cm}\\
& \text{that restrict to $\rho_A^\kappa$ on $A$} & \\
\Sigma_{I[\kappa_A]}(\Omega) & \text{convex subset of $\Sigma_I(\Omega)$ of density matrices }& \text{Def.}~\ref{def:sub-information-convex-set} \vspace{-0.13 cm} \\
& \text{that restrict to $\rho_A^\kappa$ on $A$ }& \\ 
\mathbb{V}_{I[\kappa_A]}(\Omega) & \text{constrained fusion space} & \text{Def.}~\ref{def:sub-information-convex-set} \\
X \overset{\CN}{\sim} X' & \text{regions $X$ and $X'$ can be smoothly deformed} & \S\ref{section:3} \vspace{-0.13 cm}\\
& \text{into each other as immersed regions of $\CN$ }& \\
E(h) & \text{the centralizer subgroup of an element $h \in G$} & \text{Prop.} \ref{prop:Cg}\\

\CW \looparrowright  \CX & \text{immersion of $\CW$ into $\CX$ }& \S\ref{sec:completion} \\ 
\text{compact manifold} & \text{a manifold with no entanglement boundaries} & \S\ref{sec:completion}\\
\text{closed manifold} & \text{a compact manifold with no gapped boundaries}& \S\ref{sec:completion}\\
\# g T^2 & \text{genus-$g$ Riemann surface} & \text{Fig.}~\ref{fig:g=2-completion-method2} \\

\capacitySymbol(X) & e^{S_X|^{\rho^\text{max}}_{\rho^\text{min}}} \text{where $\rho^\text{max}$ is the maximum-entropy state } & \text{Def.}~\ref{def:capacity} \vspace{-0.13 cm}\\
& \text{in $\Sigma(X)$ and $\rho^\text{min}$ is any minimum-entropy }& \vspace{-0.13 cm}\\
&  \text{state in $\Sigma(X)$ }&  \\
\hline
\end{array}
\]

\[
\begin{array}{|c|c|c|} 
\hline
\text{notation} & \text{meaning} & \text{first appears in} \\ 
\hline \hline

S_\Omega |^{\rho_1}_{\rho_2} & S(\rho^1_{\Omega})-S(\rho^2_{\Omega}) & \eqref{eq:levin-wen-TEE}\\
A\cup_{\color{intersectionColor}\CIRCLE} B & \text{the immersed union of two regions } A \text{ and } B & \eqref{eq:immersed-union}\\ 

G_g & \text{genus-$g$ handlebody} & \S\ref{appendix:handlebodies} \\
E(C) & \text{the centralizer of a representative of}  & \S\ref{subsec:quantum-double-graphs} \vspace{-0.13 cm}\\
& \text{the conjugacy class $C$ }& \\
\hline
\end{array}
\]
% \end{tabular}
% \end{table}

\newpage

\section{Consistency Relations with Quantum Dimensions}
\label{subsection:relations-with-d}

In entanglement bootstrap, two kinds of consistency relations are often derived and then used; they are summarized in Table~\ref{tab:two-kinds}. The first kind is derived from merging \emph{whole components} of entanglement boundaries. The consistency relations derived this way are associativity of fusion multiplicities, analogous of $N_{abc}^d = \sum_i N_{ab}^i N_{ic}^d$. Consistency relations of the second kind involve both fusion multiplicities and quantum dimensions, and they are derived by merging \emph{part of} the entanglement boundaries of the given regions; see Fig.~\ref{fig:consistency-relation-with-d} for an illustration. One example is $d_a d_b = \sum_c N_{ab}^c d_c$. While the first kind may look more familiar to TQFT audiences, consistency relations of the 2nd kind are arguably more fundamental in entanglement bootstrap: starting from the axioms {\bf A0} and {\bf A1}, we need the second kind of merging to derive the isomorphism theorem,  the Hilbert space theorem, before we can talk about fusion spaces.  

\begin{table}[h]
    \centering
    \begin{tabular}{|c|c|c|}
\hline
  & merging & consistency relation \\
\hline
1st kind & \begin{tabular}[c]{@{}c@{}} entire entanglement\\boundaries\end{tabular} %entire entanglement boundaries 
& \begin{tabular}[c]{@{}c@{}}with only fusion\\multiplicities $N_{ab}^c$, $N_{[ab]}^c$ \end{tabular} \\
% with only fusion multiplicities \\
\hline
2nd kind & \begin{tabular}[c]{@{}c@{}}parts of entanglement \\boundaries\end{tabular}
%parts of entanglement boundaries 
&  \begin{tabular}[c]{@{}c@{}} with fusion multiplicities $N_{ab}^c$, $N_{[ab]}^c$  \\and quantum dimensions $d_a$ \end{tabular}
\\ 
\hline
\end{tabular}
    \caption{Two kinds of merging processes and the consistency relations they derive.}
    \label{tab:two-kinds}
\end{table}

In \cite{knots-paper}, a general Associativity Theorem was derived, with which we read off the desired associativity relations without repeating the proof in each context. This simplifies the task of finding consistency relations of the first kind.

The goal of this appendix is to provide a few convenient results for the second kind of consistency relation: the kind that requires merging part of the entanglement boundaries. As indicated in Table~\ref{tab:two-kinds}, the fusion multiplicities include those from constrained fusion spaces (Definition \ref{def:sub-information-convex-set}). 
We start with the general setup (Definition~\ref{def:mering-setup}). 

\begin{figure}[h]
    \centering
    \includegraphics[width=0.96\textwidth]{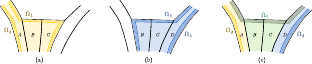}
    \caption{Schematic depiction of the merging setup of this appendix (Definition~\ref{def:mering-setup}).  (a) The thickened entanglement boundaries of $ABC$ are $\Omega_1$ and $\Omega_2$, where $\Omega_2$ may be empty. (b) The thickened entanglement boundaries of $BCD$ are $\Omega_3$ and $\Omega_4$, where $\Omega_4$ may be empty. (c) After merging, $\Omega_5$ appears as a newly formed entanglement boundary.}
    \label{fig:consistency-relation-with-d}
\end{figure}

\begin{definition}[merging setup, as illustrated in Fig.~\ref{fig:consistency-relation-with-d}]\label{def:mering-setup}
Consider a region $ABCD$ embedded in a ball.
\begin{enumerate}
    \item  $ABC$ and $BCD$ are connected sectorizable regions of the form $\mathcal{M} \times \mathbb{I}$, where $\mathcal{M}$ is a connected manifold.  
    
    \item $\partial(ABC)=\Omega_1\cup \Omega_2$, where $\Omega_1$ and $\Omega_2$ are separated, $\Omega_2\subset A$ may be empty 
    
    \item  $\partial(BCD)=\Omega_3 \cup \Omega_4$, where $\Omega_3$ and $\Omega_4$ are separated,  $\Omega_4\subset D$ may be empty
    
    \item  $\partial(ABCD)= \Omega_2 \cup \Omega_4 \cup \Omega_5$, where $\Omega_5$ is the newly formed entanglement boundary, obtained by merging part of $\Omega_1$ and $\Omega_3$. ($\Omega_2$, $\Omega_4$ and $\Omega_5$ are separated from each other.) 

    \item We assume that $I(A:C|B)_{\rho^a}=0$, for any $a\in \mathcal{C}_{ABC}$ and $I(B:D|C)_{\rho^d}=0$ for any $d\in \mathcal{C}_{BCD}$. In addition, we assume that, in our setup, $\rho^a_{ABC}$ and $\rho^d_{BCD}$ can be merged in a way required by the merging theorem as long as they match on $BC$, that is $\Tr_A \, \rho^a_{ABC}= \Tr_D \,\rho^d_{BCD}$.
    
    This reduced density matrix on $BC$, ($\Tr_A \, \rho^a_{ABC}$) must then be an extreme point of $\Sigma(BC)$ because ${ABC}$ is sectorizable (Proposition 2.23 in \cite{knots-paper}). 
    
    \item 
    Let $a\in \mathcal{C}_{ABC}$, $d \in \mathcal{C}_{BCD}$, $b \in \mathcal{C}_{\partial(BC)}$ and $e \in \calC_{\Omega_5}$.  
    We say $a,d,b$ ($a$ and $d$) are \emph{compatible} if $\Sigma_{[adb]}(ABCD)$ ($\Sigma_{[ad]}(ABCD)$) is nonempty.
    Here and below, we use $[abd]$ ($[ad]$) as the short-hand notation of $[a_{ABC}  b_{\partial BC} d_{BCD}]$ ($[a_{ABC} d_{BCD}]$). The quantum dimensions $d_a, d_b, d_d, d_e$ are defined from the entropy difference Eq.~\eqref{eq:quantum-dim-definition-1}, noting that the reference state ($\sigma$) exists on the respective sectorizable regions.  Let $N_{[ad]}^e$ be the dimension of the constrained fusion space associated with $\Sigma_{[ad]}^e(ABCD)$. (See Lemma \ref{lemma:sub-information-convex-set} to recall the notations.)

\item We
define $\widehat{d}_a$ and $\widehat{d}_d$ as 
\begin{equation}
        \widehat{d}_a = 
\begin{cases}
    d_a^2,& \Omega_2 = \emptyset \\
    d_a,              & \Omega_2 \not = \emptyset
\end{cases}, \qquad \textrm{and} \qquad 
\widehat{d}_d= 
\begin{cases}
    d_d^2,& \Omega_4 = \emptyset \\
    d_d,              & \Omega_4 \not = \emptyset
\end{cases}.
\end{equation}
They allow simple physical interpretations. $\widehat{d}_a$ and $\widehat{d}_d$ are the quantum dimensions of the superselection sectors on regions $\Omega_1$ and $\Omega_3$, obtained by reducing $\rho^a_{ABC}$ and $\rho^d_{BCD}$ respectively. 
\end{enumerate}
\end{definition}

\begin{Proposition}\label{prop:consistency-relation-with-d}
Under the setup in Definition~\ref{def:mering-setup}, when $a,b,d$ are compatible 
\begin{equation}\label{eq:consistency-relation-with-d}
		\frac{\widehat{d}_a \widehat{d}_d}{d_{b}} = e^{-I(A:D|BC)_\sigma} \sum_{e} N_{[ad]}^e d_e,
	\end{equation}
 and moreover, the probability of finding $e\in \calC_{\Omega_5}$ in the merged state of $\rho^a_{ABC}$ and $\rho^d_{BCD}$ is 
 \begin{equation}\label{eq:prob-P}
     P_{ad\to e} = \frac{N^e_{[ad]} d_e}{\sum_e N^e_{[ad]} d_e}.
 \end{equation}  
 \end{Proposition} 
Most of the consistency relations in \S6.2, 6.3, and 6.4 of \cite{knots-paper} are special cases of Proposition~\ref{prop:consistency-relation-with-d}. In particular, the consistency relations about knot multiplicities in \S 6.3 and 6.4 therein requires a non-vacuum state on $BC$. 
The consistency relations in \S6.4 of \cite{knots-paper} (for the torus minus a torus knot) also follow from this theorem.
An example including a nonvanishing conditional mutual information factor $I(A:D|BC)_{\sigma}$ is Eq.~(6.4) of \cite{knots-paper}.  More broadly, Proposition~\ref{prop:consistency-relation-with-d} can be used to derive analogous consistency relations for torus links!

In \S\ref{subsec:graph-relations} we will apply Proposition \ref{prop:consistency-relation-with-d} to more examples, including the genus-two handlebody. 
In \S\ref{subsec:quantum-double-graphs}, we explicitly verify these consistency conditions for the case of the $S_3$ quantum double model in 3d.

It is possible that all connected sectorizable regions are of the form $\mathcal{M}\times \mathbb{I}$, but we do not know.
In the latter case, it is interesting to generalize Proposition~\ref{prop:consistency-relation-with-d}. Furthermore, there can be an analogous statement without the assumption that $ABC$ or $BCD$ are sectorizable. 

To prepare for the proof of Proposition~\ref{prop:consistency-relation-with-d} we present one lemma \ref{lemma:extreme-point-entropy} and its corollary.

\begin{lemma}\label{lemma:extreme-point-entropy}
Under the setup in Definition~\ref{def:mering-setup}, and suppose  
$a\in \calC_{ABC}$ and $d\in \calC_{BCD}$ are compatible. If $e\in \calC_{\Omega_5}$ and $\rho^{ade}_{ABCD} \in {\rm ext}(\Sigma_{[ad]}^e(ABCD))$, then
    \begin{equation}\label{eq:lemma-b2}
        S_{ABCD}|^{\rho^{ade}}_\sigma = \ln \left( \frac{d_a^2}{\widehat{d}_a}\right) + \ln \left( \frac{d_d^2}{\widehat{d}_d} \right) + \ln d_e. %\ln (d_a d_d d_e)~.
    \end{equation} 
    In particular, every extreme point of $\Sigma_{[ad]}^e(ABCD)$ has the same von Neumann entropy.
\end{lemma}

\begin{proof}
Both $\sigma_{ABCD}$ and $\rho^{ade}_{ABCD}$ are extreme points of $\Sigma(ABCD)$. Therefore, the contributions of entropy difference
$S_{ABCD}|^{\rho^{ade}}_\sigma$ come from the three thickened entanglement boundaries, $\Omega_2$, $\Omega_4$ and $\Omega_5$:
\begin{equation}
\begin{aligned}
  S_{ABCD}|^{\rho^{ade}}_\sigma &= \frac{1}{2}(S_{\Omega_2} + S_{\Omega_4} + S_{\Omega_5})|^{\rho^{ade}}_\sigma   \\
  &= \ln \left( \frac{d_a^2}{\widehat{d}_a}\right) + \ln \left( \frac{d_d^2}{\widehat{d}_d} \right) + \ln d_e.
\end{aligned}  
\end{equation}
The second line follows from the physical meanings of $\widehat{d}_a$ and $\widehat{d}_d$.
Explicitly,
\begin{equation}
\begin{aligned}
    \frac{1}{2} S_{\Omega_2}|^{\rho^{a}}_\sigma & = S_{ABC}|^{\rho^{a}}_\sigma - \frac{1}{2} S_{\Omega_2}|^{\rho^{a}}_\sigma \\
    & = \ln d_a^2  - \ln \widehat{d}_a.
\end{aligned} 
\end{equation}
A similar derivation shows $\frac{1}{2} S_{\Omega_4}|^{\rho^{d}}_\sigma = \ln d_d^2  - \ln \widehat{d}_d$.
\end{proof}

Lemma~\ref{lemma:extreme-point-entropy} implies that the state with maximal entropy in $\Sigma_{[ad]}^e(ABCD)$ is a convex combination of mutually orthogonal extreme points in $\Sigma_{[ad]}^e(ABCD)$, as:
\begin{equation}
    \rho_{ABCD}^{(ade)_\text{max}} = \frac{1}{N_{[ad]}^e} \sum_{i=1}^{N_{[ad]}^e}\rho_{ABCD}^{(ade)_i},
\end{equation}
where $\{\rho_{ABCD}^{(ade)_i} \}_{i=1}^{N_{{ad}}^e}$ are mutually orthogonal extreme points of $\Sigma_{[ad]}^e(ABCD)$ associated with an orthonormal basis of $\mathbb{V}_{[ad]}^e(ABCD)$.\footnote{Note that the fusion space $\mathbb{V}_{[ad]}^e(ABCD)$ is well-defined because $a,d,e$ fully determine the superselection sector at $\partial(ABCD)$.} 
The state $\rho_{ABCD}^{(ade)_\text{max}}$ has entropy
\begin{equation}\label{eq:max-entropy}
    S(\rho_{ABCD}^{(ade)_{\text{max}}}) = S(\sigma_{ABCD})+\ln \left(N_{[ad]}^e \frac{d^2_a d^2_d }{\widehat{d}_a \widehat{d}_d}d_e \right).
\end{equation}
where $\sigma_{ABCD}$ is the reference state. 
Now we are ready to prove Proposition~\ref{prop:consistency-relation-with-d}.

\begin{proof}[Proof of Proposition~\ref{prop:consistency-relation-with-d}] 
   Suppose $a\in \calC_{ABC}$ and $d\in \calC_{BCD}$ are compatible. Merge $\rho^a_{ABC}$ with $\rho^d_{BCD}$ to obtain the merged state $\tau^{ad}_{ABCD}$.  
    
    We are going to compute $S_{ABCD}|^{\tau^{ad}}_\sigma \equiv S(\tau^{ad}_{ABCD})-S(\sigma_{ABCD})$ in two different ways. On one hand, $S_{ABCD}|^{\tau^{ad}}_\sigma$ can be calculated from the entropies of the density matrices of the sub-regions, knowing $I(A:D|BC)_{\sigma}$. On the other hand, the merged state $\tau^{ad}_{ABCD}$ is the state with maximal entropy in $\Sigma_{[ad]}(ABCD)$ and we use the structure theorem of $\Sigma_{[ad]}(ABCD)$ to express the entropy difference.    
    
    \begin{enumerate}
        \item By definition of conditional mutual information, 
        \begin{equation}
            S_{ABCD}|^{\tau^{ad}}_\sigma = (S_{ABC}+S_{BCD}-S_{BC}-I(A:D|BC))|^{\tau^{ad}}_\sigma~.
        \end{equation}
        By the merging lemma~\cite{shi2020fusion}, 
        $I(A:D|BC)_{\tau^{ad}}=0$ for an arbitrary pair of compatible $a$ and $d$. Since $BC$ is embedded in $ABC$, from Proposition~2.23 in \cite{knots-paper}, Tr$_A \rho_{ABC}^a$ is an extreme point of $\Sigma(BC)$ (and also $\Sigma(\partial(BC))$.)  
        Let $b\in \calC_{\partial(BC)}$ be the sector label associated with this state. Using the definition of quantum dimension $d_I \equiv \text{exp}(\frac{S(\rho^I)-S(\sigma)}{2})$ we get:  
        \begin{equation}\label{eq:dS-method-1}
            S_{ABCD}|^{\tau^{ad}}_\sigma = \ln \frac{d_a^2 d_d^2}{d_b}+I(A:D|BC)_\sigma~.
        \end{equation}
        The reason we have $d_b$ instead of $d_b^2$ in the denominator is $S_{\partial(BC)}|^{\tau^{ad}}_\sigma =2 S_{BC}|^{\tau^{ad}}_\sigma$. 
        
        \item On the other hand, $\tau^{ad}_{ABCD}$ is the maximum-entropy state of $\Sigma_{[ad]}(ABCD)$, following the ``standard" maximization procedure \cite{shi2020fusion,Shi:2018bfb} of entanglement bootstrap, we have
        \begin{equation}\label{eq:dS-method-2} 
        \begin{aligned}
            S_{ABCD}|^{\tau^{ad}}_\sigma &= \max_{\{P_{ad\to e} \} }S(\sum_e P_{ad\rightarrow e} \,\rho_{ABCD}^{(ade)_{\text{max}}}) -S(\sigma_{ABCD}) \\
            &= \max_{\{P_{ad\to e} \}} \sum_e P_{ad\rightarrow e} (S_{ABCD}|^{\rho^{(ade)_{\text{max}}}}_\sigma -\ln P_{ad \rightarrow e} ) \\
            &= \max_{\{P_{ad\to e} \}} \sum_e P_{ad\rightarrow e} \left(\ln \left(N_{[ad]}^e \frac{d^2_a}{\widehat{d}_a}  \frac{d^2_d}{\widehat{d}_d} d_e\right) -\ln P_{ad \rightarrow e} \right) \\
            &= \ln \left( \sum_e N^e_{[ad]} \frac{d_a^2}{\widehat{d}_a} \frac{d_d^2}{ \widehat{d}_d} d_e \right) - \max_{\{P_{ad\to e} \}} D({Q_{ad\to e}|| P_{ad\to e}}) \\
            & = \ln \left( \sum_e N^e_{[ad]} \frac{d_a^2}{\widehat{d}_a} \frac{d_d^2}{ \widehat{d}_d} d_e \right).
        \end{aligned}
        \end{equation}
        Here $\{ P_{ad\to e} \}$ is a probability distribution, i.e., $ P_{ad\to e} \ge 0$ and $ \sum_e P_{ad\to e} = 1$.
        The second equality is due to the orthogonality of different extreme points in $\Sigma_{[ad]}^e(ABCD)$. The third line follows from Eq.~\eqref{eq:max-entropy}. In the fourth line, $Q_{ad\to e} =\frac{N_{[ad]}^e d_e}{\sum_e N_{[ad]}^e d_e}$ (which is a probability distribution), and 
    \begin{equation}
       D(\{p_i \} ||\{q_i \} ) \equiv \sum_i p_i \ln (p_i / q_i),
    \end{equation} 
is the classical relative entropy with two probability distributions $\{ p_i\}$ and $\{q_i\}$.
The last line is obtained by the fact that  $D(\{p_i \} ||\{q_i \} ) \ge 0$ for any $\{p_i \}$ and $\{ q_i \}$, where ``$=$" holds if and only if $\{p_i \} = \{ q_i \}$. This last step also implies that the optimal probability distribution $\{P_{ad\to e}\}$ is 
        \begin{equation}\label{eq:optimal-P}
            P_{ad\rightarrow e} = \frac{N_{[ad]}^e d_e}{\sum_e N_{[ad]}^e d_e}~.
        \end{equation} 

An alternative way to find Eq.~\eqref{eq:optimal-P} uses the Lagrange multiplier. 
To get the maximum of $S_{ABCD}|^{\tau^{ad}}_\sigma $ with the constraint that $\sum_e P_{ad\rightarrow e} = 1$, we let $h(P_{ad\rightarrow e}) \equiv \sum_e P_{ad\rightarrow e} (\ln (N_{[ad]}^e \frac{d_a^2}{\widehat{d}_a} \frac{d_d^2}{ \widehat{d}_d} d_e) -\ln P_{ad \rightarrow e}) + \lambda (\sum_e P_{ad\rightarrow e} -1)$ be a function of $P_{ad\rightarrow e}$. When this function is at maximum, its first derivative is zero. That is,
        \begin{equation}
            \frac{\partial h}{\partial P_{ad\rightarrow e}} = \ln (N_{[ad]}^e \frac{d_a^2}{\widehat{d}_a} \frac{d_d^2}{ \widehat{d}_d} d_e) -\ln P_{ad \rightarrow e} -1 +\lambda~ = 0.
        \end{equation}
        This equation gives $P_{ad\rightarrow e} \propto N_{[ad]}^e d_e$. After normalizing, we arrive at Eq.~\eqref{eq:optimal-P}. 
    \end{enumerate}   
    Comparing Eq.~(\ref{eq:dS-method-1}) and Eq.~(\ref{eq:dS-method-2}), we obtain Eq.~(\ref{eq:consistency-relation-with-d}). 
\end{proof}

\section{Information convex set for handlebodies}\label{appendix:handlebodies}

A genus-$g$ handlebody, which we denote as $G_g$, 
is a boundary connected sum of solid tori; its 
entanglement boundary is a genus-$g$ Riemann surface.
As we explained in \S\ref{subsub:3d-bulk}, its information convex set detects excitations supported on {\it graphs} living in the complement of the handlebody, $S^3\setminus G_g$.  We referred to these excitations as \emph{graph excitations}.

In this appendix, we discuss some properties of graph excitations.
In \S\ref{subsec:graph-relations}, we write down general relations that arise by thinking of a graph as resulting from the fusion of loops.  
In \S\ref{subsec:quantum-double-graphs}, we count graph excitation types for 3-dimensional quantum double models. We also discuss a different fusion process of loops called Borromean fusion, which was considered earlier in \cite{knots-paper}.  

\subsection{Relations between quantum dimensions and fusion probabilities of graph excitations}
\label{subsec:graph-relations}

Graph excitations can be created by the fusion of flux loops; see Fig.~\ref{fig:fusion-of-loops} for the case of genus two.  If we label the pure flux excitations of the two constituent loops as $\mu, \nu \in \CC_\text{flux}$, as indicated in Fig.~\ref{fig:fusion-of-loops-appendix},
we can ask about the probability 
$P_{(\mu \times \nu \rightarrow \lambda)}$ of obtaining the sector $\lambda \in \calC_{\rm flux}$ near the ``outer boundary" of the genus-2 handlebody. 

\begin{figure}[h]
    \centering
    \includegraphics[width = 0.65\textwidth]{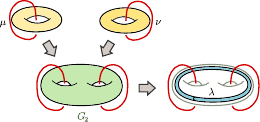}
    \caption{An illustration fusion of two flux loops, viewed in the information convex sets. The three solid tori and the genus-2 handlebody $G_2$ are the regions we consider information convex sets. Flux excitations are the red loops. (This view is complementary to Fig.~\ref{fig:fusion-of-loops} in the main text.)}
    \label{fig:fusion-of-loops-appendix}
\end{figure}

As we shall discuss, $P_{(\mu \times \nu \rightarrow \lambda)}$ can be computed using the information convex set of genus-2 handlebody. (However, unlike the fusion probabilities of point particles, it is unknown to us if $P_{(\mu \times \nu \rightarrow \lambda)}$ are determined by a set of integers.) 
Let the genus-2 handlebody be $G_2$. Since $G_2$ is a sectorizable region, $\Sigma(G_2)$ is a simplex. We can label the set of superselection sectors as
\begin{equation}\label{eq:label-k}
    (\mu,\nu,\lambda)_k \in \calC_{G_2},
\end{equation}
where $\mu, \nu$ and $\lambda$ are the sectors we can detect after doing a partial trace to reduce $G_2$ to the three solid tori indicated in Fig.~\ref{fig:fusion-of-loops-appendix}. $k$ labels a discrete set of extra degrees of freedom, if necessary.
(An example of its necessity is given at the end of Appendix~\ref{appendix:handlebodies}.)  The following facts follow:
\begin{enumerate}
    \item $(1,1,1)_1=1.$
    \item Among $(\mu, 1, \lambda)_k$, there is only one possible choice, denoted as $(\mu, 1, \mu).$
    \item Among $(1, \nu, \lambda)_k$, there is only one possible choice, denoted as $(1, \nu, \nu).$
    \item Among $(\mu, \nu, 1)_k$, there is only one possible choice, denoted as $(\mu, \bar{\mu}, 1).$ The derivation is shown in Fig.~\ref{fig:p-mn1}.
\end{enumerate}

\begin{figure}[h!]
    \centering    \includegraphics[width=.6\textwidth]{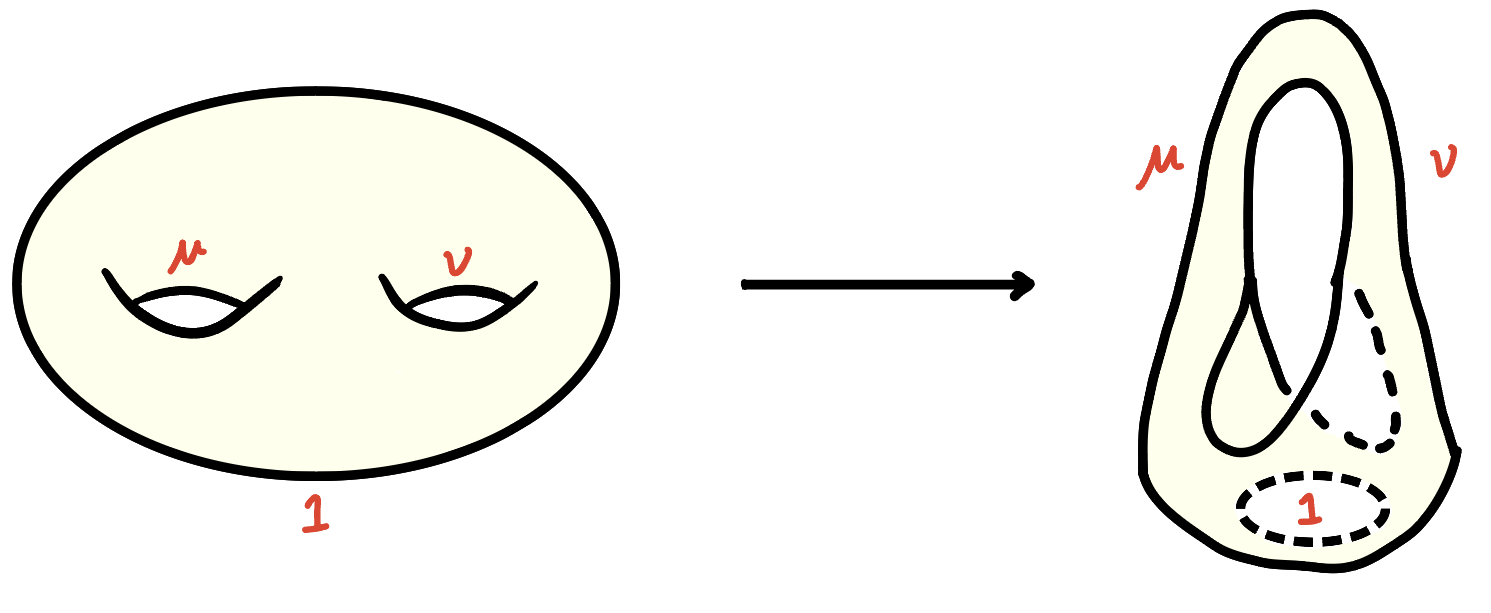}
    \caption{The explanation of $\nu= \bar{\mu}$ for $(\mu,\nu,1)_k$.
    After a smooth deformation of a genus-2 handlebody, we fill in the hole labeled by 1 with a disk (by merging). The result is a solid torus. This relates $\mu$ and $\nu$ as $\nu = \bar{\mu}$.} 
    \label{fig:p-mn1}
\end{figure} 
We can define the quantum dimension of an extreme point of $\Sigma(G_2)$ as
\begin{equation}
    d_{(\mu,\nu,\lambda)_k} \equiv \exp{\left( \frac{S(\rho_{G_2}^{(\mu,\nu,\lambda)_k})-S(\sigma_{G_2})}{2} \right)} ~.
\end{equation}
The quantum dimensions satisfy the following:
\begin{equation}
    d_1 = 1,  \quad  d_{(\mu, \bar{\mu},1)} = d_{(\mu, 1, \mu)} = d_{(1,\mu,\mu)}=d_\mu~.
\end{equation}
It is possible to choose a convention of labeling $k$ such that 
\begin{equation}
    d_{(\mu,\nu,\lambda)_k} = d_{(\nu,\mu,\lambda)_k} = d_{(\bar{\nu}, \bar{\mu}, \bar{\lambda})_k}. 
\end{equation} 
We shall stick to this convention.

\begin{Proposition} 
In the general genus-2 handlebody setup, we have the matching of quantum dimensions: 
    \begin{equation}\label{eq:genus-two-consistency-relation}
        d^2_\mu d^2_\nu = \sum_{\lambda, k}d^2_{(\mu,\nu,\lambda)_k}, \quad \forall \mu, \nu \in \calC_{\rm flux}~.
    \end{equation}
    The probability distribution $\{P_{(\mu \times \nu \rightarrow \lambda)} \}$ satisfies:
    \begin{equation}
        P_{(\mu \times \nu \rightarrow \lambda)} = \frac{\sum_k d^2_{(\mu, \nu,\lambda)_k}}{d_\mu^2 d_\nu^2}.
    \end{equation} 
\end{Proposition}

\begin{corollary}
    A few special elements of $\{ P_{(\mu \times \nu \rightarrow \lambda)} \}$ are:
    \begin{equation}
       P_{(\mu \times 1 \rightarrow \nu)} = \delta_{\mu,\nu},\quad  P_{(\mu \times \nu \rightarrow \lambda)} = P_{(\nu \times \mu \rightarrow \lambda)},\quad P_{(\mu \times \nu \rightarrow 1)} = \frac{1}{d_\mu^2}\delta_{\nu,\bar{\mu}}.
    \end{equation}
\end{corollary}

\begin{figure}[h!]
    \centering
    \includegraphics[width=.4\textwidth]{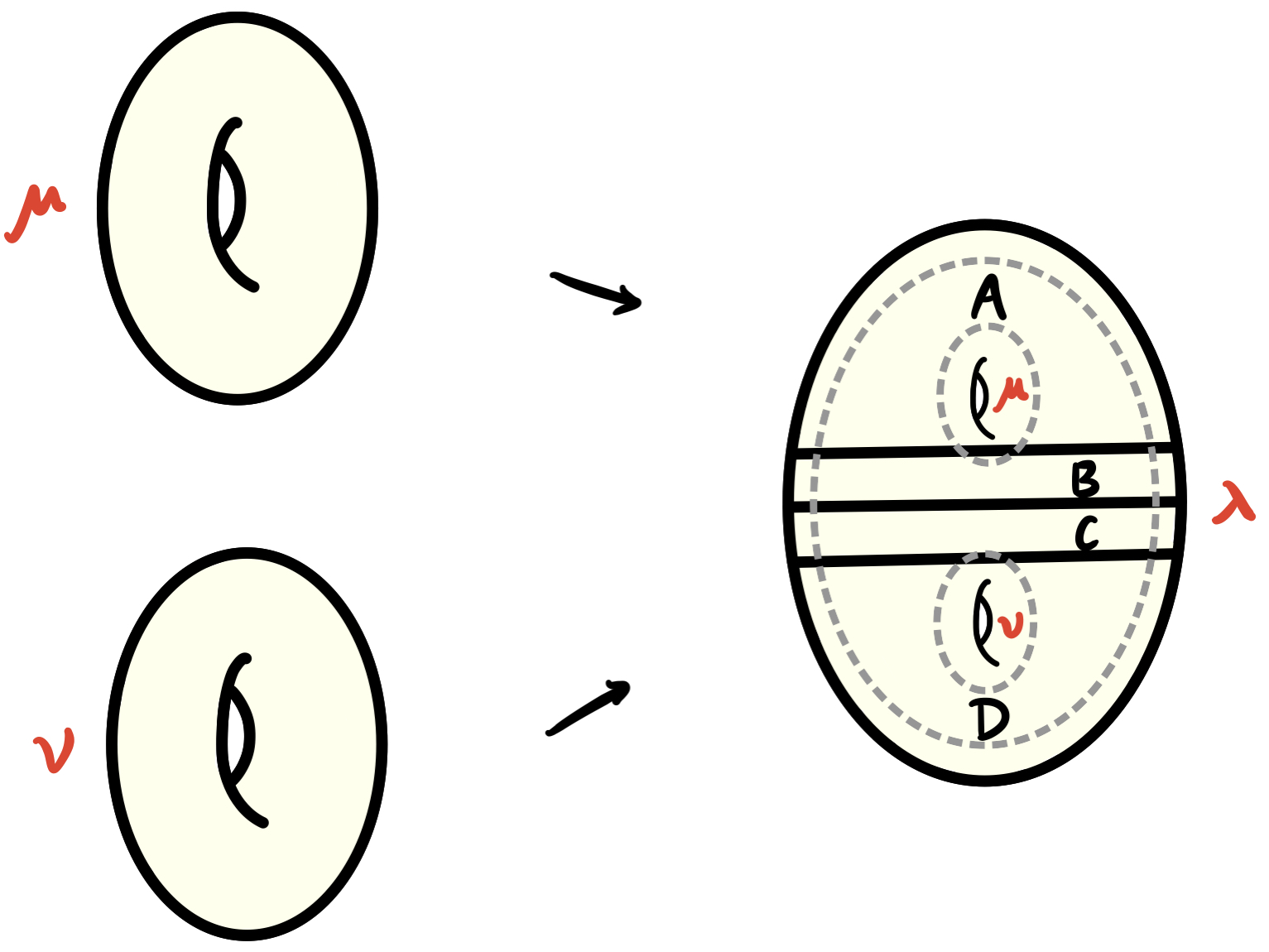}
    \caption{Merging two solid tori into a genus-2 handlebody. $\mu,\nu \in \calC_{\rm flux}$ label the sectors for the two small inner solid tori.  $\lambda \in \calC_{\rm flux}$ labels the sector for the ``outermost" solid torus, indicated in the figure.}
    \label{fig:flux-merging}
\end{figure}

\begin{proof}
    Consider the merging process in Fig.~\ref{fig:flux-merging}: $\rho_{ABC}^\mu \bowtie \rho_{BCD}^\nu = \tau_{ABCD}^{\mu\nu}$. Note that this merging setup is one example of the general setup in Definition~\ref{def:mering-setup}. The relevant data in the setup are substituted by 
    \begin{equation}
    \begin{aligned}
        a &\to \mu, \quad d  \to \nu \\
          \widehat{d}_a &\to d^2_{\mu}, \quad  
        \widehat{d}_d \to d^2_{\nu}, \quad {d}_b \to 1 \\
        d_e &\to d^2_{(\mu,\nu,\lambda)_k} \\
        I(A:D|BC)_{\sigma} &\to 0\\        
        N_{[ad]}^{e} & \to N_{[\mu\nu]}^{(\mu,\nu,\lambda)_k}= \{0,1\} \\
        P_{ad\to e} & \to P_{(\mu\nu\to (\mu,\nu,\lambda)_k)}\\
        \sum_e &\to \sum_{\substack{\lambda,k  \textrm{ s.t. } \\(\mu,\nu,\lambda)_k \textrm{ exists}}} 
    \end{aligned}       
    \end{equation}
%\BS{I get confused now! $ N_{[ad]}^{e}$ comes from a fusion space $\mathbf{V}_{[ad]}^e$ or not?}
Plugging these in Eq.~\eqref{eq:consistency-relation-with-d} and Eq.~\eqref{eq:prob-P} we get:
\begin{equation}
    d_\mu^2 d_{\nu}^2 = \sum_{\lambda, k} d^2_{(\mu,\nu,\lambda)_k}
\end{equation}
and
\begin{equation}
    P_{(\mu\nu\to (\mu,\nu,\lambda)_k)} = \frac{d^2_{(\mu,\nu,\lambda)_k}}{\sum_{\lambda,k} d^2_{(\mu,\nu,\lambda)_k}} = \frac{d^2_{(\mu,\nu,\lambda)_k}}{d_\mu^2 d_{\nu}^2}.
\end{equation}
Note that $P_{(\mu\times \nu \to \lambda) }$ is different from $P_{(\mu\nu\to (\mu,\nu,\lambda)_k)}$, because $\lambda$ is not a sector label of $\partial(G_2)$. To compute the former, we to do a sum of $k$, as:
\begin{equation}
    P_{(\mu\times \nu \to \lambda) } = \sum_k P_{(\mu\nu\to (\mu,\nu,\lambda)_k)} = \sum_k \frac{d^2_{(\mu,\nu,\lambda)_k}}{d_\mu^2 d_{\nu}^2}.
\end{equation}
    This completes the proof.
\end{proof}

%Alternative proof:
%We can also prove it by directly applying the more general consistency relation with quantum dimensions in \S\ref{subsection:relations-with-d}.
%Since $ABC$ and $BCD$ only have one boundary component, applying Eq.~\eqref{eq:consistency-relation-with-d} we get:
%\begin{equation}
%	d^2_\mu d^2_\nu = e^{-I(A:D|BC)_\sigma} \sum_e d_e
%\end{equation}
%due to the facts that $d_b = 1$ and genus-2 handlebody is sectorizable. Here $d_e$ is the quantum dimension of an extreme point in $\Sigma(\text{genus-2 handlebody shell})$, instead of its interior $ABCD$.

%In fact, $d_e$ can be related to quantum dimensions of extreme points of $\Sigma(ABCD)$, which can be denoted as $d_{(\mu, \nu, \lambda)_k}$. Using the extreme point criterion, 
%$\Delta(\partial \Omega, \Omega^-)_e = 0$,
%we obtain $d_e = d_{(\mu, \nu, \lambda)_k}^2$. Using extended A1 one can show $I(A:D|BC)_\sigma=0$. In the end we have:
%\begin{equation}\label{eq:genus-two-consistency-relation}
%	d^2_\mu d^2_\nu = \sum_{\lambda, k} d_{(\mu, \nu, \lambda)_k}^2~.
%\end{equation}

\subsection{Quantum double examples}
\label{subsec:quantum-double-graphs}

The graph excitations are not just pure fluxes stuck together.  
This is illustrated by the case of the quantum double, where 
the label set on extreme points of the genus-two handlebody
$\{ C_{g,h} \} $ can be larger than $\calC_{\rm flux}^2$.

The minimal diagram for genus-$g$ handlebody (see Fig.~\ref{fig:genus-g-handlebody}) requires only $g$ boundary links in a bouquet (meaning that they all begin and end at the unique boundary vertex $v$.  
Note that these links are in one-to-one correspondence with generators of the fundamental group with base point $v$.  
There are no faces required,
so these group elements are all independent, and therefore
\be \label{eq:genus-g-extreme-points-qd} 
%\calC_{\Sigma_n} = 
C_{(g_1, \cdots, g_n)} \equiv \{(g_1, \cdots, g_n)\}/\sim
\ee
where the equivalence relation is $(g_1, \cdots, g_n) \simeq (t g_1 \bar t, \cdots, t g_n \bar t)$.
In the case of genus-two, the extreme point labelled by $(g,h)$ can be written as 
\be \rho_{(g,h)} = {1\over d^2} \left( 
\ketbra{g,h}{g,h} + \ketbra{tg\bar t, th\bar t}{tg\bar t, th\bar t} + ((d^2 -2) \text{ more terms}) \right) 
.\ee

\begin{exmp}[Information convex set of genus-two handlebody for $S_3$ quantum double]  The list of genus-two graph excitations and their quantum dimensions are given in Table \ref{table:QD-genus-two}.

\begin{table}[h!]
\centering
\begin{tabular}{|c|c|c|c|c|c|c|c|c|c|c|c|}
\hline
$(g,h)$ & $(1,1)$ &$ (1,r)$&$ (r,1)$&$ (s,1)$&$(1,s)$&$ (r,r)$&$ (r,r^2)$&$ (r,s)$&$ (s,r)$&$(s,s)$&$ (s, sr)$  \\
\hline
$d^2_{(g,h)}$ & $1$ &$2$ &$2$ &$3$ &$3$ &$2$ &$2$ &$6$ &$6$ &$3$ &$6$ \\
 \hline
\end{tabular}
\caption{\label{table:QD-genus-two}Representative $(g,h)$ for each extreme point of $\Sigma(G_2)$ for the $S_3$ quantum double, and the associated (squared) quantum dimension.  This number is the number of representatives of the equivalence class.  For example, the density matrix on the minimal diagram for the sector $(r,r)$ can be written 
$\rho_{(r,r)} = \half  \ketbra{r,r}{r,r} + \half  \ketbra{r^2,r^2}{r^2,r^2}$.
The fact that there are two extreme points 
that have the labels $(\mu,\nu) = (C_r,C_r)$ on the two one-cycles of the handlebody (namely $(g,h) = (r,r)$ and $(g,h) = (r, r^2)$) is an example of the need for the label $\lambda$ in Eq.~\eqref{eq:label-k}.  Similarly, there are two sectors labelled $(C_s,C_s)$, namely $(s,s)$ and $ (s, sr)$.}
%\left ( \ketbra{(g,h)}{(g,h)} + \ketbra{(tgt^{-1}, tht^{-1})}{(tgt^{-1}, tht^{-1})} + ((d^2 -2) \text{ more terms})$ \right)}
\end{table}
\end{exmp}

We now give the proof of the formula for the number of genus-$g$ graph excitations for the quantum double model with any gauge group, which we restate here:
\graphsectorsforqd*

To prove Proposition \ref{prop:Cg}, firstly we need a simple lemma in group theory.

\begin{lemma}[Burnside's lemma, or Cauchy–Frobenius lemma]\label{lemma:burnside}
    Let $G$ be a finite group that acts on a set $X$. For each $h \in G$, denote $X^h$ as the set of elements in $X$ that are invariant under $h$, i.e. $X^h \equiv \{x \in X | h \cdot  x = x\}$. The orbit of element $x \in X$ is the set of elements in $X$ that $x$ can be moved by elements of $G$. Then the number of orbits in $X$ under the action of $G$ is equal to
    \be \frac{1}{|G|} \sum_{h\in G} |X^h|~. \ee
\end{lemma}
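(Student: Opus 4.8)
# Proof Proposal for Burnside's Lemma

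The plan is to prove Burnside's lemma (Lemma~\ref{lemma:burnside}) by the classical double-counting argument applied to the set of fixed-point pairs. First I would define the set
\begin{equation}
    F \equiv \{ (h, x) \in G \times X \mid h \cdot x = x \},
\end{equation}
and count $|F|$ in two different ways. Summing over group elements first gives $|F| = \sum_{h \in G} |X^h|$ directly from the definition of $X^h$. The key step is then to re-sum over the set elements: $|F| = \sum_{x \in X} |G_x|$, where $G_x \equiv \{ h \in G \mid h \cdot x = x \}$ is the stabilizer subgroup of $x$. This yields the identity
\begin{equation}\label{eq:burnside-doublecount}
    \sum_{h \in G} |X^h| = \sum_{x \in X} |G_x|.
\end{equation}

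Next I would invoke the orbit-stabilizer theorem, which states that for each $x \in X$ the orbit $G \cdot x$ satisfies $|G \cdot x| = |G| / |G_x|$, equivalently $|G_x| = |G| / |G \cdot x|$. Substituting into the right-hand side of \eqref{eq:burnside-doublecount} gives
\begin{equation}
    \sum_{x \in X} |G_x| = \sum_{x \in X} \frac{|G|}{|G \cdot x|} = |G| \sum_{x \in X} \frac{1}{|G \cdot x|}.
\end{equation}
The final step is to recognize that the remaining sum collects the contributions orbit by orbit: if we group the elements $x$ according to which orbit they lie in, then each orbit $O$ contributes $\sum_{x \in O} 1/|O| = |O| \cdot (1/|O|) = 1$. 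Hence $\sum_{x \in X} 1/|G \cdot x|$ equals the number of distinct orbits, which I will denote $N$. Combining the chain of equalities produces $\sum_{h \in G} |X^h| = |G| \cdot N$, and dividing by $|G|$ yields the stated formula $N = \frac{1}{|G|} \sum_{h \in G} |X^h|$.

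The argument is entirely standard and presents no genuine obstacle; the only point requiring mild care is making the orbit-partition step precise, namely that the orbits partition $X$ into disjoint classes (a consequence of the action defining an equivalence relation $x \sim y \iff \exists h,\, h \cdot x = y$), so that the sum over all $x \in X$ decomposes cleanly as a sum over orbits. I would state this partition explicitly to justify the regrouping. Since this lemma is classical, I would also consider simply citing it; but presenting the short double-counting proof makes the appendix self-contained and costs little space.
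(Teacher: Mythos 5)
Your proof is correct and complete: the double count of the fixed-point set $F = \{(h,x) : h\cdot x = x\}$, the orbit--stabilizer substitution $|G_x| = |G|/|G\cdot x|$, and the regrouping of $\sum_{x\in X} 1/|G\cdot x|$ orbit by orbit (each orbit contributing exactly $1$) together give the stated formula, and you correctly flag that the regrouping rests on the orbits partitioning $X$. The paper, however, does not prove the lemma at all --- its ``proof'' consists of a single sentence citing a standard abstract algebra textbook (Dummit and Foote). So your argument is not a different route from the paper's so much as the route the paper deliberately omits. The trade-off is exactly the one you identify in your closing remark: citing keeps the appendix short for a result that is entirely classical and plays only a supporting role (it feeds into Proposition~\ref{prop:Cg} on counting graph sectors in quantum double models), while your three-line double-counting argument makes the appendix self-contained at negligible cost. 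Either choice is defensible; if you do include the proof, your write-up as given would be a perfectly adequate replacement for the citation.
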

    
\begin{proof}
    The proof can be found in textbooks on abstract algebra, for example \cite{dummit1991abstract}.
\end{proof}

\begin{proof}[Proof of Proposition \ref{prop:Cg}]
    The minimal diagram \cite{Shi:2018krj, knots-paper} for genus-$g$ handlebody consists of 1 vertex $v_0$, $g$ links $\{a_1, a_2, \dots, a_g\}$ and zero plaquette.
    
    \begin{figure}
        \centering
        \includegraphics[width=.6\textwidth]{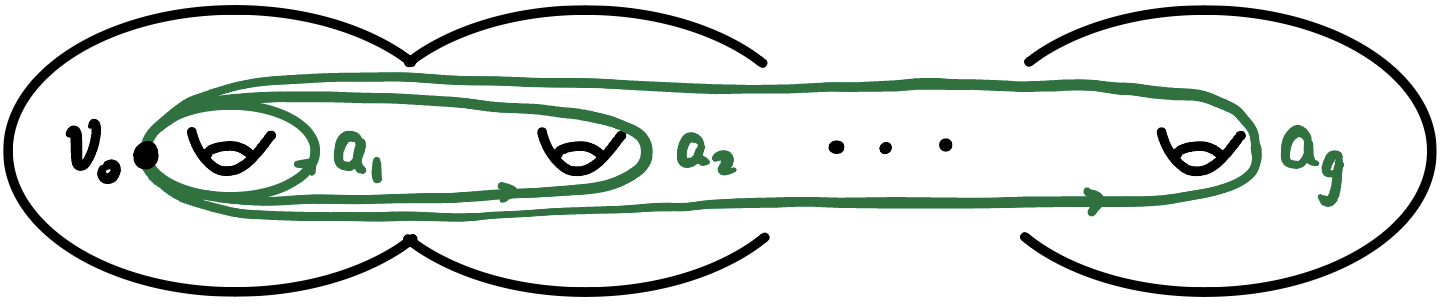}
        \caption{The minimal diagram for genus-$g$ handlebody.}
        \label{fig:genus-g-handlebody}
    \end{figure}
    
    The vertex term acts on the links by
    \be A_{v_0}^h \ket{a_1, \dots, a_g} = \ket{h a_1 h^{-1}, \dots, h a_g h^{-1}}~. \ee
    Therefore the number of elements in $\CC_g$ is equal to the number of orbits of set $\{a_1, \dots, a_g\}$ under the action of $G$ (Here the group action of $G$ is left multiplication by $h \in G$ and right multiplication by $h^{-1} \in G$). The invariant set $\{a_1, \dots, a_g\}^h$ for $h\in G$ is a set of group elements that commute with $h$, i.e. $\{a_1, \dots, a_g\}^h = \{a_1, \dots, a_g | a_1 h = h a_1, \dots, a_g h =h a_g\}$. Hence, its size is $|E(h)|^g$. 
    
    From Lemma~\ref{lemma:burnside}, the number of orbits of $\{a_1, \dots, a_g\}$ under $G$ action is
    \be \frac{1}{|G|} \sum_{h \in G} |E(h)|^{g} ~.\ee
    
    \end{proof}

Proposition \ref{prop:pairing-manifold-fusion-space} (in particular Eq.~\eqref{eq:associativity-for-pairing}) implies that the number of extreme points of the information convex set of the genus-$g$ handlebody is related to the fusion dimensions of the ball minus $g$ balls by the following relation: 
\be    |\CC_{g}| = \sum_{a_1 \cdots a_{g+1}} \( N_{a_1\cdots a_{g+1}}^1\)^2 ~.
\ee
For the case of the 3d quantum double model, Prop.~\ref{prop:Cg} then implies
\begin{Proposition}
\be     { 1\over |G| }\sum_{h \in G} |E(h)|^g = \sum_{a_1 \cdots a_{g+1}}  \( N_{a_1\cdots a_{g+1}}^1\)^2 ~.
\ee
\end{Proposition}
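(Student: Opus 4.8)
The final statement to prove is the identity
\begin{equation}
    \frac{1}{|G|}\sum_{h \in G} |E(h)|^g = \sum_{a_1 \cdots a_{g+1}} \left( N_{a_1\cdots a_{g+1}}^1\right)^2,
\end{equation}
relating a purely group-theoretic quantity on the left to a sum of squared fusion multiplicities of point particles in the $3$d quantum double model on the right. My plan is to observe that both sides have already been computed as two different expressions for the \emph{same} physical number, namely $|\CC_g|$, the number of extreme points of the information convex set of the genus-$g$ handlebody. Thus the identity follows simply by transitivity: the left-hand side equals $|\CC_g|$ by Proposition~\ref{prop:Cg}, and the right-hand side equals $|\CC_g|$ by the pairing-manifold relation Eq.~\eqref{eq:associativity-for-pairing} applied to the pairing manifold $\# g(S^2\times S^1)$.

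The first step is to invoke Proposition~\ref{prop:Cg}, which gives $|\CC_g| = \frac{1}{|G|}\sum_{h\in G}|E(h)|^g$ for the $3$d quantum double with gauge group $G$; this was proved via Burnside's lemma (Lemma~\ref{lemma:burnside}) by counting orbits of $g$-tuples of group elements under simultaneous conjugation. The second step is to recall that $\# g(S^2\times S^1)$ is a pairing manifold whose pairing uses $X = $ genus-$g$ handlebody and $Y = \mathcal{B}_g$, the ball minus $g$ balls (this is the example worked out with $X$ a genus-$g$ handlebody and $Y$ a ball minus $g$ balls). Since $X$ is sectorizable, Proposition~\ref{prop:pairing-manifold-fusion-space}, specifically Eq.~\eqref{eq:associativity-for-pairing}, gives
\begin{equation}
    |\CC_g| = \sum_{a\in\CC_{\partial X}} \dim\mathbb{V}_a(X)^2 = \sum_{\alpha\in\CC_{\partial Y}}\dim\mathbb{V}_\alpha(Y)^2 = \dim\mathbb{V}(\# g(S^2\times S^1)),
\end{equation}
where the left-most equality uses that $\dim\mathbb{V}_a(X)\in\{0,1\}$ for a sectorizable $X$ so that $\sum_a \dim\mathbb{V}_a(X)^2$ counts sectors, which is $|\CC_g|$. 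The third step is to identify the $Y$-side sum with $\sum_{a_1\cdots a_{g+1}}(N_{a_1\cdots a_{g+1}}^1)^2$: the fusion space $\mathbb{V}_\alpha(\mathcal{B}_g)$ is precisely the space of $(g+1)$ point particles fusing to the vacuum, so the boundary sectors $\alpha$ are labeled by tuples $(a_1,\dots,a_{g+1})$ and $\dim\mathbb{V}_\alpha(Y) = N_{a_1\cdots a_{g+1}}^1$.

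The conceptually important, and only mildly nontrivial, step is this last identification of the right-hand side, where I would need to be careful that the labeling of the sectors on $\partial Y = \partial\mathcal{B}_g$ indeed corresponds to $(g+1)$-tuples of point-particle sectors (the $g$ inner boundary spheres plus the outer one), and that the constraint of fusing to the vacuum $1$ is the correct way to encode the closed-up handlebody on the $X$ side. This matching is really just unwinding the meaning of $\mathcal{B}_g$ as the region detecting a $(g+1)$-particle cluster, as indicated in Table~\ref{table:unitarity-combined} and in the earlier examples, so I do not expect genuine obstruction here, only bookkeeping. Everything else is a direct citation: the left equality is Proposition~\ref{prop:Cg}, the middle equalities are Proposition~\ref{prop:pairing-manifold-fusion-space}, and the conclusion is immediate. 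Thus the statement is a corollary obtained by equating two independent computations of $|\CC_g|$, and as a bonus it furnishes, when $g$ is varied, a family of nontrivial constraints linking the centralizer orders of $G$ to the fusion data of its quantum double.
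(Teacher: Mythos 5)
Your proof is correct, and it is in fact exactly the route by which the paper \emph{arrives at} the statement: the Proposition is introduced with the words ``Prop.~\ref{prop:Cg} then implies\dots'', i.e., by equating the Burnside-lemma count of $|\CC_g|$ with the pairing-manifold count $|\CC_g|=\sum_{a_1\cdots a_{g+1}}\big(N^1_{a_1\cdots a_{g+1}}\big)^2$ obtained from Eq.~\eqref{eq:associativity-for-pairing} applied to $\# g(S^2\times S^1)$. However, the proof the paper actually writes out in the proof environment is genuinely different: it is a self-contained representation-theoretic computation. One writes $N^1_{a_1\cdots a_{g+1}}=\frac{1}{|G|}\sum_{h\in G}\chi_{a_1}(h)\cdots\chi_{a_{g+1}}(h)$, squares, and applies character orthogonality, $\sum_{a}\chi_a(h)\chi_a^\star(h')=\delta_{h'\in C_h}\,|G|/n_h$, once per factor; the right-hand side then collapses to $\sum_C\left(|G|/n_C\right)^{g-1}=\sum_C|E(C)|^{g-1}$, which equals the left-hand side by the class equation $n_C|E(C)|=|G|$. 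The difference matters for what the Proposition is doing in the appendix. Your route is shorter but consumes the entanglement-bootstrap machinery as input (the pairing-manifold theorems, plus the identification of the quantum double's lattice data with the bootstrap quantities $\CC_g$ and the fusion multiplicities), so the resulting identity carries no information beyond what was already derived; as a ``check'' of the framework it would be circular in spirit, though not in logic. The paper's direct proof uses no physics at all, so it independently verifies that the bootstrap prediction for quantum doubles is a true identity of finite-group character theory --- which is precisely the corroborating role this Proposition plays. Both arguments are valid: yours establishes the statement given the rest of the paper, while the paper's establishes it unconditionally.
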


We can give a direct proof of this relation using character orthogonality.

\begin{proof} Since $N_{a_1 \cdots a_{g+1}}^1$ is real, we can write
\begin{align}  \sum_{a_1 \cdots a_{g+1} } \( N_{a_1 \cdots a_{g+1}}^1 \) ^2 
& =  \sum_{a_1 \cdots a_{g+1} }  \left |  {1\over |G|} \sum_{g \in G} \chi_{a_1}(g)  \cdots \chi_{a_{g+1}}(g) \right|^2 
\\ & = {1\over |G|^2 }\sum_{g, g'}  \prod_{i=1}^{g+1} \( \sum_{a_i}  \chi_{a_i } (g)  \chi_{a_i}^\star(g') \) 
\end{align} 
Character orthogonality says
\be \sum_a \chi_a(C) \chi_a(C')^\star = \delta_{CC'} { |G| \over n_C} \ee
where $C$ and $C'$ are conjugacy classes of $G$, and $n_C$ is the number of elements in the conjugacy class $C$.  This does not completely fix $g$ and $g'$:
\be  \sum_{a_i}  \chi_{a_i } (g)  \chi_{a_i}^\star(g')   = \delta_{g' \in C_g}  { |G| \over n_g}~. \ee
So 
\begin{align}  \sum_{a_1 \cdots a_{g+1} } \( N_{a_1 \cdots a_{g+1}}^1 \) ^2 
 & = {1\over |G|^2 } \sum_{gg'} \delta_{g' \in C_g}  \(  { |G| \over n_g}\) ^{g+1} 
\\ & = {1\over |G|^2} \sum_g n_g \(  { |G| \over n_g}\) ^{g+1} 
= {1\over |G|^2} \sum_C n_C^2 \(  { |G| \over n_C}\) ^{g+1} 
\\ & = \sum_C \(  { |G| \over n_C}\) ^{g-1} = \sum_C |E(C)| ^{g-1} ~.
\end{align}
    
\end{proof}

\begin{figure}[h!]
    \centering
    \includegraphics[width=.65\textwidth]{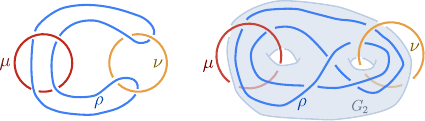}
    \caption{Left: Borromean rings of flux. Right: The Borromean fusion process can be described via the merging of two solid tori into a genus-two handlebody, $G_2$ (in light blue).  The outcome of the process is measured by the solid torus labelled $\rho$, which is contained in $G_2$.  In terms of the two edges of the minimal diagram for the genus-two solid (with holonomies $g$ and $h$), the curve labeled $\rho$ has holonomy $ghg^{-1}h^{-1} = [g,h]$. }
    \label{fig:borromean-fusion-figure}
\end{figure}

{\bf Graphs from fusion of loops.} Here we discuss two fusion processes of pure fluxes.  
One is the fusion process depicted in Fig.~\ref{fig:flux-fusion}, 
and the other is the Borromean fusion defined in \cite{knots-paper}.
In each case, we can extract the probabilities for each possible outcome by the following method.  
Consider the merging of two solid tori in the respective extreme points $\mu$ and $\nu$
as in Fig.~\ref{fig:flux-merging}.  
The resulting state on $G_2$ is the maximum entropy state consistent with the marginals $\mu$ and $\nu$
on $ABC$ and $BCD$.   From the definition of the quantum dimension in terms of the entropy, 
this state has the form
\be \rho^\star_{\mu\nu} =  \sum_{C_{(g,h)}, C_g = \mu, C_h = \nu}  { d^2_{(g,h)} \over {\cal D}^2 } 
\rho_{(g,h)} . 
\label{eq:max-entropy-mu-nu}
\ee
Define $\lambda$ as the state of the solid torus indicated by the dashed outer curve in Fig.~\ref{fig:flux-merging}.  
In terms of the minimal diagram, the associated group element is just $gh$.
Define $\rho$ as the state of another solid torus, the one indicated in Fig.~\ref{fig:borromean-fusion-figure}.
In terms of the minimal diagram, the associated group element is the group commutator 
$ghg^{-1}h^{-1} \equiv [g,h]$.  
In each case, the value of $\lambda$ and $\rho$ is specified by the representative $(g,h)$.
The probability of a given fusion outcome can be read off from 
the values of $\lambda$ and $\rho$ in each of the summands in \eqref{eq:max-entropy-mu-nu}
and their respective quantum dimensions.

\begin{figure}[h!]
    \centering    \includegraphics[width=.5\textwidth]{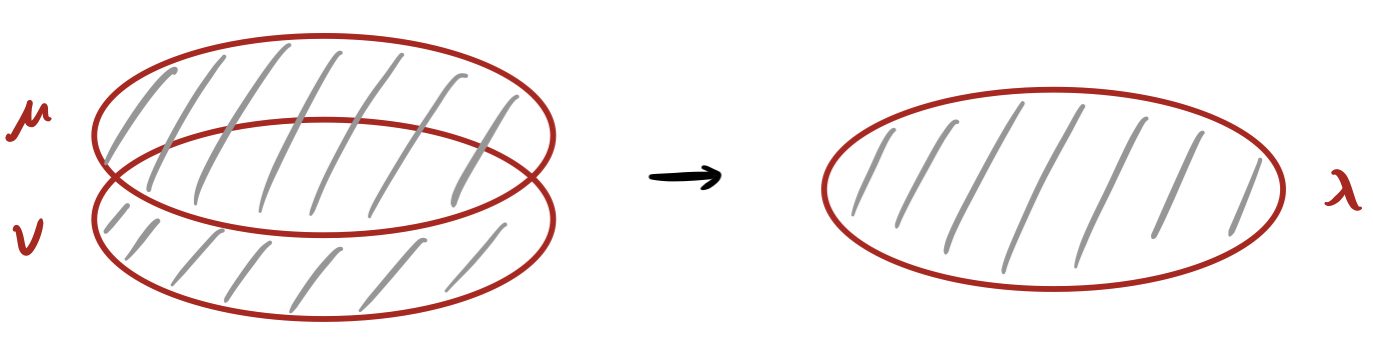}
    \caption{Fusion of two fluxes on top of each other.  $\mu,\nu \in \calC_{\rm flux}$, and the result $\lambda$ is another flux. Each excitation is created by a membrane operator supported on a disk.}
    \label{fig:flux-fusion}
\end{figure}

 	\begin{table}[!h]  
    \centering
	\begin{tabular}{|c|c|c|c|c|c|}
    \hline
		\diagbox[width=7em]{$(\mu, \nu)$}{}{} 
		& $C_{(g,h)}$ & $\lambda = C_{gh}$ & $\rho = C_{[g,h]}$ & $d^2_{(g,h)}$ & $p$  \\
		\hlineB{2}
		$(C_1, C_1)$ & $C_{(1,1)}$ & $ C_1$ & $ C_1$ & $1$ & $1$\\
		\hlineB{2}
		$(C_r, C_1)$ & $C_{(r,1)}$ & $ C_r$ & $C_1$ & $2$ & $1$\\
\hlineB{2}
        $(C_s, C_1)$ & $C_{(s,1)}$ & $ C_s$ & $C_1$ & $3$ & $1$\\
\hlineB{2}
        $(C_r, C_r)$ &  $C_{(r,r)}$ &  $C_r$ & $ C_1$& $2$ & $1/2$ \\
		\hline
		$(C_r, C_r)$ &  $C_{(r,r^2)}$ &  $C_1$ & $ C_1$& $2$& $1/2$\\
\hlineB{2}
$(C_r, C_s)$ &  $C_{(r,s)}$ &  $C_{sr}$ & $ C_s$ & $6$& $1$\\
\hlineB{2}
        $(C_s, C_s)$ &  $C_{(s,s)}$ &  $C_1$ & $ C_1 $ & $3$& ${1/3}$\\
		\hline
		$(C_s, C_s)$ &  $C_{(s,sr)}$ &  $C_r$ & $C_r$ & $6$& ${2/ 3}$\\
		\hlineB{2}
	\end{tabular}
	\caption{\label{table:qd-loop-fusion} Here, we specify the states $(\mu, \nu)$ of the subsystems along the two cycles of the genus two solid, $G_2$.  Then we construct the maximum entropy state on $G_2$ consistent with this data.  In that maximum entropy state, we can ask about the state on the various solid tori embedded in $G_2$ depicted in Figs.~\ref{fig:flux-merging} and \ref{fig:borromean-fusion-figure}.}
	\label{tab:S3-fusion-of-loops}
\end{table}

\begin{exmp}[Two kinds of fusion of pure fluxes for $S_3$ quantum double]

In Table \ref{table:qd-loop-fusion}, we show the outcomes of these two fusion processes for the case with $G=S_3$.

From this table, we can verify the relation Eq.~\eqref{eq:genus-two-consistency-relation}.   In the sectors involving a 1 on one of the two cycles, the relation holds trivially.  In the sector $(C_r, C_r)$, the label $k$ takes two values and the RHS of \eqref{eq:genus-two-consistency-relation} is $2+2=4 = d_r^4$.  In the sector $(C_r, C_s)$, there is a unique sector whose dimension is $d_r^2 d_s^2$.  In the sector $(C_s, C_s)$ there are two sectors and the RHS gives $ 3 + 6= d_s^4$.  
\end{exmp}

More generally, it is not a mystery that equation \eqref{eq:genus-two-consistency-relation} is satisfied in any quantum double model:
the $|\mu||\nu| = d_\mu^2 d_\nu^2$ states labelled $\ketbra{(g,h)}{(g,h)}$ with $g \in \mu$ and $h \in \nu$ form a collection of orbits under the equivalence relation in Eq.~\eqref{eq:genus-g-extreme-points-qd}.  Each orbit is labeled with some $\lambda$ specified by $\lambda= C_{gh}$.  (For given $\mu,\nu$, if there is more than one orbit with the same $\lambda$, then the label $k$ in \eqref{eq:label-k} is required.)  In any case, 
the number of elements in the orbit labelled $(\mu,\nu,\lambda)_k$ is 
$ d_{(\mu,\nu,\lambda)_k}^2$, and therefore
\be \sum_{k, \lambda} d_{(\mu,\nu,\lambda)_k}^2 = 
d_\mu^2 d_\nu^2. \ee

We should give an example of the need for the label $k$ in \eqref{eq:label-k}.  In a quantum double model with gauge group $G$, this arises when there exist $g, h \in G$ such that $C_{gh} = C_{gth\bar t}$ (so that $\mu, \nu, \lambda$ are the same for $(g,h)$ and for $(g, th\bar t)$, but $ C_{(g, h)} \neq C_{(g,th\bar t)}$, so they have different orbits.  An example where this occurs is $G=A_5$, the group of even permutations on five elements, with $(g,h) = ((123),(123))$, in cycle notation.

   \section{Quantum double illustration of pairing manifold relations}\label{app:qd-check}

As a consequence of the fact that $\#2(S^2\times S^1)$ is a pairing manifold in two ways, we found a number of relations between genus-two graph excitations and shrinkable loops
in \S\ref{subsub:example-bulk}.  Here, for illustration purposes, we verify some of these relations in a class of examples.

   For the example of the quantum double model with $G=S_3$, we can directly calculate the LHS of \eqref{eq:shrinkable-loopsv2} from Table 7 of \cite{knots-paper}.  
Since the entries are all 1 or 0, the LHS is just the number of ones in the table: 
\be   \sum_{\ell \in \CC_\text{loop}} \sum_{a \in \CC_\text{point}} (N_\ell^a)^2
\buildrel{S_3}\over{=} 11 .\ee
This agrees with the number of genus-two graph excitations for this model, verifying \eqref{eq:shrinkable-loopsv2}.

More generally,  we can verify \eqref{eq:shrinkable-loopsv2} for an arbitrary quantum double model with gauge group $G$, 
using (C.13) of \cite{knots-paper} and its complex conjugate.  
Repeated use of character orthogonality gives 
\begin{align}
\sum_{\ell \in \CC_\text{loop}} \sum_{a \in \CC_\text{point}} (N_\ell^a)^2 
&
\nonumber = \sum_{\ell = (C_b, R_\ell)} {1\over |E_b|^2 } \sum_{a \in \CC_\text{point}}
\sum_{g,g' \in E_b} \chi_{R_\ell}(g) \chi_{R_a}^\star(g)
\chi_{R_\ell}(g')^\star \chi_{R_a}(g')
\cr & = \sum_{C_b} {1\over |E_b|^2 }
\sum_{g,g' \in E_b}  
\underbrace{\sum_{R_\ell \in (E_b)_\text{ir}} \chi_{R_\ell}(g)\chi_{R_\ell}(g')^\star}_{=\sum_{h \in E_b} \delta_{g', k g\bar k}}
\sum_{a\in  (G)_\text{ir}} \chi_{R_a}^\star(g)\chi_{R_a}(g')
\\ & =  \sum_{C_b} {1\over |E_b| }
 \sum_{g \in E_b}  
\underbrace{\sum_{a\in  (G)_\text{ir}} \chi_{R_a}^\star(g)\chi_{R_a}(g)}_{= |E_g|}
\\ & 
\label{eq:QD-attempt-hilbert-space-dimension}
= \sum_{C_b} {1\over |E_b|}  \sum_{g \in E_b} |E_g| 
\\ & = \CC_{\text{genus~2}}.  
\end{align}
To see the last relation, note that 
for any $f(b,g)$,
\begin{align}
\sum_{b \in G} \sum_{g \in E_b} f(b,g) 
& = \sum_{b \in G} \sum_{g \in G} \delta_{g \in E_b} f(b,g) 
\\ & = \sum_{b \in G} \sum_{g \in G} \delta_{b \in E_g} f(b,g) 
\\ & =  \sum_{g \in G} \sum_{g \in E_g} f(b,g) 
\end{align}
since $g \in E_b \Leftrightarrow b\in E_g \Leftrightarrow gb = bg$.
Therefore
\begin{align}
\sum_{C_b} {1\over |E_b|}  \sum_{g \in E_b} |E_g| 
& = {1\over |G|} \sum_{b \in G} \sum_{g \in E_b} |E_g| 
\\ & ={1\over |G|}  \sum_{g \in G} \sum_{b \in E_g} |E_g| 
= {1\over |G|}  \sum_{g \in G} |E_g|^2 
\\ & = {1\over |G| }\sum_{C_g} n(C_g) |E_g|^2 
= \sum_{C_g} |E_g| = \CC_2.
\end{align}

Similarly, we can verify the relation for the capacities, \eqref{eq:shrinkable-loops-qd}, for an arbitrary quantum double model.  
The ingredients are: 
\be d_a = \dim R_a = \chi_a(1), ~~d_{\ell  = (C_b, R_\ell)} = n(C_b) \dim R_\ell
= n(C_b)\chi_{R_\ell}(1)\ee
and (C.13) of \cite{knots-paper} again.  This gives 
\begin{align}\label{eq:QD-attempt3}
\sum_{\ell \in \CC_\text{loop}} \sum_{a \in \CC_\text{point}} d_a d_\ell N_\ell^a 
\nonumber & = 
\sum_{\ell = (C_b, R_\ell)} {1\over |E_b| } 
\sum_{a \in \CC_\text{point}}
\sum_{g \in E_b} \chi_{R_\ell}(g) \chi_{R_a}^\star(g)
\chi_{R_\ell}(1)^\star \chi_{R_a}(1) n(C_b)
\\ & = 
\sum_{\ell = (C_b, R_\ell)} {n(C_b) \over |E_b| } 
\sum_{g \in E_b}\chi_{R_\ell}(g)\chi_{R_\ell}(1)^\star
\underbrace{\sum_{a \in G_\text{ir}} \chi_{R_a}^\star(g)\chi_{R_a}(1)}_{= |G|\delta_{g,1}}
\\ & = 
|G| \sum_{C_b} {n(C_b) \over |E_b| } 
\underbrace{\sum_{R_\ell \in (E_b)_\text{ir} } \chi_{R_\ell}(1)\chi_{R_\ell}(1)^\star}_{= 
\sum_{ R_\ell \in (E_b)_\text{ir} } \dim(R_\ell)^2 = |E_b|}
%\\ & = \sum_{C_b} n(C_b)^2   |E_b| {1\over |E_b|} 
\\ & =\sum_{C_b} n(C_b)^2 = \CD^4. 
\end{align}

\phantomsection\addcontentsline{toc}{section}{References}
\bibliographystyle{ucsd}
\bibliography{Bibliography}

\end{document}